\documentclass[12pt]{article}
\input{packages}
\addauthor{lm}{blue}
\addauthor{sv}{red}
\addauthor{done}{green}
\addauthor{TODO}{red}

\newcommand*{\Scale}[2][4]{\scalebox{#1}{$#2$}}%
%


\DeclareMathOperator{\mme}{\mathbb{E}}
\DeclareMathOperator{\mmc}{\mathrm{Cov}}
\DeclareMathOperator{\mmv}{\mathrm{Var}}


\newcommand{\buyer}{{buyer}}
\newcommand{\Buyer}{{Buyer}}
\newcommand{\buyers}{{{\buyer}s}}
\newcommand{\seller}{{seller}}
\newcommand{\Seller}{{Seller}}
\newcommand{\sellers}{{{\seller}s}}
\newcommand{\RomanNumeralCaps}[1]
    {\MakeUppercase{\romannumeral #1}}

\newcommand{\eg}{{\textit{e.g.}}}

\newcommand{\mmi}{[I]}
\newcommand{\mmj}{[J]}

\newcommand{\wt}{{\rm T}}
\newcommand{\ct}{{\rm T}}
\newcommand{\rt}{{\rm T}}

\newcommand{\wc}{{\rm C}}
\newcommand{\cc}{{\rm C}}
\newcommand{\rc}{{\rm C}}

\newcommand{\ccc}{{\textcolor{red}{\rm cc}}}
\newcommand{\icb}{{\textcolor{green}{\rm ib}}}
\newcommand{\ics}{{\textcolor{blue}{\rm is}}}
\newcommand{\ctt}{{\rm tr}}

\newcommand{\redc}{{\textcolor{red}{\rm C}}}
\newcommand{\orangec}{\textcolor{orange}{\cc}}
\newcommand{\greenc}{{\textcolor{green}{\rm C}}}
\newcommand{\bluec}{{\textcolor{blue}{\rm C}}}
\newcommand{\magc}{\textcolor{magenta}{\cc}}
\newcommand{\brownc}{{\textcolor{brown}{\rm C}}}
\newcommand{\blackt}{{\rm T}}

\newcommand{\Type}{\Gamma}
\newcommand{\type}{\gamma}
\newcommand{\types}{\{\ccc,\icb,\ics,\ctt\}}
\newcommand{\ATE}{\mathrm{ATE}}

\newcommand{\oybi}{\overline{y}^{\rb}_{i}}
\newcommand{\oysj}{\overline{y}^{\rs}_{j}}

\newcommand{\coefvec}{\vec{\bm{{\beta}}}}

\newcommand{\dyij}{\axisvariationmeanpopulation{ij}{\rb\rs}} 
\newcommand{\dyiijj}{\axisvariationmeanpopulation{i'j'}{\rb\rs}} 
\newcommand{\dyijj}{\axisvariationmeanpopulation{ij'}{\rb\rs}} 
\newcommand{\dyiij}{\axisvariationmeanpopulation{i'j}{\rb\rs}} 
\newcommand{\dyi}{\axisvariationmeanpopulation{i}{\rb}} 
\newcommand{\dyii}{\axisvariationmeanpopulation{i'}{\rb}} 
\newcommand{\dyj}{\axisvariationmeanpopulation{j}{\rs}} 
\newcommand{\dyjj}{\axisvariationmeanpopulation{j'}{\rs}} 

\newcommand{\dydy}[2]{\axisvariationmeanpopulation{#1}{#2}} 

\newcommand{\taudirect}{\tau_{\rm direct}}
\newcommand{\betaspillb}{\bm{\vec{\beta}}_{\rm spill}^\rb}
\newcommand{\betaate}{\bm{\vec{\beta}}_{\ATE}}
\newcommand{\betadirect}{\bm{\vec{\beta}}_{\rm direct}}

\newcommand{\betaspills}{\bm{\vec{\beta}}_{\rm spill}^\rs}
\newcommand{\tauspillb}{\tau_{\rm spill}^\rb}

\newcommand{\hattauspillb}{\widehat{\tau}_{\rm spill}^\rb}

\newcommand{\hattau}{\hat{\tau}}

\newcommand{\bw}{\mathbf{W}}
\newcommand{\bww}{\mathbf{w}}
\newcommand{\mmw}{\mathbb{W}}

\newcommand{\rs}{{\rm S}}
\newcommand{\rb}{{\rm B}}

\newcommand{\been}{\mathbf{1}}
\newcommand{\owwi}{\overline{w}^\rb_{i}}
\newcommand{\owwj}{\overline{w}^\rs_{j}}
\newcommand{\ooww}{\overline{\overline{w}}}

\newcommand{\II}{{\mathcal{I}}} 
\newcommand{\JJ}{{\mathcal{J}}} 

\newcommand{\NN}{{\mathcal{N}}}

\newcommand{\mmr}{\mathbb{R}}

\newcommand{\meanpopulation}[1]{{\overline{\overline{y}}}_{#1}}

\newcommand{\meanestimate}[1]{\widehat{\overline{\overline{Y}}}_{#1}}

\newcommand{\meanonefixedindex}[2]{\overline{\overline{{y}}}_{#1,#2}}
\newcommand{\meanonefixedindexvec}[2]{\overline{\overline{\bm{y}}}_{#1,#2}}

\newcommand{\estimatefixedindex}[2]{\widehat{\overline{\overline{Y}}}_{#1,#2}}

\newcommand{\axismeanestimate}[2]{\widehat{\overline{Y}}_{#1}^{#2}}
\newcommand{\axismeanpopulation}[2]{\overline{y}_{#1}^{#2}}

\newcommand{\axisvariationmeanpopulation}[2]{\delta_{#1}^{#2}}

\newcommand{\axisvariancepopulation}[2]{\sigma^{#2}_{#1}}
\newcommand{\axisvarianceestimate}[2]{\widehat{\Sigma}^{#2}_{#1}}

\newcommand{\axisvariancecrosspopulation}[2]{\xi^{#2}_{#1}}

\newcommand{\axismeanfixedindex}[3]{\overline{y}_{#1,#2}^{#3}}

\newcommand{\axisbias}[2]{\eta_{#1}^{#2}}
\newcommand{\axisbiasestimate}[2]{\widehat{\eta}_{#1}^{#2}}

\newcommand{\axisvariancepopulationscaled}[2]{\sigma_{#1}^{#2}}
\newcommand{\axisvarianceestimatescaled}[2]{\hat{\Sigma}_{#1}^{#2}}

\newcommand{\typevariance}{\hat{\Sigma}}

\newcommand{\irt}{I_{\rm T}}
\newcommand{\irn}{I_{\rm C}}
\newcommand{\jrt}{J_{\rm T}}
\newcommand{\jrn}{J_{\rm C}}
\newcommand{\ooe}{\overline{\overline{\varepsilon}}}
\newcommand{\oee}{\overline{\varepsilon}}

\newcommand{\db}{D^{\rm \rb}}
\newcommand{\ds}{D^{\rm \rs}}

\newcommand{\Opfin}{O_p^{\mathrm{fin}}}

\newcommand{\boldb}{\bm{b}}
\newtheorem{theorem}{Theorem}
\newtheorem*{theorem*}{Theorem}

\newtheorem{definition}[theorem]{Definition}
\newtheorem{assumption}[theorem]{Assumption}

\newtheorem{lemma}[theorem]{Lemma}

\newtheorem{example}[theorem]{Example}
\newtheorem{corollary}[theorem]{Corollary}

\theoremstyle{definition}
\newtheorem{remark}[theorem]{Remark}

\singlespacing

\addtolength{\oddsidemargin}{-.5in}%
\addtolength{\evensidemargin}{-1in}%
\addtolength{\textwidth}{1in}%
\addtolength{\textheight}{1.7in}%
\addtolength{\topmargin}{-1in}%

\DeclareRobustCommand{\t}{\dot{\tau}_i^{\pi}}
\DeclareRobustCommand{\T}{\dot{\tau}_i^{\Pi}}

\DeclareRobustCommand{\J}{\mathcal{J}}

\DeclareRobustCommand{\Set}[2][]{\left\{#1 \, \middle|\, #2 \right\}}
\DeclareRobustCommand{\bb}{\mathbb}
\DeclareRobustCommand{\beef}{\vphantom{\sum}}
\DeclareRobustCommand{\f}{\frac}
\DeclareRobustCommand{\d}{\delta}

\DeclareRobustCommand{\PP}{\mathfrak{P}}
\DeclareRobustCommand{\Unif}{\mathrm{Unif}}

\DeclareRobustCommand{\revision}[1]{{\textcolor{black}{#1}}}

 \appto\appendix{\counterwithin{equation}{section}}
\makeatletter

\begin{document}
\title{Multiple Randomization Designs:\\ Estimation and Inference with Interference}
\author[1]{Lorenzo Masoero}
\author[1]{Suhas Vijaykumar}
\author[3]{Thomas S.~Richardson}
\author[1]{James McQueen}
\author[1]{Ido Rosen}
\author[4]{Brian Burdick}
\author[4]{Pat Bajari}
\author[2]{Guido Imbens}

\affil[1]{Amazon, US}
\affil[2]{Corresponding author, Graduate School of Business and Department of Economics, Stanford University, US}
\affil[3]{Department of Statistics, University of Washington, US}
\affil[4]{Work done while at Amazon, US}
\maketitle

\newif\ifarxiv
\arxivtrue
\newif\ifresponse
\responsefalse

\bigskip


\begin{abstract}
    Completely randomized experiments, originally developed by Fisher and   Neyman in the 1930s, are still widely used in practice, even in online experimentation. 
    However, such designs are of limited value for answering standard questions in marketplaces, where multiple populations of agents interact strategically, leading to complex patterns of spillover effects. 
    In this paper, we derive the finite-sample properties of tractable estimators for ``Simple Multiple Randomization Designs'' (SMRDs), a new class of experimental designs which account for complex spillover effects in randomized experiments. Our derivations are obtained under a natural and general form of cross-unit interference, which we call ``local interference.'' 
    We discuss the estimation of main effects, direct effects, and spillovers, and present associated central limit theorems.
\end{abstract}

\noindent%
{\it Keywords: Experimental Design, Randomization Inference, Spillovers, Marketplaces}
\vfill

\numberwithin{theorem}{section}



\section{Introduction} \label{sec:introduction}

Randomized experiments, introduced in the 1920s \citep{neyman1923applications, fisher1937design}, are an indispensable tool for estimating causal effects across many disciplines. 
For example, the Food and Drug Administration in the United States requires such experiments as part of the drug approval process. 
Recently, online experimentation has also become an integral part of  product development in the private sector.
\citet{gupta2019top} list some online businesses that collectively run hundreds of thousands of experiments annually. 

Modern experimental contexts differ markedly from those that inspired early experimental designs: experiments are carried out in marketplaces, often online, where multiple populations of units interact strategically (\eg, {\buyers} and {\sellers}; riders and drivers; renters and  property managers; viewers,  content creators  and advertisers). A challenge posed by these settings is that cross-unit interactions often lead to interference or spillovers. In our running example of buyers and sellers, the treatment assigned to one unit in a population ({\eg}, a {\seller}) might affect the outcome for a different unit of the same  population (another {\seller}). If present, this interference invalidates conventional analyses of standard experimental designs.

We study the finite sample properties of tractable estimators for a new class of experimental designs, ``Multiple Randomization Designs'' (MRDs for short), which are tailored for experimentation in marketplaces \citep{bajari2023experimental, johari2022experimental}. The distinguishing feature of these designs is that they involve multiple populations of units:  treatment assignments and outcomes are measured at the level of a tuple of units, one from each population ({\it e.g.}, the impact of providing additional information on a buyer's past expenditure to a seller, measured at the {\buyer}-{\seller} level).
The experimental designs correspond to distributions of assignments for these tuples of units, {\it e.g.,} over the {\buyer}-{\seller} pairs.

In the leading case we consider, a ``Simple MRD'' or SMRD, a  subset of {\buyer}s is selected at random, and a subset of {\seller}s is selected at random, and only the {\buyer}-{\seller} pairs where both the {\buyer} and the {\seller} was selected are exposed to the binary treatment. 
This two-level randomization serves to isolate and measure interference between units, thus distinguishing it from classical designs with multi-level randomization (\eg, Latin square and split-plot designs). 

This paper provides the first formal analysis of MRDs, showing that they may be used to
\begin{inumerate}
    \item test for the presence of spillovers,  
    \item estimate and conduct inference for the overall treatment effect in the presence of a large class of spillovers, and
    \item obtain\textemdash even without interference or spillovers\textemdash more precise estimates of the average causal effect than standard, single-sided randomization designs.
\end{inumerate}

Our work contributes to the rapidly growing literature on causal inference under interference \citep{hong2006evaluating, hong2008causal, hudgens2008toward, rosenbaum:2007, aronow2012general, vanderweele2014interference, ogburn2014causal, athey2018exact, ugander2013graph, blake2014marketplace, basse2019randomization}.  
Recent research has focused on experimental design in settings with complex spillovers, differing mainly in the settings they consider and the corresponding assumptions placed on cross-unit interference. 
Some work considers cases where spillovers between units are mediated by low-dimensional measures, such as prices in a marketplace or shares of treated units in a peer group
\citep{wager2021experimenting, munro2021treatment, aronow2017estimating}. 
Another line of work focuses on the role of clustering to mitigate interference, \eg, 
\citet{viviano2023causal}.
A separate approach models interference in terms of a bipartite graph between units and treatment sites  (\eg, advertisers bidding on the same keywords, as in \citealp{zigler2021bipartite}, and \citealp{ harshaw2022design}).
Others consider crossover or switchback designs in dynamic contexts where treatments vary over time and have lasting effects \citep{cox2000theory, bojinov2020design, xiong2023optimal, shi2023behavioral}. 
Finally, some work has modeled spatial or network spillovers in order to improve precision in survey experiments
\citep{savitz2009exploiting}.

Multiple Randomization Designs  were informally introduced by \citet{bajari2023experimental} and  \citet{johari2022experimental}.
The key feature of MRDs is the presence of two or more populations, \eg 
\ {\buyer}s and {\seller}s, where interventions can be assigned and outcomes measured at the level of the {\buyer-\seller} pair.
We provide exact characterizations of the design-based variance, together with corresponding variance estimators, and central limit theorems that allow for inference under these designs.

On the surface, MRDs share common features with Latin squares \citep{welch1937z} and split-plot designs \citep{fisher1928statistical,zhao2018randomization, zhao2022reconciling}, but they are fundamentally quite different.
In all three cases, the experimental units are organized in a matrix or clustered structure.
\revision{However, they differ in important ways: for example, }
Latin square designs are aimed at reducing variance through balance of the location of experimental units in a geographic space, whereas MRDs address interference and spillovers between experimental units. 
\revision{Meanwhile, although split-plot designs have been used to study  spillovers (e.g., \citealt{hudgens2008toward, zhao2022reconciling}), they consider units grouped  into  clusters as opposed to a two-dimensional array.}

Multiple randomization allows us to account for interference in ways not possible with completely randomized experiments, but in doing so they complicate estimation and inference. 
Challenges arise from the intrinsic dependence 
structure in the assignment process across the two populations: {\buyer}s\  and {\seller}s in our generic example. 
We address these using a randomization-based approach, where we take the potential outcomes under different treatment regimes as fixed. We exactly characterize the finite-sample variances of the proposed estimators with respect to the random design. 
We also propose conservative variance estimators, similar to those available for conventional randomized experiments. 
Finally, we prove design-based central limit theorems, extending the recent results of \citet{li2017general,shi2022berry} for single population experiments to our setting with multiple-population experiments, under appropriate side assumptions.

Most similar to our work is   \citet{johari2022experimental}, who studied how spillover effects caused by interference can lead to bias in standard experimental designs, and analyzed a special case of the MRDs we consider in this paper. 
\citet{johari2022experimental} produce a dynamic, stochastic model of a two-sided marketplace with cross-unit interference. Following a detailed analysis of the model, they use it to illustrate the favorable properties of SMRDs in comparison to standard experimental designs.
\section{Experiments in Marketplaces: Interference} \label{sec:interference}

We start by introducing a framework for randomized experiments in marketplaces with multiple populations of agents. 
We use the two-population {\buyer}-{\seller} (or {customer}-{product}) case as our generic example, but we emphasize that the ideas we present apply to other settings and extend to higher-order unit tuples, {\eg,} subscriber-creator-advertiser, customer-restaurant-driver or passenger-airline-travel agent.
Interference  or spillover effects arise naturally in these settings: treatment of one unit can impact the outcomes of other units, invalidating  assumptions that serve as the basis for analyzing standard experiments.
An example of the treatment is  the presentation of additional information ({\it e.g.,} in the form of more detailed reviews) shown to {\buyer} $i$ when viewing products from
{\seller} $j$.

In our generic {\buyer}-{\seller} example, one of the populations consists of  $I$ {\buyer}s, indexed  by $i~\in~\mmi~:=~\{1,\ldots,I\}$. The {\buyer}s interact with members of the second population, consisting of $J$ {\seller}s indexed by $j\in\mmj:=\{1,\ldots,J\}$.
Over a fixed period of time, say a week or a month, we measure for each {\buyer-\seller} pair an outcome metric of engagement $Y_{ij}$ (\eg, the amount of money paid by {\buyer} $i$  to {\seller} $j$). 
The experimenter performs an intervention at the level of the {\buyer}-{\seller} pair $(i,j)$, via the randomized treatment assignment $W_{ij} \in \{\cc,\ct\}$. 
Critically, the treatment might not be offered to all buyers who interact with a particular seller, nor to all sellers for any given buyer. Let $\bw \in \{\wc, \wt\}^{I \times J}$ denote a random $I\times J$ matrix of treatment assignments with typical element $W_{ij}\in\{\wc,\wt\}$, and $\bww$ a realization of this matrix. 

We adopt the potential outcome framework \citep[see e.g.,][]{fisher1937design, neyman1923, imbens2015causal}: for each value ${\bww}$ of the $I \times J$ assignment matrix, $y_{ij}({\bf w})$ is the corresponding potential outcome for unit $(i,j)$, which is non-stochastic.
An example assignment matrix is shown in \eqref{equilibrium110}, where rows identify five {\buyers} and columns identify six {\sellers}.
Colors highlight four sets of experimental units ({\buyer}-{\seller} pairs), instead of the usual two. Three of these groups (\textcolor{magenta}{pink}, \textcolor{cyan}{blue}, and \textcolor{olive}{yellow}) are assigned to the control treatment,  and are differentiated by the fraction of ``neighboring'' {\buyer}-{\seller} pairs\textemdash units in the same row or column\textemdash which are assigned to treatment. For example, pairs $(i,j)$ colored in pink are exposed to control, with $4/4$ of units $(i',j)$, $i'\ne i$ in the same column also exposed to control, and $5/6$ of units $(i,j')$, $j' \ne j$ in the same row exposed to treatment. Similarly, yellow pairs are also exposed to control, again with $4/4$ of units in the same column also exposed to control, but only $2/6$ of units in the same row  exposed to treatment.
Prior to intervention, all four groups are comparable due to randomization. After the intervention, even the three groups of control units need not be comparable: their outcomes might differ systematically due to spillovers.

\begin{equation}\label{equilibrium110}
\begin{split}
     \arraycolsep=14pt\def\arraystretch{0.65}
  \begin{NiceArray}{rl@{\hspace{.1em}}cccccc@{\hspace{1em}}c}[%
 name=superwmat,%
 create-extra-nodes,%
 margin,%
 extra-margin=0pt,%
 cell-space-limits=3pt,%
 baseline=7]
  & & & & & & & & {\rm Buyers:}\ i \\
  &{\rm Sellers:}\ j \rightarrow &1 & 2 & 3 & 4 & 5 & 6 & \rotate \leftarrow \\ 
  & & \textcolor{cyan}{\cc} & \textcolor{cyan}{\cc} & \ct& \textcolor{olive}{\cc} &\ct&\textcolor{olive}{\cc} & 1 \\
  & & \textcolor{cyan}{\cc} & \textcolor{cyan}{\cc} & \ct& \textcolor{olive}{\cc} &\ct& \textcolor{olive}{\cc} & 2 \\
  & &\ct&\ct&\ct& \textcolor{magenta}{\cc} &\ct&\textcolor{magenta}{\cc} & 3 \\
  & &\textcolor{cyan}{\cc} & \textcolor{cyan}{\cc}  & \ct&\textcolor{olive}{\cc} &\ct& \textcolor{olive}{\cc} & 4 \\
  & &\ct&\ct&\ct& \textcolor{magenta}{\cc} &\ct&\textcolor{magenta}{\cc}  & 5
 \CodeAfter
    \SubMatrix({3-3}{7-8})[name=wmat]
    \begin{tikzpicture}
        \node[left] (W) at (wmat-left.west) {$\bw =$} ;
    \end{tikzpicture}
 \end{NiceArray}
 \begin{tikzpicture}[remember picture,overlay]
 \end{tikzpicture}
 \end{split}
 \end{equation}

Consider the {\buyer}-{\seller} example in which for pairs of {\buyer}s and {\seller}s assigned to the treatment group the {\buyer} gets to see more information about the {\seller} or product in the form of additional reviews. {\Buyer} 1 gets the additional information when interacting with {\seller}s 3 and 5, but not when interacting with {\seller}s 1, 2, 4 and 6. If the information is generally helpful, this may lead buyer 1 to switch engagement from {\seller}s 1, 2, 4 and 6 to {\seller}s 3 and 5, a common form of spillover. {\Seller}s 3 and 5 are in the treatment group for all buyers. If the information raises the engagement with those sellers relative to, say sellers 4 and 6 who are always in the control group, this may lead {\seller}s 3 and 5 to change other behaviors, such as their marketing strategy, leading to a different type of  spillover.

Formally, spillovers are present whenever potential outcomes $y_{ij}(\bww)$ and $y_{ij}(\bww')$ differ for assignments $\bww$ and $\bww'$ where the treatment for the pair $(i,j)$ is identical, $w_{ij}=w'_{ij}$, but some other elements of the assignment matrices $\bww$ and $\bww'$ differ. Obtaining unbiased estimates of  causal effects in the presence of spillovers is challenging: classical causal analyses typically impose strong assumptions that rule out any form of cross-unit interference (\eg, the stable unit value assumption or SUTVA, \citealp{rubin1974estimating}).

We now introduce different assumptions on the potential outcomes, leading to different structures for the interference. We later discuss in \cref{sec:mrds} how alternative forms of interference can be effectively addressed using specific experimental designs. The simplest possibility is to rule out  \emph{any} type of interference (a version of SUTVA where the experimental unit is given by a {\buyer}-{\seller} pair).

\begin{assumption}[Strong No-Interference]\label{sutva}
    Potential outcomes satisfy the strong no-interference assumption if
        $y_{ij}(\bww)=y_{ij}(\bww')$,
    for all $(i,j)$ such that $w_{ij}=w'_{ij}$.
\end{assumption}
Under \cref{sutva}, a natural approach is to randomize all pairs, subject to treatment balance within {\buyer}s and {\seller}s.
This generally allows for more efficient estimation than designs which randomize only buyers or only sellers.

A natural way to weaken \cref{sutva} is to allow the outcome for a given {\buyer}-{\seller} pair to additionally depend on the treatment assignments involving the same {\buyer} but different {\seller}s (but not to depend on the assignments received by other {\buyer}s). Let $\bww,\bww'$ be assignment matrices where the treatment for the pair $(i,j)$ coincides, so $w_{ij}=w'_{ij}$, but there is a {\seller} $j'$ for which $w_{ij'}\neq w'_{ij'}$. Under this type of interference, it may be that $y_{ij}(\bww)\neq y_{ij}(\bww')$. However, for any assignment $\bww''$ with $w_{ij'}''=w_{ij'}, \forall j' \in \mmj$, $y_{ij}(\bww)=y_{ij}(\bww'')$. We formalize this form of interference in \cref{assp:no_buyer_interference}.

\begin{assumption}[No-Interference for {\Buyer}s] \label{assp:no_buyer_interference}
     Potential outcomes satisfy the no-interference for {\buyer}s assumption if
        $y_{ij}(\bww)=y_{ij}(\bww')$
    for all $(i,j)$ such that
     $w_{ij'}=w'_{ij'}$ for all $j' \in \mmj$.
\end{assumption}

Under \cref{assp:no_buyer_interference}, changing one or more of the treatment assignments for a different {\buyer} $i'$ does not change the outcomes for {\buyer}-{\seller} pair $(i,j)$.
But, changing one or more of the treatments for a different {\seller} $j'$ may affect the outcome $y_{ij}$. 
Under this assumption a {\buyer}-randomized experiment\revision{, corresponding to the matrix assignment later introduced in \cref{eq:srd},} is a natural strategy. Similarly, a {\seller}-randomized experiment is natural if we expect the following ``no-interference for {\seller}s'' assumption to hold. 
\begin{assumption}[No-Interference for {\Seller}s] \label{assp:no_seller_interference}
     Potential outcomes satisfy the no-interference for {\seller}s assumption if
        $y_{ij}(\bww)=y_{ij}(\bww')$
    for all $(i,j)$ such that
     $w_{i'j}=w'_{i'j}$ for all $i' \in \mmi$.
\end{assumption}

Next, we consider an assumption first introduced in \citet{bajari2023experimental} that allows for some forms of interference across both  buyers and sellers. 
This is a key assumption in our paper. It attempts to balance competing interests: allowing for a substantial degree of interference and at the same time imposing enough structure so that questions of interest are answerable. 

\begin{assumption}[Local Interference]\label{sutva_local}
Potential outcomes satisfy the local interference assumption if
$y_{ij}(\bww)=y_{ij}(\bww')$,
     for any pair $(i,j)$,  such that (a) the assignments for the pair $(i,j)$ coincide, $w_{ij} = w'_{ij}$, (b) the fraction of  treated {\seller}s for  {\buyer} $i$ coincide under $\bww$ and $\bww'$,  
    and (c) the fraction of treated {\buyer}s for  {\seller} $j$ coincide under  $\bww$ and $\bww'$.  
\end{assumption}  
Consider the following two assignment matrices $\bww, \bww'$:
\begin{equation*}
    \arraycolsep=10pt\def\arraystretch{0.65}
    \bww =
        \left({\begin{array}{ccccc}
        \cc & \ct  & \textcolor{purple}{\cc} & \cc& \cc  \\
        \ct & \cc   &\textcolor{purple}{\ct} & \cc& \ct \\
        \textcolor{purple}{\ct} & \textcolor{purple}{\cc}  & \textcolor{purple}{\ct} &\textcolor{purple}{\ct}& \textcolor{purple}{\ct}\\
        \cc& \cc   & \textcolor{purple}{\cc} &\cc& \cc
        \end{array}}  \right),
        \; 
    \bww' = \left(
        {\begin{array}{ccccc}
         \cc & \ct  &  \textcolor{purple}{\ct} & \cc& \cc  \\
        \cc & \ct   & \textcolor{purple}{\cc} & \cc& \cc \\
         \textcolor{purple}{\ct} &  \textcolor{purple}{\cc}  & \textcolor{purple}{\ct} & \textcolor{purple}{\ct}&  \textcolor{purple}{\ct}\\
        \cc& \cc   &  \textcolor{purple}{\cc} &\ct& \ct 
        \end{array}}
        \right).
\end{equation*} 
Under local interference, the outcome for {\buyer}-{\seller} pair $(3,3)$ must be identical for the assignment matrices $\bww$ and $\bww'$ (that is, $y_{33}(\bww)=y_{33}(\bww')$), because (a) the (3,3) elements of $\bww$ and $\bww'$  are identical, and (b) the third columns of the assignment matrices (given in purple) have the same fraction of treated pairs $(1/2)$, and
(c) the third rows of the assignment matrices   (also given in purple) have the same fraction of treated pairs (4/5).

Although obviously weaker than \cref{sutva} which rules out all interference, and more flexible than \cref{assp:no_buyer_interference} which rules out interference between buyers while allowing for interference within sellers, local interference does still substantially restrict the possible forms of interference between units. In particular, for a given unit pair $(i,j)$ only $I+J-1$ of the total $IJ$ unit-level assignments defining $\bww$ are relevant to the realized outcome: those of pairs $(i,j')$ and $(i',j)$. Further, the unit-level outcome is a function of only three sufficient statistics: the unit's own treatment assignment ($w_{ij})$, and the averages of the (same) row and column to which the pair belongs ($\sum_{i'} w_{i'j}/I$, and $\sum_{j'} w_{ij'}/J$). 

\revision{Similar forms of interference were previously proposed by \citet{manski2013identification} (cf.\ ``anonymous interactions'') and \citet{hudgens2008toward} (cf.\ ``stratified interference'').} Despite its simplicity, we believe that this assumption is a natural starting point for  approximating many types of interference that arise due to strategic behavior in a two-sided market. To illustrate, we now provide a simple example of a two-sided marketplace in which---at Nash equilibrium---potential outcomes exhibit both buyer and seller interference, and satisfy local interference. Later we show that under some designs, including the leading Simple MRD, local interference has no testable implications. 
\revision{We also show in \cref{sec:extensions} that more complex MRDs do lead to testable implications on the conditional expectations (over treatment assignments) of the outcomes.}

\begin{example}\label{example:strategic-local-interference} \normalfont
    Consider a two-sided platform where content creators $i ~\in~ [I]$ and advertisers $j \in [J]$ interact. 
    Each content creator $i$ produces corresponding content with score $q^c_i$, and each advertiser places ads with corresponding advertisement quality $q^a_j$. 
    In this model, each creator-advertiser pair generates revenue $y_{ij}$. In the absence of any intervention, revenue generated by $(i,j)$ is given by 
    \[
        y_{ij} = m_{ij} \{q^c_i + q^a_j\},
    \]
    where the (fixed) scalar factor $m_{ij} \in \mathbb{R}$ reflects the compatibility between $i$ and $j$ (e.g., footwear ads might have higher compatibility with content produced by a creator focusing on sports). 
    Creators and advertisers are compensated by the platform according to a contract: for each pair $(i,j)$, creator $i$ is compensated $r^c_i y_{ij}$ and advertiser $j$ is compensated $r^a_j y_{ij}$, and the platform keeps $(1 - r^c_i - r^a_j)y_{ij}$; the platform negotiates $r^c_i$, $r^a_j$ with each creator and advertiser. 
    In practice, generating high-quality content requires costly effort. In particular, we suppose that both creators and advertisers maximize their total compensation minus the cost of effort:
    \[
        U^c_i =  
        \left(
            \sum_{j=1}^J r^c_i y_{ij}
        \right) 
        - 
        \frac{(q^c_i)^2}{2},
        \qquad\text{and}\qquad
        U^a_j =  
        \left(
            \sum_{i=1}^I r^a_j y_{ij}
        \right) 
        - 
        \frac{(q^a_j)^2}{2}.
    \]
    In the static Nash equilibrium, each creator and advertiser solves the maximization problem treating the other agents' inputs $q_{i}^c, q_j^a$ as fixed and known. This leads to the equilibrium actions
    \[
    q_i^c = \sum_{j=1}^J r^c_i y_{ij}, \quad\text{and}\quad q^a_j =  \sum_{i=1}^I r^a_j y_{ij}.
    \]
        The platform hosting the content creators and advertisers tests the impact of a subsidy via a binary intervention $\bww$ affecting the revenue as follows:
    \[
        y_{ij}(\bww) = (m_{ij} + \eta w_{ij})\{q^c_i(\mathbf{w}) + q^a_j(\mathbf{w})\}.
    \]
    Here, for $\eta \in \mathbb{R}$, the factor $\eta w_{ij} \in\{0,\eta\}$ represents an extra incentive paid by the platform ($\eta$ is the incentive, and $w_{ij}$ is a binary treatment variable).
    Notice that each agent's incentives depends on the average treatment status of their interactions. This influences their action, which creates precisely a local interference structure. At Nash equilibrium, the revenue $y_{ij}$ and profit $\pi_{ij}$ both satisfy local interference; they are given by  
    \begin{equation}
    \begin{split}
        y_{ij}(\bww)
        &= (m_{ij} + \eta w_{ij})
        \left[
            Jr^c_i\left( \bar m^c_i+\eta\bar w^c_i \right)
            +Ir^a_j\left( \bar m^a_j+\eta\bar w^a_j\right)
        \right], \\
        \pi_{ij}(\bww)
        &= \{(1 - r_i^c - r_j^a)m_{ij} - ( r_i^c + r_j^a)\eta w_{ij})\}
        \left[
            Jr^c_i\left(\bar m^c_i+\eta\bar w^c_i)\right)
            +Ir^a_j\left(\bar m^a_j+\eta\bar w^a_j\right)
        \right], 
        \end{split}
        \label{eq:potential-outcomes-local-interference}
    \end{equation}
    where $\bar w^c_i = \frac{1}{J}\sum_{j'=1}^J w_{ij'}$, and $\bar m^c_i ~=~ \frac{1}{J}\sum_{j'=1}^J m_{ij'}$  and $\bar m^a_j$, $\bar w^a_j$ are defined symmetrically. 
\end{example}
\Cref{example:strategic-local-interference} shows a two-sided-marketplace with strategic agents in which agents' equilibrium actions lead potential outcomes (revenue or profits) to satisfy local interference (as in \cref{eq:potential-outcomes-local-interference}). 
Local interference arises somewhat naturally, as it assumes the outcome of an interaction between two agents will depend non-parametrically on the interaction-level treatment, as well as both agents' cumulative exposure to treatment. More generally, local interference may be viewed as a natural, tractable first approximation to the complex spillover effects arising in a two-sided marketplace. 
In \cref{sec:experiment}, we simulate the above example to show that agents' strategic responses can lead to large spillover effects, which are neglected by traditional designs.
In this way, our results are closely related to but distinct from the work of \cite{munro2021treatment} on treatment effects in market equilibrium: for example, the above Nash equilibrium in a finite marketplace is not captured by that work. It is also related to the works of \citet{harshaw2022design} and \citet{aronow2017estimating} in that potential outcomes depend on low-dimensional measures of ``exposure,'' though distinct in that we place agents on both sides\textemdash as opposed to one side \textemdash of the bipartite network.

\section{Multiple Randomization Designs} \label{sec:mrds}

Multiple Randomization  Designs (MRDs) are a generalization of standard A/B tests to allow for spillover effects common in marketplaces \citep{bajari2023experimental, johari2022experimental}. These designs can provably detect and measure spillover effects of the type introduced in \cref{sec:interference}, as we will discuss in \cref{sec:estimation}.  Let $\mmw$ denote the set of $2^{IJ}$ values that the random binary assignment matrix $\bw$ can take. We now formally define MRDs.

\begin{definition}[Multiple Randomization Designs]\label{def:MED}
A Multiple Randomization Design (MRD) is a probability distribution over $\mmw$, $p:\mmw\mapsto [0,1)$, such that 
\begin{inumerate}
\item \label{s3d1-1} $p(\cdot)$ is row and column exchangeable, and 
\item \label{s3d1-2} there exists  $\ooww\in(0,1)$ such that for any $\bm{w} = (w_{ij}) \in \{0,1\}^{I \times J}$ in the support of $p$,
\end{inumerate}
\[
    \frac{1}{IJ}\sum_{i=1}^I\sum_{j=1}^J \been({w_{ij}=\wt})=\ooww.
\]
\end{definition}
Note that a probability distribution $p(\cdot)$ over matrices $\mathbf{w}$ is said to be row (or column) exchangeable if, under $p(\cdot)$, any two assignments which differ by a permutation of the rows (or columns) are assigned the same probability. By imposing exchangeability of $p(\cdot)$ through \cref{def:MED}\ref{s3d1-1} we rule out the possibility of degenerate experiments in which a single value $\bww$ has probability one.
Condition \ref{def:MED}\ref{s3d1-2} ensures that all assignments with positive probability have the same fraction $\ooww$ of treated {\buyer}-{\seller} pairs. It is not strictly necessary, but it helps us to derive exact finite-sample results in \cref{sec:estimation}, clarifying what can be learned without  large sample approximations. 

Given an assignment matrix $\bww$, for each {\buyer} $i$ let $\owwi$ be the fraction of  {\seller}s $j$ for which $(i,j)$ received the treatment, and let $\owwj$ be the symmetric quantity for {\seller} $j$:
\begin{equation*}
    \owwi:=\sum_{j=1}^J \frac{\been({w_{ij}=\wt})}{J}, 
    \qquad\text{and}\qquad 
    \owwj:=\sum_{i=1}^I \frac{\been({w_{ij}=\wt})}{I}. 
\end{equation*}
\Cref{def:MED} implies that $\ooww=\sum _i \owwi/I=\sum_j \owwj/J$. A key feature of an MRD is that it allows both {\buyer}s and {\seller}s to be exposed to different treatments within the same experiment. We refer to the presence of such variation in the assignment as {\it inhomogeneity} of the {\buyer} or {\seller} experience.
\begin{definition}[Homogeneous and Inhomogeneous Experiences] 
Assignment ${\bf w}$ induces a homogeneous experience for {\buyer}\ $i$ if $\owwi\in\{0,1\}$, and an inhomogeneous experience for {\buyer}\ $i$ if $\owwi\in(0,1)$. 
Similarly, it induces a homogeneous experience for {\seller} $j$ if $\owwj\in\{0,1\}$ and an inhomogeneous experience for {\seller} $j$ if $\owwj\in(0,1)$.
\end{definition}
In assignment matrix \eqref{equilibrium110}, {\seller}s 3, 4, 5 and 6 have a homogeneous experience while {\seller}s 1 and 2 and all {\buyer}s have an inhomogeneous experience. Inhomogeneous experiences are at the heart of spillover concerns in our set-up. 
Suppose that the treatment corresponds to offering more information to some {\buyer}-{\seller} pairs. Buyers with an inhomogeneous experience may shift their engagement from sellers in the control group to sellers in the treatment group, without changing their overall engagement or expenditure.

Next, we showcase the flexibility of MRDs by defining three classes of experimental designs that fit within the general Definition~\ref{def:MED}. 
These three classes do not exhaust the possibilities, but make specific points: they show that MRDs
\begin{inumerate}
\item encompass standard experimental designs, (\cref{sec:srd}), 
\item can increase efficiency 
(\cref{sec:crd}) 
and 
\item most importantly, in certain cases can answer questions that standard designs cannot answer, as we discuss in \cref{sec:smrd}. 
\end{inumerate}
We conclude the section by discussing connections between these designs and the local interference assumption introduced in \cref{sec:interference}.

\subsection{Single Randomization Designs}  \label{sec:srd}

A Single Randomization Design (SRD) is an MRD where each {\buyer} or {\seller} has a homogeneous experience with probability one: i.e.\ a {\buyer} experiment  ($\owwi\in\{0,1\}$ and $\owwj=\ooww$), or a {\seller} experiment ($ \owwi=\ooww$ and $\owwj\in\{0,1\}$).
A {\buyer} experiment is a simple {\buyer}-randomized A/B test, where assignment matrices are of the form of \eqref{eq:srd}, with identical columns and constant rows: 
\begin{equation}
    \bww = 
    {{
    \left(
    \begin{array}{cccccccccc}
        \wc  & \wc  & \wc   & \wc  & \wc & \wc  &\wc &\wc \\
        \wt   & \wt  & \wt    &\wt  & \wt & \wt  &\wt &\wt \\
        \wc   & \wc  & \wc    & \wc  &\wc & \wc  &\wc &\wc \\
        \wc   & \wc  & \wc    & \wc  & \wc &\wc  &\wc &\wc \\
    \end{array}
    \right)
    }.} 
    \label{eq:srd}
\end{equation}
Here {\buyer}s $1,3,4$ are in the control group, and {\buyer} $2$ is in treatment. All buyers here have a homogeneous experience, whereas none of the sellers have a homogeneous experience.

\subsection{Crossover Designs} \label{sec:crd}

In contrast to standard (buyer or seller) experiments, MRDs include experiments in which neither all {\buyers} nor all {\sellers} have homogeneous experiences.
The simplest such an MRD is one in which all interactions $(i,j)$ are randomly assigned.
This design is widely used in settings where the second dimension is time, and where such designs have been referred to 
as rotation experiments \citep{cochran1939long}, crossover experiments \citep{brown1980crossover}, or switchback experiments \citep{bojinov2020design}, although it is not limited to settings where time is one of the dimensions. An example is given in assignment matrix \eqref{eq:staggered}:
\begin{equation} \label{eq:staggered}
\bww =
    \begin{NiceArray}{c>{\color{gray}}c@{\hspace{1em}}cccccccc@{\hspace{1em}}}[%
    margin]
    \\
    \RowStyle[color=gray]{}
    &   & 1   & 2     & 3     & 4     & 5     & 6     & 7     & 8     \\
    & 1 & \wt & \wt   & \wc   & \wc   & \wc   & \wt   &\wc    & \wt   \\
    & 2 & \wt & \wt   & \wt   & \wc   & \wt   & \wc   &\wc    & \wc   \\
    & 3 & \wc & \wc   & \wc   & \wt   & \wt   & \wt   &\wc    & \wt   \\
    & 4 & \wc & \wc   & \wt   & \wc   & \wc   & \wt   &\wt    & \wt   \\
    & 5 & \wc & \wt   & \wc   & \wt   & \wt   & \wc   &\wt    & \wc   \\
    & 6 & \wt & \wc   & \wt   & \wt   & \wc   & \wc   &\wt    & \wc   \\
    \\
    \CodeAfter
        \SubMatrix({3-3}{8-10})
        \begin{tikzpicture}
            \draw[decorate,thick,color=gray] (2.1) node[above,xshift=1em] (S) {\text{Time Period}};
            \draw[-latex,thick,color=gray] (S.south) |- (2-3.west);
            \draw[decorate,thick,color=gray] (2.1) node[below,xshift=-3em,yshift=2pt] (B) {Individual Unit};
            \draw[-latex,thick,color=gray] (B.east) -| (3-2.north);
        \end{tikzpicture}
    \end{NiceArray}.
\end{equation}
In assignment matrix \eqref{eq:staggered} we consider a balanced design, where each unit is in the treatment group for four periods, and in every period exactly three units are in the treatment group. It is particularly attractive in settings where strong no-interference is reasonable (\cref{sutva}), where, under additional assumptions on the potential outcomes, it can be shown to improve efficiency \citep{masoero2023efficient}.

\begin{remark}
    A related experimental design is that of staggered adoption, or the ``stepped wedge'' design. There, units are assigned to the treatment at different points in time, but once assigned to the treatment they never exit. See
    \citet{athey2022design,
    hemming2015stepped} for analyses of these experiments, and
     \citet{xiong2023optimal} for optimal design.
\end{remark}

\subsection{Simple Multiple Randomization Designs} \label{sec:smrd}

The next design we consider introduces systematic variation in  
$\owwi$ over {\buyer}s and variation  in $\owwj$ over {\seller}s. 
Such variation allows for the detection of spillovers, as well as for estimation of their magnitude.
To accomplish this goal, we randomize {\buyer}s and {\seller}s separately: we select at random $I_T$ buyers, with $1< I_T < I-1$ and assign them $W_i^{\rb}=1$. For the remaining \buyers, $W_i^{\rb}=0$, so that we have a {\buyer}-assignment random vector $\bm{\vec{W}}^{\rb} \in \{0,1\}^I$ with $\sum_i W_i^{\rb} = I_T$. Symmetrically, we select $J_T$ \sellers\ at random, with $1<J_T<J-1$ and assign them $W_j^{\rs} = 1$. The remaining sellers are assigned $W_j^{\rs} = 0$, yielding a {\seller}-assignment random vector $\bm{\vec{W}}^{\rs} \in \{0,1\}^J$ with $\sum_j W_j^{\rs} = J_T$. 
Then the assignment for the  pair $(i,j)$ is a function of the {\buyer} and {\seller} assignments $W_i^{\rb}$ and $W_j^{\rs}$.

\begin{definition}[Simple Multiple Randomization Designs] \label{smrd}
Given a population of $I$ buyers and $J$ sellers, a  Simple Multiple Randomization Design (SMRD) is an MRD in which, for fixed proportions $p^{\rb} = I_T/I \in (0,1)$ and $p^{\rs} = J_T/J \in (0,1)$, we randomly assign to each buyer $W_i^{\rb} \in \{0,1\}$ such that $\sum_{i} W_{i}^{\rb} = I_T$, and independently randomly assign each seller $W_j^{\rs} \in \{0,1\}$ such that 
$\sum_{j} W_j^{\rs} = J_T$. 
\revision{The pair $(i,j)$ is exposed to treatment via
\begin{equation}
    W_{ij}= 
    \begin{cases}
            \wt &\mbox{ \rm{if} } \min\left(w^\rb,w^\rs\right) = 1, \\
        \wc &\mbox{ \rm{otherwise}}.
    \end{cases}\label{eq:conjunctive} 
\end{equation}
}
\end{definition}
While SMRDs do not have the richness of the full class of MRDs, they contain many of the insights that apply to the general case. This special case of MRDs has also been discussed in \citet{johari2022experimental}, where the focus is on the bias  of the difference in means estimator  for the average treatment effect. 
See also \citet{bajari2023experimental, li2021interference}.

An assignment example for an SMRD  is given in matrix \eqref{eq:smrd}, where the {\buyer}-assignment vector $\bm{\vec{w}}^\rb = [0,0,1,1]$ and {\seller}-assignment vector $\bm{\vec{w}}^\rs = [0,0,0,0,1,1,1,1]$ lead to:
\begin{equation}
    \bww={
    \left(\begin{array}{cccccccccc}
     \textcolor{red}{\cc} & \textcolor{red}{\cc} & \textcolor{red}{\cc} & \textcolor{red}{\cc}  &\textcolor{blue}{\cc} &\textcolor{blue}{\cc} & \textcolor{blue}{\cc}& \textcolor{blue}{\cc} \\
    \textcolor{red}{\cc}  & \textcolor{red}{\cc} & \textcolor{red}{\cc}   &\textcolor{red}{\cc}  &\textcolor{blue}{\cc} &\textcolor{blue}{\cc} & \textcolor{blue}{\cc}& \textcolor{blue}{\cc} \\
    \textcolor{green}{\cc}  & \textcolor{green}{\cc} & \textcolor{green}{\cc}    & \textcolor{green}{\cc}    & \ct & \ct &\ct& \ct\\
    \textcolor{green}{\cc}   & \textcolor{green}{\cc}  & \textcolor{green}{\cc}    & \textcolor{green}{\cc}    & \ct & \ct &\ct& \ct\\
    \end{array}\right)}. \label{eq:smrd}
\end{equation}
In these SMRDs, the pairs of binary values  $(w^\rb_i,w^\rs_j)$ induce  four assignment types of {\buyer}-{\seller}  pairs (each type identified by a different color in the  assignment matrix \eqref{eq:smrd}):
\begin{align} \label{eq:types}
    \type_{ij}= 
     \begin{cases}
         \ccc &\mbox{ if } w^\rb_i=0,w^\rs_j=0 \; ({\rm so}\ w_{ij}=0), \\
         \icb &\mbox{ if } w^\rb_i=1,w^\rs_j=0 \; ({\rm so}\ w_{ij}=0), \\
         \ics &\mbox{ if } w^\rb_i=0,w^\rs_j=1\; ({\rm so}\ w_{ij}=0), \\
         \ctt &\mbox{ if } w^\rb_i=1,w^\rs_j=1 \; ({\rm so}\ w_{ij}=1).
     \end{cases}
\end{align}
Here, $\ccc$ is ``homogeneous control'', $\icb$ ``inhomogeneous {\buyer} control'', $\ics$ ``inhomogeneous {\seller} control'', and $\ctt$ ``treated''. Consistent with \cref{eq:conjunctive}, $w_{ij}=\ct$ if ${\type_{ij}=\ctt}$ and $w_{ij}=\cc$ otherwise. The values $w^\rb_i$ and $w^\rs_j$ can be inferred from the assignment matrix $\bww$, hence the type can be inferred from the assignment matrix, $\type_{ij}=\type_{ij}(\bww)$. 
These assignment types play an important role under the local interference assumption (\ref{sutva_local}), as \revision{highlighted} in \cref{lemma:local_interference}.
\begin{lemma} \label{lemma:local_interference}
    For $\bww, \bww'$ consistent with an SMRD and assuming that potential outcomes satisfy local interference (\cref{sutva_local}), potential outcomes can be written as a function of the assignment types only: for $\bww, \bww'$ it holds that
    \[ 
        \type_{ij}(\bww)=\type_{ij}(\bww')
        \Rightarrow 
        y_{ij}(\bww)=y_{ij}(\bww').
    \]
\end{lemma}
This simplification, where potential outcomes depend only on a function of their original argument, is related to the exposure mapping concept in \citet{aronow2017estimating}.

Of the four groups of {\buyer}-{\seller} pairs induced by an SMRD\textemdash all of which are comparable prior to treatment due to physical randomization\textemdash types $\ccc,\icb,\ics$ are all exposed to control. 
 Having multiple sets of pairs which are
\begin{inumerate}
    \item comparable prior to treatment, 
    \item all exposed to the same treatment (control) and 
    \item not comparable post-treatment, 
\end{inumerate}
gives SMRDs the ability to detect interference.
This ability is based on comparisons of average outcomes for these three groups in which pairs are all exposed to  control. Under a simple {\buyer} or {\seller} experiment, where only two types are present, and only one is exposed to the control treatment, spillovers could not be detected.

An interesting feature of the SMRD is that the local interference assumption is not testable here: differences between expected outcomes for the comparison groups can always be rationalized in a way that is consistent with local interference. 
We observe outcomes for four types of pairs, $\type_{ij}\in\{\ccc,\icb,\ics,\ctt\}$. The local interference assumption does not restrict the distribution of the outcomes for these four types. 
In contrast the no-interference for buyers assumption,  \cref{assp:no_buyer_interference}, does have testable implications in settings with a large number of buyers and sellers: it would imply that the distribution of outcomes for the $\ccc$ pairs
is the same as the distribution of outcomes for the $\icb$ pairs.

\section{Estimation and Inference for SMRDs} \label{sec:estimation}

We now describe methods that make the experimental designs introduced in the previous section practically useful by enabling statistical inference. Specifically, we provide five results.
First, we introduce estimands and estimators for causal effects in the presence of local interference for the SMRD (\cref{sec:estimands}). 
Second, we show the proposed estimators are unbiased (\cref{sec:estimators}). 
Third, we characterize the exact finite sample variance of these estimators (\cref{sec:variance_characterization}). 
Fourth, we derive, in the tradition of the causal inference literature, conservative estimators for their variances (\cref{sec:variance_estimators}). 
Finally, we provide central limit theorems that allow for the construction of confidence intervals (\cref{sec:CLTs}). 
Proofs are deferred to the appendix.
While seemingly standard, our results require a non-trivial amount of technical complexity due to the fact that randomization acts jointly on the multiple dimensions through which potential outcomes are indexed.

In what follows, for a given type $\type$, we let $I_\type$ ($J_\type$) denote the number of \buyers\ $i$ (\sellers\ $j$) for which there is at least one pair $(i,j)$ such that $\type_{ij} = \type$.
For example, in the assignment of \cref{eq:smrd}, $I_\type=2$  and $J_\type=4$ for all $\type \in \types$.
This is because the first two \buyers have pairs exposed to $\ccc, \ics$ (so that $I_\cc = I_\ics = 2$) and the last two have pairs exposed to $\icb, \ctt$ (so that $I_\icb = I_\ctt = 2$). A symmetric argument holds for sellers.
Moreover, whenever we consider an SMRD for which local interference holds,
we leverage \cref{lemma:local_interference} and --- with some abuse of notation --- write $y_{ij}(\type)$ instead of $y_{ij}(\bww)$.

\subsection{Causal Estimands and Spillover Effects}  \label{sec:estimands}
Under the local interference \cref{sutva_local}, \cref{lemma:local_interference} proves that the potential outcomes $y_{ij}$ are indexed by type $\type_{ij}\in\types$.
Define the population averages by type:
\begin{equation}
    \meanpopulation{\type} :=\frac{1}{IJ}\sum_{i=1}^I\sum_{j=1}^J y_{ij}(\type),\; \text{for}\;  \type \in \types. \label{eq:meantypepopulation}
\end{equation}

For $\coefvec = [\beta_{\ccc},\beta_{\icb}, \beta_{\ics},\beta_{\ctt}]^\top$, we consider causal estimands that can be written as linear combinations of the $ \meanpopulation{\type} $  defined in \cref{eq:meantypepopulation}:
\begin{equation}
    \tau(\coefvec) :=
    \beta_{\ccc}\meanpopulation{\ccc} +
    \beta_{\icb}\meanpopulation{\icb} +
    \beta_{\ics}\meanpopulation{\ics} +
    \beta_{\ctt}\meanpopulation{\ctt}. \label{eq:causal_estimands}
\end{equation}
This class of estimands includes many interesting quantities that shed light on the direct effect of the treatment, the spillover effects on untreated units stemming from applying treatment to other pairs, and the total effect. For example,  $\betaate := [-1,0,0,1]^\top$ corresponds to $\tau_{\ATE}:= \tau(\betaate) = \meanpopulation{\ctt} - \meanpopulation{\ccc}$, which is the average treatment effect of assigning both buyer $i$ and seller $j$ to treatment versus both being assigned to control under an SMRD design. 
Like all other estimands in our setting, $\tau_{\ATE}$ is implicitly parametrized by the fractions $p^\rb \in (0,1)$ of treated {\buyers}, $p^\rs \in (0,1)$ of treated {\sellers}. 
For $\betaspillb:=[-1,1,0,0]^\top$, $\tauspillb:=\tau(\betaspillb)=\meanpopulation{\icb} - \meanpopulation{\ccc}$ measures a ``{\buyer}''-spillover effect.
If there are no spillovers within buyers (\cref{assp:no_buyer_interference}), this average causal effect is equal to zero. Thus, the estimated counterpart of this estimand sheds light on the presence of {\buyer} spillovers.
Similarly, for $\betaspills := [-1,0,1,0]^\top$, $\tauspillb := \meanpopulation{\ics} - \meanpopulation{\ccc}$ measures a ``{\seller}''-spillover effect. $\betadirect:=[1,-1,-1,1]^\top$, which induces the effect 
$\taudirect$, is a measure of something closer to the direct effect of the treatment, removing the spillover effects.

To elaborate and be more precise about the value of these estimands for decision making, note that within the class of SMRD's indexed by the probabilities $p^\rb$ and $p^\rs$, the population averages $\meanpopulation{\type}$ depend on the values of these probabilities, other than 
$\meanpopulation{\ccc}$.  A natural object of interest for a decision maker is the average effect of switching from no exposure to all buyer/seller pairs exposed. This can be written as
\[
    \meanpopulation{\ctt}(p^\rb = 1, p^\rs = 1) - \meanpopulation{\ccc}(p^\rb = 0, p^\rs = 0). 
\]
This cannot be estimated directly from an SMRD experiment with a single pair of values $(p^\rb,p^\rs),$ as it requires extrapolation to $p^\rs=1$ and $p^\rb=1$. Either doing an experiment with $p^\rb$ and $p^\rs$ close enough to one or carrying out a more complex experiment with variation in $p^\rs$ and $p^\rb$ would facilitate this.
A second goal for the decision maker may be to assess the magnitude of the spillovers relative to direct effects. SMRD experimentation lowers precision relative to completely randomized experiments, and if one finds the the spillovers are modest, one may not need to be concerned about the spillovers in future experimentation. 

Note that our analysis is richer than that presented in \citet{johari2022experimental}, where the focus is only on the estimand  defined as the average outcome for the treated, $\meanpopulation{\ctt}$, and  the average outcome for all pairs exposed to the control group, not adjusting for any spillovers.

\subsection{Unbiased Estimators for the Causal Effects}  \label{sec:estimators}
In what follows, we use capital letters to denote stochastic counterparts of the corresponding population quantities. 
In particular, we use $\Type_{ij}$ to denote the random ``type'' assigned to pair $(i,j)$ in the context of an SMRD. 
Define the realized counterpart of the population average of the  {\buyer}-{\seller} pairs by type introduced in \cref{eq:meantypepopulation}:
\begin{equation}
     \meanestimate{\type} := \frac{1}{I_{\type}J_{\type}}\sum_{i=1}^I\sum_{j=1}^J  y_{ij}(\type)\been({\Type_{ij}=\type}), \label{eq:meantypeestimate}
\end{equation}
\cref{lemma:unbiasedness} shows that in an SMRD under \cref{sutva_local}, \cref{eq:meantypeestimate} provides an unbiased estimator of the corresponding population average $ \meanpopulation{\type} $ defined in  \cref{eq:meantypepopulation}.

\begin{lemma} \label{lemma:unbiasedness}
Consider an SMRD in which local interference (\cref{sutva_local}) holds. 
The plug-in estimators in  \cref{eq:meantypeestimate} satisfy
\[ 
    \mme\left[\meanestimate{\type}\right] =\meanpopulation{\type},\;\forall\; \type\in\types.
\]
\end{lemma}

In light of \cref{lemma:unbiasedness}, a direct application of the linearity of the expectation implies that simple plug-in estimators of causal effects $\tau(\coefvec)$ of the form of \cref{eq:causal_estimands} are unbiased.
\begin{theorem} \label{thm:spillover_unbiasedness}
Consider an SMRD where \cref{sutva_local} holds. 
The plug-in estimators 
\(
	\hat{\tau}(\coefvec) =     
	\beta_{\ccc}\meanestimate{\ccc} +
    	\beta_{\icb}\meanestimate{\icb} +
    	\beta_{\ics}\meanestimate{\ics} +
   	 \beta_{\ctt}\meanestimate{\ctt}
\)
 for  $\tau(\coefvec)$ defined in \cref{eq:causal_estimands} satisfy
\begin{equation}
    \mme\left[ \hat{\tau}(\coefvec) \right] = {\tau}(\coefvec). \label{eq:linear_estimator}
\end{equation}

\end{theorem}

\subsection{Variances of Linear Estimators}  \label{sec:variance_characterization}

We now characterize the variances of linear estimators $\hat{\tau}(\coefvec)$  (\cref{thm:covariance_characterization}) and provide conservative estimates for their variances (\cref{thm:variance_bounds}). Our results generalize classic results for SRDs, but their derivation is more complex because of the double summation over {\buyer}s and {\seller}s, and requires additional notation.
Define the (population) average outcome for each {\buyer} and each {\seller}, for a given type $\type$:
\begin{equation}
    \axismeanpopulation{i}{\rb}(\type) 
    :=\frac{1}{J}\sum_{j=1}^J y_{ij}(\type), 
    \qquad\text{and}\qquad
    \axismeanpopulation{j}{\rs}(\type) 
    :=\frac{1}{I}\sum_{i=1}^I y_{ij}(\type). \label{eq:axismeanpopulation}
\end{equation}
Define the deviations from population averages  for {\buyer} $i$, {\seller} $j$, and interactions $(i,j)$:
\begin{align*}
    \axisvariationmeanpopulation{i}{\rb}(\type) 
    := 
        \axismeanpopulation{i}{\rb}(\type)-\meanpopulation{\type}
        ,
    \qquad  
    \axisvariationmeanpopulation{j}{\rs}(\type) 
    :=
        \axismeanpopulation{j}{\rs}(\type)-\meanpopulation{\type},
     \intertext{and}
	\axisvariationmeanpopulation{ij}{\rb\rs}(\type) 
	:= 
        y_{ij}(\type)
        -\axismeanpopulation{i}{\rb}(\type)
        -\axismeanpopulation{j}{\rs}(\type)
        +\meanpopulation{\type}.
\end{align*}
Next define the population variances for each type at the \buyer, \seller, and interaction level:
\[ 
    \axisvariancepopulation{\type}{\rb}
    := \frac{\sum_{i=1}^I\left[\axisvariationmeanpopulation{i}{\rb}(\type)\right]^2}{I}, 
    \;
    \axisvariancepopulation{\type}{\rs}
    := \frac{\sum_{j=1}^J\left[\axisvariationmeanpopulation{j}{\rs}(\type)\right]^2}{J},
    \;
    \axisvariancepopulation{\type}{\rb\rs}
    ~:=~
    \frac{\sum_{i=1}^I 
    \sum_{j=1}^J \left[
    \axisvariationmeanpopulation{ij}{\rb\rs}(\type)
    \right]^2}{IJ}.\]
We additionally define for all $\type,\type'\in\types$ the following quantities, which can be interpreted as the average square deviation from the mean at the \buyer, \seller, and interaction level:
\begin{align}
\begin{split}
    \axisvariancecrosspopulation{\type,\type'}{\rb}
    := 
    &\sum_{i=1}^I\frac{\left[
        \axisvariationmeanpopulation{i}{\rb}(\type) 
        - \axisvariationmeanpopulation{i}{\rb}(\type')
    \right]^2}{I},
    \qquad
    \axisvariancecrosspopulation{\type,\type'}{\rs}
    :=
    \sum_{j=1}^J 
    \frac{\left[
        \axisvariationmeanpopulation{j}{\rs}(\type)
        - \axisvariationmeanpopulation{j}{\rs}(\type')
    \right]^2}{J},\\
 \quad \axisvariancecrosspopulation{\type,\type'}{\rb\rs}
    &:=    
    \frac{1}{IJ}\sum_{i=1}^I \sum_{j=1}^J
    \left[
        \axisvariationmeanpopulation{ij}{\rb\rs}(\type)
        -\axisvariationmeanpopulation{ij}{\rb\rs}(\type')
    \right]^2.
    \label{eq:xis}
\end{split}
\end{align}
Last, define for $\type \in \types$ the weights
\begin{equation}
    \alpha^\rb_{\type} := \frac{1}{I-1}\frac{I-I_{\type}}{I_{\type}}
    \qquad\text{and}\qquad 
    \alpha^\rs_{\type} := \frac{1}{J-1}\frac{J-J_{\type}}{J_{\type}}.
    \label{eq:alpha_weights}
\end{equation}
Let
\[
        \nu^{\rb}_{\type,\type'} := 
        \begin{cases}
            \alpha^\rb_{\type}/2\mbox{ if } \type=\type',\text{ or } (\type,\type') \in \{(\ccc,\ics), (\ics,\ccc), (\icb,\ctt), (\ctt,\icb)\} \\
            - 1/(2(I-1)) \mbox{ otherwise,}
            \end{cases}
\]
and
\[
        \nu^{\rs}_{\type,\type'}:=
        \begin{cases}
             \alpha^\rs_{\type}/2 \mbox{ if } \type = \type' \text{ or } (\type,\type') \in \{(\ccc,\icb),(\icb,\ccc), (\ics,\ctt), (\ctt,\ics)\} \\
            - 1/(2(J-1)) \mbox{ otherwise.}
        \end{cases}
\]
We now characterize variances and covariances of all the estimators of the sample average defined in \cref{eq:meantypeestimate}. 
\begin{theorem} \label{thm:covariance_characterization}
For an SMRD where \cref{sutva_local} holds, and for all $\type,\type'$,
\begin{equation*}
    \mmc\left[ \meanestimate{\type}, \meanestimate{\type'} \right]
    = \nu^{\rb}_{\type,\type'} \zeta^{\rb}_{\type,\type'} + \nu^{\rs}_{\type,\type'} \zeta^{\rs}_{\type,\type'} + \nu^{\rb}_{\type,\type'}\nu^{\rs}_{\type,\type'} \zeta^{\rb\rs}_{\type,\type'},
\end{equation*}
where for $x \in \{\rb,\rs,\rb\rs\}$,
\(
    \zeta^{x}_{\type,\type'} :=  \axisvariancepopulation{\type}{x}+ \axisvariancepopulation{\type'}{x}- \axisvariancecrosspopulation{\type,\type'}{x}.
\)
\end{theorem}
Variances for the type estimator $\meanestimate{\type}$ are obtained using the formula above whenever $\type'=\type$.
Exact variances of estimators $\hat{\tau}(\coefvec)$ can be directly obtained by noting that $\hat{\tau}(\coefvec)$ is a linear estimator, for which the following decomposition holds:
\[
    \mmv(aX+bY) = a^2\mmc(X,X)+b^2\mmc(Y, Y)+2ab\mmc(X,Y).
\]

\subsection{Variance Estimation} \label{sec:variance_estimators}
We now present unbiased estimators for the variance of the sample average of potential outcomes defined in \cref{eq:meantypeestimate}; these are given in \cref{thm:sample_variance_avg_effs}. We then give lower and upper bounds on the variance of the linear estimators $\hat{\tau}(\coefvec)$ in \cref{thm:variance_bounds}. 

Towards this goal, we proceed to define the sample counterparts of the population quantities introduced in \cref{sec:variance_characterization}. 
Given a randomly drawn SMRD assignment matrix $\bw \in \mmw$, inducing corresponding types $\bm{\Gamma}$, let $\II_{\type}:=\{ i \in \mmi \;\text{s.t.}\; \Type_{ij}=\type\text{ for some }j\}$ with size $|\II_{\type}|=I_{\type}$ and $\JJ_{\type}:=\{ j \in \mmj \;\text{s.t.}\; \Type_{ij}=\type\text{ for some }i\}$ with size $|\JJ_{\type}|=J_{\type}$. From \cref{eq:types},  each $i\in\mmi$ belongs to exactly two sets $\II_{\type}$: if $W_i^{\rb}=0$, $i \in \II_\ccc$ and $i \in \II_{\ics}$. 
If $W_i^{\rb}=1$, $i \in \II_{\icb}$ and $i \in \II_{\ctt}$. 
Symmetrically, each $j \in \mmj$ belongs in exactly two sets $\JJ_{\type}$: if $W_j^{\rs}=0$, $j \in \JJ_\ccc$ and $j \in \JJ_{\icb}$, and if $W_j^{\rs}=1$, $j \in \JJ_{\ics}$ and $i \in \JJ_{\ctt}$. For $i\in\II_{\type}, j \in \JJ_{\type}$ the sample counterparts $\axismeanestimate{i}{\rb}(\type)$ of $\axismeanpopulation{i}{\rb}(\type)$ and $\axismeanestimate{j}{\rs}(\type)$ of $\axismeanpopulation{j}{\rs}(\type)$ are: 
\[
    \axismeanestimate{i}{\rb}(\type):=\frac{1}{J_{\type}}\sum_{j \in \JJ_{\type}} y_{ij}(\type),\qquad
    \axismeanestimate{j}{\rs}(\type):=\frac{1}{I_{\type}}\sum_{i \in \II_{\type}} y_{ij}(\type).
\]
We define estimator counterparts $\axisvarianceestimate{\type}{\rb}$ for $\axisvariancepopulation{\type}{\rb}$ ({\buyers}) and $\axisvarianceestimate{\type}{\rs}$ for $\axisvariancepopulation{\type}{\rs}$ ({\sellers}):
\[
    \axisvarianceestimate{\type}{\rb}
    :=
        \frac{1}{I_{\type}} \sum_{{i\in\II_{\type}}}
        \left[
            \axismeanestimate{i}{\rb}(\type) 
            - \meanestimate{\type}
        \right]^2
        ,
        \quad
    \axisvarianceestimate{\type}{\rs} 
    :=
        \sum_{{j \in \JJ_{\type}}}
            \frac{1}{J_{\type}}\left[\axismeanestimate{j}{\rs}(\type) 
            - \meanestimate{\type}\right]^2.
\]
For the interactions, we define the estimator counterpart $\axisvarianceestimate{\type}{\rb\rs}$ for $\axisvariancepopulation{\type}{\rb\rs}$:
\[
    \axisvarianceestimate{\type}{\rb\rs}
        ~:=~ \sum_{i\in\II_{\type}, j\in\JJ_{\type}} 
        \frac{\left(
            y_{i,j}(\type) 
            - \axismeanestimate{i}{\rb}(\type) 
            - \axismeanestimate{j}{\rs}(\type) 
            + \meanestimate{\type}
        \right)^2}{I_{\type}J_{\type}}.
\]    

\begin{theorem} \label{thm:sample_variance_avg_effs} 
For an SMRD where \cref{sutva_local} holds, for all $\type\in \types$, 
    \[
        \mme\left[\widehat{\Sigma}({\type})\right]= \mmv\left(\meanestimate{\type}\right), \quad\text{where}
    \]
\begin{align*}
    \hat{\Sigma}({\type})
    &:= \frac{
        \alpha^{\rb}_{\type}\axisvarianceestimate{\type}{\rb}
        + \alpha^{\rs}_{\type}\axisvarianceestimate{\type}{\rs}
        +\alpha^{\rb}_{\type}\alpha^{\rs}_{\type}\axisvarianceestimate{\type}{\rb\rs}
        }{
        1 - \alpha^\rb_{\type} - \alpha^\rs_{\type} + \alpha^\rb_{\type} \alpha^\rs_{\type}
        } 
        \\
        &- \frac{\frac{\alpha^\rb_{\type}}{(1-\alpha^\rb_{\type})}}{I_\type J_\type}  
       \sum_{i \in \II_{\type}, j \in \JJ_{\type}}
        \frac{
            \left(y_{i,j}(\type) 
            - \axismeanestimate{i}{\rb}(\type)\right)^2
        }{
        \frac{(J-1)(J_{\type}-1)}{(J-J_{\type})}
        } 
        - \frac{\frac{\alpha^\rs_{\type}}{(1-\alpha^\rs_{\type})}}{I_\type J_\type} 
        \sum_{i \in \II_{\type}, j \in \JJ_{\type}}
        \frac{
            \left(y_{i,j}(\type) 
            - \axismeanestimate{j}{\rs}(\type)\right)^2
        }{
        \frac{(I-1)(I_{\type}-1)}{(I-I_{\type})}
        }. 
\end{align*}
\end{theorem}
Young's inequality yields a conservative estimator for the variance of $\hat{\tau}(\coefvec)$:
\begin{equation}
    \widehat{\mmv}\left(\hat{\tau}^{\rm hi}(\coefvec)\right) = 
    \sum_{\type \in \types} \beta_\type^2 \hat{\Sigma}_\type + \sum_{\type \neq \type'} \beta_\type \beta_{\type'} \left(\hat{\Sigma}_\type + \hat{\Sigma}_{\type'}\right)
    \label{eq:young_cov}
\end{equation}
This result mirrors the case of SRDs \citep{neyman1923}. 
We provide the result for $\hattauspillb$ in \cref{thm:sample_variance_spillover}. See  \cref{thm:general_variance_sample_causal} in the Appendix for the case of a generic $\hat{\tau}(\coefvec)$.
\begin{theorem} \label{thm:variance_bounds}
Under the assumptions of \cref{thm:sample_variance_avg_effs} a conservative estimator of $ \mmv({{\hattauspillb}})$ is:
  \[
        \widehat{\mmv}^{\rm hi}(\hattauspillb) := 
        2 \left( \widehat{\Sigma}({\icb}) + 
        \widehat{\Sigma}({\ccc}) \right).
    \] 
    $\widehat{\mmv}^{\rm hi}(\hattauspillb) $ is conservative in the usual sense that $\mathbb{E}\left[\widehat{\mmv}^{\rm hi}(\hattauspillb)\right] \ge {\mmv}(\hattauspillb)$.
    \label{thm:sample_variance_spillover}
\end{theorem}
We emphasize that, while it is possible to provide an unbiased estimator for the variance of $\meanestimate{\type}$ (\cref{thm:sample_variance_avg_effs}), one \emph{cannot} provide an unbiased estimator for the covariance of $\meanestimate{\type}$ and $\meanestimate{\type'}$ for $\type\neq\type'$ without stronger assumptions on the potential outcomes. The same phenomenon occurs for conventional randomized experiments. This is because the terms $ \axisvariancecrosspopulation{\type,\type'}{x}$ introduced in \cref{eq:xis} depend on covariances of potential outcomes for the same {\buyer}-{\seller} pair, which cannot be identified from the observed data. 

It is however possible to show that the variance estimator $\hat{\Sigma}({\type})$ converges to the true underlying variance $\Sigma_\type$, i.e.\ $\mmv\big(\meanestimate{\type}\big)$, under relatively weak assumptions. By the continuous mapping theorem, this implies convergence of the general estimator $\widehat{\mmv}^{\rm hi}[ \hat{\tau}(\coefvec)]$ to its (conservative) limit.
A stronger version of this result was communicated to us by \citet{sudijono2025private}; a proof is given in \cref{sec:proofs_variance_estimation} for completeness.

We now state the result. To do so, we introduce two additional assumptions which will also be used in \cref{sec:CLTs} to derive a central limit theorem.
\revision{\begin{assumption} \label{ass:regularity}
Consider an SMRD with $I$ \buyers{} and $J$ \sellers{}, and assume that the local interference assumption \cref{sutva_local} holds. We impose the following regularity conditions. 
\begin{enumerate}[label=(\roman*)]
    \item[(a)] \label{eq:assumption_a} Balance: for all $\type \in \types$, a valid assignment is characterized by fixed $I_\type$ and $J_{\type}$, with $I/I_{\type}, J/J_{\type} \le C_1$.
    \item[(b)] \label{eq:assumption_b} Boundedness: for all buyers and seller interactions $(i,j)$ and all types $\type$, $|y_{ij}(\type)| \le C_2$.
\end{enumerate}
\end{assumption}
\begin{theorem}\label{thm:var-consistent}
    Let $\hat{\tau}(\coefvec)$ be the linear estimator given in \Cref{eq:linear_estimator}, and let $\widehat{\mmv}^{\rm hi}[ \hat{\tau}(\coefvec)]$ be its conservative variance estimator given in \cref{thm:sample_variance_avg_effs}.
    Then, in any sequence of SMRDs satisfying \cref{ass:regularity} and in which $(I^{-2} + J^{-2})/\mme\{ \widehat{\mmv}^{\rm hi}[ \hat{\tau}(\coefvec)]\} \to 0$, we have
    \[
        \frac{\widehat{\mmv}^{\rm hi}[ \hat{\tau}(\coefvec)]}{\mme\left\{\widehat{\mmv}^{\rm hi}[ \hat{\tau}(\coefvec)]\right\}} 
        = 
        1 + o_p(1).
    \]
\end{theorem}
}



\subsection{Finite Population Central Limit Theorem} \label{sec:CLTs}

We conclude this section by providing a quantitative central limit theorem for the estimators introduced in \cref{sec:estimation}. 
Notably, we do not assume that the observed units are drawn from an underlying ``super-population,'' nor do we consider a sequence of experiments.
Instead, our approach quantifies the distribution of our estimates using only the randomness of the design, in terms of well-defined properties of the finite population. 
This approach allows us to limit assumptions imposed on the potential outcomes. Our contribution can be seen as an extension to the multi-population setting of recent advances in the causal inference literature, and in particular of the works of \citet{li2017general} and \citet{shi2022berry} for single-sided experiments. 
Our setting presents additional technical challenges, as the outcomes exhibit a complex dependence structure. 
\Cref{thm:clt} serves as the basis for statistical inference in the context of multiple randomization designs.
\begin{theorem} \label{thm:clt}
Consider an SMRD where \cref{sutva_local} and \cref{ass:regularity} hold.
Then we have
\begin{equation}
    \sup_{t \in \mathbb{R}} \left| \mathbb{P}\left\{ \frac{\hat{\tau}(\coefvec) - \tau(\coefvec)}{\sqrt{ \mmv\left[\hat{\tau}(\coefvec)\right]}} \le t \right\} - \Phi(t) \right| \le C\Delta^{\frac{1}{3}} \log\left(\frac{C}{\Delta}\right), \label{eq:berry-esseen}
\end{equation}
with $\Delta \coloneqq \frac{ C_1^2C_2(I^{-1} + J^{-1})}{\mmv\{\hattau(\coefvec)\}^{\frac{1}{2}}/\|\coefvec\|}$ and 
where $\Phi$ denotes the standard normal cumulative density function (CDF), and $C > 0$ is a universal constant.\footnote{We are very grateful to \citet{sudijono2025private}, who communicated an important idea that led to the correction of an error in the proof of \Cref{thm:clt}.}
\end{theorem}

\begin{remark}[Boundedness and sparsity]
In addition to ruling out heavy-tailed potential outcome distributions, an important limitation of \cref{thm:clt} is \textit{sparsity}, when a large fraction of unit potential outcomes $y_{ij}(\type)$, or their differences $y_{ij}(\type)-y_{ij}(\type')$, are zero. 
Sparsity can also cause problems in CLTs for conventional randomized experiments such as the ones cited above. 
It may be especially relevant in our setting, however, where units correspond to pairwise interactions between large populations.

Since $\hattau(\coefvec)$ is linear in observed outcomes $Y_{ij}$, the quantity
\(
C_{\coefvec} = {C_2}/ \{\mmv\{\hattau(\coefvec)\}^{\frac{1}{2}}/\|\coefvec\|\}
\)
appearing in our bound \eqref{eq:berry-esseen} is invariant to re-scaling observations (\eg, to ensure non-degeneracy of $\tau(\coefvec)$).  
\Cref{thm:clt} requires that $C_{\coefvec}$ be small in comparison to $\{I^{-1}+J^{-1}\}^{-1}$ for $I$ and $J$ large, allowing a limited degree of sparsity. For example, if potential outcomes are binary, if $I$ and $J$ are of the same order, and if half of the rows and columns have a fraction $2\mu$ of non-zero entries (while the rest are all zero), then our result requires $\mu$ to be much larger than $I^{-1/2}$.
Generalizing \cref{thm:clt} to better accommodate heavy-tailed and sparse potential outcomes is an important direction for future work. 

\end{remark}
We articulate our proof in three main steps, described in detail in \cref{proof:clt}. 
First, we prove that if we fix the assignment of one of the two populations (\eg, sellers), an analogous version of the results proved by \citet{li2017general} and \citet{shi2022berry} holds for the multi-population setting, where the parameters of the CLT are indexed by the  seller assignment (\cref{sec:proof_clt_conditional}). 
Second, we show that with high probability, these fixed parameters are either themselves normally distributed, or else are close to their expected value (\cref{sec:concentration}). 
Last, we combine these results to prove a CLT for simple double randomized experiments (\cref{sec:tying}).

The main challenge in proving our result is that separate randomization of the two populations creates two-way dependence in the realized outcomes, complicating the application of standard techniques. Similar settings have been studied using Stein's method of exchangeable pairs, although the proofs are quite complex \citep{zhao1997error}. 
Interestingly, the proof of \cref{thm:clt} treats the two populations asymmetrically, although the final bound is symmetric in $I$ and $J$.

Finally, we comment on the application of \cref{thm:clt} in practice. It is natural to replace the variance $\mmv[\hat{\tau}(\coefvec)]$ by its estimated upper bound $\widehat{\mmv}^{\rm hi}[\hat{\tau}(\coefvec)]$. Roughly speaking, the Studentized statistic $\hat z_\tau = \{\hat{\tau}(\coefvec)-\tau\}/\widehat{\mmv}^{\rm hi}[\hat{\tau}(\coefvec)]^{1/2}$ will be approximately normally distributed with variance at most $1$ provided the denominator converges, which follows by \cref{thm:var-consistent}.
One can then test one- and two-sided hypotheses on $\tau(\coefvec)$ by comparing $\hat z_\tau$ to standard normal critical values. We empirically verify normality of the Studentized statistic and illustrate the resulting tests with synthetic data in \cref{sec:experiment}.

\section{Simulations} \label{sec:experiment}

We now verify the results of \cref{sec:estimation} for SMRDs under local interference. 
Our simulations follow the model of strategic agents in a two-sided marketplace introduced in \cref{example:strategic-local-interference}, which naturally produces local interference. 
Additional experiments from a simple additive Gaussian model satisfying local interference are provided in \cref{sec:additional_experiments}. 
\texttt{Python} code to replicate all our simulations is available at \url{https://github.com/lorenzomasoero/MultipleRandomizationDesigns}.

The simulations following \cref{example:strategic-local-interference} also illustrate the practical value of SMRDs. 
In the underlying model, higher quality ads increase the incentive to produce high quality content, and vice-versa. 
This is an example of \emph{strategic complementarity}, a prominent and well-studied feature of many real-world marketplaces \citep{milgrom1990rationalizability}.
In this model, it leads to significant positive spillovers for both advertisers and creators.
These spillovers are neatly captured by the MRD, but cause conventional, single randomized experiments to underestimate the treatment effect, possibly leading to sub-optimal policies.

To empirically validate the results presented in \cref{sec:estimation}, we first instantiate the model from \cref{example:strategic-local-interference} by fixing the incentive level $\eta = 5\%$ and drawing independent and identically distributed parameters $m_{ij} \sim \mathrm{Exp}(1)$, $r^c_i, r^a_j \sim \mathrm{Unif}([0,1/5])$ across advertisers $i \in [I]$ and creators $j \in [J]$ (notice: here creators and advertisers have roles analogous to that of buyers and sellers in the discussion of \cref{sec:interference,sec:mrds,sec:estimation}). 
In our simulation, we let $I = 200$ and $J = 150$. 
Taking these parameters\textemdash which determine the fixed population\textemdash as given, we fix the treatment group size $I_T = 100$ and $J_T = 80$. We then sample  treatment assignment matrices $\bw$ at random from the SMRD $\mmw$, which determine realized equilibrium outcomes $Y_{ij}(\bw)$ to be the platform's profit following \cref{eq:potential-outcomes-local-interference}. 

Since local interference (\cref{sutva_local}) is satisfied, $Y_{ij}(\bw)$ depends only upon the type $\Type_{ij} \in \types$ of unit $(i,j)$, conditional upon the parameters $I_T$,  $J_T$, and the fixed population.
Each assignment $\bw$ then corresponds to an observed matrix of $I \times J$ realized potential outcomes.
We use the collection of outcomes from 10{,}000 independent re-randomizations to empirically verify the properties of the proposed estimators. 
\begin{figure}
    \centering   \includegraphics[width=.75\textwidth]{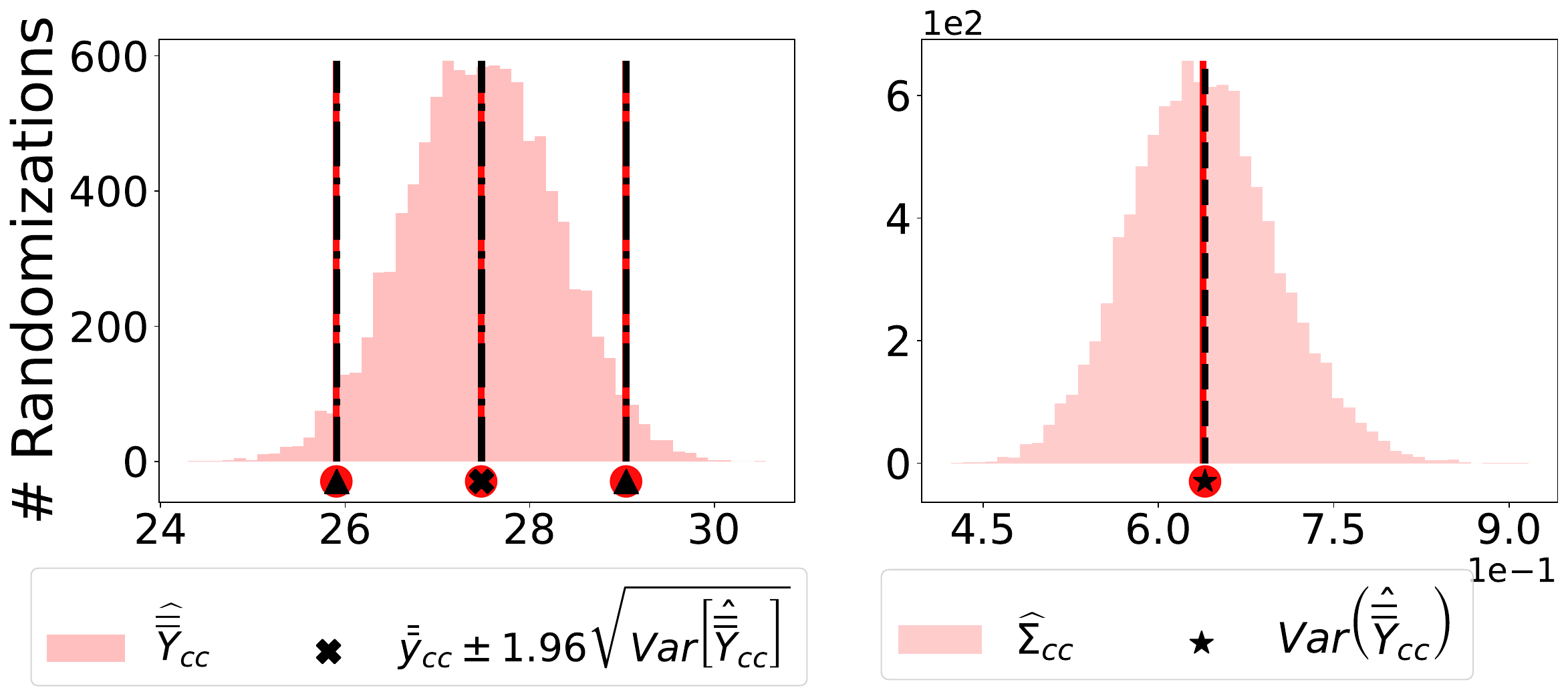}
    \caption{
   Distribution of $\meanestimate{\ccc}$ (left) and of the variance estimator $\widehat{\Sigma}_{\ccc}$ (right). Black lines correspond to the population quantities $\meanpopulation{\ccc}$, $\mmv\left(\meanestimate{\ccc}\right)$.
    }
    \label{fig:type_effs}
\end{figure}
\Cref{fig:type_effs} reports the histogram of the values attained by $\meanestimate{\ccc}$ (left) and $\widehat{\Sigma}_{\ccc}$ (right) across the 10{,}000 Monte Carlo replicates. 
As follows from \cref{lemma:unbiasedness}, $\meanestimate{\ccc}$ is centered at the true population value $\meanpopulation{\type}$, and, using \cref{thm:clt}, under mild conditions $\meanestimate{\type}$ is approximately normally distributed. 
Moreover, the distance between the 2.5\% and 97.5\% quantiles of the distribution of the type estimator (red vertical lines) is close to the length of the 95\% confidence interval around the population value $\meanpopulation{\type}$, formed by using the true variance  of $\meanestimate{\type}$.
In the right panel of \cref{fig:type_effs}, we show that $\widehat{\Sigma}_{\ccc}$ is an unbiased estimator for the variance of the type estimator, as proved in \cref{thm:sample_variance_avg_effs}. Analogous results hold for $\icb, \ics, \ctt$.

\begin{figure}
    \centering   \includegraphics[width=.75\textwidth]{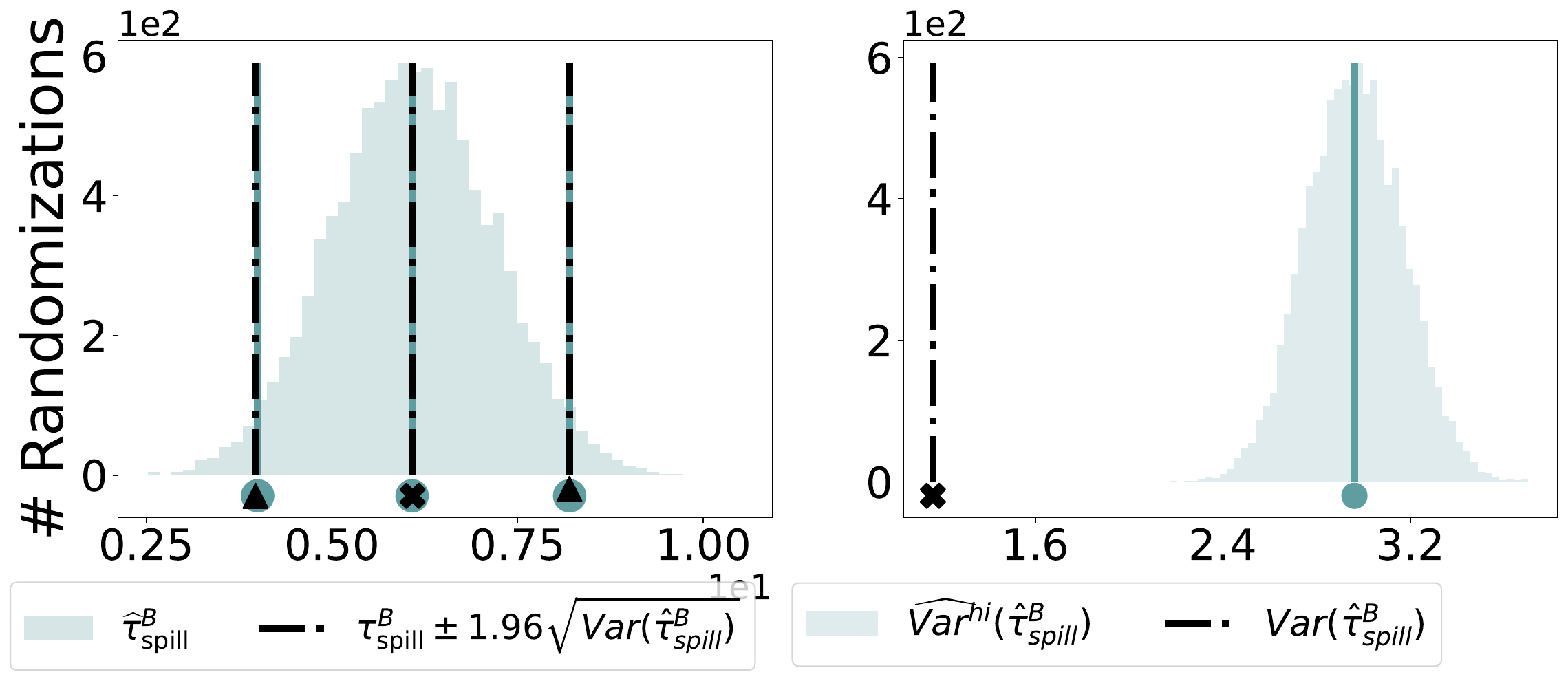}
    \caption{
   Distribution of the estimator for the spillover effect $\hat{\tau}_{\rm{spill}}^B$ (left) and corresponding variance estimator $\widehat{\mmv}^{\rm{hi}}(\hat{\tau}_{\rm{spill}}^B)$ (right). Black lines correspond to the population quantities.
    }
    \label{fig:spill}
\end{figure}
We focus on the spillover effect $\tauspillb$ in \cref{fig:spill}: the left panel shows the distribution of the unbiased estimator $\hattauspillb$ (\cref{thm:spillover_unbiasedness}). 
$\hattauspillb$ is Gaussian (as shown in \cref{thm:clt}), and conservative confidence intervals can be derived. 
The right panel contains the distribution of the upper bound $\widehat{\mmv}^{\rm hi}(\hattauspillb)$ for the variance ${\mmv}(\hattauspillb)$ (\cref{thm:sample_variance_spillover}). 
Additional plots and implementation details are provided in \cref{sec:additional_experiments}. 

\begin{figure}[t]
    \centering   
    \includegraphics[width=.75\textwidth]{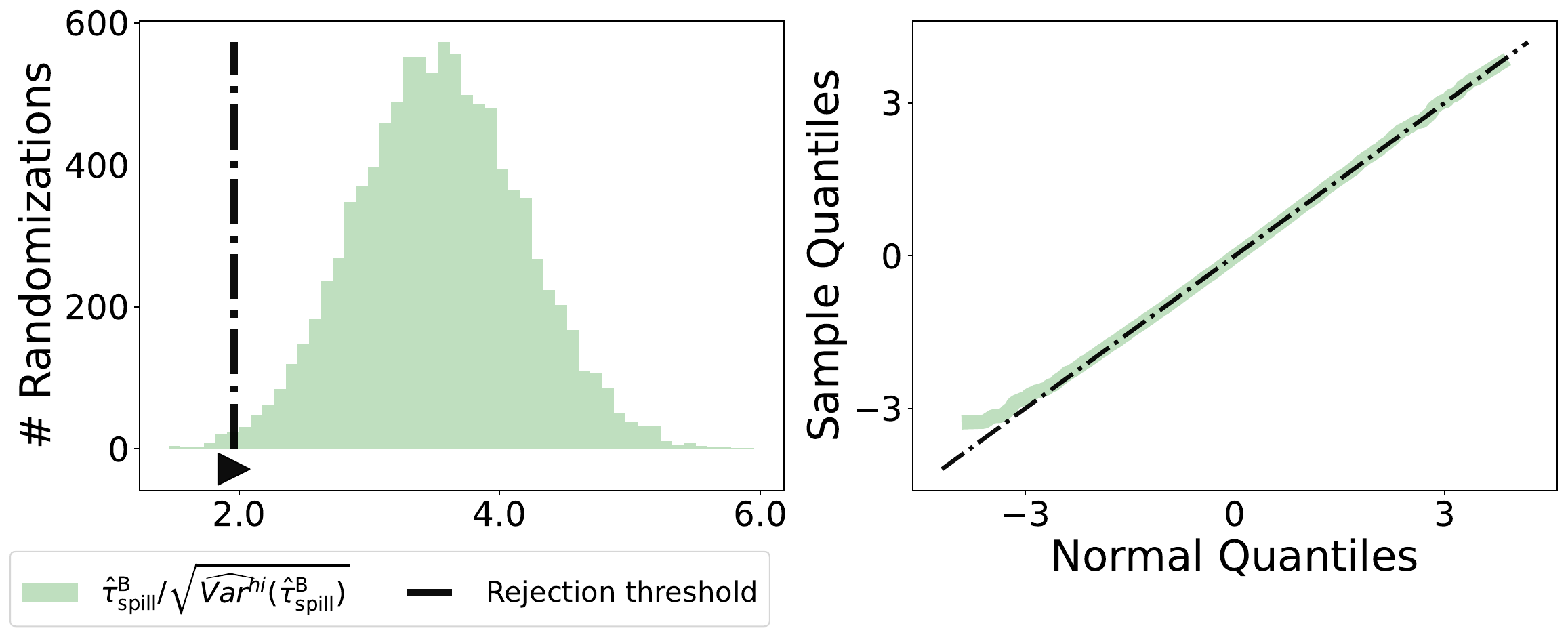}
    \caption{
    Left: distribution of the Studentized statistic $\hattauspillb/\{\widehat{\mmv}^{\rm hi}(\hattauspillb)\}^{1/2}$ and resulting conservative two-sided tests of the null hypothesis of no effect $H_0=\{\tauspillb = 0\}$ (statistics to the right of the black line reject $H_0$). Right: QQ plot comparing the Studentized statistic to a Gaussian law with the same mean and variance. 
    }
    \label{fig:t-test}
\end{figure}

Under mild conditions laid out in \cref{thm:clt} and the following discussion, one can practically test for the presence of positive spillover effects by constructing the Studentized statistic $\hat{z}_0:=\hattauspillb/\{\widehat{\mmv}^{\rm hi}(\hattauspillb)\}^{1/2}$ and comparing it to standard normal critical values. 
For our model, the conservative test $\hat{z}_0$ rejects the null hypothesis of no effect $99.5\%$ of the time (Type-II error is $0.5\%$), showing substantial power to detect positive spillovers.
\begin{figure}[ht]
    \centering   \includegraphics[width=.75\textwidth]{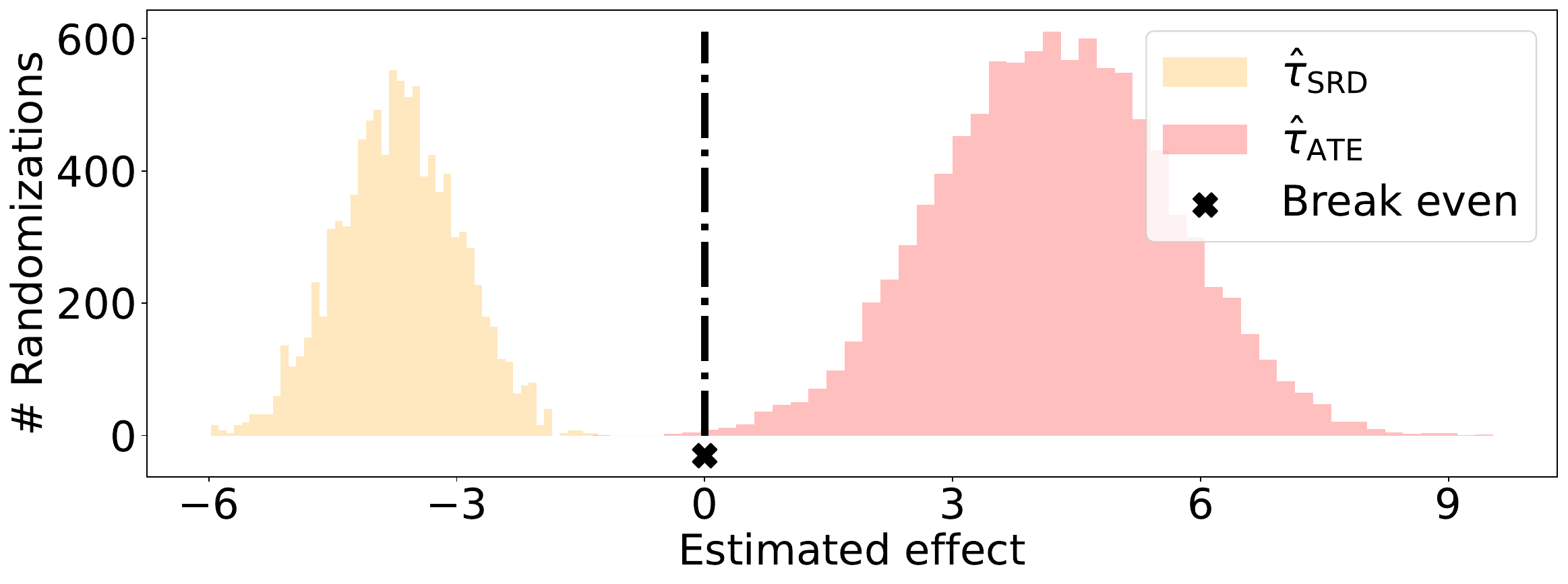}
    \caption{
   Distribution of the standard difference-in-means estimator $\hat\tau_{\rm SRD}$ across 10,000 single randomized experiments (yellow), compared to the distribution of $\hat\tau_{\rm ATE}$ in as many SDRDs (red). The estimators are produced by re-drawing different randomization designs for the same underlying finite population, with potential outcomes given by \Cref{example:strategic-local-interference}.
    }
    \label{fig:comparison}
\end{figure}

Finally, we compare MRDs to the standard practice of single randomization\textemdash randomizing exactly $50\%$ of creators $i \in [I]$ into treatment, and treating all of their interactions as in \cref{eq:srd}, and then using the standard difference-in-means estimator $\hat\tau_{\rm SRD}$. 
In our model, such an estimator neglects positive spillovers mediated by advertisers' strategic responses. 
By comparing the distribution of the difference-in-means estimator $\hat\tau_{\rm SRD}$ under the standard creator-randomized design given by \cref{eq:srd} to the distribution of  $\hat{\tau}_{\ATE}$ under the SMRD, we illustrate in \cref{fig:comparison} that in our model the standard design usually produces the incorrect sign of the platform's profit relative to that which would be obtained by treating the whole population, while the SMRD usually produces the correct sign.


\section{Extensions and future work} \label{sec:extensions}

The designs discussed in \cref{sec:mrds} are a few of many possible designs that fit into the MRD framework. 
While we have focused in detail on the ``Simple'' MRD case, many other designs fit the MRD paradigm\textemdash including clustered experiments, experiments involving three or more populations, etc. These generalizations also include time-randomized experiments: e.g., recently \citet{masoero2023efficient} used the MRD framework to show that under certain assumptions on the potential outcomes, switchback designs based upon the MRD framework can lead to more efficient estimates of causal effects.
MRDs have also been used in practice in the context of  online marketplaces, to quantify the direct and indirect effects of certain interventions; see, e.g., \citet{masoero2024measuring,zhu2024seller,bright2024reducing}.

Additionally, as highlighted in the discussion following \cref{sutva_local}, we emphasize that the local interference assumption is only a starting point from which to rigorously study causal inference with MRDs. 
We envision that future work will study how MRDs can be used in conjunction with more complicated interference structures. 
Characterizing minimal restrictions on interference under which similar, design-based inference results  can be derived is an open question beyond the scope of this paper.

To illustrate the richness of our framework, we conclude by describing four additional designs which fit within the MRD setting. First, instead of partitioning {\buyers{}} and {\sellers{}} into two groups each, we can assign them to a finite number of groups, with the assignment a function of this finer partition. This allows to generate more variation in $ \owwi$ and $ \owwj$ and in turn to build models for the dependence of the potential outcomes on the share of treated {\buyer}s and {\seller}s that will permit more credible extrapolation to full exposure to treatment or control. 
As a simple example, we could endow each \buyer{} $i$ and \seller{} $j$ with scalar scores $w_i^\rb$ and $w_j^\rs$ (as opposed to binary values), and let the treatment assignment be defined by a modified version of  \cref{eq:conjunctive}, \eg, $f(w_i^\rb, w_j^\rs) = \been(w_i^\rb + w_j^\rs) > \kappa$ for a given threshold $\kappa$ (\eg, $\kappa = 0.5$ in \ref{eq:extension_score}).

\begin{equation} \label{eq:extension_score}
\bww = 
    \begin{NiceArray}{c@{\hspace{0.5em}}cccccc@{\hspace{0.5em}}cccc@{\hspace{0.15em}}>{\color{gray}}cc}[margin,baseline=5]
    \RowStyle[color=gray]{}
        & 1 & 2 & 3 & 4 & 5 && 6 & 7 & 8 &&& \\
        \RowStyle[color=black]{}
        ~& 0 & 0 & 0.2 & 0.2 & 0.4 && 0.4 & 0.6 & 0.6 &&& \\
        & \redc & \redc & \bluec & \bluec  & \magc && \magc & \ct &\ct && 1 & 0\\
        & \redc & \redc & \bluec & \bluec  & \magc && \magc & \ct &\ct && 2 & 0\\
        & \orangec  &\orangec  & \brownc & \brownc   &\ct && \ct & \ct &\ct && 3 & 0.2\\
        & \orangec & \orangec & \brownc & \brownc  & \ct&&  \ct & \ct &\ct && 4 & 0.2 \\
        & \greenc  &\greenc  & \ct & \ct   & \ct && \ct & \ct &\ct && 5& 0.4\\
        & \greenc  &\greenc  & \ct & \ct   & \ct && \ct & \ct &\ct && 6 & 0.4 \\
        \CodeAfter
        \SubMatrix({3-2}{8-10})
        \begin{tikzpicture}
            \draw[decorate,thick,color=gray] (1-2) node[left,xshift=-1.5em] (S) {\text{{\Seller}}};
            \draw[-latex,thick,color=gray] (S.east) |- (1-2.west);
            \draw[decorate,thick,color=black] (2-2) node[left,xshift=-1.5em] (E) {Score};
            \draw[decorate,thick,color=black] (3-12) node[above,xshift=2em,yshift=1em] (B) {\text{{Score}}};
            \draw[decorate,thick,color=gray] (3-12) node[above,xshift=2em,yshift=2em] (B) {\text{{\Buyer}}};
            \draw[-latex,thick,color=gray] (B.west) -| (3-12.north);
        \end{tikzpicture}
        \end{NiceArray}
\end{equation}

Second, one can first partition one of the groups ({\it e.g.}, {\sellers{}}) into two random groups (A, B), and run a {\buyer{}} experiment for one group and a {\seller{}} experiment for the other.

\begin{equation*}
\bww = 
    \begin{NiceArray}{c@{\hspace{0.5em}}cccccc@{\hspace{0.5em}}cccc@{\hspace{0.15em}}>{\color{gray}}c}[margin,baseline=5]
\RowStyle[color=gray]{}
& 1 & 2 & 3 & 4 & 5 && 6 & 7 & 8 && \\
\RowStyle[color=gray]{}
~& A & A & A & A & A && B & B & B && \\
&\greenc & \greenc  & \greenc & \greenc  & \greenc&& \ct& \redc &\ct && 1 \\
& \greenc &\greenc & \greenc & \greenc  & \greenc&& \ct& \redc &\ct && 2 \\
& \ct   &\ct  & \ct & \ct   &\ct && \ct & \textcolor{blue}{\cc} &\ct && 3 \\
& \greenc&\greenc & \greenc & \greenc  & \greenc&&  \ct&\redc &\ct && 4 \\
& \ct  &\ct  & \ct & \ct   & \ct && \ct & \textcolor{blue}{\cc} &\ct && 5 \\
    \CodeAfter
    \SubMatrix({3-2}{7-10})
    \begin{tikzpicture}
        \draw[decorate,thick,color=gray] (1-2) node[left,xshift=-1.5em] (S) {\text{{\Seller}}};
        \draw[-latex,thick,color=gray] (S.east) |- (1-2.west);
        \draw[decorate,thick,color=gray] (2-2) node[left,xshift=-1.5em] (E) {\text{{Exp}}};
        \draw[-latex,thick,color=gray] (E.east) |- (2-2.west);
        \draw[decorate,thick,color=gray] (3-12) node[above,xshift=2em,yshift=1em] (B) {\text{{\Buyer}}};
        \draw[-latex,thick,color=gray] (B.west) -| (3-12.north);
    \end{tikzpicture}
    \end{NiceArray}
\end{equation*}

Third, when one wants to do a {\seller}-clustered experiment, one may partition the {\buyer} population into two groups, $A$ and $B$, and then run a {\seller} clustered experiment in one group and a regular seller experiment in the second group. This would allow the researchers to infer within the context of a single experiment the within-cluster spillovers, as well as get estimates of the overall average effect.

\begin{equation*} \arraycolsep=14pt\def\arraystretch{0.65}
\bww = 
    \begin{NiceArray}{c@{\hspace{0.5em}}ccccccc@{\hspace{0.5em}}>{\color{gray}}c>{\color{gray}}c}[margin,baseline=7]
\RowStyle[color=gray]{}
    & 1                           & 2                           & 3                           & 4                           & 5                           & 6                           && \text{~}      & \text{~}        \\
    \RowStyle[color=gray]{}
    ~& \text{\RomanNumeralCaps{1}} & \text{\RomanNumeralCaps{1}} & \text{\RomanNumeralCaps{2}} & \text{\RomanNumeralCaps{2}} & \text{\RomanNumeralCaps{3}} & \text{\RomanNumeralCaps{3}} && ~      & ~        \\
    & \redc                       & \bluec                      & \blackt                     & \blackt                     &  \redc                      & \bluec                      &&      1 & \text{A} \\
    & \redc                       & \bluec                      & \blackt                     & \blackt                     &  \redc                      & \bluec                      &&      2 & \text{A} \\
    & \redc                       & \bluec                      & \blackt                     & \blackt                     & \redc                       & \bluec                      &&      3 & \text{A} \\
    & \redc                       & \bluec                      & \blackt                     & \blackt                     & \redc                       & \bluec                      &&      4 & \text{A} \\
    \\[-0.75em]
    & \blackt                     & \brownc                     & \blackt                     & \blackt                     & \blackt                     & \brownc                     &&      5 & \text{B} \\
    & \blackt                     & \brownc                     &  \blackt                    & \blackt                     & \blackt                     & \brownc                     &&      6 & \text{B} \\
    & \blackt                     & \brownc                     & \blackt                     & \blackt                     & \blackt                     & \brownc                     &&      7 & \text{B} \\
    & \blackt                     & \brownc                     & \blackt                     & \blackt                     & \blackt                     & \brownc                     &&      8 & \text{B} \\
    \CodeAfter
    \SubMatrix({3-2}{11-7})
    \begin{tikzpicture}
        \draw[decorate,thick,color=gray] (1-2) node[left,xshift=-1.5em] (S) {\text{{\Seller}}};
        \draw[-latex,thick,color=gray] (S.east) |- (1-2.west);
        \draw[decorate,thick,color=gray] (2-2) node[left,xshift=-1.5em] (C) {\text{Cluster}};
        \draw[-latex,thick,color=gray] (C.east) |- (2-2.west);
        \draw[decorate,thick,color=gray] (3-9) node[above,xshift=2.2em,yshift=2em] (B) {\text{{\Buyer}}};
        \draw[-latex,thick,color=gray] (B.west) -| (3-9.north);
        \draw[decorate,thick,color=gray] (3-10) node[above,xshift=2em,yshift=1em] (E) {\text{Exp.}};
        \draw[-latex,thick,color=gray] (E.west) -| (3-10.north);
    \end{tikzpicture}
    \end{NiceArray}
\end{equation*}

Fourth, we can consider designs where the local interference assumption is testable.
\begin{equation*}
\bww = 
    \begin{NiceArray}{c@{\hspace{0.5em}}cccccc@{\hspace{0.5em}}cccc@{\hspace{0.15em}}>{\color{gray}}c}[margin,baseline=5]
\RowStyle[color=gray]{}
    & 1 & 2 & 3 & 4 & 5 && 6 & 7 & 8 && \\
    & \redc & \redc & \bluec & \bluec  & \bluec && \bluec & \bluec &\bluec && 1 \\
    & \redc & \redc & \bluec & \bluec  & \bluec && \bluec & \bluec &\bluec && 2 \\
    & \ct   &\ct  & \brownc & \brownc   &\brownc && \brownc & \brownc &\brownc && 3 \\
    & \ct & \ct & \brownc & \brownc  & \brownc&&  \brownc &\brownc &\brownc && 4 \\
    & \greenc  & \greenc & \ct & \ct & \ct && \ct & \ct &\ct && 5 \\
    & \greenc  & \greenc & \ct & \ct & \ct && \ct & \ct &\ct && 6 \\
    \CodeAfter
    \SubMatrix({2-2}{7-10})
    \begin{tikzpicture}
        \draw[decorate,thick,color=gray] (1-2) node[left,xshift=-1.5em] (S) {\text{{\Seller}}};
        \draw[-latex,thick,color=gray] (S.east) |- (1-2.west);
        \draw[decorate,thick,color=gray] (3-12) node[above,xshift=2em,yshift=2em] (B) {\text{{\Buyer}}};
        \draw[-latex,thick,color=gray] (B.west) -| (2-12.north);
    \end{tikzpicture}
    \end{NiceArray}
\end{equation*}

Consider the red $\redc $ and the blue $\bluec$. 
In both cases they correspond to buyers who are in the control group for all sellers, and in both cases they correspond to sellers who are in  the treatment group for 1/3 of the buyers. 
However, sellers in the red $\redc $  pairs are in the treatment group for buyers who are very rarely in the treatment group, whereas the sellers in the blue $\bluec$ pairs are in the treatment group for buyers who are often in the treatment group. 
When local interference holds, that should not matter, but if local interference is violated, it may matter.
 
\newpage
\appendix
\centerline{\sc \large{Appendix}}

\section{Proofs for Multiple Randomization Designs}

We here prove the results presented in \Cref{sec:estimation}. 
\ifarxiv
\else 
Note: a longer and more detailed appendix containing the same results can be found in the extended online supplementary material \citep{masoero2024multiple}.
\fi
We consider conjunctive SMRDs (as per \Cref{smrd}) where local interference holds (\cref{sutva_local}), with a total of $I$ {\buyers}, $J$ {\sellers}, and $I\times J$ units. 
All \buyers{} and \sellers{} are endowed with random variables $W_i^\rb, W_j^\rs \in \{0,1\}$, so that $I>I_{\rt}>1$ and $J>J_{\rt}>1$, where $I_\rt := \sum_i W_i^\rb$, $J_\rt:=\sum_j W_j^\rs$.

\begin{lemma}[\Cref{lemma:local_interference}]  
    Under local interference (\Cref{sutva_local}), potential outcomes can be written as a function of the assignment types only: for $\bww, \bww' \in \{0,1\}^{I \times J}$ it holds that
    \[ 
        \type_{ij}(\bww)=\type_{ij}(\bww')\Rightarrow y_{ij}(\bww)=y_{ij}(\bww').
    \]
\end{lemma}

\begin{proof}
    Under \Cref{sutva_local}, for any $(i,j)$ and any pair of assignment matrices $\bm{w}, \bm{w}' \in \{0,1\}^{I\times J}$
    $y_{ij}(\bm{w}) = y_{ij}(\bm{w}')$    
    whenever (a) $w_{ij}=w_{ij}'$, (b) the fraction of treated sellers for buyer $i$ coincides in $\bm{w}, \bm{w}'$ and (c) the fraction of treated buyers for {\seller} $j$ coincides in $\bm{w}, \bm{w}'$. If (a), (b) and (c) hold, it must be the case that $\type_{ij}(\bm{w}) = \type_{ij}(\bm{w}')$, yielding the thesis.
\end{proof}

\subsection{Useful definitions} \label{app-sec-def}
Recall the definitions of the average outcomes for each {\buyer} and each {\seller}:
\begin{equation*} 
    \oybi(\type) := \frac{1}{J}\sum_{j=1}^J y_{ij}(\type), \quad
    \oysj(\type) := \frac{1}{I}\sum_{i=1}^I y_{ij}(\type) \quad\text{and}\quad
    \meanpopulation{\type} := \frac{1}{IJ}\sum_{i=1}^I \sum_{j=1}^J y_{ij}(\type).
\end{equation*}
For each type $\type\in \types$, {\buyer} $i$ and {\seller} $j$, define the following deviations:
\begin{equation*}
    \dyi(\type)  := \oybi(\type)-\meanpopulation{\type},
    \quad
    \dyj(\type)  := \oysj(\type)-\meanpopulation{\type},
    \quad 
    \dyij(\type) := y_{ij}(\type)-\oybi(\type)-\oysj(\type)+\meanpopulation{\type}.
\end{equation*}
By definition, the sum of these deviations is equal to zero:
\ifarxiv
\begin{equation*}
    \sum_{i=1}^I \dyi(\type)    = 0, \quad
    \sum_{i=1}^I \dyij(\type)   = 0, \quad
    \sum_{j=1}^J \dyj(\type)    = 0, \quad
    \sum_{j=1}^J \dyij(\type)   = 0.
\end{equation*}
We decompose $y_{ij}(\type)$  as
\begin{equation*}                 
    y_{ij}(\type)=\meanpopulation{\type}+\dyi(\type)+\dyj(\type)+\dyij(\type).
\end{equation*}
\else 
\(
    \sum_{i=1}^I \dyi(\type)    = 0, 
    \sum_{i=1}^I \dyij(\type)   = 0, 
    \sum_{j=1}^J \dyj(\type)    = 0, 
    \sum_{j=1}^J \dyij(\type)   = 0.
\)
We decompose $y_{ij}(\type)$  as
\(                
    y_{ij}(\type)=\meanpopulation{\type}+\dyi(\type)+\dyj(\type)+\dyij(\type).
\)
\fi 
Last, for $\type \in \types$ we let $I_\type$ be the number of {\buyers} eligible for type $\type$ and $J_\type$ be the number of {\sellers} eligible for type $\type$. Define $I_C := I - I_{\rt}$ and $J_C := J - J_{\rt}$, then
    $I_{\ccc} = I_{\rc}$, $J_\ccc = J_{\rc}$,
    $I_{\icb} = I_{\rt}$, $J_\icb = J_{\rc}$,
    $I_{\ics} = I_{\rc}$, $J_\ics = J_{\rt}$,
    $I_{\ctt} = I_{\rt}$, $J_\ctt = J_{\rt}$.

\subsection{Linear representation of the type estimators}
\ifarxiv
Recall from \Cref{smrd} that $W^\rb_i$ and $W^\rs_j$ are random variables which determine whether {\buyer} $i$ and {\seller} $j$ are eligible to be exposed to the treatment. 
\else
\fi
\begin{lemma}\label{lemma_appendix0} 
The (doubly averaged) sample mean estimator 
\ifarxiv
$\meanestimate{\type}$ can be decomposed as
\begin{align}
    \meanestimate{\ctt} &= \meanpopulation{\ctt} +\frac{1}{I_\rt}\sum_{i=1}^IW^\rb_i\dyi(\ctt) +\frac{1}{J_\rt} \sum_{j=1}^J W^\rs_j\dyj(\ctt) 
    + \frac{1}{I_\rt J_\rt}\sum_{i=1}^I\sum_{j=1}^J W^\rb_i W^\rs_j\dyij(\ctt), 
    \nonumber 
    \\
    \meanestimate{\icb}   &= \meanpopulation{\icb} + \frac{1}{I_\rt}\sum_{i=1}^I W^\rb_i\dyi(\icb) + \frac{1}{J_\rc} \sum_{j=1}^J(1-W^\rs_j)\dyj(\icb)
    + \frac{1}{I_\rt J_\rc}\sum_{i=1}^I\sum_{j=1}^JW^\rb_i(1-W^\rs_j)\dyij(\icb), \nonumber \\
    \meanestimate{\ics} &= \meanpopulation{\ics}+\frac{1}{I_\rc}\sum_{i=1}^I (1-W^\rb_i)\dyi(\ics)+\frac{1}{J_\rt} \sum_{j=1}^J W^\rs_j\dyj(\ics)
    + \frac{1}{I_\rc J_\rt}\sum_{i=1}^I\sum_{j=1}^J (1-W^\rb_i) W^\rs_j \dyij(\ics), \nonumber \\
    \begin{split}
        \meanestimate{\ccc}   &= \meanpopulation{\ccc} + \frac{1}{I_\rc} \sum_{i=1}^I(1-W^\rb_i) \dyi(\ccc) 
    + \frac{1}{J_\rc} \sum_{j=1}^J(1-W^\rs_j)\dyj(\ccc)  
    \\
    & 
    \qquad \quad 
    + \frac{1}{I_\rc J_\rc}\sum_{i=1}^I\sum_{j=1}^J (1-W^\rb_i)(1-W^\rs_j)\dyij(\ccc)
    .
    \label{eq:ycc}
    \end{split}
\end{align}
\else
$\meanestimate{\ctt}$ can be decomposed as
\begin{align}
    \meanestimate{\ctt} &= \meanpopulation{\ctt} +\frac{1}{I_\rt}\sum_{i=1}^IW^\rb_i\dyi(\ctt) +\frac{1}{J_\rt} \sum_{j=1}^J W^\rs_j\dyj(\ctt) 
    + \frac{1}{I_\rt J_\rt}\sum_{i=1}^I\sum_{j=1}^J W^\rb_i W^\rs_j\dyij(\ctt).
    \label{eq:ycc}
\end{align}
\fi
\end{lemma}
\begin{proof}[Proof of Lemma \ref{lemma_appendix0}]
Consider the case of $\meanestimate{\ctt}$: leveraging the decomposition of $y_{ij}(\ctt)$,
\begin{align*}
    \meanestimate{\ctt}   &= 
    \frac{1}{I_\rt J_\rt}\sum_{i=1}^I\sum_{j=1}^J W^\rb_i W^\rs_j y_{ij}(\ctt) 
    = \frac{1}{I_\rt J_\rt}\sum_{i=1}^I\sum_{j=1}^J W^\rb_i W^\rs_j 
    \left( \meanpopulation{\ctt}
    + 
    \dyi(\ctt) 
    +
    \dyj(\ctt) 
    + 
    \dyij(\ctt)
    \right)\\
    &=
    \meanpopulation{\ctt} +\frac{1}{I_\rt}\sum_{i=1}^IW^\rb_i\dyi(\ctt) +\frac{1}{J_\rt} \sum_{j=1}^J W^\rs_j\dyj(\ctt) 
    + \frac{1}{I_\rt J_\rt}\sum_{i=1}^I\sum_{j=1}^J W^\rb_i W^\rs_j\dyij(\ctt).
\end{align*}
\ifarxiv
Results for $\type \neq \ctt$ are similar and are omitted.
\fi
\end{proof}
\ifarxiv
\else
Analogous characterizations hold for $\type \neq \ctt$. They are similar and omitted.
\fi
\subsection{Moment characterization}

We use \cref{lemma_appendix0} to re-write the estimator $\meanestimate{\type}$ of $\meanpopulation{\type}$ as a linear combination of the random labels $W^\rb_i$, $W^\rs_j$ with non-stochastic coefficients.
We use this to derive the first two moments of $(\meanestimate{\type}, \meanestimate{\type'})$ under the SMRD design. 
To do so, we define the demeaned treatment 
\(
    D_i^{\rb}=W^{\rb}_i- {I_\rt}/{I}, \quad \text{and} \quad D_j^{\rs}=W^{\rs}_j- {J_\rt}/{J}. 
\)
\begin{lemma} \label{lemma:demeaned}
    For $i\neq i' \in\mmi $,
    \(
        \mme\left[\db_i\right]=0, \quad 
        \mmv\left(\db_i\right)= \frac{\irn\irt}{I^2}, \quad
        \mmc(\db_i,\db_{i'})= - \frac{\irn\irt}{I^2(I-1)}.
    \)
    For $j,j'\in\mmj$, $j\neq j'$,
    \(
        \mme[\ds_j]=0, \quad
        \mmv(\ds_j)=\frac{\jrn\jrt}{J^2}, \quad
        \mmc(\ds_j,\ds_{j'}) = - \frac{\jrn\jrt}{J^2(J-1)}.
    \)
    Finally, because $\db_i$ and $\ds_j$ are independent, we have
    \(
        \mmc(\db_i,\ds_j)=0, \quad \forall\; i,j.
    \)
\end{lemma}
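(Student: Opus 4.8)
The plan is to treat the $W^\rb_i$ and $W^\rs_j$ as indicators arising from sampling without replacement: under the design $p$ exactly $I_\rt$ of the $I$ buyers have $W^\rb_i=1$ (the selected subset being uniform among all $\binom{I}{I_\rt}$ possibilities), and independently exactly $J_\rt$ of the $J$ sellers have $W^\rs_j=1$. Since this scheme is exchangeable across buyers and across sellers, the marginal law of each $W^\rb_i$ and of each pair $(W^\rb_i,W^\rb_{i'})$ is determined by counting, and every moment in the statement follows. The main point to keep in mind is that these are \emph{not} i.i.d.\ Bernoulli indicators: the fixed margin induces negative correlation, which is exactly what produces the negative off-diagonal covariances.

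First I would record the buyer means. By exchangeability each buyer is equally likely to be selected, so $\pr(W^\rb_i=1)=I_\rt/I$ and hence $\mme[\db_i]=\mme[W^\rb_i]-I_\rt/I=0$. Because $W^\rb_i$ is binary we have $(W^\rb_i)^2=W^\rb_i$, so that $\mmv(\db_i)=\mmv(W^\rb_i)=I_\rt/I-(I_\rt/I)^2=\irn\irt/I^2$, using $I-I_\rt=I_\rn$.

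For the covariance I would exploit the fact that the number of selected buyers is deterministic: $\sum_{i=1}^I W^\rb_i=I_\rt$ identically, hence $\sum_{i=1}^I\db_i=0$. Taking the variance of this (identically zero) sum and invoking exchangeability to replace each of the $I(I-1)$ ordered off-diagonal terms by the common value $\mmc(\db_i,\db_{i'})$ gives
\[
0=\mmv\Bigl(\sum_{i=1}^I\db_i\Bigr)=\sum_{i=1}^I\mmv(\db_i)+\sum_{i\neq i'}\mmc(\db_i,\db_{i'})=I\cdot\frac{\irn\irt}{I^2}+I(I-1)\,\mmc(\db_i,\db_{i'}),
\]
whence $\mmc(\db_i,\db_{i'})=-\irn\irt/\bigl(I^2(I-1)\bigr)$, as claimed. (Equivalently one can compute $\pr(W^\rb_i=W^\rb_{i'}=1)=\tfrac{I_\rt(I_\rt-1)}{I(I-1)}$ directly and subtract $(I_\rt/I)^2$; both routes give the same value.) The seller identities follow verbatim with $J,J_\rt,J_\rn$ replacing $I,I_\rt,I_\rn$.

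Finally, the cross term $\mmc(\db_i,\ds_j)=0$ is immediate from the independence of the buyer and seller randomizations built into the design (\Cref{def:SMRD}): $\db_i$ is a function of $\bw^\rb$ alone and $\ds_j$ a function of $\bw^\rs$ alone, so the two are independent and therefore uncorrelated. I anticipate no genuine obstacle here; the only step requiring a little care is the covariance, and the fixed-sum trick above isolates it cleanly without any hypergeometric bookkeeping.
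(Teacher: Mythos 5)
Your proposal is correct, and the mean, variance, and cross-covariance steps coincide with the paper's. The one place where you genuinely diverge is the off-diagonal covariance $\mmc(\db_i,\db_{i'})$: the paper computes the second moment directly from the sampling-without-replacement probability, $\mme[W^\rb_iW^\rb_{i'}]=\frac{\irt}{I}\cdot\frac{\irt-1}{I-1}$, and subtracts $(\irt/I)^2$, whereas you obtain the same value by noting that $\sum_{i=1}^I\db_i=0$ identically, expanding $0=\mmv\bigl(\sum_i\db_i\bigr)=I\,\mmv(\db_i)+I(I-1)\,\mmc(\db_i,\db_{i'})$ and invoking exchangeability. Both are valid; your fixed-sum argument avoids the hypergeometric joint probability entirely and makes the negativity of the covariance transparent as a consequence of the fixed margin, while the paper's direct computation is self-contained at the level of a single pair and does not require appealing to exchangeability of the full off-diagonal block. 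You also correctly flag (as the paper does implicitly) that the indicators are not i.i.d.\ Bernoulli, and your treatment of $\mmc(\db_i,\ds_j)=0$ via independence of the two randomizations matches the paper exactly. No gaps.
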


\begin{proof}[Proof of \Cref{lemma:demeaned}]
    $W_i^{\rb}$ is a Bernoulli random variable with bias given by $p^{\rb}=I_{\rt}/I$, hence $\mme\left[\db_i\right]=0$. Moreover,
    \(
        \mmv\left(\db_i\right)=\mmv\left(W^{\rb}_i\right) = \frac{I_{\rt}}{I}\left(1-\frac{I_{\rt}}{I}\right) = \frac{\irn\irt}{I^2}.
    \)
    Last,
    \begin{align*}
        \mmc(\db_i,\db_{i'}) &=  
                            \mme[W_i^{\rb}W_{i'}^{\rb}] - \mme[W_i^{\rb}]\mme[W_{i'}^{\rb}] 
                            = \frac{I_{\rt}}{I}\frac{I_{\rt}-1}{I-1} - \frac{I_{\rt}^2}{I^2} 
                            = - \frac{\irn\irt}{I^2(I-1)}.
    \end{align*}
    Corresponding proofs for $\ds_j$ are analogous and omitted. 
\end{proof}
\ifarxiv
Note that the covariance between $\db_i$ and $\db_{i'}$ for $i\neq i'$ differs from zero because we fix the number of selected {\buyer}s at $I_\rt$, rather than tossing a coin for each {\buyer}.
 Fixing the number of selected {\buyer}s is important for getting exact finite sample results for the variances.
\fi
Define the average residuals by assignment type, for $\type\in \types$:
\[ 
    \oee_{\type}^\rb=\frac{1}{I_\rt}\sum_{i=1}^I D^\rb_i\dyi(\type),
    \quad
    \oee_{\type}^\rs=\frac{1}{J_\rt} \sum_{j=1}^JD^\rs_j\dyj(\type),
    \quad 
    \ooe_{\type}^{\rb\rs}=\frac{1}{I_\rt J_\rt}\sum_{i=1}^I\sum_{j=1}^JD^\rb_i D^\rs_j\dyij(\type).
\]
\ifarxiv
These representations allow us to split the averages of observed values $\meanestimate{\type}$ into deterministic and stochastic components.
\fi
\begin{lemma}\label{lemma_appendix1}
\begin{itemize}
    \item[(a)] The sample estimates $\meanestimate{\type}$, $\type \in \types$ can be written as the sums of four terms:
        \[
            \meanestimate{\type}=\meanpopulation{\type}+ \oee_{\type}^{\rb}+\oee_{\type}^{\rs}+\ooe_{\type}^{\rb\rs},
        \]
    \item[(b)] $\forall \; \type\in\{\ccc,\icb,\ics,\ctt\}$, the $\epsilon$ in the decomposition above are mean-zero error terms:
    \[ 
        \mme\left[\oee_{\type}^{\rb}\right]=\mme\left[\oee_{\type}^{\rs}\right]=\mme\left[\ooe_{\type}^{\rb\rs}\right]=0.
    \]
    \item[(c)] For all $\type \neq \type'\in\{\ccc,\icb,\ics,\ctt\}$, the error terms above are uncorrelated:
    \[ 
        \mmc\left(\oee_{\type}^{\rb},\oee_{\type'}^{\rs}\right)=\mmc\left(\oee_{\type}^{\rb},\ooe_{\type'}^{\rb\rs}\right)
        =\mmc\left(\oee_{\type}^{\rs},\ooe_{\type'}^{\rb\rs}\right)=0.
    \]
\end{itemize}
\end{lemma}
\ifarxiv
Before proving this lemma, let us just provide an intuition about the decomposition of the four averages $\meanestimate{\ccc}$, $\meanestimate{\icb}$, $\meanestimate{\ics}$, and $\meanestimate{\ctt}$ described above, as this is a key step to obtaining the variance of the estimator for the average treatment effect. In particular, looking at $(i)$, the first term $\meanpopulation{\type}$ is deterministic (the unweighted average of potential outcomes over all pairs $(i,j)$, not depending on the assignment). The other three terms, $\oee_{\type}^\rb$, $\oee_{\type}^\rs$, and $\ooe_{\type}^{\rb\rs}$, are mutually  uncorrelated stochastic terms with expectation equal to zero. 
The variances of the four averages will depend on the variances of the three stochastic terms, and the covariances will depend on the covariances of the corresponding stochastic terms, {\it e.g.,} the covariance of $\oee_{\ctt}^\rb$ and $\oee_{\icb}^\rb$, or the covariance of $\ooe_{\ccc}^{\rb\rs}$ and $\ooe_{\ics}^{\rb\rs}$.
\fi
\begin{proof}[Proof of Lemma \ref{lemma_appendix1}]
For part $(a)$ consider $\meanestimate{\ctt}$.
Now consider for the treated type the average of the observed outcomes, decomposed as in Lemma \ref{lemma_appendix0}:
\begin{align*}
    \meanestimate{\ctt}&=\meanpopulation{\ctt}+\frac{1}{I_\rt}\sum_{i=1}^I W^\rb_i\dyi(\ctt)+\frac{1}{\jrt}
    \sum_{j=1}^JW^\rs_j\dyj(\ctt) +\frac{1}{I_\rt J_\rt}\sum_{i=1}^I\sum_{j=1}^JW^\rb_i W^\rs_j\dyij(\ctt).
\end{align*} 
Via \Cref{lemma:demeaned}, substituting $\db_i+\irt/I$ for $W^\rb_i$ and  $\ds_j+\jrt/J$ for $W^\rs_j$, we can write
\begin{align*}
    \meanestimate{\ctt}&=\meanpopulation{\ctt}+\frac{1}{I_\rt}\sum_{i=1}^I \left(D^\rb_i+\frac{I_\rt}{I}\right)\dyi(\ctt)
    +\frac{1}{J_\rt}
    \sum_{j=1}^J\left(D^\rs_j+\frac{J_\rt}{J}\right)\dyj(\ctt)\\
    &+\frac{1}{I_\rt J_\rt}\sum_{i=1}^I\sum_{j=1}^J\left(D^\rb_i+\frac{I_\rt}{I}\right)\left( D^\rs_j+\frac{J_\rt}{J}\right)\dyij(\ctt).
\end{align*}
By definition, $\dyij(\ctt)$, $\dyi(\ctt)$ and $\dyj(\ctt)$ sum to zero. Hence the equation above simplifies to
\begin{align*}
    \meanestimate{\ctt}
    &= 
    \meanpopulation{\ctt}
    + \sum_{i=1}^I \frac{D^\rb_i\dyi(\ctt)}{I_\rt}
    + \sum_{j=1}^J \frac{D^\rs_j\dyj(\ctt)}{J_\rt}
    + \sum_{i=1}^I\sum_{j=1}^J \frac{D^\rb_i D^\rs_j\dyij(\ctt)}{I_\rt J_\rt} 
    =\meanpopulation{\ctt}+ \oee_{\ctt}^{\rb}+\oee_{\ctt}^{\rs}+\ooe_{\ctt}^{\rb\rs}.
\end{align*}

This concludes the proof of the first part of  $(a)$. The proofs of the other parts of $(a)$ follow the same argument and are omitted. Given part $(a)$, $(b)$ follows immediately  because $\db_i$ and $\ds_j$ have expectation equal to zero. The same holds for the covariances in $(c)$.
\end{proof}
Unbiasedness results in \Cref{lemma:unbiasedness} and \Cref{thm:spillover_unbiasedness} follow directly from \Cref{lemma_appendix1}. 
\begin{lemma}[\Cref{lemma:unbiasedness} in the main paper]
Consider a SMRD in which \Cref{sutva_local} holds. 
\ifarxiv 
The plug-in estimators in  \Cref{eq:meantypeestimate} satisfy
\[ 
    \mme\left[\meanestimate{\type}\right] =\meanpopulation{\type},\;\forall\; \type\in\types.
\]
\else
The plug-in estimators in  \Cref{eq:meantypeestimate} satisfy
\(
    \mme\left[\meanestimate{\type}\right] =\meanpopulation{\type},\;\forall\; \type\in\types.
\)
\fi 
\end{lemma}
\begin{proof}[Proof of \Cref{lemma:unbiasedness}]
    Apply \Cref{lemma_appendix1}, and linearity of the expectation operator.
\end{proof}

\begin{theorem}[Already \Cref{thm:spillover_unbiasedness} in the main paper]
Consider a SMRD where \Cref{sutva_local} holds. The plug-in estimators $\hat{\tau}(\coefvec)$ for  $\tau(\coefvec)$ defined in \Cref{eq:causal_estimands} satisfy
\begin{equation*}
    \mme\left[ \hat{\tau}(\coefvec) \right] = {\tau}(\coefvec),
\quad
\textrm{with }\quad
    \hat{\tau}(\coefvec) := \beta_{\ccc}\meanestimate{\ccc} +
    \beta_{\icb}\meanestimate{\icb} +
    \beta_{\ics}\meanestimate{\ics} +
    \beta_{\ctt}\meanestimate{\ctt}.
\end{equation*}
\end{theorem}
\begin{proof}[Proof of \Cref{thm:spillover_unbiasedness}]
    Apply \Cref{lemma:unbiasedness}, and linearity of the expectation operator.
\end{proof}
We now move to the variance characterization. 
For $\type\in\{\ccc,\icb,\ics,\ctt\}$, recall the definitions of the population variances of $\dyi(\type)$ and  $\dyj(\type)$ given in \Cref{sec:estimation}:
\[ 
    \axisvariancepopulation{\type}{\rb}:= \sum_{i=1}^I \frac{\left(\dyi(\type)\right)^2}{I},
    \quad 
    \axisvariancepopulation{\type}{\rs}:=\sum_{j=1}^J \frac{\left(\dyj(\type)\right)^2}{J},
    \quad 
    \axisvariancepopulation{\type}{\rb\rs}:= \sum_{i=1}^I \sum_{j=1}^J \frac{\left(\dyij(\type)\right)^2}{IJ}  .
\]

\begin{lemma}\label{lemma_appendix2}
For $\type \in \types$, the variance of $\meanestimate{\type}$ is:
\begin{align}
    \mmv_\type&:= \mmv\left(\meanestimate{\type}\right)
    =
    \frac{I-I_\type}{I_\type} \frac{1}{I-1} \axisvariancepopulation{\type}{\rb} 
    + \frac{J - J_\type}{J_\type} \frac{1}{J-1} \axisvariancepopulation{\type}{\rs} 
    + \frac{I-I_\type}{I_\type} \frac{1}{I-1} \frac{J-J_\type}{J_\type} \frac{1}{J-1} \axisvariancepopulation{\type}{\rb\rs} \nonumber \\
    &= \alpha_\type^\rb \axisvariancepopulation{\type}{\rb}  + \alpha_\type^\rs \axisvariancepopulation{\type}{\rs}  + \alpha_\type^\rb \alpha_\type^\rs \axisvariancepopulation{\type}{\rb \rs} \nonumber,
\end{align}
where $\alpha_\type^\rb$ and $\alpha_\type^\rs$ where defined in \cref{eq:alpha_weights} in the main text.
\end{lemma}
\begin{proof}[Proof of Lemma \ref{lemma_appendix2}]
We consider $\type = \ctt$, (i.e., $I_\type = I_{\rt}$, $J_\type = J_{\rt}$). 
For $\type = \ctt$, $I_{\rc} = I-I_{\type}$ and $J_{\rc} = J - J_{\type}$.
We show the three following equalities hold:
\begin{equation}\label{eq:three_eq}
    \mmv\left(\oee_{\ctt}^{\rb}\right) =\frac{I_\rc}{I_\rt } \frac{1}{I-1} \axisvariancepopulation{\ctt}{\rb}, 
    \quad
    \mmv\left( \oee_{\ctt}^{\rs}\right)=\frac{J_\rc}{J_\rt } \frac{1}{J-1} \axisvariancepopulation{\ctt}{\rs},
    \quad
    \mmv\left( \ooe_{\ctt}^{\rb\rs}\right)=\frac{I_\rc}{I_\rt } \frac{1}{I-1} \frac{J_\rc}{J_\rt } \frac{1}{J-1} \axisvariancepopulation{\ctt}{\rb\rs}.
\end{equation}
Because  \Cref{lemma_appendix1} implies that
\(
    \mmv_\ctt=\mmv\left( \meanestimate{\ctt}\right)=\mmv\left( \oee_{\ctt}^{\rb}\right)+\mmv\left( \oee_{\ctt}^{\rs}\right) +\mmv\left(\ooe_{\ctt}^{\rb\rs}\right),
\)
showing the three equalities in \cref{eq:three_eq} yields the thesis. 
\ifarxiv
\begin{align*}
    \mmv\left( \oee_{\ctt}^{\rb}\right)
    &=\mme\left[\left(\frac{1}{I_\rt}\sum_{i=1}^I D^\rb_i\dyi(\ctt)\right)^2
    \right] 
    =\frac{1}{I_\rt^2}\mme\left[\sum_{i=1}^I\sum_{i'=1}^I D^\rb_i D^\rb_{i'} \dyi(\ctt) \dydy{i'}{\rb}(\ctt)\right]
    \\ &
    =\frac{1}{I_\rt^2}\sum_{i=1}^I\sum_{i'=1}^I \mme\left[ D^\rb_i D^\rb_{i'}
    \right] \dyi(\ctt) \dydy{i'}{\rb}(\ctt) \\
    & = \frac{1}{I_\rt^2}\sum_{i=1}^I \mme\left[ (D^\rb_i )^2
    \right] \dyi(\ctt) + \frac{1}{I_\rt^2}\sum_{i=1}^I \sum_{i'\neq i} \mme\left[ D^\rb_i D^\rb_{i'} 
    \right] \dyi(\ctt) \dydy{i'}{\rb}(\ctt) \\
    &=\frac{1}{I^2_\rt}\sum_{i=1}^I \frac{I_{\rc}I_{\rt}}{I^2}\left(\dyi(\ctt)\right)^2
    -\frac{1}{I_\rt^2}\sum_{i=1}^I\sum_{i'\neq i} \frac{I_\rt I_\rc}{I^2 (I-1)} \dyi(\ctt) \dydy{i'}{\rb}(\ctt) \\
    &=
    \frac{1}{I^2_\rt}\sum_{i=1}^I \frac{I_{\rc}I_{\rt}}{I^2}\left(\dyi(\ctt)\right)^2
    -\frac{1}{I_\rt^2}\sum_{i=1}^I\sum_{i'=1}^I \frac{I_\rt I_\rc}{I^2 (I-1)} \dyi(\ctt) \dydy{i'}{\rb}(\ctt)
    + \frac{1}{I^2_\rt}\sum_{i=1}^I \frac{I_\rt I_\rc}{I^2 (I-1)} (\dyi(\ctt))^2.
\end{align*}
\else
\begin{align*}
    \mmv\left( \oee_{\ctt}^{\rb}\right)
    &=\mme\left[\left(\frac{1}{I_\rt}\sum_{i=1}^I D^\rb_i\dyi(\ctt)\right)^2
    \right] 
    =\frac{1}{I_\rt^2}\sum_{i=1}^I\sum_{i'=1}^I \mme\left[ D^\rb_i D^\rb_{i'}
    \right] \dyi(\ctt) \dydy{i'}{\rb}(\ctt) \\
    &=
    \frac{1}{I^2_\rt}
        \left[
            \sum_{i=1}^I \frac{I_{\rc}I_{\rt}}{I^2}\left(\dyi(\ctt)\right)^2
            -\sum_{i=1}^I\sum_{i'=1}^I \frac{I_\rt I_\rc}{I^2 (I-1)} \dyi(\ctt) \dydy{i'}{\rb}(\ctt)
            +\sum_{i=1}^I \frac{I_\rt I_\rc}{I^2 (I-1)} (\dyi(\ctt))^2
        \right].
\end{align*}
\fi
Because $\sum_i\dyi(\ctt)=0$, the term above involving the double sum is equal to zero: 
\begin{align*}
    \mmv\left( \oee_{\ctt}^{\rb}\right)&=\frac{1}{I^2_\rt}\frac{I_\rt I_\rc}{I^2 (I-1)} \sum_{i=1}^I (\dyi(\ctt))^2
    +\frac{1}{I^2_\rt} \frac{I_\rt I_\rc}{I^2} \sum_{i=1}^I \left(\dyi(\ctt)\right)^2  
    =\frac{I_\rc}{I_\rt} \frac{1}{I-1} \axisvariancepopulation{\ctt}{\rb}. 
\end{align*}
The second equality in \cref{eq:three_eq} is proved analogously. 
For the last equality in \cref{eq:three_eq},
\begin{align*}
    \mmv_{\ctt}^{\rb\rs}&:=\mmv\left( \ooe_{\ctt}^{\rb\rs}\right)= \mmv\left(\frac{1}{I_\rt J_\rt}\sum_{i=1}^I\sum_{j=1}^JD^\rb_i D^\rs_j\dyij(\ctt) \right) \\
    &= \mme\left[ 
    \frac{1}{I^2_\rt J^2_\rt}\sum_{i=1}^I\sum_{i'=1}^I
    \sum_{j=1}^J\sum_{j'=1}^J D^\rb_i D^\rb_{i'}D^\rs_j D^\rs_{j'}\dyij(\ctt)
    \delta_{i', j'}^{\mathrm{BS}}(\ctt)\right].
\end{align*}
\ifarxiv
By independence of $D^\rb_i$ and $D^\rs_j$, this is equal to
\[
    \mmv_{\ctt}^{\rb\rs}=
    \frac{1}{I^2_\rt J^2_\rt}\sum_{i=1}^I\sum_{i'=1}^I \mme\left[ D^\rb_i D^\rb_{i'}\right]
    \sum_{j=1}^J\sum_{j'=1}^J\mme\left[ D^\rs_j D^\rs_{j'}\right]\dyij(\ctt)
    \delta_{i', j'}^{\mathrm{BS}}(\ctt).
\]
Now we expand the four-way sum above, noting that it is either the case that (a) : $i=i'$ and $j=j'$, (b) : $i=i'$ and $j \neq j'$, (c) : $i\neq i'$ and $j = j'$ or (d) : $i\neq i'$ and $j \neq j'$.
\begin{align*}
    \mmv_{\ctt}^{\rb\rs}&\overset{(a)}{=} \frac{1}{I^2_\rt J^2_\rt}\sum_{i=1}^I \sum_{j=1}^J \mme[(D_i^\rb)^2] \mme[(D_j^\rs)^2] \left(\dyij(\ctt)\right)^2  \\
    &\overset{(b)}{+}\frac{1}{I^2_\rt J^2_\rt}\sum_{i=1}^I \sum_{j=1}^J \sum_{j'\neq j}^J \mme[(D_i^\rb)^2] \mme[D_j^\rs D_{j'}^{\rs}] \dyij(\ctt) \delta_{i, j'}^{\mathrm{BS}}(\ctt) \\
    &\overset{(c)}{+}\frac{1}{I^2_\rt J^2_\rt}\sum_{i=1}^I \sum_{i'\neq i}^I\sum_{j=1}^J \mme[D_i^\rb D_{i'}^{\rb}] \mme[(D_j^\rs)^2] \dyij(\ctt) \delta_{i', j}^{\mathrm{BS}}(\ctt) \\
    &\overset{(d)}{+} \frac{1}{I^2_\rt J^2_\rt} \sum_{i=1}^I \sum_{i'\neq i }^I \sum_{j=1}^J \sum_{j' \neq j }^J \mme[D_i^\rb D_{i'}^{\rb}] \mme[D_j^\rs D_{j'}^{\rs}] \dyij(\ctt) \delta_{i', j'}^{\mathrm{BS}}(\ctt).
\end{align*}

Now we ``complete'' each of the last ``incomplete'' sums (b), (c), (d). For (b):
\begin{align*}
    \sum_{i=1}^I \sum_{j=1}^J \sum_{j'\neq j}^J \frac{\mme[(D_i^\rb)^2] \mme[D_j^\rs D_{j'}^{\rs}]}{I^2_\rt J^2_\rt} &\dyij(\ctt) \delta_{i, j'}^{\mathrm{BS}}(\ctt) = - \frac{\left( \frac{I_\rt I_\rc}{I^2} \right) \left( \frac{J_\rt J_\rc}{J^2(J-1)} \right)}{I^2_\rt J^2_\rt}  \sum_{i=1}^I \sum_{j=1}^J \sum_{j'\neq j}^J  \dyij(\ctt) \dydy{ij'}{\rb\rs}(\ctt) \\
    &= - \frac{\left( \frac{I_\rt I_\rc}{I^2} \right) \left( \frac{J_\rt J_\rc}{J^2(J-1)} \right)}{I^2_\rt J^2_\rt} \sum_{i=1}^I \sum_{j=1}^J \sum_{j'=1}^J  \dyij(\ctt) \dydy{ij'}{\rb\rs}(\ctt) \\
    &+ \frac{\left( \frac{I_\rt I_\rc}{I^2} \right) \left( \frac{J_\rt J_\rc}{J^2(J-1)} \right)}{I^2_\rt J^2_\rt}\sum_{i=1}^I \sum_{j=1}^J  (\dyij(\ctt))^2 \\
    &= \frac{1}{I^2_\rt J^2_\rt} \left( \frac{I_\rt I_\rc}{I^2} \right) \left( \frac{J_\rt J_\rc}{J^2(J-1)} \right) \sum_{i=1}^I \sum_{j=1}^J  (\dyij(\ctt))^2,
\end{align*}
where we observe that $\sum_{i=1}^I \sum_{j=1}^J \sum_{j'=1}^J  \dyij(\ctt) \dydy{ij'}{\rb\rs}(\ctt) = 0$.
A similar derivation allows us to ``complete'' (c), yielding:
\begin{align*}
   \sum_{i=1}^I \sum_{i'\neq i}^I\sum_{j=1}^J \frac{\mme[D_i^\rb D_{i'}^{\rb}] \mme[(D_j^\rs)^2] \dyij(\ctt) \dydy{i'j}{\rb\rs}(\ctt)}{{I^2_\rt J^2_\rt}} = \frac{\left( \frac{I_\rt I_\rc}{I^2(I-1)} \right) \left( \frac{J_\rt J_\rc}{J^2} \right) }{I^2_\rt J^2_\rt} \sum_{i=1}^I \sum_{j=1}^J  (\dyij(\ctt))^2.
\end{align*}
Last, for (d), 
\begin{align*}
   \sum_{i=1}^I \sum_{i'\neq i}^I\sum_{j=1}^J \sum_{j\neq j'}^J \frac{\mme[D_i^\rb D_{i'}^{\rb}] \mme[D_j^\rs D_{j'}^\rs] \dyij(\ctt) \dydy{i'j}{\rb\rs}(\ctt)}{{I^2_\rt J^2_\rt}} = \frac{\left( \frac{I_\rt I_\rc}{I^2(I-1)} \right) \left( \frac{J_\rt J_\rc}{J^2(J-1)} \right) }{I^2_\rt J^2_\rt} \sum_{i=1}^I \sum_{j=1}^J  (\dyij(\ctt))^2.
\end{align*}
Plugging these back in $ \mmv_{\ctt}^{\rb\rs}$,
\begin{align*}
    \mmv_{\ctt}^{\rb\rs}&= \frac{1}{I^2_\rt J^2_\rt} \frac{I_{\rc}I_{\rt} J_{\rc}J_{\rt}}{I^2 J^2} \left[ 1 + \frac{1}{I-1} + \frac{1}{J-1} + \frac{1}{(I-1)(J-1)}\right] \sum_{i=1}^I \sum_{j=1}^J (\dyij(\ctt))^2 \\
    &= \frac{1}{I^2_\rt J^2_\rt} \frac{I_{\rc}I_{\rt} J_{\rc}J_{\rt}}{I^2 J^2} \left[ \frac{IJ}{(I-1)(J-1)}\right] \sum_{i=1}^I \sum_{j=1}^J (\dyij(\ctt))^2 
    =\frac{I_\rc}{I_\rt }\frac{1}{I-1} \frac{J_\rc}{J_\rt } \frac{1}{J-1}\axisvariancepopulation{\ctt}{\rb\rs}.
\end{align*}
\else
Leveraging independence of $D^\rb_i$ and $D^\rs_j$, expanding, completing squares, and observing that $\sum_{i=1}^I \sum_{j=1}^J \sum_{j'=1}^J  \dyij(\ctt) \dydy{ij'}{\rb\rs}(\ctt) = \sum_{i=1}^I \sum_{i'=1}^J \sum_{j=1}^J  \dyij(\ctt) \dydy{i'j}{\rb\rs}(\ctt) = 0$ yields
\begin{align*}
    \mmv_{\ctt}^{\rb\rs}&= \frac{1}{I^2_\rt J^2_\rt} \frac{I_{\rc}I_{\rt} J_{\rc}J_{\rt}}{I^2 J^2} \left[ 1 + \frac{1}{I-1} + \frac{1}{J-1} + \frac{1}{(I-1)(J-1)}\right] \sum_{i=1}^I \sum_{j=1}^J (\dyij(\ctt))^2 \\
    &= \frac{1}{I^2_\rt J^2_\rt} \frac{I_{\rc}I_{\rt} J_{\rc}J_{\rt}}{I^2 J^2} \left[ \frac{IJ}{(I-1)(J-1)}\right] \sum_{i=1}^I \sum_{j=1}^J (\dyij(\ctt))^2 
    =\frac{I_\rc}{I_\rt }\frac{1}{I-1} \frac{J_\rc}{J_\rt } \frac{1}{J-1} \axisvariancepopulation{\ctt}{\rb\rs}.
\end{align*}
The proofs of the  other parts of the lemma are similar and are omitted.
\fi
\end{proof}

In order to characterize the variance of the spillover effects, we need to characterize the covariance between the estimators $\meanestimate{\type}, \meanestimate{\type'}$, for $\type,\type'\in\types$. 
Recall the definitions provided in \Cref{sec:estimation}: 
for all $\type \neq \type'\in \types$ for {\buyer}s and the {\seller}s
\[ 
    \axisvariancecrosspopulation{\type,\type'}{\rb}:=\sum_{i=1}^I \frac{\left( \dyi(\type) - \dyi(\type')\right)^2}{I}, \quad 
    \axisvariancecrosspopulation{\type,\type'}{\rs}:=\sum_{j=1}^J \frac{\left( \dyj(\type) - \dyj(\type')\right)^2}{J}, \quad 
    \axisvariancecrosspopulation{\type,\type'}{\rb\rs}:=\sum_{i=1}^I \sum_{j=1}^J  \frac{\left( \dyij(\type)-\dyij(\type')\right)^2}{IJ}.
\]
\begin{lemma}\label{lemma_appendix3}
For $\type \neq \type' \in \types$, covariances of type estimators are
\ifarxiv
\else 
of the form
\fi
\begin{align*}
     \mmc_{\ctt,\icb}&:= 
     \mmc\left(\meanestimate{\ctt},\meanestimate{\icb}\right) \\
     &= 
     \frac{I_\rc}{2 I_\rt (I-1)}\left(\axisvariancepopulation{\ctt}{\rb}+\axisvariancepopulation{\icb}{\rb}-\axisvariancecrosspopulation{\ctt,\icb}{\rb} \right)
    - \frac{1}{2(J-1)}\left(\axisvariancepopulation{\ctt}{\rs}+\axisvariancepopulation{\icb}{\rs}-\axisvariancecrosspopulation{\ctt,\icb}{\rs} \right)
    \\
    &-\frac{I_\rc}{2I_\rt (I-1)(J-1)}\left(\axisvariancepopulation{\ctt}{\rb\rs}+\axisvariancepopulation{\icb}{\rb\rs}-\axisvariancecrosspopulation{\ctt,\icb}{\rb\rs} \right).
\end{align*}
\ifarxiv
Similarly,
\begin{align*}
     \mmc_{\ctt,\ics}&:= \mmc\left(\meanestimate{\ctt},\meanestimate{\ics}\right) \\
     &= -\frac{1}{2 (I-1)}\left(\axisvariancepopulation{\ctt}{\rb}+\axisvariancepopulation{\ics}{\rb}-\axisvariancecrosspopulation{\ctt,\ics}{\rb}\right) 
    + \frac{J_{\rc}}{2J_{\rt}(J-1)}\left(\axisvariancepopulation{\ctt}{\rs}+\axisvariancepopulation{\ics}{\rs}-\axisvariancecrosspopulation{\ctt,\ics}{\rs}\right) 
    \\
    &-\frac{J_\rc}{2IJ_\rt (J-1) }\left(\axisvariancepopulation{\ctt}{\rb\rs}+\axisvariancepopulation{\ics}{\rb\rs}-\axisvariancecrosspopulation{\ctt,\ics}{\rb\rs}\right),
\end{align*}

\begin{align*}
     \mmc_{\ctt,\ccc}&:= \mmc\left(\meanestimate{\ctt},\meanestimate{\ccc}\right) \\
     &= -\frac{1}{2 (I-1)}\left(\axisvariancepopulation{\ctt}{\rb}+\axisvariancepopulation{\ccc}{\rb}-\axisvariancecrosspopulation{\ctt,\ccc}{\rb}\right) 
    - \frac{1}{2 (J-1)}\left(\axisvariancepopulation{\ctt}{\rs}+\axisvariancepopulation{\ccc}{\rs}-\axisvariancecrosspopulation{\ctt,\ccc}{\rs}\right)  
    \\
    &+\frac{1}{2 (I-1) (J-1)}\left(\axisvariancepopulation{\ctt}{\rb\rs}+\axisvariancepopulation{\ccc}{\rb\rs}-\axisvariancecrosspopulation{\ctt,\ccc}{\rb\rs}\right),
\end{align*}

\begin{align*}
     \mmc_{\icb,\ics}&:= \mmc\left(\meanestimate{\icb},\meanestimate{\ics}\right) \\
     &= -\frac{1}{2 (I-1)}\left(\axisvariancepopulation{\icb}{\rb}+\axisvariancepopulation{\ics}{\rb}-\axisvariancecrosspopulation{\icb,\ics}{\rb}\right) 
    - \frac{1}{2(J-1)}\left(\axisvariancepopulation{\icb}{\rs}+\axisvariancepopulation{\ics}{\rs}-\axisvariancecrosspopulation{\icb,\ics}{\rs}\right)  
    \\
    &+\frac{1}{2(I-1)(J-1)}\left(\axisvariancepopulation{\icb}{\rb\rs}+\axisvariancepopulation{\ics}{\rb\rs}-\axisvariancecrosspopulation{\icb,\ics}{\rb\rs}\right),
\end{align*}

\begin{align*}
     \mmc_{\icb,\ccc}&:= \mmc\left(\meanestimate{\icb},\meanestimate{\ccc}\right) \\
     &= -\frac{1}{2 (I-1)}\left(\axisvariancepopulation{\icb}{\rb}+\axisvariancepopulation{\ccc}{\rb}-\axisvariancecrosspopulation{\icb,\ccc}{\rb}\right)
    -\frac{J_{\rc}}{2J_{\rt}(J-1)}\left(\axisvariancepopulation{\icb}{\rs}+\axisvariancepopulation{\ccc}{\rs}-\axisvariancecrosspopulation{\icb,\ccc}{\rs}\right) 
    \\
    &-\frac{J_\rc}{2(I-1)J_{\rt}(J-1)}\left(\axisvariancepopulation{\icb}{\rb\rs}+\axisvariancepopulation{\ccc}{\rb\rs}-\axisvariancecrosspopulation{\icb,\ccc}{\rb\rs}\right),
\end{align*}
and last
\begin{align*}
     \mmc_{\ics,\ccc}&:= \mmc\left(\meanestimate{\ics},\meanestimate{\ccc}\right) \\
     &= 
     \frac{I_\rc}{2 I_\rt (I-1)}\left(\axisvariancepopulation{\ics}{\rb}+\axisvariancepopulation{\ccc}{\rb}-\axisvariancecrosspopulation{\ics,\ccc}{\rb}\right) 
    - \frac{1}{2(J-1)}\left(\axisvariancepopulation{\ics}{\rs}+\axisvariancepopulation{\ccc}{\rs}-\axisvariancecrosspopulation{\ics,\ccc}{\rs}\right)  
    \\
    &-\frac{I_\rc}{2I_\rt (I-1)(J-1)}\left(\axisvariancepopulation{\ics}{\rb\rs}+\axisvariancepopulation{\ccc}{\rb\rs}-\axisvariancecrosspopulation{\ics,\ccc}{\rb\rs}\right).
\end{align*}
\else
Results for other pairwise comparisons are similar and omitted.
\fi
\end{lemma}

\begin{proof}[Proof of Lemma \ref{lemma_appendix3}]
\ifarxiv
We show the three following equalities:
\begin{align}\label{3een}
    \mmc_{\ctt,\icb}^{\rb}&:= \mmc\left(\oee_{\ctt}^{\rb},\oee_{\icb}^{\rb}\right) =
    \frac{I_\rc}{2 I_\rt (I-1)}\left(\axisvariancepopulation{\ctt}{\rb}+\axisvariancepopulation{\icb}{\rb}-\axisvariancecrosspopulation{\ctt,\icb}{\rb} \right), 
\end{align}
\begin{align}\label{3twee}
    \mmc_{\ctt,\icb}^{\rs}&:= \mmc\left(\oee_{\ctt}^{\rs},\oee_{\icb}^{\rs}\right)
    =\frac{1}{2(J-1)}\left(\axisvariancepopulation{\ctt}{\rs}+\axisvariancepopulation{\icb}{\rs}-\axisvariancecrosspopulation{\ctt,\icb}{\rs} \right), 
\end{align}
and
\begin{align}\label{3drie}
    \mmc_{\ctt,\icb}^{\rb\rs}&:= \mmc\left(\ooe_{\ctt}^{\rb\rs},\ooe_{\icb}^{\rb\rs}\right)
    =\frac{I_\rc}{2I_\rt (I-1)(J-1)}\left(\axisvariancepopulation{\ctt}{\rb\rs}+\axisvariancepopulation{\icb}{\rb\rs}-\axisvariancecrosspopulation{\ctt,\icb}{\rb\rs} \right). 
\end{align}
In combination with the fact that
\[  
    \mmc\left(\meanestimate{\ctt},\meanestimate{\icb}\right)=
    \mmc\left(\oee_{\ctt}^{\rb},\oee_{\icb}^{\rb}\right)
    -
    \mmc\left(\oee_{\ctt}^{\rs},\oee_{\icb}^{\rs}\right)
    -
    \mmc\left(\ooe_{\ctt}^{\rb\rs},\ooe_{\icb}^{\rb\rs}\right),
\]
this proves the first result.

First (\ref{3een}):
\begin{align*}
    \mmc_{\ctt,\icb}^{\rb}
    &=\mme\left[\left(\frac{1}{I_\rt}\sum_{i=1}^I  D^\rb_i\dyi(\ctt)\right)\left(\frac{1}{I_\rt}\sum_{i=1}^I D^\rb_{i}\dyi(\icb)\right)\right]
    =\mme\left[\frac{1}{I_\rt^2}\sum_{i=1}^I\sum_{i'=1}^I  D^\rb_iD^\rb_{i'}\dyi(\ctt)\dyii(\icb)\right]\\
    &=\frac{1}{I_\rt^2}\sum_{i=1}^I\sum_{i'=1}^I \mme\left[D^\rb_iD^\rb_{i'}\right]\dyi(\ctt)\dyii(\icb)
    \\ 
    &
    =-\frac{1}{I_\rt^2}\sum_{i=1}^I\sum_{i'=1}^I \frac{I_\rc I_\rt}{I^2 (I-1)} \dyi(\ctt) \dyii(\icb) 
     +\frac{1}{I_\rt^2}\sum_{i=1}^I \left(\frac{I_\rc I_\rt}{I^2(I-1)}+\frac{I_\rc I_\rt}{I^2}\right)\dyi(\ctt)\dyi(\icb).
\end{align*}
Because $\sum_i\dyi(\ctt)=0$ the first term is equal to zero. 
Thus,
\begin{align*}
    \mmc_{\ctt,\icb}^{\rb} &= \frac{I_\rc}{I_\rt I}\left(\frac{1}{I-1}\sum_{i=1}^I \dyi(\ctt) \dyi(\icb)\right).
\end{align*}
Because
\begin{align*}
    \axisvariancepopulation{\ctt,\icb}{\rb} &= \frac{1}{I}\sum_{i=1}^I \left( \dyi(\ctt)-\dyi(\icb)\right)^2 \\
    &=\frac{1}{I}\sum_{i=1}^I \left(\dyi(\ctt)\right)^2
    +\frac{1}{I}\sum_{i=1}^I \left(\dyi(\icb)\right)^2
    -\frac{2}{I}\sum_{i=1}^I \left(\dyi(\ctt)\dyi(\icb)\right)^2\\
    &=\axisvariancepopulation{\ctt}{\rb}+\axisvariancepopulation{\icb}{\rb}-\frac{2I_\rt (I-1)}{I_\rc}\mmc_{\ctt,\icb}^{\rb},
\end{align*}
we have
\[ 
    \mmc_{\ctt,\icb}^{\rb}=\frac{I_\rc}{2 I_\rt (I-1)}\left(\axisvariancepopulation{\ctt}{\rb}+\axisvariancepopulation{\icb}{\rb}-\axisvariancecrosspopulation{\ctt,\icb}{\rb} \right).
\]
This completes the proof of (\ref{3een}). Similarly, to prove (\ref{3twee}), we have
\begin{align*}
    \mmc_{\ctt,\icb}^{\rs}
    &=\mme\left[\left(\frac{1}{J_\rt}\sum_{j=1}^JD^\rs_j\dyj(\ctt)\right)\left(\frac{1}{J_\rc}\sum_{j=1}^JD^\rs_j\dyj(\icb)\right)\right]
    =\mme\left[\frac{1}{J_\rt J_\rc}\sum_{j=1}^J\sum_{j'=1}^J  D^\rs_iD^\rs_{j'}\dyj(\ctt)\dyjj(\icb)\right]\\
    &=\frac{1}{J_\rt J_\rc}\sum_{j=1}^J\sum_{j'=1}^J \mme\left[ D^\rs_jD^\rs_{j'}\right]\dyj(\ctt) \dyjj(\icb)\\
    &=-\frac{1}{J_\rt J_\rc}\sum_{j=1}^J\sum_{j'=1}^J\frac{J_\rc J_\rt}{J^2 (J-1)} \dyj(\ctt)\dyjj(\icb) 
    +\frac{1}{J_\rt J_\rc}\sum_{j=1}^J \left(\frac{J_\rc J_\rt}{J^2(J-1)}+\frac{J_\rc J_\rt}{J^2}\right)\dyj(\ctt)\dyj(\icb) \\
    &=\frac{1}{J_\rt J_\rc}\sum_{j=1}^J \left(\frac{J_\rc J_\rt}{J^2(J-1)}+\frac{J_\rc J_\rt}{J^2}\right)\dyj(\ctt)\dyj(\icb) 
    =\frac{1}{ J}\left(\frac{1}{J-1}\sum_{j=1}^J \dyj(\ctt) \dyj(\icb)\right).
\end{align*}
Because
\begin{align*}              
    \axisvariancecrosspopulation{\ctt,\icb}{\rs}&=
    \frac{1}{J}\sum_{j=1}^J \left( \dyj(\ctt)- \dyj(\icb)\right)^2 \\
    &=\frac{1}{J}\sum_{j=1}^J \left(\dyj(\ctt)\right)^2 
    +\frac{1}{J}\sum_{j=1}^J \left(\dyj(\icb)\right)^2
    -\frac{2}{J}\sum_{j=1}^J \left(\dyj(\ctt)\dyj(\icb)\right)^2 \\
    &=\axisvariancepopulation{\ctt}{\rs}+\axisvariancepopulation{\icb}{\rs}+2(J-1)\mmc_{\ctt,\icb}^{\rs},
\end{align*}
it follows that
\[ 
    \mmc_{\ctt,\icb}^{\rs}=\frac{1}{2(J-1)}\left(\axisvariancepopulation{\ctt}{\rs}+\axisvariancepopulation{\icb}{\rs}-\axisvariancecrosspopulation{\ctt,\icb}{\rs} \right).
\]
This finishes the proof of (\ref{3twee}). Third, consider (\ref{3drie}):
\[ 
    \mmc_{\ctt,\icb}^{\rb\rs}=\mme\left[\frac{1}{I_\rt^2 J_\rt  J_\rc}\sum_{i, i'=1}^I\sum_{j, j'=1}^J  D^\rb_i D^\rs_j D^\rb_{i'} D^\rs_{j'}\dyij(\ctt)\dyiijj(\icb) \right].
\]
By independence of $D^\rb_i$ and $D^\rs_j$, this is equal to
\[ 
    \mmc_{\ctt,\icb}^{\rb\rs}= \frac{1}{I^2_\rt J_\rc J_\rt}\sum_{i, i'=1}^I \sum_{j, j'=1}^J \mme\left[ D^\rb_i D^\rb_{i'}\right]\mme\left[ D^\rs_i D^\rs_{j'}\right]\dyij(\ctt) \dyiijj(\icb).
\]

Using the covariances and variances for $D^\rb_i$ and $D^\rb_{i'}$ and for 
$D^\rs_j$ and $D^\rs_{j'}$ this is equal to

\begin{align*}
    \mmc_{\ctt,\icb}^{\rb\rs}&=\frac{1}{I^2_\rt J_\rc   J_\rt} \sum_{i=1}^I \sum_{i'=1}^I \sum_{j=1}^J \sum_{j'=1}^J \frac{I_\rc I_\rt}{I^2 (I-1)} \frac{J_\rc J_\rt}{J^2 (J-1)} \dyij(\ctt) \dyiijj(\icb) \\
    &-\frac{1}{I^2_\rt  J_\rc  J_\rt}\sum_{i=1}^I \sum_{j=1}^J \sum_{j'=1}^J \frac{I_\rc I_\rt}{I^2(I-1)} \frac{J_\rc J_\rt}{J(J-1)} \dyij(\ctt) \dyijj(\icb) \\
    &-\frac{1}{I^2_\rt  J_\rc  J_\rt}\sum_{i=1}^I \sum_{i'=1}^I\sum_{j=1}^J \frac{I_\rc I_\rt}{I(I-1)} \frac{J_\rc J_\rt}{J^2(J-1)} \dyij(\ctt) \dyiij(\icb) \\
    &+\frac{1}{I^2_\rt  J_\rc  J_\rt}\sum_{i=1}^I \sum_{j=1}^J \frac{I_\rc I_\rt}{I(I-1)} \frac{J_\rc J_\rt}{J(J-1)} \dyij(\ctt) \dyij(\icb).
\end{align*}

Because $\sum_i\sum_j \dyij(\gamma)=0$, the first three terms are equal to zero, and so 
\begin{align*}
    \mmc_{\ctt,\icb}^{\rb\rs}&=\frac{1}{I^2_\rt  J_\rc  J_\rt}\sum_{i=1}^I \sum_{j=1}^J \frac{I_\rc I_\rt}{I(I-1)} \frac{J_\rc J_\rt}{J(J-1)}  \dyij(\ctt)\dyij(\icb) \\
    &=\frac{I_\rc}{I_\rt  IJ}\left(\frac{1}{(I-1)(J-1)}\sum_{i=1}^I \sum_{j=1}^J 
    \dyij(\ctt)\dyij(\icb)\right).
\end{align*}
Because
\begin{align*}
    \axisvariancecrosspopulation{\ctt,\icb}{\rb\rs}
    &=\frac{1}{IJ}\sum_{i=1}^I \sum_{j=1}^J \left( \dyij(\ctt)- \dyij(\icb)\right)^2 \\
    &=\frac{1}{IJ}\sum_{i=1}^I \sum_{j=1}^J \left( \dyij(\ctt)\right)^2 + \frac{1}{IJ}\sum_{i=1}^I \sum_{j=1}^J \left( \dyij(\icb)\right)^2 \\
    &-\frac{2}{IJ}\sum_{i=1}^I \sum_{j=1}^J  \dyij(\ctt) \dyij(\icb) \\
    &= \axisvariancepopulation{\ctt}{\rb\rs}+\axisvariancepopulation{\icb}{\rb\rs}-\frac{2I_\rt (I-1)(J-1)}{I_\rc}\mmc_{\ctt,\icb}^{\rb\rs},
\end{align*}

it follows that
\[ 
    \mmc_{\ctt,\icb}^{\rb\rs}=
    \frac{I_\rc}{2I_\rt (I-1)(J-1)}\left(\axisvariancepopulation{\ctt}{\rb\rs}+\axisvariancepopulation{\icb}{\rb\rs}-\axisvariancecrosspopulation{\ctt,\icb}{\rb\rs} \right).
\]
This finishes the proof of (\ref{3drie}). 
The proofs for the other pairwise comparisons follow the same pattern and are omitted.
\else
Note: the proof of this result is simple, but tedious. For a detailed, step-by-step derivation, see \citet[Lemma A.8]{masoero2024multiple}.
\[  
    \mmc\left(\meanestimate{\ctt},\meanestimate{\icb}\right)=
    \mmc\left(\oee_{\ctt}^{\rb},\oee_{\icb}^{\rb}\right)
    -
    \mmc\left(\oee_{\ctt}^{\rs},\oee_{\icb}^{\rs}\right)
    -
    \mmc\left(\ooe_{\ctt}^{\rb\rs},\ooe_{\icb}^{\rb\rs}\right).
\]
For $\mmc_{\ctt,\icb}^{\rb}:= \mmc\left(\oee_{\ctt}^{\rb},\oee_{\icb}^{\rb}\right)$,
\begin{align*}
    \mmc_{\ctt,\icb}^{\rb}
    &=\mme\left[\left(\frac{1}{I_\rt}\sum_{i=1}^I  D^\rb_i\dyi(\ctt)\right)\left(\frac{1}{I_\rt}\sum_{i=1}^I D^\rb_{i}\dyi(\icb)\right)\right]
    \\ 
    &
    =-\frac{1}{I_\rt^2}\sum_{i=1}^I\sum_{i'=1}^I \frac{I_\rc I_\rt}{I^2 (I-1)} \dyi(\ctt) \dyii(\icb) 
     +\frac{1}{I_\rt^2}\sum_{i=1}^I \left(\frac{I_\rc I_\rt}{I^2(I-1)}+\frac{I_\rc I_\rt}{I^2}\right)\dyi(\ctt)\dyi(\icb)\\
     &= \frac{I_\rc}{I_\rt I}\left(\frac{1}{I-1}\sum_{i=1}^I \dyi(\ctt) \dyi(\icb)\right)
     = \frac{I_\rc}{2 I_\rt I}\left(\axisvariancepopulation{\ctt}{\rb}+\axisvariancepopulation{\icb}{\rb}-\axisvariancecrosspopulation{\ctt,\icb}{\rb} \right),
\end{align*}
where we used the fact that 
\(
    \axisvariancepopulation{\ctt,\icb}{\rb} 
    =\axisvariancepopulation{\ctt}{\rb}+\axisvariancepopulation{\icb}{\rb}-\frac{2I_\rt (I-1)}{I_\rc}\mmc_{\ctt,\icb}^{\rb}.
\)
A similar proof yields
\begin{align*}
    \mmc_{\ctt,\icb}^{\rs}
    :=\mmc\left(\oee_{\ctt}^{\rs},\oee_{\icb}^{\rs}\right)
    =\frac{1}{2(J-1)}\left(\axisvariancepopulation{\ctt}{\rs}+\axisvariancepopulation{\icb}{\rs}-\axisvariancecrosspopulation{\ctt,\icb}{\rs} \right).
\end{align*}
Last, leveraging the independence of $D^\rb_i$ and $D^\rs_j$ and $\sum_i\sum_j \dyij(\gamma)=0$
\[ 
    \mmc_{\ctt,\icb}^{\rb\rs}
    =
    \mmc\left(\ooe_{\ctt}^{\rb\rs},\ooe_{\icb}^{\rb\rs}\right)
    =
    \frac{I_\rc}{2I_\rt (I-1)(J-1)}\left(\axisvariancepopulation{\ctt}{\rb\rs}+\axisvariancepopulation{\icb}{\rb\rs}-\axisvariancecrosspopulation{\ctt,\icb}{\rb\rs} \right).
\]
Proofs for the other pairwise comparisons follow the same pattern and are omitted.
\fi
\end{proof}
\begin{theorem}[\Cref{thm:covariance_characterization} in the main paper] \label{thm:covariance_characterization_app}
For a SMRD where \Cref{sutva_local} holds, 
\begin{equation}
    \mmc\left[ \meanestimate{\type}, \meanestimate{\type'} \right]
    = 
        \nu^{\rb}_{\type,\type'} \zeta^{\rb}_{\type,\type'} + 
        \nu^{\rs}_{\type,\type'} \zeta^{\rs}_{\type,\type'} + 
        2 \nu^{\rb}_{\type,\type'} \nu^{\rs}_{\type,\type'} \zeta^{\rb\rs}_{\type,\type'},
   \label{eq:app_covariance_characterization}
\end{equation}
where for $x \in \{\rb,\rs,\rb\rs\}$,
\(
    \zeta^{x}_{\type,\type'} :=  \axisvariancepopulation{\type}{x}+ \axisvariancepopulation{\type'}{x}- \axisvariancecrosspopulation{\type,\type'}{x}
\)
and
\[
        \nu^{\rb}_{\type,\type'} := 
        \begin{cases}
            \alpha^\rb_{\type}/2\mbox{ if } \type=\type',\text{ or } (\type,\type') \in \{(\ccc,\ics), (\ics,\ccc), (\icb,\ctt), (\ctt,\icb)\} \\
            - 1/(2(I-1)) \mbox{ otherwise,}
            \end{cases}
\]
and
\[
        \nu^{\rs}_{\type,\type'}:=
        \begin{cases}
             \alpha^\rs_{\type}/2 \mbox{ if } \type = \type' \text{ or } (\type,\type') \in \{(\ccc,\icb),(\icb,\ccc), (\ics,\ctt), (\ctt,\ics)\} \\
            - 1/(2(J-1)) \mbox{ otherwise.}
        \end{cases}
\]
\end{theorem}
\begin{proof}[Proof of \Cref{thm:covariance_characterization}] 
\Cref{lemma_appendix2} (for $\type = \type'$) and \Cref{lemma_appendix3} (for $\type \neq \type'$) prove this result. 
\ifarxiv
We spell these cases out and verify that the expressions derived in \cref{lemma_appendix2,lemma_appendix3} match with the compact representation provided in \cref{eq:app_covariance_characterization}.
\begin{itemize}
    \item if $\type = \type'$, using \Cref{lemma_appendix2} and the definitions of $\alpha_\type^\rb$ and $\alpha_\type^\rs$, 
    \begin{align*}
        \mmc\left[ \meanestimate{\type}, \meanestimate{\type} \right] 
        =
        \mmv\left(\meanestimate{\type}\right)
        &=
        \frac{I - I_\type}{I_\type} \frac{1}{I-1} \axisvariancepopulation{\type}{\rb} 
        + \frac{J - J_\type}{J_\type} \frac{1}{J-1} \axisvariancepopulation{\type}{\rs} 
        + \frac{I - I_\type}{I_\type }\frac{1}{I-1} \frac{J - J_\type}{J_\type } \frac{1}{J-1} \axisvariancepopulation{\type}{\rb\rs} 
        \\
        &=
        \alpha_\type^\rb \axisvariancepopulation{\type}{\rb} 
        + \alpha_\type^\rs \axisvariancepopulation{\type}{\rs} 
        + \alpha_\type^\rb  \alpha_\type^\rs \axisvariancepopulation{\type}{\rb\rs}
        .
    \end{align*}
    We verify that \cref{eq:app_covariance_characterization} is correct by spelling out $\nu_{\type,\type}^{\rb}$, $\nu_{\type,\type}^{\rs}$, $\zeta_{\type, \type}^{\rb}$, $\zeta_{\type,\type}^{\rs}$ --- and check that we get the same result as above:
    \begin{align*}
    \mmc\left[ \meanestimate{\type}, \meanestimate{\type'} \right]
    &= 
        \nu^{\rb}_{\type,\type'} \zeta^{\rb}_{\type,\type'} + 
        \nu^{\rs}_{\type,\type'} \zeta^{\rs}_{\type,\type'} + 
        2 \nu^{\rb}_{\type,\type'} \nu^{\rs}_{\type,\type'} \zeta^{\rb\rs}_{\type,\type'}
    \\
    &=
        \frac{\alpha_\type^{\rb}}{2}  (2 \axisvariancepopulation{\type}{\rb})+ 
        \frac{\alpha_\type^{\rs}}{2}  (2 \axisvariancepopulation{\type}{\rs})+ 
         2 \frac{\alpha_\type^{\rb}}{2} \frac{\alpha_\type^{\rs}}{2} (2 \axisvariancepopulation{\type}{\rb\rs}) \\
    &=  
    {\alpha_\type^{\rb}} \axisvariancepopulation{\type}{\rb}+ 
    {\alpha_\type^{\rs}}  \axisvariancepopulation{\type}{\rs}+ 
    \alpha_\type^{\rb} {\alpha_\type^{\rs}}  \axisvariancepopulation{\type}{\rb\rs}.
    \end{align*}
    \item if $\type \neq \type'$, use \Cref{lemma_appendix3}, 
    and consider any of the treatment pairs (e.g., $(\icb, \ics)$):
    \begin{align*}
        \mmc\left[ \meanestimate{\icb}, \meanestimate{\ics} \right] 
         &= -\frac{1}{2 (I-1)}\left(\axisvariancepopulation{\icb}{\rb}+\axisvariancepopulation{\ics}{\rb}-\axisvariancecrosspopulation{\icb,\ics}{\rb}\right) 
        - \frac{1}{2(J-1)}\left(\axisvariancepopulation{\icb}{\rs}+\axisvariancepopulation{\ics}{\rs}-\axisvariancecrosspopulation{\icb,\ics}{\rs}\right)  
        \\
        &+\frac{1}{2(I-1)(J-1)}\left(\axisvariancepopulation{\icb}{\rb\rs}+\axisvariancepopulation{\ics}{\rb\rs}-\axisvariancecrosspopulation{\icb,\ics}{\rb\rs}\right)
        \\
        &=
        \nu_{\type,\type'}^{\rs} \zeta_{\type,\type'}^{\rb} 
        + \nu_{\type,\type'}^{\rs} \zeta_{\type,\type'}^{\rs}
        + 2 \nu_{\type}^{\rb} \nu_{\type'}^{\rs} \zeta_{\type,\type'}^{\rb\rs},
    \end{align*}
    which matches the compact representation.
\end{itemize}
\else 
Simple re-writing of expressions in \cref{lemma_appendix2,lemma_appendix3} yield the compact representation provided in \cref{eq:app_covariance_characterization}.
See \citet[Theorem A.9]{masoero2024multiple} for details.
\fi
\end{proof}

\ifarxiv
To present our results on estimates of the variance, we first review a classic result for variances of a simple two-arms experiment, when a single population is present.

\begin{lemma} \label{lemma:pure_experiments}
    Let $y_i$, $i=1,\ldots,I$ be a population of $I$ units with (non-random) potential outcomes $y_i(\ccc)$ (if unit $i$ is in the control group) and $y_i(\ctt)$ (if unit $i$ is in the treatment group).
    Let the treatment group be identified by the index set $\II_{\ctt} = \{i_1,\ldots,i_{I_\ctt}\} \subset \{1,\ldots,I\}$, of size $|\II_\ctt| = I_\ctt$, with $2\le I_\ctt \le I-2$. Let $\II_\ccc = \{1,\ldots,I\} \setminus \II_\ctt$ be the index set of the $I_\ccc := I-I_\ctt$ units assigned to the control group. For $\type \in \{\ccc,\ctt\}$, let
    \[
        \bar{y}_{\type} = \frac{1}{I} \sum_{i=1}^I y_i(\type)        \quad\text{and}\quad
        s_{\type} = \frac{1}{I} \sum_{i=1}^I \left(y_i(\type) - \bar{y}_{\type} \right)^2. 
    \]
    be the mean and variance of the potential outcomes in the population. Define the corresponding plug-in estimates for these  to be
    \[
        \widehat{\bar{Y}}_{\type} = \frac{1}{I_{\type}} \sum_{i \in \II_{\type}} y_i(\type),
        \quad\text{and}\quad
        \hat{S}_{\type} = \frac{1}{I_{\type}} \sum_{i \in \II_{\type}} \left(y_i(\type) - \widehat{\bar{Y}}_{\type} \right)^2. 
    \]
    Then it holds
    \begin{align}
        \mme \left[ \widehat{\bar{Y}}_{\type} \right] = \bar{y}_{\type}, 
    \quad\text{and}\quad
        \mmv\left( \widehat{\bar{Y}}_{\type} \right) = \frac{I-I_{\type}}{I_{\type}} \frac{1}{I-1} s_{\type}, \nonumber
    \quad\text{and}\quad
        \mme\left( \hat{S}_{\type} \right) = \frac{I_{\type}-1}{I_{\type}} \frac{I}{I-1} s_{\type}. \label{eq:expectations_single_axis}
    \end{align}
    I.e., $\widehat{\bar{Y}}_{\type}$ is an unbiased estimate of the population mean $\bar{y}_{\type}$. We can obtain an unbiased estimate of the variance of this estimator by reweighing $\hat{S}_{\type}$:
    \begin{align}
        \widehat{\mmv}\left( \widehat{\bar{Y}}_{\type} \right) :=  \frac{I-I_{\type}}{I_{\type}-1}\frac{1}{I} \hat{S}_{\type} 
        \quad \text{satisfies} \quad 
        \mme\left[\widehat{\mmv}\left( \widehat{\bar{Y}}_{\type} \right)\right] = \mmv\left( \widehat{\bar{Y}}_{\type} \right). 
    \end{align}
    \begin{proof}[Proof of 
    \Cref{lemma:pure_experiments}]
        See \eg\ \citet[Theorems 2.1, 2.2, 2.4]{cochran1977sampling1}.
    \end{proof}
\end{lemma}
\else
\fi
\subsection{Variance estimation in SMRDs: proofs}\label{sec:proofs_variance_estimation}

Here we provide lower and upper bounds on the variance of causal effects in SMRDs (\cref{thm:sample_variance_avg_effs,thm:sample_variance_spillover}). 
For a SMRD in which local interference holds, given an assignment matrix $\bww$, denote by $\II_{\type} \subseteq \{1,\ldots,I\}$ the subset of {\buyer}s' indices for which there exists at least one {\seller} $j$ such that unit $(i,j)$ has type $\type$: $\II_{\type}:=\{i \in \{1,\ldots,I\}\;:\; \type_{ij} = \type\;\text{for some }j\}$. 
Symmetrically, let $\JJ_{\type} \subseteq \{1,\ldots,J\}$ the subset of {\seller}s' indices for which there exists at least one {\buyer} $i$ such that unit $(i,j)$ has type $\type$.
Consistent with \cref{app-sec-def}, $I_{\type} = |\II_{\type}|$ and $J_{\type} = |\JJ_{\type}|$ denote the sizes of these index sets. 
Exactly $I_{\type}J_{\type}$ units are assigned type $\type$. 
Define now, the (nonrandom) row and column \emph{partial} mean of the matrix of potential outcomes: for a given row $i$, the average over a fixed index set of columns $\JJ_{\type} \subseteq \mmj$ --- symmetrically, for a given column $j$, the average over a fixed set of rows $\II_{\type} \subseteq \mmi$:
\[
    \axismeanfixedindex{i}{\JJ_{\type}}{\rb}(\type) = \frac{1}{J_{\type}}\sum_{j \in \JJ_{\type}} y_{i,j}(\type)
    \quad\text{and}\quad 
    \axismeanfixedindex{\II_{\type}}{j}{\rs}(\type) = \frac{1}{I_{\type}}\sum_{i \in \II_{\type}} y_{i,j}(\type).
\]

For a given SMRD, with (random) assignment matrix $\bw$ and characterized by (random) index sets $\II_{\type}, \JJ_{\type}$ for each $\type \in \types$, $i\in \II_{\type}, j \in \JJ_{\type}$, define the random average over the columns selected by the set $\JJ_{\type}$ (or the rows selected by $\II_{\type}$):
\[
    \axismeanestimate{i}{\rb}(\type):=\frac{1}{J_{\type}}\sum_{j \in \JJ_{\type}} y_{i,j}(\type), 
    \quad \text{and} \quad \axismeanestimate{j}{\rs}(\type):=\frac{1}{I_{\type}}\sum_{i \in \II_{\type}} y_{i,j}(\type).
\]
\ifarxiv
\paragraph{Remark} The quantities $\axismeanestimate{i}{\rb}(\type)$ and $\axismeanfixedindex{i}{\JJ_{\type}}{\rb}(\type)$ are \emph{both} averages over $J_{\type}$ elements of the $i$-th row of the matrix of potential outcomes $Y(\type)$. However, $\axismeanestimate{i}{\rb}(\type)$ is random: it is an estimator resulting from the random selection of $J_{\type}$ distinct columns, whereas $\axismeanfixedindex{i}{\JJ_{\type}}{\rb}(\type)$ is a fixed population value, obtained by averaging over the fixed $J_{\type}$ distinct indices $\{j_1,\ldots,j_{J_{\type}} \} = \JJ_{\type}$.
\else 
\fi
Define the sample ``plug-in'' counterparts of the population quantities $\axisvariancepopulationscaled{\type}{\rb}, \axisvariancepopulationscaled{\type}{\rs}$ and $\axisvariancepopulationscaled{\type}{\rb\rs}$:
\begin{equation*}
    \axisvarianceestimatescaled{\type}{\rb}= \frac{1}{I_{\type}}\sum_{i\in\II_{\type}} \left(\axismeanestimate{i}{\rb}(\type) - \meanestimate{\type} \right)^2,
    \quad 
    \axisvarianceestimatescaled{\type}{\rs} = \frac{1}{J_{\type}}\sum_{j\in \JJ_{\type}} \left(\axismeanestimate{j}{\rs}(\type) - \meanestimate{\type} \right)^2,
\end{equation*}
and
\begin{equation*}
    \axisvarianceestimatescaled{\type}{\rb\rs}
    :=
    \frac{1}{I_{\type}J_{\type}}\sum_{i \in \II_{\type}} 
    \sum_{ j \in \JJ_{\type}} 
    \left(y_{i,j}(\type) - \axismeanestimate{i}{\rb}(\type) - \axismeanestimate{j}{\rs}(\type) + \meanestimate{\type} \right)^2.
\end{equation*}
$\axisvarianceestimatescaled{\type}{\rb}, \axisvarianceestimatescaled{\type}{\rs}, \axisvarianceestimatescaled{\type}{\rb\rs}$ are stochastic and depend on the (random) assignment $\bw$ through the index sets $\II_{\type}, \JJ_{\type}$. Last, define the variances of partial averages over subsets $\JJ_\type$ and $\II_\type$
    \begin{align}
        \axisbias{\type}{\rb}:=
        \frac{
            \sum_{\JJ_{\type}} \sum_i \left\{ \axismeanfixedindex{i}{\JJ_{\type}}{\rb}(\type) - \axismeanpopulation{i}{\rb}(\type) \right\}^2
            }{
            I \binom{J}{J_{\type}}
            }, 
        \quad
        \axisbias{\type}{\rs}:=
        \frac{
            \sum_{\II_{\type}} \sum_j \left\{ \axismeanfixedindex{\II_{\type}}{j}{\rs}(\type) - \axismeanpopulation{j}{\rs}(\type)\right\}^2
            }{
            {J \binom{I}{I_{\type}}}}. 
        \label{eq:bsMV}
    \end{align}
Notice that the sums in \cref{eq:bsMV} are over all subsets of $J_\type$ disjoint indices in $[J]$ ($\axisbias{\type}{\rb}$) or $I_\type$ disjoint indices in $[I]$ ($\axisbias{\type}{\rs}$).
In \cref{lemma:rows,lemma:columns,lemma:cross}, we analyze the expectation of each term $\axisvarianceestimatescaled{\type}{\rb},\axisvarianceestimatescaled{\type}{\rs}, \axisvarianceestimatescaled{\type}{\rb\rs}$ separately. First, we state a useful result in \cref{lemma:chi_variance}.

\begin{lemma} \label{lemma:chi_variance}
    Let 
    \begin{align}
        \chi^{2,\rb}_{\type}:=\mme\left[\sum_{i \in \II_{\type}} \left(\axismeanestimate{i}{\rb}(\type) - \meanpopulation{\type} \right)^2\right]. \label{eq:chi_2M}
    \end{align}
    It holds
        $\chi^{2,\rb}_{\type}= I_{\type} \left( \axisvariancepopulationscaled{\type}{\rb} +  \axisbias{\type}{\rb} \right)$,
    where $\axisbias{\type}{\rb}$ was defined in \cref{eq:bsMV}.
\end{lemma}

\begin{proof}[Proof of \cref{lemma:chi_variance}]
    Consider $\chi^{2,\rb}_{\type}$ as defined in \cref{eq:chi_2M}, where the expectation is taken with respect to the random assignment matrices $\bw$. 
    Under (simple) double randomization, every assignment matrix $\bw$ supported on $\mmw$ is equivalently characterized by the index sets $\II_{\type}, \JJ_{\type}$, for $\type \in \types$. 
    \ifarxiv 
    That is, to each $\bw$, there is one and only one collection of index sets $\II_{\type}, \JJ_{\type}$ for $\type \in \types$, and viceversa. Notice that there are exactly $\binom{I}{I_{\type}}\binom{J}{J_{\type}}$ such assignments. Each assignment can be determined by forming index set $\II_{\type}$ by selecting at random $I_{\type}$ rows and index set $\JJ_{\type}$ by selecting at random $J_{\type}$ columns. Every row $i \in \{1,\ldots, I\}$ appears in exactly $\binom{I-1}{I_{\type}-1}$ index sets $\II_{\type}$. 
    \else 
    \fi 
    Hence,
\begin{align} \label{eq:var_dMRD_M0}
    \chi^{2,\rb}_{\type} &=  \frac{\binom{I-1}{I_{\type}-1}}{\binom{I}{I_{\type}}\binom{J}{J_{\type}}}  \sum_{i=1}^I  \sum_{\JJ_{\type}} \left\{    \left( \axismeanfixedindex{i}{\JJ_{\type}}{\rb}(\type) - \meanpopulation{\type}\right)^2\right\} 
    = \frac{I_{\type}}{I\binom{J}{J_{\type}}}  \sum_{i=1}^I  \sum_{\JJ_{\type}} \left\{    \left( \axismeanfixedindex{i}{\JJ_{\type}}{\rb}(\type) - \meanpopulation{\type}\right)^2\right\},
\end{align}
where the second sum is over all $\binom{J}{J_{\type}}$ subsets $\JJ_{\type}$ of $J_{\type}$ distinct indices in $\{1,\ldots,J\}$.

We further decompose $\chi^{2,\rb}_{\type}$: fix a row $i$ and disjoint indices $\JJ_{\type} = \{j_1,\ldots,j_{J_{\type}}\} \subseteq [J]$:
\begin{align*}
  \{ \axismeanfixedindex{i}{\JJ_{\type}}{\rb}(\type) - \meanpopulation{\type} \}^2 &= \left\{ \axismeanfixedindex{i}{\JJ_{\type}}{\rb}(\type) - \axismeanpopulation{i}{\rb}(\type)  + \axismeanpopulation{i}{\rb}(\type) - \meanpopulation{\type}  \right\}^2 \\
    &= \left\{\axismeanfixedindex{i}{\JJ_{\type}}{\rb}(\type) - \axismeanpopulation{i}{\rb}(\type)\right\}^2 + \left\{\axismeanpopulation{i}{\rb}(\type) - \meanpopulation{\type}\right\}^2 
    +2 \left\{\axismeanfixedindex{i}{\JJ_{\type}}{\rb}(\type) - \axismeanpopulation{i}{\rb}(\type)\right\} \left\{\axismeanpopulation{i}{\rb}(\type) - \meanpopulation{\type} \right\}.
\end{align*}
Summing over all choices  $\JJ_{\type}$ of $J_{\type}$ disjoint indices in the set $\{1,\ldots,J\}$,
\begin{align*}
    \sum_{\JJ_{\type}} \left\{ \axismeanfixedindex{i}{\JJ_{\type}}{\rb}(\type) - \meanpopulation{\type} \right\}^2 &= \sum_{\JJ_{\type}}  \left\{ \axismeanfixedindex{i}{\JJ_{\type}}{\rb}(\type) - \axismeanpopulation{i}{\rb}\right\}^2 
    + \binom{J}{J_{\type}}  \left\{ \axismeanpopulation{i}{\rb}(\type) - \meanpopulation{\type} \right\}^2 ,
\end{align*}
using
    $\sum_{\JJ_{\type}}  \left\{ \axismeanfixedindex{i}{\JJ_{\type}}{\rb}(\type) - \axismeanpopulation{i}{\rb}(\type) \right\} = 0,
    \;\text{since} \quad \axismeanpopulation{i}{\rb}(\type) = \frac{\sum_{\JJ_{\type}}\axismeanfixedindex{i}{\JJ_{\type}}{\rb}}{\binom{J}{J_{\type}}}$.
Summing over {\buyers}:
\begin{align*}
    \sum_{i=1}^I \sum_{\JJ_{\type}} \left\{ \axismeanfixedindex{i}{\JJ_{\type}}{\rb}(\type) - \meanpopulation{\type} \right\}^2 &= \sum_{\JJ_{\type}} \sum_{i=1}^I \left\{ \axismeanfixedindex{i}{\JJ_{\type}}{\rb}(\type) - \axismeanpopulation{i}{\rb}(\type)\right\}^2 
    + \binom{J}{J_{\type}} \sum_i \left\{ \axismeanpopulation{i}{\rb}(\type) - \meanpopulation{\type} \right\}^2 \\
    &= I \binom{J}{J_{\type}} \left( \axisbias{\type}{\rb} + \axisvariancepopulationscaled{\type}{\rb} \right),
\end{align*}
where $\axisbias{\type}{\rb}:=\frac{1}{I}\binom{J}{J_{\type}}^{-1}\sum_{\JJ_{\type}} \sum_i \left\{ \axismeanfixedindex{i}{\JJ_{\type}}{\rb}(\type) - \axismeanpopulation{i}{\rb}(\type) \right\}^2$.
Hence, plugging this in \cref{eq:var_dMRD_M0},
\begin{align}
    \chi^{2,\rb}_{\type} &= 
    \sum_{i=1}^I  \sum_{\JJ_{\type}} 
    \frac{\left\{    \left( \axismeanfixedindex{i}{\JJ_{\type}}{\rb}(\type) - \meanpopulation{\type}\right)^2\right\}}{{\frac{I}{I_\type} \binom{J}{J_{\type}}}}
    =\frac{\left[ I \binom{J}{J_{\type}} \left( \axisbias{\type}{\rb} + \axisvariancepopulationscaled{\type}{\rb} \right) \right]}{\frac{I}{I_\type} \binom{J}{J_{\type}}}  
    = I_{\type} \left( \axisvariancepopulationscaled{\type}{\rb} +  \axisbias{\type}{\rb} \right). \label{eq:variance_row} 
\end{align} 
\end{proof}

\begin{lemma} \label{lemma:rows}
    It holds
    \begin{align}
        \mme\left[\axisvarianceestimatescaled{\type}{\rb}\right]= \axisvariancepopulationscaled{\type}{\rb} - \mmv\left(\meanestimate{\type}\right) + \axisbias{\type}{\rb}. \label{eq:var_dMRD_M}
    \end{align}
\end{lemma}
\begin{proof}[Proof of \cref{lemma:rows}]
\ifarxiv 
\begin{align}
    \mme\left[\axisvarianceestimatescaled{\type}{\rb}\right]&=\mme\left[\frac{1}{I_{\type}}\sum_{ i \in \II_{\type}} \left(\axismeanestimate{i}{\rb}(\type) - \meanestimate{\type} \right)^2\right] 
    = \frac{1}{I_{\type}}\mme\left[\sum_{i \in \II_{\type}} \left\{ \left[\axismeanestimate{i}{\rb}(\type) - \meanpopulation{\type}\right] - \left[\meanestimate{\type} - \meanpopulation{\type}\right]  \right\}^2\right] 
    \nonumber  \\
    &=\frac{1}{I_{\type}}
    \mme
    \left[
        \sum_{i \in \II_{\type}} \left[\axismeanestimate{i}{\rb}(\type) - \meanpopulation{\type}\right]^2 
        +
        \sum_{i \in \II_{\type}} \left[\meanestimate{\type} - \meanpopulation{\type}\right]^2
        -2 
        \sum_{i \in \II_{\type}} \left(\meanestimate{\type} - \meanpopulation{\type}\right) \left(\axismeanestimate{i}{\rb}(\type) - \meanpopulation{\type}\right)\right] 
        \nonumber.
\end{align}
\begin{align}
    \mme\left[\axisvarianceestimatescaled{\type}{\rb}\right]
    &= 
    \frac{1}{I_{\type}} 
    \left\{
        \mme\left[
            \sum_{i \in \II_{\type}} \left\{ \axismeanestimate{i}{\rb}(\type) - \meanpopulation{\type} \right\}^2
            + 
            I_{\type} \left\{\meanestimate{\type} - \meanpopulation{\type}\right\}^2 
            - 2 
            \left\{\meanestimate{\type} - \meanpopulation{\type}\right\}I_{\type}\left\{\meanestimate{\type} - \meanpopulation{\type}\right\} 
            \right] 
    \right\} \nonumber 
    \\
    &= \frac{1}{I_{\type}}\mme\left[\sum_{i \in \II_{\type}} \left\{\axismeanestimate{i}{\rb}(\type) - \meanpopulation{\type} \right\}^2\right] \nonumber 
    - \mme\left[\left\{\meanestimate{\type} - \meanpopulation{\type} \right\}^2\right] \nonumber.
\end{align}
\else 
\begin{align*}
    \mme\left[\axisvarianceestimatescaled{\type}{\rb}\right]
    &=
    \mme\left[\frac{1}{I_{\type}}\sum_{ i \in \II_{\type}} \left(\axismeanestimate{i}{\rb}(\type) - \meanestimate{\type} \right)^2\right] \\
    &= 
    \frac{1}{I_{\type}} 
    \left\{
        \mme\left[
            \sum_{i \in \II_{\type}} \left\{ \axismeanestimate{i}{\rb}(\type) - \meanpopulation{\type} \right\}^2
            + 
            I_{\type} \left\{\meanestimate{\type} - \meanpopulation{\type}\right\}^2 
            - 2
            \left\{\meanestimate{\type} 
            -  \meanpopulation{\type}\right\}I_{\type}\left\{\meanestimate{\type} - \meanpopulation{\type}\right\} 
            \right] 
    \right\} \nonumber 
    \\
    &= \frac{1}{I_{\type}}\mme\left[\sum_{i \in \II_{\type}} \left\{\axismeanestimate{i}{\rb}(\type) - \meanpopulation{\type} \right\}^2\right] \nonumber 
    - \mme\left[\left\{\meanestimate{\type} - \meanpopulation{\type} \right\}^2\right] \nonumber.
\end{align*}
\fi 
Hence, we write
\begin{align}
    \mme\left[\axisvarianceestimatescaled{\type}{\rb}\right]= \frac{1}{I_{\type}} \chi_{\type}^{2,\rb} - \mmv\left(\meanestimate{\type}\right), \label{eq:variance_step_1}
\end{align}
where we used the definition of $\chi^{2,\rb}_{\type}$ given in \cref{eq:chi_2M}. 
Now plugging in \cref{eq:variance_row} in \cref{eq:variance_step_1} it follows
\ifarxiv
\begin{align*} 
    \mme\left[\axisvarianceestimate{\type}{\rb}\right]&= \frac{1}{I_{\type}}\chi^{2,\rb}_{\type} - \mmv\left(\meanestimate{\type}\right) = \axisvariancepopulationscaled{\type}{\rb} - \mmv\left(\meanestimate{\type}\right) + \axisbias{\type}{\rb}.
\end{align*}
\else 
\(
    \mme\left[\axisvarianceestimate{\type}{\rb}\right] 
    = 
    \frac{1}{I_{\type}}\chi^{2,\rb}_{\type} - \mmv\left(\meanestimate{\type}\right) = \axisvariancepopulationscaled{\type}{\rb} - \mmv\left(\meanestimate{\type}\right) + \axisbias{\type}{\rb}.
\)
\fi 
\end{proof}

\begin{lemma} \label{lemma:columns}
    Recall $\axisbias{\type}{\rs}$ defined in \cref{eq:bsMV}. It holds
    \begin{align}
        \mme\left[\axisvarianceestimatescaled{\type}{\rs}\right]=\axisvariancepopulationscaled{\type}{\rs} - \mmv\left(\meanestimate{\type}\right) + \axisbias{\type}{\rs}, \label{eq:var_dMRD_V}
    \end{align}
\begin{proof}[Proof of \cref{lemma:columns}]
\ifarxiv
    The proof is identical to \cref{lemma:rows}, where we let $\chi^{2,\rs}_{\type}$ be the column counterpart to \cref{eq:variance_row},
        $\chi^{2,\rs}_{\type}:=\mme\left[\sum_{j \in \JJ_{\type}} \left(\axismeanestimate{j}{\rs}(\type) - \meanpopulation{\type} \right)^2\right]$,
    where, by the same argument of \cref{lemma:chi_variance}, it holds
    \begin{align}
        \chi^{2,\rs}_{\type} &= J_{\type} \axisvariancepopulationscaled{\type}{\rs} + \frac{J_{\type}}{J}\binom{I}{I_{\type}}^{-1}\sum_{\II_{\type}} \sum_{j=1}^J (\axismeanfixedindex{\II_{\type}}{j}{\rs} - \axismeanestimate{j}{\rs})^2 
        = J_{\type} \left( \axisvariancepopulationscaled{\type}{\rs} + \axisbias{\type}{\rs} \right) \label{eq:variance_col},
    \end{align}
    in which we sum over all $\binom{I}{I_{\type}}$ index sets $\II_{\type}$ of $I_{\type}$ disjoint indices in $\{1,\ldots,I\}$.
\else
    The proof is identical to \cref{lemma:rows}, and omitted.
\fi
\end{proof}
\end{lemma}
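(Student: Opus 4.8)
The plan is to prove \Cref{lemma:columns} as the exact mirror image of \Cref{lemma:rows}, obtained by interchanging the roles of rows and columns throughout: swapping $I \leftrightarrow J$, $I_\type \leftrightarrow J_\type$, the random index sets $\II_\type \leftrightarrow \JJ_\type$, and the superscripts $\rb \leftrightarrow \rs$. The entire combinatorial content has already been established for the rows in \Cref{lemma:chi_variance,lemma:rows}, and the marketplace design is symmetric in \buyers\ and \sellers, so every identity used there transfers. The task is therefore to transcribe the argument while keeping careful track of which dimension plays which role.

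First I would introduce the column counterpart of the quantity $\chi^{2,\rb}_\type$ from \Cref{lemma:chi_variance}, namely $\chi^{2,\rs}_\type := \mme[\sum_{j\in\JJ_\type}(\axismeanestimate{j}{\rs}(\type) - \ooy(\type))^2]$, with the expectation taken over the random assignment $\bw$, equivalently over the random selection of the $I_\type$ rows $\II_\type$ and the $J_\type$ columns $\JJ_\type$. Running the counting argument of \Cref{lemma:chi_variance} with rows and columns interchanged — each column $j$ belongs to exactly $\binom{J-1}{J_\type-1}$ of the $\binom{J}{J_\type}$ column selections, and one averages the row selection over the $\binom{I}{I_\type}$ sets $\II_\type$ — gives $\chi^{2,\rs}_\type = \frac{J_\type}{J}\binom{I}{I_\type}^{-1}\sum_{j=1}^J\sum_{\II_\type}(\axismeanfixedindex{\II_\type}{j}{\rs}(\type) - \ooy(\type))^2$. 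Adding and subtracting $\axismeanpopulation{j}{\rs}(\type)$ inside the square and using that the cross term vanishes (because $\axismeanpopulation{j}{\rs}(\type)$ is the average of $\axismeanfixedindex{\II_\type}{j}{\rs}(\type)$ over all $\binom{I}{I_\type}$ choices of $\II_\type$) then yields $\chi^{2,\rs}_\type = J_\type(\axisvariancepopulation{\rs} + \axisbias{\type}{\rs})$, which is precisely \Cref{eq:variance_col}.

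Second I would mirror the expansion in \Cref{lemma:rows}. Writing $\axisvarianceestimate{\rs} = \frac{1}{J_\type}\sum_{j\in\JJ_\type}(\axismeanestimate{j}{\rs}(\type) - \ooy_\type)^2$, I add and subtract $\ooy(\type)$ inside the square, expand into three terms, and collapse the cross term and the $(\ooy_\type - \ooy(\type))^2$ term using the identity $\frac{1}{J_\type}\sum_{j\in\JJ_\type}\axismeanestimate{j}{\rs}(\type) = \ooy_\type$. This identity holds because, under the SMRD, the type-$\type$ cells form the complete $I_\type\times J_\type$ block $\II_\type\times\JJ_\type$, so both sides equal $\frac{1}{I_\type J_\type}\sum_{i\in\II_\type}\sum_{j\in\JJ_\type} Y_{ij}(\type)$. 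What remains is $\mme[\axisvarianceestimate{\rs}] = \frac{1}{J_\type}\chi^{2,\rs}_\type - \mmv(\ooy_\type)$, exactly the column analogue of \Cref{eq:variance_step_1}; substituting the expression for $\chi^{2,\rs}_\type$ from the first step gives $\mme[\axisvarianceestimate{\rs}] = \axisvariancepopulation{\rs} - \mmv(\ooy_\type) + \axisbias{\type}{\rs}$, as claimed.

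There is essentially no new obstacle: the only place a slip could occur is in confirming that the counting factors swap correctly. Specifically, the probability that a given column is selected is $J_\type/J$ rather than $I_\type/I$, and the averaging over the complementary index now runs over the $\binom{I}{I_\type}$ row sets, so the bias term that appears is $\axisbias{\type}{\rs}$ as defined in \Cref{eq:bsV} and not $\axisbias{\type}{\rb}$. Once this bookkeeping is pinned down, the proof is a direct transcription of \Cref{lemma:rows}.
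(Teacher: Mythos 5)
Your proposal is correct and follows exactly the route the paper takes: the paper's proof of \Cref{lemma:columns} simply declares the argument identical to that of \Cref{lemma:rows} after introducing the column counterpart $\chi^{2,\rs}_{\type}$ and invoking the counting argument of \Cref{lemma:chi_variance} with rows and columns interchanged, which is precisely what you do (with the bookkeeping of the swapped factors $J_\type/J$ and $\binom{I}{I_\type}^{-1}$ spelled out more explicitly than the paper bothers to).
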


We now characterize 
$\axisvarianceestimatescaled{\type}{\rb\rs}$. We first state a useful decomposition for matrices. 
\begin{lemma} \label{lemma:square}
    Let $\bm{x} \in \mmr^{I \times J}$ be a matrix, and 
        $\bar{\bar{x}}:=(IJ)^{-1}\sum_{i,j} x_{i,j}$
    be the grand  mean of the matrix, where averaging is uniform across entries. Let 
        $\bar{x}_i^\rb:=J^{-1}\sum_j x_{i,j}$
        and
        $\bar{x}_j^\rs:=I^{-1}\sum_i x_{i,j}$
    be the average of the $i$-th row and of the $j$-th column respectively.
    It holds
    \begin{align*}
        \sum_{i,j}\left(x_{i,j} - \bar{\bar{x}}\right)^2 &= J \sum_{i} \left(\bar{x}_i^\rb - \bar{\bar{x}}\right)^2 + J \sum_{j} \left(\bar{x}_j^\rs - \bar{\bar{x}}\right)^2 
        + \sum_{i,j}\left(x_{i,j} - \bar{x}_i^\rb - \bar{x}_j^\rs + \bar{\bar{x}}\right)^2.
    \end{align*}
\end{lemma}

    \begin{proof}[Proof of \cref{lemma:square}]
    \ifarxiv
        \begin{align*}
            \sum_{i,j}\left(x_{i,j} - \bar{\bar{x}}\right)^2 &= \sum_{i,j}\left( x_{i,j} \pm \bar{x}_i^\rb \pm \bar{\bar{x}} \pm \bar{x}_j^\rs \pm \bar{\bar{x}} + \bar{\bar{x}}\right)^2  \\
            &= \sum_{i,j}\left\{\left( \bar{x}_i^\rb - \bar{\bar{x}}\right) + \left( \bar{x}_s^\rs - \bar{\bar{x}}\right) + \left( x_{i,j} - \bar{x}_i^\rb - \bar{x}_j^\rs + \bar{\bar{x}}\right) \right\}^2 \\
            &= \sum_{i,j} \left( \bar{x}_i^\rb - \bar{\bar{x}}\right)^2 + \sum_{i,j} \left( \bar{x}_j^\rs - \bar{\bar{x}}\right)^2 + \sum_{i,j} \left( x_{i,j} - \bar{x}_i^\rb - \bar{x}_j^\rs + \bar{\bar{x}}\right)^2,
        \end{align*}
        where we have noted that all the cross terms in the square cancel since
        \[
            \sum_{i,j} \left( \bar{x}_i^\rb - \bar{\bar{x}}\right) = 0, \quad 
            \sum_{i,j} \left( \bar{x}_j^\rs - \bar{\bar{x}}\right) = 0, \quad \sum_{i,j} \left( x_{i,j} - \bar{x}_i^\rb - \bar{x}_j^\rs + \bar{\bar{x}}\right) = 0.
        \]
        Hence,
        \begin{align*}
            \sum_{i,j}\left(x_{i,j} - \bar{\bar{x}}\right)^2 &= J \sum_{i} \left(\bar{x}_i^\rb - \bar{\bar{x}}\right)^2 + I \sum_{j} \left(\bar{x}_j^\rs - \bar{\bar{x}}\right)^2 + \sum_{i,j}\left(x_{i,j} - \bar{x}_i^\rb - \bar{x}_j^\rs + \bar{\bar{x}}\right)^2.
    \end{align*} 
    \else
    Since
        \(
            \sum_{i,j} \left( \bar{x}_i^\rb - \bar{\bar{x}}\right) =  
            \sum_{i,j} \left( \bar{x}_j^\rs - \bar{\bar{x}}\right) = 
            \sum_{i,j} \left( x_{i,j} - \bar{x}_i^\rb - \bar{x}_j^\rs + \bar{\bar{x}}\right) = 0,
        \)
    it holds
        \(
            \sum_{i,j}\left(x_{i,j} - \bar{\bar{x}}\right)^2 = 
            \sum_{i,j} 
            \left\{ 
                \left( \bar{x}_i^\rb - \bar{\bar{x}}\right)^2 
                + \left( \bar{x}_j^\rs - \bar{\bar{x}}\right)^2 
                + \left( x_{i,j} - \bar{x}_i^\rb - \bar{x}_j^\rs + \bar{\bar{x}}\right)^2
            \right\}.
        \)
        Hence,
        \begin{align*}
            \sum_{i,j}\left(x_{i,j} - \bar{\bar{x}}\right)^2 &= J \sum_{i} \left(\bar{x}_i^\rb - \bar{\bar{x}}\right)^2 + I \sum_{j} \left(\bar{x}_j^\rs - \bar{\bar{x}}\right)^2 + \sum_{i,j}\left(x_{i,j} - \bar{x}_i^\rb - \bar{x}_j^\rs + \bar{\bar{x}}\right)^2.
    \end{align*}     
    \fi
    \end{proof}

For our matrix of potential outcomes $\bm{y}(\type)$, direct application of \cref{lemma:square} gives us
\begin{align}
    \frac{1}{IJ}\sum_{i,j} \left[y_{i,j}(\type) - \meanpopulation{\type}\right]^2 &= \frac{1}{I}\sum_i \left\{  \axismeanpopulation{i}{\rb}(\type) - \meanpopulation{\type} \right\}^2 + \frac{1}{J}\sum_j \left\{\axismeanpopulation{j}{\rs}(\type) - \meanpopulation{\type} \right\}^2  \nonumber \\
    &+ \frac{1}{IJ}\sum_{i,j}  \left\{ y_{i,j}(\type) - \axismeanpopulation{i}{\rb}(\type) - \axismeanpopulation{j}{\rs}(\type) + \meanpopulation{\type} \right\}^2  
    = \axisvariancepopulationscaled{\type}{\rb} + \axisvariancepopulationscaled{\type}{\rs} + \axisvariancepopulationscaled{\type}{\rb\rs}. \label{eq:sum_squares}
\end{align}
\ifarxiv
We now analyze the expectation of the crossed term $\axisvarianceestimatescaled{\type}{\rb\rs}$.
\else 
\fi 
\begin{lemma} \label{lemma:cross}
\ifarxiv
    It holds
    \[
        \mme\left[\axisvarianceestimatescaled{\type}{\rb\rs}\right] = \axisvariancepopulationscaled{\type}{\rb\rs} + \mmv\left(\meanestimate{\type}\right) - \axisbias{\type}{\rb} - \axisbias{\type}{\rs}.
    \]
\else 
    It holds
    \(
        \mme\left[\axisvarianceestimatescaled{\type}{\rb\rs}\right] = \axisvariancepopulationscaled{\type}{\rb\rs} + \mmv\left(\meanestimate{\type}\right) - \axisbias{\type}{\rb} - \axisbias{\type}{\rs}.
    \)
\fi 
\end{lemma}

\begin{proof}[Proof of \cref{lemma:cross}]
\begin{align*}
    \axisvarianceestimatescaled{\type}{\rb\rs}&= \frac{1}{I_{\type}J_{\type}}\sum_{i \in \II_{\type}}\sum_{j \in \JJ_{\type}}\left\{y_{i,j}(\type) - \axismeanestimate{i}{\rb}(\type) - \axismeanestimate{j}{\rs}(\type) + \meanestimate{\type}\right\}^2.
\end{align*}
Expanding the square,
\ifarxiv
\begin{align*}
    \axisvarianceestimatescaled{\type}{\rb\rs}&=\frac{1}{I_{\type}J_{\type}}\sum_{i \in \II_{\type}}\sum_{j \in \JJ_{\type}} \left\{ \left(y_{i,j}(\type) - \meanpopulation{\type}\right) - \left( \axismeanestimate{i}{\rb}(\type) - \meanpopulation{\type}\right) - \left( \axismeanestimate{j}{\rs}(\type) - \meanpopulation{\type} \right) + \left( \meanestimate{\type} - \meanpopulation{\type} \right) \right\}^2 \\
    &= \frac{1}{I_{\type}J_{\type}} \sum_{i \in \II_{\type}}\sum_{j \in \JJ_{\type}} \left(y_{i,j}(\type) - \meanpopulation{\type}\right)^2 + \frac{1}{I_{\type}} \sum_{i \in \II_{\type}} \left( \axismeanestimate{i}{\rb}(\type) - \meanpopulation{\type}\right)^2 
    \\
    &
    + \frac{1}{J_{\type}} \sum_{j \in \JJ_{\type}} \left( \axismeanestimate{j}{\rs}(\type) - \meanpopulation{\type}\right)^2 + \left(\meanestimate{\type} - \meanpopulation{\type}\right)^2\\
    &-\frac{2}{I_{\type}J_{\type}} \sum_{i \in \II_{\type}} \left( \axismeanestimate{i}{\rb}(\type) - \meanpopulation{\type}\right) \sum_{j \in \JJ_{\type}} \left(y_{i,j}(\type) - \meanpopulation{\type}\right) -  \frac{2}{I_{\type}J_{\type}} \sum_{j \in \JJ_{\type}} \left( \axismeanestimate{j}{\rs}(\type) - \meanpopulation{\type}\right) \sum_{i \in \II_{\type}} \left(y_{i,j}(\type) - \meanpopulation{\type}\right) \\
    & + \frac{2}{I_{\type}J_{\type}} \left(\meanestimate{\type} - \meanpopulation{\type}\right)\sum_{i \in \II_{\type}}\sum_{j \in \JJ_{\type}}  \left( y_{i,j}(\type) - \meanpopulation{\type}\right) + \frac{2}{I_{\type}J_{\type}} \sum_{i \in \II_{\type}} \left( \axismeanestimate{i}{\rb}(\type) - \meanpopulation{\type}\right)\sum_{j \in \JJ_{\type}}\left(\axismeanestimate{j}{\rs}(\type) - \meanpopulation{\type}\right) \\
    &- \frac{2}{I_{\type}J_{\type}} \left(\meanestimate{\type} - \meanpopulation{\type}\right)J_{\type}\sum_{i \in \II_{\type}} \left( \axismeanestimate{i}{\rb}(\type) - \meanpopulation{\type}\right) - \frac{2}{I_{\type}J_{\type}} \left(\meanestimate{\type} - \meanpopulation{\type}\right)I_{\type}\sum_{j \in \JJ_{\type}} \left( \axismeanestimate{j}{\rs}(\type) - \meanpopulation{\type}\right)\\
    &= \frac{1}{I_{\type}J_{\type}} \sum_{i \in \II_{\type}}\sum_{j \in \JJ_{\type}} \left(y_{i,j}(\type) - \meanpopulation{\type}\right)^2 + \frac{1}{I_{\type}} \sum_{i \in \II_{\type}} \left( \axismeanestimate{i}{\rb}(\type) - \meanpopulation{\type}\right)^2 \\
    &+ \frac{1}{J_{\type}} \sum_{j \in \JJ_{\type}} \left( \axismeanestimate{j}{\rs}(\type) - \meanpopulation{\type}\right) +  \left(\meanestimate{\type} - \meanpopulation{\type}\right)^2 
    -  \frac{2}{I_{\type}} \sum_{i \in \II_{\type}} \left( \axismeanestimate{i}{\rb}(\type) - \meanpopulation{\type}\right)^2 - \frac{2}{J_{\type}} \sum_{j \in \JJ_{\type}} \left( \axismeanestimate{j}{\rs}(\type) - \meanpopulation{\type}\right),
\end{align*}
so
\begin{align*}
    \axisvarianceestimatescaled{\type}{\rb\rs}
    &= \frac{1}{I_{\type}J_{\type}} \sum_{i \in \II_{\type}}\sum_{j \in \JJ_{\type}} \left(y_{i,j}(\type) - \meanpopulation{\type}\right)^2 - \frac{1}{I_{\type}} \sum_{i \in \II_{\type}} \left( \axismeanestimate{i}{\rb}(\type) - \meanpopulation{\type}\right)^2 
    \\
    &
    - \frac{1}{J_{\type}} \sum_{j \in \JJ_{\type}} \left( \axismeanestimate{j}{\rs}(\type) - \meanpopulation{\type}\right)^2 + \left(\meanestimate{\type} - \meanpopulation{\type}\right)^2. 
\end{align*}
\else 
\begin{align*}
    \axisvarianceestimatescaled{\type}{\rb\rs}
    &= \frac{1}{I_{\type}J_{\type}} \sum_{i \in \II_{\type}}\sum_{j \in \JJ_{\type}} \left(y_{i,j}(\type) - \meanpopulation{\type}\right)^2 + \frac{1}{I_{\type}} \sum_{i \in \II_{\type}} \left( \axismeanestimate{i}{\rb}(\type) - \meanpopulation{\type}\right)^2 \\
    &+ \frac{1}{J_{\type}} \sum_{j \in \JJ_{\type}} \left( \axismeanestimate{j}{\rs}(\type) - \meanpopulation{\type}\right) +  \left(\meanestimate{\type} - \meanpopulation{\type}\right)^2 
    -  \frac{2}{I_{\type}} \sum_{i \in \II_{\type}} \left( \axismeanestimate{i}{\rb}(\type) - \meanpopulation{\type}\right)^2 - \frac{2}{J_{\type}} \sum_{j \in \JJ_{\type}} \left( \axismeanestimate{j}{\rs}(\type) - \meanpopulation{\type}\right),
\end{align*}
so
\begin{align*}
    \axisvarianceestimatescaled{\type}{\rb\rs}
    &= 
    \sum_{i \in \II_{\type}}\sum_{j \in \JJ_{\type}} \frac{\left(y_{i,j}(\type) - \meanpopulation{\type}\right)^2}{I_{\type}J_{\type}}    
    - \sum_{i \in \II_{\type}} \frac{ \left( \axismeanestimate{i}{\rb}(\type) - \meanpopulation{\type}\right)^2 }{I_{\type}} 
    - \sum_{j \in \JJ_{\type}} \frac{\left( \axismeanestimate{j}{\rs}(\type) - \meanpopulation{\type}\right)^2}{J_{\type}}   
    + \left(\meanestimate{\type} - \meanpopulation{\type}\right)^2. 
\end{align*}
\fi 

Under the expectation operator,
\ifarxiv
\begin{align*}
    \mme \left[ \axisvarianceestimatescaled{\type}{\rb\rs} \right] &
    = 
    \frac{1}{I_{\type}J_{\type}}\mme\left[\sum_{i \in \II_{\type}} \sum_{j \in \JJ_{\type}} \left\{y_{i,j}(\type) - \meanpopulation{\type}\right\}^2\right] - \frac{1}{I_{\type}} \mme\left[\sum_{i \in \II_{\type}} \left( \axismeanestimate{i}{\rb}(\type) - \meanpopulation{\type}\right)^2\right] \\ & - \frac{1}{J_{\type}} \mme\left[ \sum_{j \in \JJ_{\type}} \left( \axismeanestimate{j}{\rs}(\type) - \meanpopulation{\type}\right) \right] + \mmv\left(\meanestimate{\type}\right)  \\
    &=  \frac{1}{IJ} \sum_{i=1}^I\sum_{j=1}^J \left\{y_{i,j}(\type) - \meanpopulation{\type}\right\}^2  - \frac{1}{I_{\type}} \chi^{2,\rb}(\type) - \frac{1}{J_{\type}} \chi^{2,\rs}(\type) + \mmv\left(\meanestimate{\type}\right).
\end{align*}
\else
\begin{align*}
    \mme \left[ \axisvarianceestimatescaled{\type}{\rb\rs} \right] 
    &
     =  \frac{1}{IJ} \sum_{i=1}^I\sum_{j=1}^J \left\{y_{i,j}(\type) - \meanpopulation{\type}\right\}^2  - \frac{1}{I_{\type}} \chi^{2,\rb}(\type) - \frac{1}{J_{\type}} \chi^{2,\rs}(\type) + \mmv\left(\meanestimate{\type}\right).
\end{align*}
\fi
\ifarxiv
Now, leveraging \cref{eq:sum_squares} for the first summation, \cref{eq:variance_row} for the second summation, and \cref{eq:variance_col} for the third summation,
\begin{align}
    \mme \left[ \axisvarianceestimatescaled{\type}{\rb\rs} \right]  
    &= 
    \axisvariancepopulationscaled{\type}{\rb}+\axisvariancepopulationscaled{\type}{\rs}+\axisvariancepopulationscaled{\type}{\rb\rs}  -  \left[\axisvariancepopulationscaled{\type}{\rb} + \axisbias{\type}{\rb}\right] -  \left[ \axisvariancepopulationscaled{\type}{\rs} + \axisbias{\type}{\rs}\right] +  \mmv\left(\meanestimate{\type}\right) \nonumber 
    \\
    &
    =  \axisvariancepopulationscaled{\type}{\rb\rs} + \mmv\left(\meanestimate{\type}\right) - \axisbias{\type}{\rb} - \axisbias{\type}{\rs}. \label{eq:var_dMRD_MV}
\end{align}
\else 
Leveraging \cref{eq:sum_squares} for the first summation, \cref{eq:variance_row} for the second summation (and its ``column'' counterpart for the third summation)
\begin{align}
    \mme \left[ \axisvarianceestimatescaled{\type}{\rb\rs} \right]  &= 
    \pm \axisvariancepopulationscaled{\type}{\rb}
    \pm \axisvariancepopulationscaled{\type}{\rs}
    + \axisvariancepopulationscaled{\type}{\rb\rs}  
    - \axisbias{\type}{\rb} 
    - \axisbias{\type}{\rs}
    +  \mmv\left(\meanestimate{\type}\right) 
    =  \axisvariancepopulationscaled{\type}{\rb\rs} + \mmv_\type - \axisbias{\type}{\rb} - \axisbias{\type}{\rs}. \label{eq:var_dMRD_MV}
\end{align}
\fi 
\end{proof}
\ifarxiv
We now use the characterizations \cref{eq:var_dMRD_M,eq:var_dMRD_V,eq:var_dMRD_MV}, to define an unbiased estimator for $\mmv\left(\meanestimate{\type}\right)$, as stated in \cref{thm:sample_variance_avg_effs}.
\else 
\fi 
\begin{theorem}[Already  \cref{thm:sample_variance_avg_effs} in the main paper]  \label{lemma:sample_variance_avg_effs_long}
For a SMRD where \cref{sutva_local} holds, for all $\type\in \types$,
\ifarxiv
    \[
        \mme\left[\widehat{\Sigma}_{\type}\right]= \mmv\left(\meanestimate{\type}\right), 
    \]
\else 
    \(
        \mme\left[\widehat{\Sigma}_{\type}\right]= \mmv\left(\meanestimate{\type}\right), 
    \)
\fi 
where
\ifarxiv
    \begin{align*}
        \hat{\Sigma}_{\type}&:= \frac{\alpha^{\rb}_{\type}\axisvarianceestimatescaled{\type}{\rb}
        +\alpha^{\rs}_{\type}\axisvarianceestimatescaled{\type}{\rs}
        +(\alpha^{\rb}_{\type}\alpha^{\rs}_{\type})\axisvarianceestimatescaled{\type}{\rb\rs}}{1 - \alpha^\rb_{\type} - \alpha^\rs_{\type} + \alpha^\rb_{\type} \alpha^\rs_{\type}} \\
        &-\frac{\alpha^\rb_{\type}}{1-\alpha^\rb_{\type}} \frac{J-J_\type}{(J-1)(J_{\type}-1)}\frac{1}{ J_\type I_{\type}} \sum_{i \in \II_{\type}}\sum_{j \in \JJ_{\type}} \left(y_{i,j}(\type) - \axismeanestimate{i}{\rb}(\type)\right)^2 \\
        & - \frac{\alpha^\rs_{\type}}{1-\alpha^\rs_{\type}} \frac{I - I_{\type}}{(I-1)(I_{\type}-1)}\frac{1}{I_{\type}J_{\type}} \sum_{j \in \JJ_{\type}} \sum_{i \in \II_{\type}} \left(y_{i,j}(\type) - \axismeanestimate{j}{\rs}(\type)\right)^2,
    \end{align*}
    and where we have used the previously defined (non-random) coefficients $\alpha^\rb_{\type}$ and $\alpha^\rs_{\type}$.
\else 
    \begin{align*}
        \hat{\Sigma}_{\type}&:= 
        \frac{\alpha^{\rb}_{\type}\axisvarianceestimatescaled{\type}{\rb}
        +\alpha^{\rs}_{\type}\axisvarianceestimatescaled{\type}{\rs}
        +(\alpha^{\rb}_{\type}\alpha^{\rs}_{\type})\axisvarianceestimatescaled{\type}{\rb\rs}}{1 - \alpha^\rb_{\type} - \alpha^\rs_{\type} + \alpha^\rb_{\type} \alpha^\rs_{\type}} 
        - \sum_{i \in \II_{\type}}\sum_{j \in \JJ_{\type}} 
        \left\{
        \frac{\left(y_{i,j}(\type) - \axismeanestimate{i}{\rb}(\type)\right)^2}
        {\left( \frac{1-\alpha^\rb_{\type}}{\alpha^\rb_{\type}} \frac{J(J_{\type}-1)}{J-J_\type} ({ J_\type I_{\type}}) \right)} 
        -  
        \frac{ \left(y_{i,j}(\type) - \axismeanestimate{j}{\rs}(\type)\right)^2}{\frac{1-\alpha^\rs_{\type}}{\alpha^\rs_{\type}} 
        \frac{I(I_{\type}-1)}{I - I_{\type}} ({I_{\type}J_{\type}})}
        \right\}.
    \end{align*}
\fi 
\begin{proof}[Proof of \cref{thm:sample_variance_avg_effs} and \cref{lemma:sample_variance_avg_effs_long}]
Given $\alpha^\rb_{\type}, \alpha^\rs_{\type}$, \cref{lemma_appendix2} allows us to write
\ifarxiv 
\[    \mmv\left( \meanestimate{\type} \right) = \alpha_{\type}^\rb \axisvariancepopulationscaled{\type}{\rb} +\alpha_{\type}^\rs \axisvariancepopulationscaled{\type}{\rs} +  \alpha_{\type}^\rb \alpha_{\type}^\rs \axisvariancepopulationscaled{\type}{\rb\rs}.
\]
Define
\[
    \hat{G}_{\type} = \alpha^\rb_{\type}\axisvarianceestimatescaled{\type}{\rb} + \alpha^\rs_{\type}\axisvarianceestimatescaled{\type}{\rs} + \alpha^\rb_{\type}\alpha^\rs_{\type}\axisvarianceestimatescaled{\type}{\rb\rs},
\]
\else 
\(
    \mmv\left( \meanestimate{\type} \right) = \alpha_{\type}^\rb \axisvariancepopulationscaled{\type}{\rb} +\alpha_{\type}^\rs \axisvariancepopulationscaled{\type}{\rs} +  \alpha_{\type}^\rb \alpha_{\type}^\rs \axisvariancepopulationscaled{\type}{\rb\rs}.
\)
Define
\(
    \hat{G}_{\type} = \alpha^\rb_{\type}\axisvarianceestimatescaled{\type}{\rb} + \alpha^\rs_{\type}\axisvarianceestimatescaled{\type}{\rs} + \alpha^\rb_{\type}\alpha^\rs_{\type}\axisvarianceestimatescaled{\type}{\rb\rs},
\)
\fi
and apply the expectation operator, leveraging the results in \cref{lemma:rows,lemma:columns,lemma:cross},
    \begin{align*}
    \mme\left[ \hat{G}_{\type} \right] &=  \alpha^\rb_{\type}\mme\left[\axisvarianceestimatescaled{\type}{\rb}\right] + \alpha^\rs_{\type} \mme\left[\axisvarianceestimatescaled{\type}{\rs}\right] + \alpha^\rb_{\type}\alpha^\rs_{\type} \mme\left[\axisvarianceestimatescaled{\type}{\rb\rs}\right] \\
    &= \alpha^\rb_{\type} \left(\axisvariancepopulationscaled{\type}{\rb} - \mmv\left(\meanestimate{\type}\right) + \axisbias{\type}{\rb} \right) + \alpha^\rs_{\type} \left(\axisvariancepopulationscaled{\type}{\rs} - \mmv\left(\meanestimate{\type}\right) + \axisbias{\type}{\rs} \right) \\
    &+ \alpha^\rb_{\type}\alpha^\rs_{\type} \left( \axisvariancepopulationscaled{\type}{\rb\rs} + \mmv(\meanestimate{\type}) - \axisbias{\type}{\rb} - \axisbias{\type}{\rs} \right).
\end{align*}
\ifarxiv
Rearranging,
\begin{align*}
    \mme\left[ \hat{G}_{\type} \right] = \mmv\left(\meanestimate{\type}\right)\left\{1 - \alpha^\rb_{\type} - \alpha^\rs_{\type} + \alpha^\rb_{\type} \alpha^\rs_{\type}\right\}
    + \alpha^\rb_{\type}\{1-\alpha^\rs_{\type}\} {\axisbias{\type}{\rb}} +  \alpha^\rs_{\type} \left\{1-\alpha^\rb_{\type}\right\}{\axisbias{\type}{\rs}}.
\end{align*}
and, observing that
\[
    \frac{x(1-y)}{1-x-y+xy} = \frac{x(1-y)}{(1-x)(1-y)} = \frac{x}{1-x},
\]
and rescaling the quantity above,
\else 
Rearranging,
\(
    \mme\left[ \hat{G}_{\type} \right] = \mmv\left(\meanestimate{\type}\right)\left\{1 - \alpha^\rb_{\type} - \alpha^\rs_{\type} + \alpha^\rb_{\type} \alpha^\rs_{\type}\right\}
    + \alpha^\rb_{\type}\{1-\alpha^\rs_{\type}\} {\axisbias{\type}{\rb}} +  \alpha^\rs_{\type} \left\{1-\alpha^\rb_{\type}\right\}{\axisbias{\type}{\rs}},
\)
and, observing that
\(
    \frac{x(1-y)}{1-x-y+xy} = \frac{x(1-y)}{(1-x)(1-y)} = \frac{x}{1-x},
\)
and rescaling the quantity above,
\fi 
\begin{align*}
    \frac{ \mme\left[ \hat{G}_{\type} \right] }{1 - \alpha^\rb_{\type} - \alpha^\rs_{\type} + \alpha^\rb_{\type} \alpha^\rs_{\type} } 
    &= 
    \mmv\left(\meanestimate{\type}\right) + \frac{\alpha^\rb_{\type}}{1-\alpha^\rb_{\type}}{\axisbias{\type}{\rb}} + \frac{\alpha^\rs_{\type}}{1- \alpha_{\type}^\rs}{\axisbias{\type}{\rs}}.
\end{align*}
We now leverage standard results to obtain unbiased estimates for $\axisbias{\type}{\rb}, \axisbias{\type}{\rs}$.
\ifarxiv
First, the variance of the row-mean estimate follows from \cref{lemma:pure_experiments}:
\else
By standard results on single randomized experiments \citep[Theorems 2.1, 2.2, 2.4]{cochran1977sampling1}:
\fi
\begin{align}
    \mmv\left(\axismeanestimate{i}{\rb}(\type)\right) 
    &= \mme\left[\left(\axismeanestimate{i}{\rb}(\type) - \axismeanpopulation{i}{\rb}(\type)\right)^2\right] 
    = \frac{\left[\frac{1}{J-1} \sum_{j=1}^J \left\{ y_{i,j}(\type) - \axismeanpopulation{i}{\rb}(\type) \right\}^2\right]}{\frac{J_{\type}J}{J-J_{\type}}}, \label{eq:variance2_row}
\end{align}
where \cref{eq:variance2_row} is implied by standard results in sampling theory: in a SMRD we can see each row $i$ as its own population with mean $\axismeanpopulation{i}{\rb}$ and corresponding estimate  $\axismeanestimate{i}{\rb}$. Then, for those rows which feature at least two columns of type $\type$, we can provide an unbiased estimate of the variance term in \cref{eq:variance2_row}. Define the sample estimate
\[
    \widehat{\mmv}\left(\axismeanestimate{i}{\rb}(\type)\right):= \frac{J-J_{\type}}{J_{\type}}\frac{1}{J} \left[ \frac{1}{J_{\type}-1} \sum_{j \in \JJ_{\type}} \left\{ y_{i,j}(\type) - \axismeanestimate{i}{\rb}(\type)\right\}^2 \right].
\]
\ifarxiv
From \cref{eq:expectations_single_axis},
\else
From \citep[Theorem 2.4]{cochran1977sampling1}
\fi
\[
    \mme\left[ \frac{1}{J_{\type}-1} \sum_{j \in \JJ_{\type}} \left\{ y_{i,j}(\type) - \axismeanestimate{i}{\rb}(\type)\right\}^2 \right] = \frac{1}{J-1} \sum_{j=1}^J \left\{ y_{i,j}(\type) - \axismeanpopulation{i}{\rb}(\type) \right\}^2,
\]
which directly implies that
\ifarxiv
\[
    \mme\left[ \widehat{\mmv}\left(\axismeanestimate{i}{\rb}(\type)\right) \right] = \mmv\left(\axismeanestimate{i}{\rb}(\type)\right).
\]
Averaging these estimates over the rows, 
\[
    \axisbiasestimate{\type}{\rb} = \frac{1}{I_{\type}} \sum_{i \in \II_{\type}}  \widehat{\mmv}\left(\axismeanestimate{i}{\rb}(\type)\right),
\]
\else 
\(
    \mme\left[ \widehat{\mmv}\left(\axismeanestimate{i}{\rb}(\type)\right) \right] = \mmv\left(\axismeanestimate{i}{\rb}(\type)\right).
\)
Averaging these estimates over the rows, 
\(
    \axisbiasestimate{\type}{\rb} = \frac{1}{I_{\type}} \sum_{i \in \II_{\type}}  \widehat{\mmv}\left(\axismeanestimate{i}{\rb}(\type)\right),
\)
\fi
satisfying
\[
    \mme\left[ \axisbiasestimate{\type}{\rb} \right] = \mme\left[ \frac{1}{I_{\type}}\sum_{i \in \II_{\type}}  \widehat{\mmv}\left(\axismeanestimate{i}{\rb}(\type)\right) \right] = \frac{1}{I} \binom{J_{\type}}{J}^{-1} \sum_{\JJ_{\type}} \sum_{i=1}^I \left\{ \axismeanestimate{i}{\rb}(\type) - \axismeanpopulation{i}{\rb}(\type)\right\} ^2 =: \axisbias{\type}{\rb}.
\]
Symmetrically for the {\seller}s, 
\ifarxiv
\begin{align*}
    \mmv\left(\axismeanestimate{j}{\rs}(\type)\right) &= \mme\left[\left(\axismeanestimate{j}{\rs}(\type) - \axismeanpopulation{j}{\rs}(\type)\right)^2\right] 
    = \frac{I-I_{\type}}{I_{\type}} \frac{1}{I} \frac{1}{I-1} \sum_{i=1}^I(y_{i,j}(\type) - \axismeanpopulation{j}{\rs}(\type))^2,
\end{align*}
then
\[
    \widehat{\mmv}\left(\axismeanestimate{j}{\rs}(\type)\right):= \frac{I-I_{\type}}{I_{\type}}\frac{1}{I_{\type}-1}\frac{1}{I} \sum_{i \in \II_{\type}} \left(y_{i,j}(\type) - \axismeanpopulation{j}{\rs}(\type)\right)^2.
\]
It holds
\[
    \mme\left[ \widehat{\mmv}\left(\axismeanestimate{j}{\rs}(\type)\right) \right] = \mmv\left(\axismeanestimate{j}{\rs}(\type)\right).
\]
Average these estimates over the columns, 
\[
    \axisbiasestimate{\type}{\rs} = \frac{1}{J_{\type}} \sum_{j \in \JJ_{\type}}  \widehat{\mmv}\left(\axismeanestimate{j}{\rs}(\type)\right),
    \quad \text{satisfying} \quad 
    \mme\left[ \axisbiasestimate{\type}{\rs} \right] = \axisbias{\type}{\rs}.
\]
Therefore,
\[
    \typevariance_{\type} = \hat{G}_{\type} - \frac{\alpha^\rb_{\type}}{1-\alpha^\rb_{\type}} \axisbiasestimate{\type}{\rb} - \frac{\alpha^\rs_{\type}}{1-\alpha^\rs_{\type}} \axisbiasestimate{\type}{\rs}
\quad\text{satisfies}\quad
    \mme\left[ \typevariance_{\type} \right] = \mmv\left( \meanestimate{\type} \right).
\]
\else 
define 
\(
    \axisbiasestimate{\type}{\rs} = 
    \frac{1}{J_{\type}} \sum_{j \in \JJ_{\type}}  \widehat{\mmv}\left(\axismeanestimate{j}{\rs}(\type)\right).
\)
It satisfies
\(
    \mme\left[ \axisbiasestimate{\type}{\rs} \right] = \axisbias{\type}{\rs}.
\)
Therefore,
\(
    \typevariance_{\type} = \hat{G}_{\type} - \frac{\alpha^\rb_{\type}}{1-\alpha^\rb_{\type}} \axisbiasestimate{\type}{\rb} - \frac{\alpha^\rs_{\type}}{1-\alpha^\rs_{\type}} \axisbiasestimate{\type}{\rs}
\quad\text{satisfies}\quad
    \mme\left[ \typevariance_{\type} \right] = \mmv\left( \meanestimate{\type} \right).
\)
\fi 
\end{proof}
\end{theorem}
\begin{theorem}[Already \cref{thm:variance_bounds} in the main paper] \label{proof_thm:variance_bounds}
    Under the assumptions of \cref{thm:sample_variance_avg_effs} a conservative estimator for $ \mmv({{\hattauspillb}})$ is:
    \ifarxiv
    \[
        \widehat{\mmv}^{\rm hi}(\hattauspillb) 
        := 
        2 
        \left\{ 
            \widehat{\Sigma}_{\icb} + \widehat{\Sigma}_{\ccc} 
        \right\}
        .
    \] 
    \else 
    \(
        \widehat{\mmv}^{\rm hi}(\hattauspillb) 
        := 
        2 
        \left\{ 
            \widehat{\Sigma}_{\icb} + \widehat{\Sigma}_{\ccc} 
        \right\}
        .
    \)
    \fi 
\end{theorem}
\begin{proof}[Proof of \cref{thm:variance_bounds}]
    Recall that $\hattauspillb:=\meanestimate{\icb}-\meanestimate{\ccc}$, so that 
    \begin{equation}
        \mmv\left[{{\hattauspillb}}\right]=
        \mmv\left[\meanestimate{\icb}\right] 
        + \mmv\left[\meanestimate{\ccc}\right] 
        - 2 \mmc(\icb, \ccc). \label{eq:exp_var}
    \end{equation}
    We have unbiased estimators $\hat{\Sigma}_{\icb}$ for $\mmv(\meanestimate{\icb})$ and 
    $\hat{\Sigma}_{\ccc}$ for $\mmv(\meanestimate{\ccc})$.
    To obtain a conservative variance estimator for $\hattauspillb$, it remains for us to find a conservative estimator for the covariance term 
    $\mmc(\icb, \ccc)$. Letting $\hat{\Delta}_\type:=\meanestimate{\type} - \meanpopulation{\type}$, we use Young's (AM-GM) inequality as follows:
    \begin{equation}
        \mmc(\icb, \ccc) 
        \le 
        |\mme[\hat{\Delta}_\icb\hat{\Delta}_\ccc]|
        \le 
        \mme[|\hat{\Delta}_\icb| |\hat{\Delta}_\ccc|] 
        \le \frac{\mme[\hat{\Delta}_\icb^2]}{2} \frac{\mme[\hat{\Delta}_\ccc^2]}{2} 
        = \frac{\Sigma_\icb}{2} + \frac{\Sigma_\ccc}{2}.
    \end{equation}
    Therefore, we can obtain from the sample a conservative estimator for the covariance via
    \begin{equation}
        \mme\left[\frac{\hat{\Sigma}_\icb}{2} + \frac{\hat{\Sigma}_\ccc}{2} \right] \ge | \mmc(\icb, \ccc) | . \label{eq:AMGM_conservative}
    \end{equation}
    Finally, defining 
    \ifarxiv
    \[
        \widehat{\mmv}^{\rm hi}(\hattauspillb) := 
        \widehat{\Sigma}({\icb}) + 
        \widehat{\Sigma}({\ccc}) + 2 \left( \frac{\hat{\Sigma}_\icb}{2} + \frac{\hat{\Sigma}_\ccc}{2} \right)
        =
        2 \left( \widehat{\Sigma}({\icb}) + 
        \widehat{\Sigma}({\ccc}) \right)
    \]
    \else 
    \(
        \widehat{\mmv}^{\rm hi}(\hattauspillb) := 
        \widehat{\Sigma}({\icb}) + 
        \widehat{\Sigma}({\ccc}) + 2 \left( \frac{\hat{\Sigma}_\icb}{2} + \frac{\hat{\Sigma}_\ccc}{2} \right)
        =
        2 \left( \widehat{\Sigma}({\icb}) + 
        \widehat{\Sigma}({\ccc}) \right)
    \)
    \fi 
    and applying the expectation operator to each term we prove our thesis:
    \[
        \mme\left[\widehat{\mmv}^{\rm hi}(\hattauspillb)\right] 
        \ge 
        \mmv\left[\meanestimate{\icb}\right] 
        + \mmv\left[\meanestimate{\ccc}\right] 
        - 2 \mmc(\icb, \ccc)
        = 
        \mmv({{\hattauspillb}}). 
    \]
\end{proof}

\begin{lemma} \label{thm:general_variance_sample_causal}
    Let ${\tau}(\coefvec)$ be as in \cref{eq:causal_estimands}, and $\hat{\tau}(\coefvec)$ be its estimator counterpart as per \cref{eq:linear_estimator}. 
    It holds
	\(
        \mme\left[ \hat{\tau}(\coefvec) \right] = {\tau}(\coefvec). 
        \)
    Extending \cref{thm:sample_variance_spillover} yields a conservative estimator  of 
    $\mmv\left( \hat{\tau}(\coefvec)\right)$ via
    \(
    	\widehat{\mmv}^{\rm hi}\left( \hat{\tau}(\coefvec)\right) = 
	\sum_{\type \in \types} \beta_\type^2 \hat{\Sigma}_\type + 
	\sum_{\type \neq \type'}  \beta_\type \beta_{\type'} \left( { \hat{\Sigma}_\type} +  { \hat{\Sigma}_{\type'}} \right).
    \)
\end{lemma}
\begin{proof}
    Unbiasedness of $ \hat{\tau}(\coefvec) $ follows directly from linearity of the expectation and \cref{lemma:unbiasedness}.
    The variance of $\hat{\tau}(\coefvec)$ es given by:
    \begin{equation}
        \mmv\left( \hat{\tau}(\coefvec) \right) 
        = \sum_{\type \in \types} \beta_{\type}^2 
        \mmv\left( \meanestimate{\type} \right) 
        + 2  \sum_{\type \neq \type'}
        \beta_{\type}  \beta_{\type'}
         \mmc\left( 
        \meanestimate{\type},
        \meanestimate{\type'}
        \right). \label{eq:general_variance}
    \end{equation}
    Plug-in estimates $\hat{\Sigma}_\type$ are unbiased for $ \mmv\left( \meanestimate{\type} \right)$  as per \cref{thm:sample_variance_avg_effs}.
    The covariance terms can be bounded as in \cref{eq:AMGM_conservative} ---
    \(
    	| \mmc(\type, \type') | \le \frac{1}{2} \mme[\hat{\Sigma}_\type + \hat{\Sigma}_{\type'}]
    \), yielding the result.
\end{proof}
\subsection{Probability limit}  \label{app_sec:proof_plim}
\ifresponse
    \begin{customthm}{4.7}{A.18}
    Consider any sequence of SMRDs in which $I,J \uparrow \infty$, where the local interference assumption holds, and which satisfy \cref{ass:regularity}.
    Let $\hat{\tau}(\coefvec)$ be the linear estimator introduced in \cref{thm:spillover_unbiasedness}, and let $\widehat{\mmv}^{\rm hi}\left( \hat{\tau}(\coefvec)\right)$ be its conservative variance estimator given in \cref{thm:variance_bounds}.
    Then, if $I^{-2} + J^{-2} = o\left(\mme\left\{ \widehat{\mmv}[ \hat{\tau}(\coefvec)]\right\}\right)$, we have
    \[
    	\frac{\widehat{\mmv}^{\rm hi}\left( \hat{\tau}(\coefvec)\right)}{\mme\left\{\widehat{\mmv}^{\rm hi}\left( \hat{\tau}(\coefvec)\right)\right\}} = 1 + o_p(1).
    \]

\end{customthm}

\else
\begin{theorem} \label{eq:main_theorem}
Consider any sequence of SMRDs in which $I,J \uparrow \infty$, where the local interference assumption holds, and which satisfy \cref{ass:regularity}.
Let $\hat{\tau}(\coefvec)$ be the linear estimator introduced in \cref{thm:spillover_unbiasedness}, and let $\widehat{\mmv}^{\rm hi}\left( \hat{\tau}(\coefvec)\right)$ be its conservative variance estimator given in \cref{thm:variance_bounds}.
Then, if $I^{-2} + J^{-2} = o\left(\mme\left\{ \widehat{\mmv}[ \hat{\tau}(\coefvec)]\right\}\right)$, we have
\[
	\frac{\widehat{\mmv}^{\rm hi}\left( \hat{\tau}(\coefvec)\right)}{\mme\left\{\widehat{\mmv}^{\rm hi}\left( \hat{\tau}(\coefvec)\right)\right\}} = 1 + o_p(1).
\]

\end{theorem}
\fi

\begin{proof}
    By the continuous mapping theorem, given the characterization of $\widehat{\mmv}^{\rm hi}\left( \hat{\tau}(\coefvec)\right)$ in \cref{thm:general_variance_sample_causal}, it suffices to consider the case where $\hat{\tau}(\coefvec) = \meanestimate{\type}$, i.e.~where $\coefvec$ is a standard basis vector in $\mathbb{R}^4$.
    In this case, \cref{thm:sample_variance_avg_effs} shows that $\widehat{\mmv}^{\rm hi}\left( \hat{\tau}(\coefvec)\right)=\mmv\left(\meanestimate{\type}\right)$ is unbiased.

\ifresponse  
    We first recall important definitions and notation.
The (population) averages outcome for each {\buyer} and each {\seller} for a given type $\type$ are
\begin{equation*}
    \axismeanpopulation{i}{\rb}(\type) 
    :=\frac{1}{J}\sum_{j=1}^J y_{ij}(\type) 
    \qquad\text{and}\qquad
    \axismeanpopulation{j}{\rs}(\type) 
    :=\frac{1}{I}\sum_{i=1}^I y_{ij}(\type). 
\end{equation*}
The deviations from population averages  for {\buyer} $i$, {\seller} $j$, and interactions $(i,j)$ are:
\begin{equation*} 
    \axisvariationmeanpopulation{i}{\rb}(\type) 
    := 
        \axismeanpopulation{i}{\rb}(\type)-\meanpopulation{\type}
        ;
    \quad  
    \axisvariationmeanpopulation{j}{\rs}(\type) 
    :=
        \axismeanpopulation{j}{\rs}(\type)-\meanpopulation{\type};
    \quad 
	\axisvariationmeanpopulation{ij}{\rb\rs}(\type) 
	:= 
        y_{ij}(\type)
        -\axismeanpopulation{i}{\rb}(\type)
        -\axismeanpopulation{j}{\rs}(\type)
        +\meanpopulation{\type}.
\end{equation*}
The population variances for each type $\type$ at the \buyer, \seller, and interaction level are:
\[ 
    \axisvariancepopulation{\type}{\rb}
    := \frac{\sum_{i=1}^I\left[\axisvariationmeanpopulation{i}{\rb}(\type)\right]^2}{I}; 
    \;
    \axisvariancepopulation{\type}{\rs}
    := \frac{\sum_{j=1}^J\left[\axisvariationmeanpopulation{j}{\rs}(\type)\right]^2}{J};
    \;
    \axisvariancepopulation{\type}{\rb\rs}
    ~:=~
    \frac{\sum_{i=1}^I 
    \sum_{j=1}^J \left[
    \axisvariationmeanpopulation{ij}{\rb\rs}(\type)
    \right]^2}{IJ}.
\]
\else
\fi
Given weights $\alpha_\type^{\rb}$ and $\alpha_\type^{\rs}$ defined in \cref{eq:alpha_weights}, by \cref{thm:covariance_characterization} (see also \cref{thm:covariance_characterization_app}), 
\begin{align}\label{eq:pop-var}
    \Sigma_\type \coloneqq \mmv\left(\meanestimate{\type}\right)
    =\alpha^{\rb}_{\type}\axisvariancepopulation{\type}{\rb} 
    + \alpha^{\rs}_{\type} 
    \axisvariancepopulation{\type}{\rs}
    + \alpha^{\rb}_{\type}\alpha^{\rs}_{\type}\axisvariancepopulation{\type}{\rb\rs} .
\end{align}
Using the facts that $\alpha_\type^{\rb} = O(I^{-1}) $ and $\alpha_\type^{\rs} = O(J^{-1})$, we may note that as $I,J \uparrow \infty$,
\(
        1 - \alpha^\rb_{\type} - \alpha^\rs_{\type} + \alpha^\rb_{\type} \alpha^\rs_{\type}
        \sim 
        1,
        \quad
         1 - \alpha^{\rs}_{\type} \sim 1, 
         \quad 
         1 - \alpha^{\rb}_{\type} \sim 1.
\)
Hence $\hat{\Sigma}_\type$ simplifies asymptotically as:
\begin{align}
    \hat{\Sigma}_{\type}
    &\sim
    \underbrace{\alpha^{\rb}_{\type}\axisvarianceestimate{\type}{\rb}}_{(\mathsf{a.1})} 
    +
    \underbrace{
    \alpha^{\rs}_{\type}\axisvarianceestimate{\type}{\rs}}_{(\mathsf{a.2})}
    +
    \underbrace{\alpha^{\rb}_{\type}\alpha^{\rs}_{\type}\axisvarianceestimate{\type}{\rb\rs}}_{(\mathsf{b})}
    -
    \underbrace{
    \frac{1}{J^2I} 
    \left[
    \frac{1}{I_\type}
    \sum_{i \in \II_{\type}}
    \hat{v}_{\type, i}^{\rb}
    \right]
    -\frac{1}{I^2J} 
    \left[
    \frac{1}{J_\type}
    \sum_{j \in \JJ_{\type}}
    \hat{v}_{\type, j}^{\rs}
    \right]
    }_{(\mathsf{c})}.
    \label{eq:main_var_decomp}
\end{align}
Here, we have adopted the notation 
\begin{equation}
    \hat{v}_{\type, i}^{\rb}:=
    \frac{1}{J_\type}
    \sum_{j \in \JJ_{\type}}
    {
        \left(y_{i,j}(\type) 
        - \axismeanestimate{i}{\rb}(\type)\right)^2
    },
    \quad
    \text{and}
    \quad 
    \hat{v}_{\type, j}^{\rs}:=
    \frac{1}{I_\type}
    \sum_{i \in \II_{\type}}
    {
        \left(y_{i,j}(\type) 
        - \axismeanestimate{j}{\rs}(\type)\right)^2
    }. \label{eq:plim_var_axis}
\end{equation}
By boundedness \cref{eq:assumption_b} (b),  $|\hat{v}_{\type, i}^{\rb}| \le 4C_2^2$ and $|\hat{v}_{\type, j}^{\rs}| \le 4C_2^2$.
We conclude immediately that $(\mathsf{c})$ is $O(I^{-2}J^{-1} + I^{-1}J^{-2})$ almost surely. 
Thus, combining \cref{eq:pop-var,eq:main_var_decomp},
\[
    \frac{\hat{\Sigma}_{\type}}{\Sigma_\type} = 
    1 + 
    \frac{
    	\alpha^{\rb}_{\type}(\axisvarianceestimate{\type}{\rb}-\sigma^\rb_\type) 
	+ \alpha^{\rs}_{\type}(\axisvarianceestimate{\type}{\rs}-\sigma^\rs_\type) 
	+ \alpha^{\rb}_{\type}\alpha^{\rs}_{\type}(\axisvarianceestimate{\type}{\rb\rs}-\sigma_\type^{\rb\rs}) 
	+ O(I^{-2}J^{-1} + I^{-1}J^{-2})
	}{
	\Sigma_\type
	}.
\]
Then, to show the thesis it suffices for us to bound $(\mathsf{a.1})$, $(\mathsf{a.2})$ and $(\mathsf{b})$ in \cref{eq:main_var_decomp} above. 

By \Cref{lem:variance_estimator_consistency_buyer} along with the facts that $\Sigma_\type \ge \sigma_\type^\rb$ and $\alpha_\type^\rb = O(1/I)$, 
\[
    \alpha^{\rb}_{\type}\left(\axisvarianceestimate{\type}{\rb}-\sigma^\rb_\type\right) = O_p\left(I^{-1}\sqrt{\Sigma_\type[I^{-1}+ J^{-1}]} + (IJ)^{-1}\right).
\]
Analogously, \Cref{lem:variance_estimator_consistency_seller} together with $\Sigma_\type \ge \sigma_\type^\rs$ and $\alpha_\type^\rs = O(1/J)$ gives
\[
    \alpha^{\rs}_{\type}\left(\axisvarianceestimate{\type}{\rs}-\sigma^\rs_\type\right) = O_p\left(J^{-1}\sqrt{\Sigma_\type[I^{-1}+ J^{-1}]} + (IJ)^{-1}\right).
\]
Lastly, \Cref{lem:interaction_variance_consistency} together with $\Sigma_\type \ge \sigma_\type^\rs+\sigma_\type^\rb$ and $\alpha^{\rb}_{\type}\alpha^{\rs}_{\type} = O(I^{-1}J^{-1})$ gives 
\[
	\alpha^{\rb}_{\type}\alpha^{\rs}_{\type}\left(\axisvarianceestimate{\type}{\rb\rs}-\sigma_\type^{\rb\rs}\right) = O_p\left((IJ)^{-1}\sqrt{\Sigma_\type[I^{-1}+ J^{-1}]}  + (IJ)^{-1}[I^{-1}+ J^{-1}] \right).
\]
Omitting lower-order terms and simplifying fractions, we arrive at 
\[
	\frac{\hat{\Sigma}_{\type}}{\Sigma_\type} = 1 + O_p\left( \sqrt{\frac{[I^{-1}+J^{-1}]^3}{\Sigma_\type}} + \frac{(IJ)^{-1}}{\Sigma_\type} \right) 
	= {1 + o_p(1)},
\]
{where the last equality holds because AM-GM ensures $(IJ)^{-1} \le \frac{1}{2}[I^{-2}+ J^{-2}]$}.
\end{proof}

\revision{Given a parameter space $U$ along with random variables $\{X_u:u \in U\}$ and real numbers $\{t_u:u \in U\}$, in this section we write $X_u = \Opfin(t_u)$ if $\{X_u/t_u: u \in U\}$ is tight: $\sup_{u \in U} \mathbb{P}(|X_u/t_u| > r) \downarrow 0$ as $r \uparrow \infty$  \citep{billingsley2008probability}. This immediately implies the usual, sequential definition: given a sequence of elements $u_n \in U$ such that $X_{u_n} = \Opfin(t_{u_n})$, it follows immediately that $X_{u_n} = O_p(t_{u_n})$ in the usual sense, meaning that the sequence ${X_{u_n}/t_{u_n}}$ is tight: $\sup_n \mathbb{P}(|X_{u_n}/t_{u_n}| > r) \downarrow 0$ as $r \uparrow \infty$.
We typically omit reference to the parameter space $U$ as it will be clear from context.
}

\begin{lemma}[Single randomized convergence]\label{lem:single-randomized-redux}
    Let $a_1, a_2, \ldots, a_I$ be bounded real numbers, $|a_i| \le M$ for all $1 \le i \le I$.
    Write $\bar a = I^{-1}\sum_{i=1}^I a_i$ and $\sigma^2_a = I^{-1}\sum_{i=1}^I (a_i - \bar a)^2$. 
    Then, 
    \begin{align}
        \frac{1}{I_\type}\sum_{i \in \II_\type} \left(a_i - \bar a\right)^2 - \sigma^2_a = \Opfin\left(\sqrt{M^2\sigma^2_a/I_\type}\right) 
        \quad\text{and}\quad
        \frac{1}{I_\type}\sum_{i \in \II_\type} a_i- \bar a =  \Opfin\left(\sqrt{\sigma^2_a/I_\type}\right). \label{eq:single-variance-bias2}
    \end{align}
\end{lemma}
\begin{proof}
 \Cref{eq:single-variance-bias2} is taken from the proof of \citet[Proposition 1]{li2017general_app}.
 \ifarxiv 
 In particular, the variance of both sums are bounded there, and \Cref{eq:single-variance-bias2} then follows by Chebyshev's inequality.
 \fi
\end{proof}
\begin{lemma} \label{lem:subg-max} 
Let $X_1, X_2, \ldots, X_n$ be random variables with the following tail bound property: for all $p \in (0,1)$ and all $i \in \{1,2,...,n\}$,
\(
    \mathbb{P}\left(|X_i| > D\sqrt{\log(2/p)}\right) \leq p, 
\)
where $D > 0$ is a constant.
Then for a fixed probability $\eta \in (0,1)$,
\ifarxiv
\[
    \mathbb{P}\left(\max_{1 \leq i \leq n} |X_i| > D\sqrt{\log(2n/\eta)}\right) \leq \eta.
\]
\else
\(
    \mathbb{P}\left(\max_{1 \leq i \leq n} |X_i| > D\sqrt{\log(2n/\eta)}\right) \leq \eta.
\)
\fi
\end{lemma}

\begin{proof} 
Set $p = \eta/n$, then
\ifarxiv
\[ 
    \mathbb{P}\left(|X_i| > D\sqrt{\log(n/\eta)}\right) \leq \eta/n,
\]
\else
\(
    \mathbb{P}\left(|X_i| > D\sqrt{\log(n/\eta)}\right) \leq \eta/n,
\)
\fi
 $\forall i \in [n]$.
By the union bound, for a fixed probability $\eta \in (0,1)$,
\(
    \mathbb{P}\left(\max_{1 \leq i \leq n} |X_i| > D\sqrt{\log(2n/\eta)}\right) \leq \eta.
\)
\end{proof}

In what follows, we consider a bounded array of real numbers $A=(a_{ij})_{i \in [I], j \in [J]}$ such that for all $(i,j)$, $|a_{ij}| \le M$. For this array, we write:
\[
    \bar{\bar{a}} = (IJ)^{-1}\sum_{i=1}^I\sum_{j=1}^J a_{ij},
    \quad
    \text{and}
    \quad
    \bar a_i = J^{-1}\sum_{j=1}^J a_{ij},
    \quad
    \text{and}
    \quad
    \sigma^2_{\bar a} = I^{-1}\sum_{i=1}^I (\bar a_i - \bar{\bar a})^2.
\]

\begin{lemma}\label{lem:double-error-terms}
Let $\II_\type$ be a random selection of $ I_\type \in \{2,\dots, I-2\}$ indices (and symmetrically $\JJ_\type$ a selection of a random selection of $ J_\type \in \{2,\dots, J-2\}$ indices). It holds:
\begin{align}
    \frac{1}{I_\type J_\type} \sum_{i \in \II_\type} \sum_{j \in \JJ_\type} (a_{ij} - \bar{\bar{a}}) = \Opfin\left(M \left[ \sqrt{\frac{\sigma_{\bar a}^2}{I_\type}} + \sqrt{\frac{\log I}{I_\type J_\type }}
    \right] 
    \right). \label{eq:double-bias-bound}
\end{align}
\end{lemma}
\begin{proof}
We write the left-hand side of \cref{eq:double-bias-bound} as
\begin{equation}
    \frac{1}{I_\type J_\type} \sum_{i \in \II_\type} \sum_{j \in \JJ_\type} (a_{ij} - \bar{\bar{a}}) 
    =
    \frac{1}{I_\type}\sum_{i \in \II_\type} 
    \left[
    \left(
    \frac{1}{J_\type} \sum_{j \in \JJ_\type} a_{ij} - \bar{a}_i
    \right) + (\bar{a}_i - \bar{\bar{a}}) 
    \right]
    =:
    \frac{1}{I_\type}\sum_{i \in \II_\type} 
    \left[\epsilon_i + (\bar{a}_i - \bar{\bar{a}})\right].
    \label{eq:error_rewrite}
\end{equation}
Here we have defined \(\epsilon_i := \frac{1}{J_\type} \sum_{j \in \JJ_\type} a_{ij} - \bar{a}_i,\) which depends only on the seller randomization. 
The second summand on the right-hand side of \cref{eq:error_rewrite} is $\Opfin(\sqrt{\sigma_{\bar a}^2/I_\type})$ by \cref{eq:single-variance-bias2}. 
For the first summand, we begin by bounding $\tilde M = \max_{i \in [I]} |\epsilon_i|$, which again depends only upon the seller randomization.
By \cref{lemma:talagrand-sum}, which shows concentration of single-randomized sums, we have 
\begin{equation}\label{eq:epsilon-bound}
     \bb{P}\{|\epsilon_i| > CMJ_\type^{-1/2}\sqrt{\log(2/\eta)}\} \le \eta.
 \end{equation} 
 By \cref{lem:subg-max}, we then find that\begin{equation}\label{eq:epsilon-max-bound}
     \bb{P}
     \left\{\tilde M > CMJ_\type^{-1/2}\sqrt{\log(2I/\eta)}\right\} 
     \le \eta.
 \end{equation}
Let 
	$\mathcal{A}_1 = \left\{\left|\frac{1}{I_\type} \sum_{i\in\II_\type} \epsilon_i \right| > C \tilde M I_\type^{-1/2}\sqrt{\log(2/\eta)}\right\}$,
and
	$\mathcal{A}_2:=\{\tilde M > CMJ_\type^{-1/2}\sqrt{\log(2I/\eta)}\}$,
for $\eta > 0$. Then,
\begin{inumerate}
    \item
    \Cref{lemma:talagrand-sum} (conditional on seller randomization) implies 
    \(
        \mathbb{P} \left\{ \mathcal{A}_1 \right\} \le \eta,
    \)
    and
    \item \cref{eq:epsilon-max-bound} implies
    \(
        \mathbb{P}\{ \mathcal{A}_2 \} \le \eta.
    \)
\end{inumerate}
By  the union bound,
\(
    \mathbb{P}(\mathcal{A}_1^c \cup \mathcal{A}_1^c) \leq \mathbb{P}(\mathcal{A}_1^c) + \mathbb{P}(\mathcal{A}_1^c) \leq 2\eta.
\)
Both bounds hold simultaneously with probability at least $1-2\eta$. When  both hold:
\ifarxiv
\begin{align*}
    \left|\frac{1}{I_\type} \sum_{i\in\II_\type} \epsilon_i\right| 
    &\leq C \tilde{M} I_\type^{-1/2}\sqrt{\log(2/\eta)} \\
    &\leq C \left(CMJ_\type^{-1/2}\sqrt{\log(2I/\eta)}\right) I_\type^{-1/2}\sqrt{\log(2/\eta)} \\
    &= C^2 M (I_\type J_\type)^{-1/2} \sqrt{\log(2I/\eta) \cdot \log(2/\eta)}
\end{align*}
\else
\[
    \left|\frac{1}{I_\type} \sum_{i\in\II_\type} \epsilon_i\right| 
    = C^2 M (I_\type J_\type)^{-1/2} \sqrt{\log(2I/\eta) \cdot \log(2/\eta)}
\]
\fi
For any fixed $\eta$, $\sqrt{\log(2/\eta)} = O(1)$ and $\sqrt{\log(2I/\eta)} = O(\sqrt{\log I})$, giving us
\[
     \frac{1}{I_\type J_\type} \sum_{i \in \II_\type} \sum_{j \in \JJ_\type} (a_{ij} - \bar{\bar{a}}) = \frac{1}{I_\type} \sum_{i\in\II_\type} \epsilon_i {+ \frac{1}{I_\type}\sum_{i \in \II_\type}\bar{a}_i - \bar{\bar{a}}} = \Opfin\left(M \cdot \sqrt{\frac{\log I}{I_\type J_\type}}  \textcolor{black}{+ \sqrt{\frac{\sigma^2_{\bar a}}{I_\type}} } \right).  
\]

\end{proof}

\begin{lemma} \label{lem:double-error-terms_var}
Under the same assumptions of \cref{lem:double-error-terms},
     \begin{equation}
        \frac{1}{I_\type } \sum_{i \in \II_\type} \left( \frac{1}{J_\type} \sum_{j \in \JJ_\type} a_{ij} - \bar{a}_i\right)^2 = \Opfin\left(\frac{M^2}{J_\type} \right) \label{eq:double-var-bound}.
    \end{equation}
\end{lemma}
\begin{proof}
We use facts about Orlicz norms, collected in \cref{defn:orlicz}. 
To prove \cref{eq:double-var-bound}, note that \cref{eq:epsilon-bound} implies $\|\epsilon_i\|_{\psi_2} = O(MJ_\type^{-1/2})$.
So $\|\epsilon_i^2\|_{\psi_1} = \|\epsilon_i\|_{\psi_2}^2 = O(M^2J_\type^{-1})$. 
Given any buyer assignment via $\II_\type$, we have $\|I_\type^{-1}\sum_{i \in \II_\type} \epsilon_i^2\|_{\psi_1} \le I_\type^{-1}\sum_{i \in \II_\type} \|\epsilon_i^2\|_{\psi_1} = O(M^2J_\type^{-1})$ by Jensen's inequality. 
Thus, conditional upon any seller assignment, the left-hand side of \eqref{eq:double-var-bound} is $\Opfin(M^2J_\type^{-1})$, so its marginal distribution is also $\Opfin(M^2J_\type^{-1})$.
\end{proof}

\subsection*{Bounding key terms}

\begin{lemma} \label{lem:variance_estimator_consistency_buyer}
Consider a sequence of SMRDs with sample sizes $I, J \uparrow \infty$ where \cref{sutva_local} and \cref{ass:regularity} hold. Then,
\begin{equation}\label{eq:sigma-b-error}
    \axisvarianceestimatescaled{\type}{\rb} = \sigma_\type^\rb + O_p\left(\sqrt{\sigma_\type^\rb[I_\type^{-1}+ J_\type^{-1}]} + J_\type^{-1}\right).
\end{equation}
\end{lemma}

\begin{proof}

We decompose $\axisvarianceestimatescaled{\type}{\rb}$ as in Lemma A.12,
\begin{align}
    \axisvarianceestimatescaled{\type}{\rb} &= \frac{1}{I_{\type}}\sum_{i\in\II_{\type}} \left(\axismeanestimate{i}{\rb}(\type) \pm \meanpopulation{\type} - \meanestimate{\type} \right)^2 
    = 
    \underbrace{\frac{1}{I_{\type}} \sum_{i \in \II_{\type}} \left[\axismeanestimate{i}{\rb}(\type) - \meanpopulation{\type}\right]^2}_{(\mathsf{a.1.1})} 
    - \underbrace{\left[\meanestimate{\type} - \meanpopulation{\type}\right]^2}_{(\mathsf{a.1.2})}. \nonumber
\end{align}
We analyze $(\mathsf{a.1.1})$ and $(\mathsf{a.1.2})$ separately.
\paragraph{Bounding $(\mathsf{a.1.1})$:} 
The term $(\mathsf{a.1.1})$ can be decomposed as 
\begin{align}
    (\mathsf{a.1.1}) &= 
    \frac{1}{I_{\type}} 
    \sum_{i \in \II_{\type}} 
        \left[
            \axismeanestimate{i}{\rb}(\type) 
            \pm \axismeanpopulation{i}{\rb}(\type)
            - \meanpopulation{\type}
        \right]^2 \nonumber 
    \\
    &
    = 
    \frac{1}{I_{\type}} \left[
        \sum_{i \in \II_{\type}}        
            \left[
                \axisvariationmeanpopulation{i}{\rb}(\type)
            \right]^2 
            + 
            2
            \sum_{i \in \II_{\type}} 
                \axisvariationmeanpopulation{i}{\rb}(\type)
            \epsilon_i^\rb
            + 
            \sum_{i \in \II_{\type}} \left(\epsilon_i^\rb(\type)\right)^2
            \right],
             \label{eq:to_expect}
\end{align}
\ifresponse
where $\epsilon_i^\rb(\type) = \axismeanestimate{i}{\rb}(\type) - \axismeanpopulation{i}{\rb}(\type)$ represents the sampling error in row $i$, 
and $\axisvariationmeanpopulation{i}{\rb}(\type)$ was defined at the beginning of the proof.
\else
where $\epsilon_i^\rb(\type) = \axismeanestimate{i}{\rb}(\type) - \axismeanpopulation{i}{\rb}(\type)$ represents the sampling error in row $i$.
\fi
By \cref{eq:single-variance-bias2} in \Cref{lem:single-randomized-redux} and \cref{eq:double-var-bound} in \Cref{lem:double-error-terms} , with $a_{ij} = y_{ij}(\type)$ and $M = C_2$, we have 
\begin{equation}\label{eq:row_var_new}
\frac{1}{I_{\type}}  \sum_{i \in \II_{\type}}        
            \left[
                \axisvariationmeanpopulation{i}{\rb}(\type)
            \right]^2  = \sigma_\type^\rb + O_p\left(\sqrt{C_2^2 \sigma_\type^\rb/I_\type}\right), \quad \frac{1}{I_{\type}} 
            \sum_{i \in \II_{\type}} \left(\epsilon_i^\rb(\type)\right)^2 = O_p(J_\type^{-1}).
\end{equation}
By Cauchy-Schwarz,
\(
	\frac{1}{I_{\type}} 
            \sum_{i \in \II_{\type}} 
            \left[
                \axisvariationmeanpopulation{i}{\rb}(\type)
            \right]
            \left(\epsilon_i^\rb\right) \le \left( \frac{1}{I_{\type}}  \sum_{i \in \II_{\type}}        
            \left[
                \axisvariationmeanpopulation{i}{\rb}(\type)
            \right]^2\right)^{1/2} 
            \left(\frac{1}{I_{\type}}  \sum_{i \in \II_{\type}} \left(\epsilon_i^\rb(\type)\right)^2\right)^{1/2},
\)
bounding the cross term in \cref{eq:to_expect}.
Substituting and absorbing lower order terms
\begin{equation}
    (\mathsf{a.1.1}) = \sigma_\type^\rb + O_p\left(\sqrt{\sigma_\type^\rb[I_\type^{-1}+ J_\type^{-1}]} + 1/J_\type\right)
    \label{eq:a11_bound}
\end{equation}

\ifarxiv
\paragraph{Bounding $(\mathsf{a.1.2})$:} Applying
\else
To bound $(\mathsf{a.1.2})$ we apply
\fi 
\cref{eq:double-bias-bound}, with $(a_{ij}) = [y_{ij}(\type)]$ and $M = C_2$, we have
\begin{align}\label{eq:mean-var-new}
    \meanestimate{\type} - \meanpopulation{\type} 
    &
    = 
    O_p\left(\sqrt{\frac{\sigma_\type^\rb}{I_\type}} + \sqrt{\frac{\log I}{I_\type J_\type }}\right).
\end{align}
Taking the square of \cref{eq:mean-var-new}, combining with $(\mathsf{a.1.2})$, suppressing the dependence upon $C_2$, removing lower-order terms and combining with \cref{eq:a11_bound}, we obtain  \cref{eq:sigma-b-error}.
\end{proof}

\begin{lemma} \label{lem:variance_estimator_consistency_seller}
Under the assumptions of \cref{lem:variance_estimator_consistency_buyer},
\begin{equation}
\label{eq:sigma-s-error}
    \axisvarianceestimatescaled{\type}{\rs} = \sigma_\type^\rs + O_p\left(\sqrt{\sigma_\type^\rs[I_\type^{-1}+ J_\type^{-1}]} + I_\type^{-1}\right).
\end{equation}

\end{lemma}
\begin{proof}
Symmetric to the above.
\end{proof}

\begin{lemma} \label{lem:interaction_variance_consistency}
Under the assumptions of \cref{lem:variance_estimator_consistency_buyer},
\begin{equation*}\label{eq:sigma-bs-error}
\axisvarianceestimate{\type}{\rb\rs} = \axisvariancepopulation{\type}{\rb\rs} + O_p\left(\sqrt{ [\sigma_\type^\rs+\sigma_\type^\rb][I_\type^{-1}+ J_\type^{-1}]}  + [I_\type^{-1}+ J_\type^{-1}] \right)
\end{equation*}
\end{lemma}

\begin{proof}
From the decomposition provided in \cref{lemma:cross}, we have:
\begin{align*}
      \axisvarianceestimate{\type}{\rb\rs}
      &=
      \overbrace{
      		 \sum_{i \in \II_{\type}}\sum_{j \in \JJ_{\type}} \frac{\left(y_{i,j}(\type) - \meanpopulation{\type}\right)^2}{{I_{\type}J_{\type}}}}
      ^{(\mathsf{b.1})} 
      - 
      \overbrace{
      \sum_{i \in \II_{\type}}\frac{ \left[ \axismeanestimate{i}{\rb}(\type) - \meanpopulation{\type}\right]^2}{{I_{\type}}}}
      ^{(\mathsf{b.2})} 
    - \overbrace{
     \sum_{j \in \JJ_{\type}} 
     \frac{
     		\left( \axismeanestimate{j}{\rs}(\type) - \meanpopulation{\type}\right)^2
	   }{J_\type}
	}
    ^{(\mathsf{b.3})} 
    + 
    \overbrace{\left(\meanestimate{\type} - \meanpopulation{\type}\right)^2}^{(\mathsf{b.4})}.
\end{align*}
We will analyze each term separately and establish their convergence properties.
\paragraph{Bounding $(\mathsf{b.1})$} 
Applying \cref{eq:double-var-bound} 
with $a_{ij} = \delta_{ij}(\type)^2$ and $M = C_2^2$, we have 
\begin{align*}
    (\mathsf{b.1}) &= \frac{1}{I_{\type}J_{\type}} \sum_{i \in \II_{\type}}\sum_{j \in \JJ_{\type}} \delta_{ij}(\type)^2 
    = \frac{1}{IJ}\sum_{i=1}^I\sum_{j=1}^J\delta_{ij}(\type)^2 + 
    O_p\left( \sqrt{\frac{\sigma_\type^\rb}{I_\type}} +
    \sqrt{\frac{\log I}{I_\type J_\type }} \right).
\end{align*}
By \cref{eq:var_dMRD_MV}, we know:
\(
    \frac{1}{IJ}\sum_{i=1}^I\sum_{j=1}^J\delta_{ij}(\type)^2 = \axisvariancepopulation{\type}{\rb} + \axisvariancepopulation{\type}{\rs} + \axisvariancepopulation{\type}{\rb\rs},
\)
hence
\begin{equation}
    (\mathsf{b.1}) = \axisvariancepopulation{\type}{\rb} + \axisvariancepopulation{\type}{\rs} + \axisvariancepopulation{\type}{\rb\rs} + 
    O_p\left(\sqrt{\frac{\sigma_\type^\rb}{I_\type}} +\sqrt{\frac{\log I}{I_\type J_\type }} \right).
\end{equation}

\paragraph{Bounding $(\mathsf{b.2})$:}
From the analysis of $(\mathsf{a.1.1})$ in \cref{eq:a11_bound}, we know that
\begin{equation*}
    (\mathsf{b.2}) = \sigma_\type^\rb + O_p\left(\sqrt{ \sigma_\type^\rb[I_\type^{-1}+ J_\type^{-1}]} + 1/J_\type\right).
\end{equation*}
\paragraph{Bounding $(\mathsf{b.3})$:}
Symmetrical to $(\mathsf{b.2})$, 
applying the same analysis to columns instead of rows,
\begin{equation*}
    (\mathsf{b.3}) = \sigma_\type^\rs + O_p\left(\sqrt{ \sigma_\type^\rs[I_\type^{-1}+ J_\type^{-1}]} + 1/I_\type\right).
\end{equation*}
\paragraph{Bounding $(\mathsf{b.4})$:}
Taking the square of \cref{eq:mean-var-new} and using $(a+b)^2 \le 2(a^2 + b^2)$ (by am-gm),
\begin{equation*}
    (\mathsf{b.4}) = \left(\meanestimate{\type} - \meanpopulation{\type}\right)^2 = O_p\left(\frac{\sigma_\type^\rb}{I_\type}+\frac{\log I}{I_\type J_\type }\right).
\end{equation*}
\paragraph{Combining $(\mathsf{b})$ terms.}
Using $\sqrt{a} + \sqrt{b} \le \sqrt{a + b}$ and removing lower-order terms,
\begin{align*}
    \axisvarianceestimate{\type}{\rb\rs} &= \axisvariancepopulation{\type}{\rb\rs} + O_p\left(\sqrt{ [\sigma_\type^\rs+\sigma_\type^\rb][I_\type^{-1}+ J_\type^{-1}]}  + [I_\type^{-1}+ J_\type^{-1}] \right). 
\end{align*}
\end{proof}

\section{Proof of Theorem \ref{thm:clt}} \label{proof:clt}
\ifarxiv
In this section we prove \Cref{thm:clt}. 
We consider an SDRD with two populations (buyers, sellers), and a binary treatment assignment at the (buyer-seller) pair level. 
A fixed proportion of buyers $p^\rb:=I_\ct/I \in(0,1)$ are assigned at random $W_i^B=1$, which makes them eligible for treatment. The remaining $I_\rc = I - I_\rt$ are assigned $W_i^B=0$. Similarly,  $p^\rs:=J_\rt/J \in (0,1)$ of sellers are assigned $W_j^\rs=1$ (i.e., are eligible), while the remaining $J_\rc = J - J_\rt$ sellers are assigned $W_j^\rs=0$. Treatment is assigned via $W_{ij} = W_i^\rb W_j^\rs$. 
\paragraph{Remarks on notation}
Recall that $[n]:=\{1,\ldots,n\}$.
Given a $k$-dimensional vector $\bm{a}$, $\|\bm{a}\|_2 = a_1^2+ \ldots + a_k^2$ denotes its 2-norm and $\|A\|_{op} = \max_{\|x\|_2=1} \|Ax\|_2$  its operator norm.
We often use $I_0$ in place of $I_{\rc}$ and $I_1$ in place of $I_{\rt}$ (symmetrically, $J_0$ for $J_{\rc}$ and $J_1$ for $J_{\rt}$) whenever it is more natural to do so.
Last, $C, C', C'', \dots$ denote absolute positive constants whose value may change from line to line.  
Under local interference (\Cref{sutva_local}), as per \Cref{lemma:local_interference}, each buyer-seller pair $(i,j)$ has only $4$ potential outcomes:
    $Y_{i,j} = Y_{i,j}(\type)$, where $\type \in \types$.
We denote with $\type_{i,j}$ the type of the pair $(i,j)$, as per \Cref{eq:types}. 
\else
\fi
\paragraph{Goal of the proof} For a fixed size of the two populations, $\bm{N} = (I, J)$ and for $\type \in \types$, 
we aim to prove joint normality of linear combinations of the random variables
\[
    \meanestimate{\type} = \meanestimate{\type, \bm{N}} = \frac{1}{N_\type}\sum_{i=1}^{I} \sum_{j=1}^{J} Y_{ij}(\type)\mathbf{1}\{\type_{i,j} = \type \},
\]
where $N_\type = \sum_{i,j} 1(\type_{ij} = \type)$. We write 
\begin{equation}
    \widehat{\bm{\tau}} =  \left[\meanestimate{\ccc}, \meanestimate{\icb}, \meanestimate{\ics}, \meanestimate{\ctt}\right]^\top\equiv
    \left[
    \hat{\tau}_{\ccc}, 
    \hat{\tau}_{\icb}, 
    \hat{\tau}_{\ics}, 
    \hat{\tau}_{\ctt}
    \right]^\top \label{eq:hat_tau}
\end{equation}
to denote the (random) vector of group averages, and $\bm{\tau}$ to denote its population counterpart,
\begin{equation}
    {\bm{\tau}} =  
    \left[
    \meanpopulation{\ccc}, 
    \meanpopulation{\icb}, 
    \meanpopulation{\ics}, 
    \meanpopulation{\ctt}
    \right]^\top\equiv
    \left[
    {\tau}({\ccc}), 
    {\tau}({\icb}), 
    {\tau}({\ics}), 
    {\tau}({\ctt})
    \right]^\top. \label{eq:tau_population}    
\end{equation}
Roughly, our proof technique is as follows:
\begin{description}[wide]
    \item[Step 1] In \Cref{sec:proof_clt_conditional} we show that under fixed sellers' assignments $W_{j}^{\rs}$, for $j \in [J]$, 
    standard results of \citet{li2017general_app, shi2022berry_app} extend to SMRDs: a ``conditional'' CLT for $\widehat{\bm{\tau}}$ holds, with the limiting distribution parameterized by the sellers' assignments.
    \item[Step 2] In \Cref{sec:concentration} we prove that when considering the random assignment of sellers, the mean of the limiting distribution in Step 1 is itself normally distributed. Meanwhile, its variance is close to a fixed, deterministic value, independent of both assignments. 
    \item[Step 3] Last, we combine these show in \Cref{sec:tying} that the marginal distribution of $\hat{\tau}(\coefvec)$ is also approximately Gaussian.
\end{description}


\subsection{Step 1: a conditional CLT} \label{sec:proof_clt_conditional}
We now show that conditional upon the seller assignments $(W_j^S)$, we can derive central limit theorems for the MRD estimators in \Cref{sec:estimation} that mirror those known for estimators in standard, single randomized experiments \citep{li2017general_app,shi2022berry_app}. 
Let $\Pi$ denote a uniform random permutation of the seller indices $[J]$
\ifarxiv
, i.e.\ a map $\Pi:[J]\to [J]$ such that $\{\Pi(1), \dots, \Pi(J)\} = [J]$
\fi. Without loss of generality, we can suppose treatment labels are generated according to $W_{j}^S = \mathbbm{1}\{\Pi^{-1}(j) > J_{0}\}$. We proceed in this section by conditioning on a particular realization $\Pi = \pi$.

For a fixed permutation $\pi$, let $\JJ_{0}^{\pi}:=\{\pi(1),\ldots, \pi(J_0)\} \subset [J]$ be the set of sellers with $W_j^{\rs}=0$ and let $\JJ_{1}^{\pi}:=\{\pi(J_0+1),\ldots, \pi(J)\} = [J] \setminus \JJ_{0}^{\pi}$ be the seller indices with $W_j^{\rs}=1$. 
Conditional upon $\Pi = \pi$, each buyer  $i$ has the following ``realizable'' potential outcomes:
\begin{itemize}
    \item $\axismeanfixedindex{i}{\JJ_{0}^{\pi}}{\rb}(\ccc) = \frac{1}{J_0} \sum_{j=1}^{J_0} Y_{i,\pi(j)}(\ccc)$
    and
    $\axismeanfixedindex{i}{\JJ_{0}^{\pi}}{\rb}(\icb) = \frac{1}{J_0} \sum_{j=1}^{J_0} Y_{i,\pi(j)}(\icb)$, which average the unit-level potential outcomes of interactions $(i,j)$ for sellers with $W_j^{\rs} = 0$;
    \item $\axismeanfixedindex{i}{\JJ_{1}^{\pi}}{\rb}(\ics) = \frac{1}{J_1} \sum_{j=J_0+1}^{J} Y_{i,\pi(j)}(\ics)$
    and
    $\axismeanfixedindex{i}{\JJ_{1}^{\pi}}{\rb}(\ctt) = \frac{1}{J_1} \sum_{j=J_0+1}^{J} Y_{i,\pi(j)}(\ctt)$, which average the unit-level potential outcomes of interactions $(i,j)$ for sellers with $W_j^{\rs} = 1$.
\end{itemize}
We can then view our SDRD as a standard randomized experiment with $I$ units, where each buyer $i$ can be thought of as having  potential outcomes corresponding to the above:
\begin{equation}
    \overline{\bm{y}}^{\rb}_{i, \pi}(0) = 
        \begin{pmatrix}
            \axismeanfixedindex{i}{\JJ_{0}^{\pi}}{\rb}(\ccc) &
            0 &
            \axismeanfixedindex{i}{\JJ_{1}^{\pi}}{\rb}(\ics) &
            0
        \end{pmatrix}^\top, 
    \;
    \overline{\bm{y}}^{\rb}_{i, \pi}(1) = 
        \begin{pmatrix}
            0 &
            \axismeanfixedindex{i}{\JJ_{0}^{\pi}}{\rb}(\icb) &
            0 &
            \axismeanfixedindex{i}{\JJ_{1}^{\pi}}{\rb}(\ctt)       
        \end{pmatrix}^\top. \label{eq:mean_unit}
\end{equation}
Notice that we have two potential outcomes for each buyer (since each buyer can either be assigned $W_i^{\rb} = 0$ or $W_i^{\rb} = 1$), and these potential outcomes are vectors in $\mmr^4$ (there is one potential outcome for each type $\type \in \types$). 
The population averages of these vectors are defined as:
\begin{align*}
    \meanonefixedindexvec{\bullet}{\pi}(0) = 
        \begin{pmatrix}
             \frac{1}{I}  \sum_{i=1}^I \axismeanfixedindex{i}{\JJ_{0}^{\pi}}{\rb}(\ccc) &
            0  &
            \frac{1}{I} \sum_{i=1}^I \axismeanfixedindex{i}{\JJ_{1}^{\pi}}{\rb}(\ics)   &
            0
        \end{pmatrix}^\top
        \eqqcolon 
        \begin{pmatrix}
            \meanonefixedindex{\bullet}{\JJ_{0}^{\pi}}(\ccc) &
            0 &
            \meanonefixedindex{\bullet}{\JJ_{1}^{\pi}}(\ics)&
            0
        \end{pmatrix}^\top\!\! ;
        \\
        \label{eq:mean_one_axis} 
    \meanonefixedindexvec{\bullet}{\pi}(1) = 
        \begin{pmatrix}
            0 &
             \frac{1}{I}\sum_{i=1}^I \axismeanfixedindex{i}{\JJ_{0}^{\pi}}{\rb}(\icb) &
            0 &
            \frac{1}{I} \sum_{i=1}^I \axismeanfixedindex{i}{\JJ_{1}^{\pi}}{\rb}(\ctt)   
        \end{pmatrix}^\top
        \eqqcolon
        \begin{pmatrix}
            0 &
            \meanonefixedindex{\bullet}{\JJ_{0}^{\pi}}(\icb) &
            0 &
            \meanonefixedindex{\bullet}{\JJ_{1}^{\pi}}(\ctt)
        \end{pmatrix}^\top. 
\end{align*}
We further define the difference between the outcome at the unit level (\cref{eq:mean_unit}) and the mean across all units (previous display) at the buyer-level treatment $q = 0,1$:
\begin{equation}
    \dot{\bm{\overline{y}}}^{\rb}_{i, \pi}(q) = \overline{\bm{y}}^{\rb}_{i, \pi}(q) - \meanonefixedindexvec{\bullet}{\pi}(q) \in \mmr^4.
\end{equation}
Following \citet{li2017general_app} we define the \buyer{}-level vector of treatment effects $\bm{\tau}_i^{\pi}$
\[
    \bm{\tau}_i^{\pi} = \left[ \overline{\bm{y}}^{\rb}_{i, \pi}(0)+ \overline{\bm{y}}^{\rb}_{i, \pi}(1) \right] = \begin{pmatrix}
            \axismeanfixedindex{i}{\JJ_{0}^{\pi}}{\rb}(\ccc) & 
            \axismeanfixedindex{i}{\JJ_{0}^{\pi}}{\rb}(\icb) &
            \axismeanfixedindex{i}{\JJ_{1}^{\pi}}{\rb}(\ics) &
            \axismeanfixedindex{i}{\JJ_{1}^{\pi}}{\rb}(\ctt)       
        \end{pmatrix}^\top.
\]
In turn, define the $\pi$-conditional population average across all \buyers{}:
\begin{equation}
    \bm{\tau}^{\pi} = \frac{1}{I} \sum_{i=1}^{I} \bm{\tau}_i^{\pi} = \meanonefixedindexvec{\bullet}{\pi}(0) + \meanonefixedindexvec{\bullet}{\pi}(1) = \begin{pmatrix} 
            \meanonefixedindex{\bullet}{\JJ_{0}^{\pi}}(\ccc) &
            \meanonefixedindex{\bullet}{\JJ_{0}^{\pi}}(\icb) &
            \meanonefixedindex{\bullet}{\JJ_{1}^{\pi}}(\ics) &
            \meanonefixedindex{\bullet}{\JJ_{1}^{\pi}}(\ctt)       
        \end{pmatrix}^\top. \label{eq:tau_population_cond}
\end{equation}
We define the centered counterpart of $\bm{\tau}_i^{\pi}$:
\begin{align}
    \dot{\bm{\tau}}_i^{\pi} = {\bm{\tau}}_i^{\pi} - {\bm{\tau}}^{\pi}
    = \begin{psmallmatrix}
            \axismeanfixedindex{i}{\JJ_{0}^{\pi}}{\rb}(\ccc) - \meanonefixedindex{\bullet}{\JJ_{0}^{\pi}}(\ccc) \\ 
            \axismeanfixedindex{i}{\JJ_{0}^{\pi}}{\rb}(\icb) - \meanonefixedindex{\bullet}{\JJ_{0}^{\pi}}(\icb)\\
            \axismeanfixedindex{i}{\JJ_{1}^{\pi}}{\rb}(\ics) - \meanonefixedindex{\bullet}{\JJ_{1}^{\pi}}(\ics)\\
            \axismeanfixedindex{i}{\JJ_{1}^{\pi}}{\rb}(\ctt) - \meanonefixedindex{\bullet}{\JJ_{1}^{\pi}}(\ctt) 
        \end{psmallmatrix}. 
        \label{eq:dot_tau_centered}
\end{align}
Given a (random) assignment of buyers, $W_i^\rb\in\{0,1\}$ for $i=1,\ldots,I$, the natural sample counterpart of $\bm{\tau}^{\pi}$ is $\widehat{\bm{\tau}}^{\pi}$, where we replace each coordinate with the sample mean across units $i$ for which the type was observed, $\estimatefixedindex{\bullet}{\JJ_{\type}^{\pi}}(\type)$:
\begin{equation}
    \widehat{\bm{\tau}}^{\pi} = \begin{psmallmatrix}
        \frac{1}{I_0} \sum_{i:W_i^{\rb}=0}^{I_0} \axismeanfixedindex{i}{\JJ_{0}^{\pi}}{\rb}(\ccc) \\
        \frac{1}{I_1} \sum_{i:W_i^{\rb}=1}^{I_1} \axismeanfixedindex{i}{\JJ_{0}^{\pi}}{\rb}(\icb) \\
        \frac{1}{I_0} \sum_{i:W_i^{\rb}=0}^{I_0} \axismeanfixedindex{i}{\JJ_{1}^{\pi}}{\rb}(\ics) \\
        \frac{1}{I_1} \sum_{i:W_i^{\rb}=1}^{I_1} \axismeanfixedindex{i}{\JJ_{1}^{\pi}}{\rb}(\ctt)
    \end{psmallmatrix} = 
    \begin{psmallmatrix}
        \estimatefixedindex{\bullet}{\JJ_{0}^{\pi}}(\ccc) \\
        \estimatefixedindex{\bullet}{\JJ_{0}^{\pi}}(\icb) \\
        \estimatefixedindex{\bullet}{\JJ_{1}^{\pi}}(\ics) \\
        \estimatefixedindex{\bullet}{\JJ_{1}^{\pi}}(\ctt)
    \end{psmallmatrix}. \label{eq:hat_tau_pi}
\end{equation}
\ifarxiv
The randomness in $\widehat{\bm{\tau}}^{\pi}$ only stems from the assignment of the $I$ buyers via $W_i^{\rb}$. 
\else
\fi
With this characterization in place, we can extend \citet[Theorem 3]{li2017general_app} to our estimator $\widehat{\bm{\tau}}^{\pi}$. 
First, define for $q, r \in \{0,1\}$ the finite population cross-covariance
\begin{align*}
    S^{\pi}_{q,r} &:= \frac{1}{I-1} \sum_{i=1}^I \left\{ \overline{\bm{y}}^{\rb}_{i, \pi}(q) - \meanonefixedindexvec{\bullet}{\pi}(q) \right\} \left\{ \overline{\bm{y}}^{\rb}_{i, \pi}(r) - \meanonefixedindexvec{\bullet}{\pi}(r) \right\}^\top 
    = \frac{1}{I-1} \sum_{i=1}^I \dot{\overline{\bm{y}}}_{i,\pi}^{\rb}(q) \dot{\overline{\bm{y}}}_{i,\pi}^{\rb}(r)^\top,
\end{align*}
and the finite population covariance of the individual effects
\begin{align}
    S_{\widehat{\bm{\tau}}^{\pi} }^2 := \frac{1}{I-1} \sum_{i=1}^I \left\{ \bm{\tau}_i^{\pi} - \bm{\tau}^{\pi} \right\} \left\{ \bm{\tau}_i^{\pi} - \bm{\tau}^{\pi} \right\}^\top. \label{eq:covariance_conditional}
\end{align}

\begin{theorem}[Theorem 3 in  \citet{li2017general_app}] \label{thm:li3} 
    Consider an SDRD under the local interference assumption. Conditionally on the sellers' assignments via $\pi : [J] \to [J]$, the experiment is equivalent to a single-randomized experiment with $I$ units and $Q=2$ treatments indexed by $W_i^{\rb} \in \{0,1\}$ and potential outcomes $\overline{\bm{y}}^{\rb}_{i, \pi}(W_i^{\rb}) \in \mmr^4$.
    The estimator $\widehat{\bm{\tau}}^{\pi}$ is unbiased for $ {\bm{\tau}}^{\pi}$: 
    \[
        \mme\left[\widehat{\bm{\tau}}^{\pi}\right] = {\bm{\tau}}^{\pi}.
    \]
    The covariance of $\widehat{\bm{\tau}}^{\pi}$ is given by
    \begin{equation*}
       V^\pi:= \mmc\left\{  \widehat{\bm{\tau}}^{\pi} \right\} = \sum_{q=1}^Q \frac{1}{I_q} S_{q,q}^{\pi} - \frac{1}{I} S_{{\bm{\tau}}^{\pi}}^{2} ,
    \end{equation*}
    where
    {
    \[
        \frac{S_{0,0}^{\pi}}{I_0} + \frac{S_{1,1}^{\pi}}{I_1} = \sum_{i=1}^I 
        \begin{psmallmatrix}
                \frac{ \left( \t(\ccc) \right)^2}{(I-1) I_0}  
              & 0 
              &   \frac{ \left( \t(\ccc) \t(\ics)  \right)}{(I-1) I_0}
              & 
              0 
              \\
              0 & 
              \frac{ \left( \t(\icb) \right)^2}{(I-1) I_1} 
              & 0 
              & \frac{\left( \t(\icb) \t(\ctt)  \right)}{(I-1) I_1} \\
              \frac{ \left( \t(\ccc) \t(\ics)  \right)}{(I-1) I_0} 
              & 0 
              & \frac{ \left( \t(\ics) \right)^2}{(I-1) I_0}  
              & 
              0 
              \\
              0 & 
            \frac{ \left( \t(\icb) \t(\ctt)  \right)}{(I-1) I_1} 
              & 0 
              & 
            \frac{ \left( \t(\ctt) \right)^2}{(I-1) I_1}
        \end{psmallmatrix},
    \]
    and
\[
    \frac{S_{\bm{\tau}^{\pi}}^2}{I} = \sum_{i=1}^I
    \begin{psmallmatrix}
            \frac{ \left( \t(\ccc) \right)^2}{(I-1) I}  
          & \frac{\left( \t(\ccc) \t(\icb)  \right)}{(I-1) I}
          &   \frac{ \left( \t(\ccc) \t(\ics)  \right)}{(I-1) I}
          & 
          \frac{\left( \t(\ccc) \t(\ctt)  \right)}{(I-1) I} 
          \\
          \frac{ \left( \t(\ccc) \t(\icb)  \right)}{(I-1) I} & 
          \frac{ \left( \t(\icb) \right)^2}{(I-1) I} 
          & \frac{\sum_{i=1}^I \left( \t(\icb) \t(\ics)  \right)}{(I-1) I} 
          & \frac{\left( \t(\icb) \t(\ctt)  \right)}{(I-1) I} \\
          \frac{ \left( \t(\ccc) \t(\ics)  \right)}{(I-1) I} 
          & \frac{ \left( \t(\icb) \t(\ics)  \right)}{(I-1) I} 
          &   \frac{ \left( \t(\ics) \right)^2}{(I-1) I}  
          & 
          \frac{\left( \t(\ics) \t(\ctt)  \right)}{(I-1) I} 
          \\
          \frac{ \left( \t(\ccc) \t(\ctt)  \right)}{(I-1) I} & 
        \frac{ \left( \t(\icb) \t(\ctt)  \right)}{(I-1) I} 
          & \frac{ \left( \t(\ics) \t(\ctt)  \right)}{(I-1) I} 
          & 
        \frac{ \left( \t(\ctt) \right)^2}{(I-1) I}
    \end{psmallmatrix}.
\]}
Hence,
\begin{equation}
{
    \mmc\left\{  \widehat{\bm{\tau}}^{\pi} \right\} = 
    \sum_{i=1}^I
    \begin{psmallmatrix}
            \frac{ \frac{I_1}{I_0} \left( \t(\ccc) \right)^2}{(I-1) I}  
          & - \frac{\left( \t(\ccc) \t(\icb)  \right)}{(I-1) I}
          &   \frac{ \frac{I_1}{I_0} \left( \t(\ccc) \t(\ics)  \right)}{(I-1) I}
          & 
          - \frac{\left( \t(\ccc) \t(\ctt)  \right)}{(I-1) I} 
          \\
          - \frac{ \left( \t(\ccc) \t(\icb)  \right)}{(I-1) I} & 
          \frac{ \frac{I_0}{I_1} \left( \t(\icb) \right)^2}{(I-1) I} 
          & - \frac{\left( \t(\icb) \t(\ics)  \right)}{(I-1) I} 
          &  \frac{ \frac{I_0}{I_1} \left( \t(\icb) \t(\ctt)  \right)}{(I-1) I} \\
          \frac{\frac{I_1}{I_0} \left( \t(\ccc) \t(\ics)  \right)}{(I-1) I} 
          & - \frac{ \left( \t(\icb) \t(\ics)  \right)}{(I-1) I} 
          &   \frac{ \frac{I_1}{I_0}\left( \t(\ics) \right)^2}{(I-1) I}  
          & 
          - \frac{\left( \t(\ics) \t(\ctt)  \right)}{(I-1) I} 
          \\
          - \frac{ \left( \t(\ccc) \t(\ctt)  \right)}{(I-1) I} & 
        \frac{\frac{I_0}{I_1} \left( \t(\icb) \t(\ctt)  \right)}{(I-1) I} 
          & - \frac{ \left( \t(\ics) \t(\ctt)  \right)}{(I-1) I} 
          & 
        \frac{\frac{I_0}{I_1} \left( \t(\ctt) \right)^2}{(I-1) I}
    \end{psmallmatrix}.} \label{eq:cov-pi-exact}
\end{equation}
\begin{proof}
    The proof is given in Theorem 3 in  \citet{li2017general_app}.
\end{proof}
\end{theorem}

Theorem 4 in \citet{li2017general_app} provides a  CLT which relies on the existence of an asymptotic limit for $V^\pi$.
Since $V^\pi$ is random in our context and depends upon the finite population, we instead derive a Berry-Esseen type result following \citet{shi2022berry_app}. In what follows,  $(V^{\pi})^{\frac{1}{2}}$ is defined as the symmetric square root of $V^\pi$, and $(V^{\pi})^{-\frac{1}{2}}$ is its pseudoinverse. In particular, we need not assume that $V^\pi$ has full rank. 

\begin{theorem}[Theorem S4 in \citet{shi2022berry_app}] \label{shiS4}
Let $V^{\pi} := \mmc\{\widehat{\bm{\tau}}^{\pi}\}$ as characterized in \eqref{eq:cov-pi-exact}. Then, there exists a universal constant $C$ such that for all $\bm{\alpha} \in \mathbb{R}^4$ with $\|\bm{\alpha}\|_2 = 1$,
\begin{equation}
    \sup_{t \in \mathbb{R}} \left|\mathbb{P}\left\{\bm{\alpha}^\top (V^{\pi})^{-\frac{1}{2}}(\widehat{\bm{\tau}}^{\pi} -  \bm{\tau}^{\pi})> t  \right\} - \Phi(t)\right| \le C\max_{i \in [I]} \max_{q \in \{0,1\}} \frac{|\bm{\alpha}^\top (V^{\pi})^{-\frac{1}{2}} \dot{\bm{\overline{y}}}^{\rb}_{i, \pi}(q)|}{I_q}. \label{eq:shi-ding-beb}
\end{equation}
\begin{proof}
    See Theorem S4 of \citet{shi2022berry_app}.
\end{proof}
\end{theorem}
\ifarxiv
\Cref{shiS4} provides a Berry-Esseen bound for $\widehat{\bm{\tau}}^{\pi}$, where the upper bound depends on both $\dot{\bm{\overline{y}}}^{\rb}_{i, \pi}$ and $V^{\pi}$ (see the right-hand side of \cref{eq:shi-ding-beb}). 
Now, using \cref{eq:assumption_b}(b) of bounded potential outcomes, we state a slightly different form \Cref{shiS4} where the bound does not depend on $\dot{\bm{\overline{y}}}^{\rb}_{i, \pi}$. 
We use the notation introduced in \Cref{eq:linear_estimator}, so that 	$\hat{\tau}^{\pi}(\coefvec) = \sum_{\type \in \types} \hat{\tau}^{\pi}_\type \beta_\type$.
\else
\fi
\begin{lemma} \label{lem:clt-bounded2}
Under the same setting of \Cref{shiS4} and further assuming bounded potential outcomes as per \cref{eq:assumption_b}(b):
\begin{equation}
		\sup_{t \in \mathbb{R}} 
		\left|\mathbb{P}\left\{\frac{\widehat{{\tau}}^{\pi}(\coefvec) -  {\tau}^{\pi}(\coefvec)}{\sqrt{\mmv\{\hat{\tau}^{\pi}(\coefvec)\}}}> t  \right\} - \Phi(t)\right| 
		\le 
		C \frac{C_2}{\min\{I_0, I_1\}} 
	\frac{\|\coefvec\|_2}{\sqrt{\mmv\{\hat{\tau}^{\pi}(\coefvec)\}}}. \label{eq:shi-ding-beb3}
\end{equation}
\begin{proof}
We first consider the case in which $V^\pi$ is invertible.
\ifarxiv
This case contains the main ideas and is technically simpler than the general case. 
\else
\fi

\noindent\textit{Invertible case.} 
Let 
\(
    \bm{\alpha} = (V^{\pi})^{\frac{1}{2}}\coefvec/{\|(V^{\pi})^{\frac{1}{2}}\coefvec\|_2},
\)
so that $\|\bm{\alpha}\|_2 = 1$ by construction. 
Plugging this choice of $\bm{\alpha}$ in \Cref{eq:shi-ding-beb}, \Cref{shiS4}, 
\[
	\sup_{t \in \mathbb{R}} 
            \left|\mathbb{P}\left\{\frac{\coefvec^\top (\widehat{\bm{\tau}}^{\pi} -  \bm{\tau}^{\pi})}{\|(V^{\pi})^{\frac{1}{2}}\coefvec\|_2}> t  \right\} - \Phi(t)\right| 
        \le \frac{C}{\|(V^{\pi})^{\frac{1}{2}}\coefvec\|_2}
            \max_{i \in [I]} \max_{q \in \{0,1\}} 
            \frac{|\coefvec^\top\dot{\bm{\overline{y}}}^{\rb}_{i, \pi}(q)|}{I_q}.
\]
Applying the Cauchy-Schwarz inequality on the right hand side yields
\[
	\sup_{t \in \mathbb{R}} 
            \left|\mathbb{P}\left\{\frac{\coefvec^\top (\widehat{\bm{\tau}}^{\pi} -  \bm{\tau}^{\pi})}{\|(V^{\pi})^{\frac{1}{2}}\coefvec\|_2}> t  \right\} - \Phi(t)\right| 
        \le 
            \frac{C}{\|(V^{\pi})^{\frac{1}{2}}\coefvec\|_2} 
            \|\coefvec \|_2 \max_{i \in [I]} \max_{q \in \{0,1\}} 
                \frac{\|\dot{\bm{\overline{y}}}^{\rb}_{i, \pi}(q)\|_2}{I_q}.
\]
Last, since by \cref{eq:assumption_b} (b), each entry of $\dot{\bm{\overline{y}}}^{\rb}_{i, \pi}(q)$ has absolute value at most $2C_2$, and since there are exactly 2 non-zero entries in each $\dot{\bm{\overline{y}}}^{\rb}_{i, \pi}(q)$ (cf \Cref{eq:mean_unit}), we conclude that $\max_i \max_q \| \dot{\bm{\overline{y}}}^{\rb}_{i, \pi}(q)\|_2 \le \sqrt{2 \times (2C_2)^2} = \sqrt{8} C_2$. Plugging this in, and noting that $\mmv\{\hat{\tau}^{\pi}(\coefvec)\} = \coefvec^\top V^\pi \coefvec$, so that $\sqrt{\mmv\{\hat{\tau}^{\pi}(\coefvec)\}} = \| (V^\pi)^{1/2} \coefvec\|_2$ yields the thesis:
\begin{align*}
		\sup_{t \in \mathbb{R}} \left|\mathbb{P}\left\{\frac{\widehat{{\tau}}^{\pi}(\coefvec) -  {\tau}^{\pi}(\coefvec)}{\sqrt{\mmv\{\hat{\tau}^{\pi}(\coefvec)\}}}> t  \right\} - \Phi(t)\right| &=  \sup_{t \in \mathbb{R}} \left|\mathbb{P}\left\{\frac{\coefvec^\top (\widehat{\bm{\tau}}^{\pi} -  \bm{\tau}^{\pi})}{\|(V^{\pi})^{\frac{1}{2}}\coefvec\|_2}> t  \right\} - \Phi(t)\right| 
\\
	&\le \frac{\sqrt{8} C C_2}{\min\{I_0, I_1\}} 
	\frac{\|\coefvec\|_2}{\sqrt{\mmv\{\hat{\tau}^{\pi}(\coefvec)\}}}.
\end{align*}

\noindent\textit{Non-invertible case.} In case $V^\pi$ is not invertible, \cref{eq:shi-ding-beb} in \Cref{shiS4} instead gives
\[
    \sup_{t \in \mathbb{R}} \left|\mathbb{P}\left\{\frac{\coefvec^\top (\widehat{\bm{\tau}}^{\pi} -  \bm{\tau}^{\pi})}{\|(V^{\pi})^{\frac{1}{2}}\coefvec\|_2}> t  \right\} - \Phi(t)\right| 
    \le \frac{C}{\|(V^{\pi})^{\frac{1}{2}}\coefvec\|_2}\max_{i \in [I]} \max_{q \in \{0,1\}}
    \frac{|\coefvec^\top (V^{\pi})^{\frac{1}{2}}(V^{\pi})^{-\frac{1}{2}}\dot{\bm{\overline{y}}}^{\rb}_{i, \pi}(q)|}{I_q}. 
\]
Now we use Cauchy-Schwarz and the operator norm inequality to bound the righthand side,
\begin{align*}
    |\coefvec^\top(V^{\pi})^{\frac{1}{2}}(V^{\pi})^{-\frac{1}{2}}\dot{\bm{\overline{y}}}^{\rb}_{i, \pi}(q) | 
    &\le \|\coefvec^\top\|_2\|(V^{\pi})^{\frac{1}{2}}(V^{\pi})^{-\frac{1}{2}}\dot{\bm{\overline{y}}}^{\rb}_{i, \pi}(q)\|_2 \\
    &\le \|\coefvec^\top\|_2\|(V^{\pi})^{\frac{1}{2}}(V^{\pi})^{-\frac{1}{2}}\|_{op}\|\dot{\bm{\overline{y}}}^{\rb}_{i, \pi}(q)\|_2 \\
    &\le \|\coefvec^\top\|_2\|\dot{\bm{\overline{y}}}^{\rb}_{i, \pi}(q)\|_2,
\end{align*}
where in the last step we use the fact that $\|(V^{\pi})^{\frac{1}{2}}(V^{\pi})^{-\frac{1}{2}}\|_{op} \le 1$. Thus,
\[
	\sup_{t \in \mathbb{R}} \left|\mathbb{P}\left\{\frac{\coefvec^\top (\widehat{\bm{\tau}}^{\pi} -  \bm{\tau}^{\pi})}{\|(V^{\pi})^{\frac{1}{2}}\coefvec\|_2}> t  \right\} - \Phi(t)\right| \le \frac{C}{\|(V^{\pi})^{\frac{1}{2}}\coefvec\|_2} \|\coefvec^\top \|_2 \max_{i \in [I]} \max_{q \in \{0,1\}} \frac{
	 \|\dot{\bm{\overline{y}}}^{\rb}_{i, \pi}(q)\|_2
	}{I_q}.
\]
The proof then proceeds as in the invertible case.
\end{proof}
\end{lemma}
\ifarxiv
This concludes the first section.
\else
\fi

\subsection{Analysis of conditional mean and covariance} \label{sec:concentration}

As before, let $\Pi: [J] \to [J]$ denote a permutation chosen uniformly at random. In this section we characterize the distribution of the $\Pi$-conditional mean vector $\bm{\tau}^\Pi$, and the $\Pi$-conditional covariance matrix $V^\Pi:=\mmc(\hat{\bm{\tau}}^\Pi)$ introduced in \Cref{sec:proof_clt_conditional}. This allows us to transfer the results \Cref{sec:proof_clt_conditional}
---which depend on the particular seller assignment $\pi$---to the general case of random seller assignment $\Pi$.

We start by recalling standard results on concentration of random permutations in \Cref{sec:perm-concentration}. 
We use these results to characterize the conditional mean and covariance. We characterize the mean $\mme_{\Pi}\left[\bm{\tau}^{\Pi}\right]$ in \Cref{sec:exp_mean} 
\ifarxiv 
and prove concentration of $\bm{\tau}^{\Pi}$ around $\mme_{\Pi}\left[\bm{\tau}^{\Pi}\right]$ in \Cref{sec:conc_mean}
\fi. We then show that $\bm{\tau}^{\Pi}$ is approximately normal in \Cref{sec:conditional-mean-clt}.
\ifarxiv 
We characterize  $\mme_\Pi[V^\Pi]$ in \Cref{sec:exp_cov} and 
\else 
We
\fi 
show that $V^\Pi$ concentrates around $\mme_\Pi[V^\Pi]$ in \Cref{sec:conc_cov}. Finally in \Cref{sec:simplifying-conditional-clt}, we use this concentration to express the conditional CLT (\cref{lem:clt-bounded2}) in a more convenient form. 

\subsubsection{Useful results on concentration for random permutations}\label{sec:perm-concentration}

We first provide some notation. Let $\PP^J$ be the set of permutations of $[J]$. Given two permutations $\pi_1, \pi_2 \in \PP^J$, let $\delta(\pi_1, \pi_2)$ be their convex distance:
\begin{equation}
    \delta(\pi_1, \pi_2) = 
    \sup_{\|a\|_2=1} 
    \sum_{j=1}^J |a_j| \mathbbm{1}\left\{\pi_1(j) \neq \pi_2(j)\right\}. \label{hamming}
\end{equation}
Moreover, given a set $S \subset \PP^J$ and a permutation $\pi \in \PP^J$, with some slight abuse of notation, we let 
\(
    \d(\pi,S) = \inf_{s \in S}\d(\pi,s),
\) 
i.e.\ the distance of $\pi$ to $S \subset \PP^J$ is the distance to the nearest point in $S$.
To establish concentration of $\bm{\tau}^{\Pi}$ around $\mme_{\Pi}\left[\bm{\tau}^{\Pi}\right]$ (and similarly that the covariance $V^\Pi$ concentrates around $\mme_\Pi[V^\Pi]$), we will use an isoperimetric inequality for uniform random permutations, along with a well-known corollary.
In particular, we will reduce the problem of establishing concentration for the conditional mean and variance to that of establishing concentration for suitable $L$-Lipschitz continuous functions of $\Pi$.
Towards that goal, in what follows we let $X: \PP^J \to \mathbb{R}$ denote an $L$-Lipschitz continuous function with respect to the distance $\delta$ defined in \cref{hamming}: there exists some $L>0$ for which $\forall \;\pi_1, \pi_2 \in \PP^J$
\begin{equation}
    |X(\pi_1) - X(\pi_2)| \le L\delta(\pi_1,\pi_2). \label{eq:lipschitz} 
\end{equation}

\begin{lemma}[\citet{talagrand1995concentration}, Theorem 5.1] \label{lem:talagrand-convex-distance}
Let $\Pi: [J] \to [J]$ be a permutation chosen uniformly at random in $\PP^J$. Then for a set $S \subseteq \PP^J$,
    \[
      \bb{P}(\Pi \in S) \mme\left[\beef\exp\left\{\frac{\d(\Pi, S)^2}{16} \right\}\right] \le 1, \label{symmetric-isoperimetric}
    \]
    where we recall that the distance to the set $S$ is defined as $\d(\pi,S) = \inf_{s \in S}\d(\pi,s)$.
  \begin{proof}
      See \citet[Theorem 5.1]{talagrand1995concentration}.
  \end{proof}
  \end{lemma} 

\Cref{lem:talagrand-convex-distance} has the following well-known corollary. 

  \begin{corollary}[Concentration for random permutations] \label{symmetric-concentration}
      Suppose that $X: \PP^J \to \bb{R}$ is $L$-Lispchitz continuous as per \Cref{eq:lipschitz}.
      Let $\Pi \in \PP^J$ be chosen uniformly at random. 
      Then, for $t > 0$,
    \[
            \bb{P}\{ | X(\Pi) - \mme [X(\Pi)] | >  8  L t \} \le 2 e^{-t^2/8}.
    \]
  \end{corollary}

\begin{proof}
\ifarxiv
Let $\nu \in \mathbb{R}$ be the median of $X(\Pi)$ when $\Pi \sim  \Unif(\PP^J)$, i.e. 
\[
    \nu = \{\inf z \in \mathbb{R} \;:\; \bb{P}[X(\Pi) \le z] \ge 1/2 \}.
\]
Let $S \coloneqq \Set[\pi \in \PP^J]{X(\pi) \le \nu}$.
By Markov's inequality, \Cref{symmetric-isoperimetric}, and the fact that $\mathbb{P}(\Pi \in S) \ge 1/2$,
\begin{align}
    \bb{P}\left( L\d(\Pi, S) \ge s\right) 
    &= \bb{P}\left(\exp\left\{\frac {\d(\Pi, S)^2} {16} \right\} \ge \exp\left\{\f{s^2 }{16 {L}^2}\right\}\right) \nonumber \\
    &\le \bb{E}\left[\beef\exp\left\{\f{\d(\Pi, S)^2}{16} \right\}\right]\exp\left\{-\f{s^2}{16 {L}^2}\right\} \nonumber \\
    &\le 2 \exp\left\{-\f{s^2}{16 {L}^2}\right\}. \label{eq:tail_lipschitz}
\end{align}
${L}$-Lipschitz continuity of $X$ with respect to $\d$ implies $|X(\Pi)-\nu | \le  L\d(\Pi,S)$.
Using \Cref{eq:tail_lipschitz} we then can bound the deviations of $X(\Pi)$ from its median:
\begin{equation}
    \bb{P}(X(\Pi) - \nu \ge s) \le \bb{P}({L}\d(\Pi, S) \ge s)  \le 2 \exp\left\{-\f{s^2}{16 {L}^2}\right\},
    \label{eq:tail_proba}
\end{equation}
and symmetrically,
\begin{equation}
    \bb{P}(X(\Pi) - \nu \le -s) \le \bb{P}( - {L}\delta(\Pi, S) \le -s) \le  2 \exp\left\{-\f{s^2}{16 {L}^2}\right\}. \label{eq:tail_proba_right}
\end{equation}
Finally, we transfer this to concentration around the mean of $X(\Pi)$. 
To avoid confusion, and with an exception to our general notation, let $\boldsymbol{\pi} \approx 3.14$ denote the universal constant.
\begin{align}
  \mme[X(\Pi) - \nu] 
  & \le \mme \left[
    \left(X(\Pi)-\nu \right)
    {1}\left\{X(\Pi) > \nu \right\}
    \right] \nonumber 
    = \int_0^\infty \bb{P}\left( X(\Pi) - \nu > t\right)\mathrm{d}t \nonumber \\
  &\le \int_0^\infty 2e^{-t^2/(16 {L}^2)} \mathrm{d}t = \sqrt{16 \boldsymbol{\pi} }  L , \label{eq:median_bound}
\end{align}
where in the first equality we have used the fact that $\left(X(\Pi)-\nu \right) {1}\left\{X(\Pi) > \nu \right\}\ge 0$ is a non-negative random variable (for which the tail probability formula of its expected value holds), and in the last inequality we have used \Cref{eq:tail_proba}; the integral is computed by noting it coincides with that of a scaled Gaussian density. Symmetrically, $\bb{E}[\nu - X(\Pi)] \le \sqrt{16\boldsymbol{\pi} }  L$. 
The two combined yield an upper and lower bound on the mean $\mme[X(\Pi)]$ in terms of the median and the Lipschitz constant:
\begin{equation}
    \mme[X(\Pi)] \le \nu + \sqrt{16 \boldsymbol{\pi} }  L 
    \quad \text{and} \quad 
    \mme[X(\Pi)] \ge \nu - \sqrt{16\boldsymbol{\pi} }  L. \label{eq:mean_median}
\end{equation}

Using the lower bound on $\mme[X(\Pi)]$ in \Cref{eq:mean_median} we obtain
\begin{align}
    \bb{P}\{X(\Pi) - \mme [X(\Pi)] > t  \} &\le \bb{P}\{X(\Pi) - (\nu - \sqrt{16 \boldsymbol{\pi} }  L) > t \}  
    \\ &= \bb{P}\{X(\Pi) - \nu > t - \sqrt{16 \boldsymbol{\pi} }  L \}, \nonumber
    \intertext{and now applying \Cref{eq:tail_proba_right}}
    \bb{P}\{X(\Pi) - \mme [X(\Pi)] > t + \sqrt{16 \boldsymbol{\pi} }  L \} &\le 2 \exp\left\{-\f{t^2}{16  {L}^2} \right\}. \label{eq:conc_right}
\end{align}
Symmetrically, we use the upper bound on $\mme[X(\Pi)]$ in \Cref{eq:mean_median} to obtain
\begin{equation}
    \bb{P}\{X(\Pi) - \mme [X(\Pi)] < - t - \sqrt{16 \boldsymbol{\pi} }  L \} \le  2 \exp\left\{-\f{t^2}{16  {L}^2}\right\}.\label{eq:conc_left}
\end{equation}
Hence, combining \Cref{eq:conc_right,eq:conc_left} via a union bound and choosing $t = \sqrt{16\boldsymbol{\pi}}Lz$,
\[
    \bb{P}\{ | X(\Pi) - \mme [X(\Pi)] | >  (z + 1)\sqrt{16 \boldsymbol{\pi} } L \} \le 4 e^{-\boldsymbol{\pi}z^2} \le 4e^{-z^2}.
\]
Finally, note that for any $u = z+1 \ge 0$, hence for $u \ge 2$, we may rewrite this as 
\[
    \bb{P}\{ | X(\Pi) - \mme [X(\Pi)] | >  u\sqrt{16 \boldsymbol{\pi} } L \} \le 4 e^{-(u-1)^2} \le 4e^{-u^2/4},
\]
since $u^2/(u-1)^2 \le 4$ for $u \ge 2$.
Meanwhile for $0 < u \le 2$ the bound is larger than $1$, hence it holds trivially. Similarly, we simplify $4e^{-u^2/4} \vee 1 \le 2e^{-u^2/8} \vee 1$ and note that $\sqrt{16\boldsymbol{\pi}} \le 8$. 
\else
See \citet{masoero2024multiple}.
\fi 
\end{proof}

Finally, we apply the result to our context. The following lemma allows us to show concentration of sums of potential outcomes under simple random sampling. 

\begin{lemma}\label{lemma:talagrand-sum}
For some $M >0$, let $\boldb=[b_1, \ldots, b_J]^\top \in [-M, M]^J$ be a vector of scalars, and let $\JJ_\type$ be one of the sets $\{\Pi(1), \Pi(2), \ldots, \Pi(J_0)\}$ (if $\type \in \{\ccc, \icb\}$) or $\{\Pi(J_0+1), \Pi(J_0+2), \ldots, \Pi(J)\}$ (if $\type \in \{\ics, \ctt\}$), so that $|\JJ_\type| = J_\type$.
Put
\(X_{\boldb}(\Pi) ~=~ \sum_{j\in\JJ_\type} b_{\Pi(j)},\) for $\Pi \sim \Unif(\PP^J)$. 
Then we have the bound 
\begin{align}
    \mathbb{P}\left\{|X_{\boldb}(\Pi) - \mathbb{E} \left[X_{\boldb}(\Pi)\right] | >  (8M\sqrt{J_\type})t \right\} & \le 2e^{-t^2/8}. \label{eq:subG-perm-tail-bound} 
\end{align}
\end{lemma}

\begin{proof} 
    Without loss of generality, we will consider the case that $\JJ_\type = \{\Pi(1), \Pi(2), \ldots, \Pi(J_0)\}$; the other case is symmetric.
    Put $a_j = J_0^{-1/2}$ for $1 \le j \le J_0$ and $a_j = 0$ for $j > J_0$, and note that $\|\mathbf{a}\|_2 = 1$. We will apply \Cref{symmetric-concentration} using the weights $\mathbf a$.
    For any two permutations $\pi_1, \pi_2 \in \PP^J$,
\begin{align*}
    |X_{\boldb}(\pi_1) - X_{\boldb}(\pi_2)| = \left|  \sum_{j=1}^{J_0} b_{\pi_1(j)} - b_{\pi_2(j)}\right| 
    &\le \sum_{j=1}^{J_0}  |b_{\pi_2(j)}|\mathbbm{1}\{\pi_1(j) \ne \pi_2(j)\}\\
    &= \sum_{j=1}^{J} \sqrt{J_0}|a_j||b_{\pi_2(j)}|\mathbbm{1}\{\pi_1(j) \ne \pi_2(j)\} \\
    &\le \sqrt{J_0}M \sum_{j=1}^{J}   |a_j| \mathbbm{1}\{\pi_1(j) \ne \pi_2(j)\} \\
    &\le \sqrt{J_0} M \delta(\pi_1,\pi_2).
\end{align*}
These steps follow by the triangle inequality, by our choice of $a_j$, by the fact $|b_j| \le M$, and by the definition of $\delta$ given in \Cref{hamming}. 
The inequality \eqref{eq:subG-perm-tail-bound} then follows by \Cref{symmetric-concentration}, as we have just shown that $X_{\bm{b}}$ is $L$-Lipschitz with respect to the convex distance $\delta(\pi_1,\pi_2)$, with $L = \sqrt{J_0}M$.
\end{proof}

Finally, we state a technical lemma which will help us apply \cref{lemma:talagrand-sum} to expressions which depend on potential outcomes $y_{ij}(\type)$ for multiple types $\type$.

\begin{lemma}\label{lem:helpful-rewrite}
    Under \cref{eq:assumption_a}(a), for any $\type \in \types$ and any $b_1(\type), \dots, b_J(\type) \in [-M, M]$, there exists a collection of numbers $\tilde{b}_1(\type), \ldots, \tilde{b}_{J_0}(\type)$ with absolute value at most $2C_1M$, such that for any $\Pi \in \PP^J$,
    \ifarxiv
    \[
        \frac{1}{J_\type} \sum_{j\in \J_\type^\Pi} b_j(\type) = \frac{1}{J_{0}}\sum_{j=1}^{J_0} \tilde b_{\Pi(j)}(\type).
    \]
    \else
    \(
        \frac{1}{J_\type} \sum_{j\in \J_\type^\Pi} b_j(\type) = \frac{1}{J_{0}}\sum_{j=1}^{J_0} \tilde b_{\Pi(j)}(\type).
    \)
    \fi    
    Note, the left-hand side is a sum over $J_\type$ terms.The right-hand side is a sum over $J_0$ terms, irrespective of  $\type$.
\end{lemma}
\begin{proof}
We proceed by cases, first considering $\type \in \{\ccc, \icb\}$ and then $\type \in \{\ics, \ctt\}$.
Let $\Pi \in \PP^J$ be arbitrary. 
Recall that by construction, we have $\J_\type^\Pi = \{\Pi(1), \dots, \Pi(J_0)\}$ if $\type \in \{\ccc,\icb\}$ and $\J_\type^\Pi ~=~ \{\Pi(J_0+1), \dots, \Pi(J)\}$ if $\type \in \{\ics,\ctt\}$.
Then, if $\type \in \{\ccc,\icb\}$:
\[
    \frac{1}{J_\type} \sum_{j\in \J_\type^\Pi} b_j(\type) 
    = \frac{1}{J_0}\sum_{j=1}^{J_0} b_{\Pi(j)}(\type).
\] 
The claim then directly holds by taking $\tilde b_{\Pi(j)}(\type) = b_{\Pi(j)}(\type)$ for all $j$. 

On the other hand, if $\type \in \{\ics,\ctt\}$, letting $\bar{b}(\type):=\sum_{j=1}^J b_j(\type) / J$, we have 
\begin{align*}
    \frac{1}{J_{\type}}\sum_{j\in \J_\type^\Pi} b_j (\type)
    &= \frac{1}{J_1} \sum_{j=J_0 + 1}^{J} b_{\Pi(j)}(\type) 
    = \frac{1}{J_1} \sum_{j=1}^{J} b_{\Pi(j)}(\type) \left[ 1 - \mathbbm{1}(j \le J_0)\right] 
    = \frac{J}{J_1}\bar{b}(\type) - \frac{1}{J_1} \sum_{j=1}^{J_0} b_{\Pi(j)}(\type) \\
    &= \frac{1}{J_0} \sum_{j=1}^{J_0} \left( \frac{J}{J_1} \bar{b}(\type) - \frac{J_0}{J_1} b_{\Pi(j)} (\type)\right).
\end{align*}
We then take $\tilde b_j(\type) = (J/J_1) \bar{b}(\type) - (J_0/J_1) b_{\Pi(j)}(\type)$; in either case,  $|\tilde b_j(\type)| \le 2C_1M$.
\end{proof}
\subsubsection{Computing the expectation of the mean} \label{sec:exp_mean}
Recall the definition of $\bm{\tau}^{\pi}$,   
given in \Cref{eq:tau_population_cond}.
In what follows, with a slight abuse of notation, we let ${\tau}^\pi(\type)$ be the entry of  $\bm{\tau}^\pi$ referring to type $\type \in \types$. We let ${\tau}^\Pi(\type)$ represent the same quantity, now indexed by a random $\Pi\sim\Unif(\PP^{J})$. 

We note that the expectation of ${\tau}^\Pi(\type)$ coincides with the population mean $\meanpopulation{\type}$:
\begin{align*}
    \mme_\Pi\left[{\tau}^\Pi(\type)\right]&=\mme_\Pi \left[\frac{1}{I} \sum_{i=1}^I \frac{1}{J_{\type}} \sum_{j=1}^{J_{\type}}  y_{i,\Pi(j)}(\type)\right] 
    =  \frac{1}{I J_\type} \sum_{i=1}^I  \mme_\Pi \left[ \sum_{j=1}^{J_{\type}} y_{i,\Pi(j)}(\type)\right] 
    \\ &
     = \frac{1}{I J_\type} \sum_{i=1}^I \frac{\binom{J-1}{J_{\type}-1}}{\binom{J}{J_\type}} \sum_{j=1}^J y_{i,j}(\type) 
     = \frac{1}{I J_\type} \sum_{i=1}^I \frac{J_{\type}}{J} \sum_{j=1}^J y_{i,j}(\type) 
     = \meanpopulation{\type}.
    \end{align*}
By linearity of the expectation operator, for any $\coefvec$ it also holds that
\begin{equation}
    \mme_\Pi\left[\tau^\Pi(\coefvec)\right] = \tau(\coefvec). \label{eq:expect_conditional}
\end{equation}

\ifarxiv
\subsubsection{Concentration of the mean} \label{sec:conc_mean}

We now use \cref{lemma:talagrand-sum,lem:helpful-rewrite} to show concentration of $\tau^\Pi(\coefvec)$ around its expectation.

\begin{lemma}\label{lemma-mean-concentration}
Let $\Pi$ be a uniform random permutation of $[J]$. Then under \cref{eq:assumption_a}, which imposes a balanced experiment with bounded potential outcomes, it holds
\begin{equation}
    \mathbb{P} \left(|\tau^{\Pi}(\coefvec)- \tau(\coefvec)| > \left[32 J_0^{-1/2}C_2C_1\right]\|\coefvec\|_2t  \right) \le 2 e^{-t^2/8}. \label{eq:tail_mean}
\end{equation}
\end{lemma}
\begin{proof}
\Cref{eq:tail_mean} is a bound around deviations of $\tau^{\Pi}(\coefvec)$ around its mean, since $\mathbb{E}[\tau^\Pi(\coefvec)] = \tau(\coefvec)$ as per \cref{eq:expect_conditional}. 
To show \cref{eq:tail_mean} we will first prove that $\tau^\Pi(\coefvec) ~=~ J_0^{-1}\sum_{j=1}^{J_0} b_{\Pi(j)}$ for some suitably bounded numbers $(b_1, \ldots, b_J)$ via \cref{lem:helpful-rewrite}, and then conclude using \cref{lemma:talagrand-sum}. 
From the definition of $\tau^\Pi(\coefvec)$,
\begin{align*}
    \tau^\Pi(\coefvec) &= \sum_{\type \in \types} \beta_\type \frac{1}{I}  \frac{1}{J_{\type}} \sum_{i=1}^I \sum_{j\in \J_{\type}^\Pi} y_{i,j} 
    = \sum_{\type \in \types} \beta_\type  \frac{1}{J_{\type}} \sum_{j\in \J_{\type}^\Pi} 
    \left(\frac{1}{I} \sum_{i=1}^Iy_{i,j}(\type)\right) \\
    &= \sum_{\type \in \types} \beta_\type  \frac{1}{J_{\type}} \sum_{j\in \J_{\type}^\Pi} \axismeanpopulation{j}{\rs}(\type)
    .
\end{align*}
Because of the boundedness assumption (b) in \cref{eq:assumption_b}, $|\axismeanpopulation{j}{\rs}(\type)|\le C_2$ for each $j \in [J]$.
By \cref{lem:helpful-rewrite}, we may find numbers $|\tilde b_j(\type)| \le 2C_1C_2$ which allow us to rewrite $\tau^\Pi(\coefvec)$ as
\[
    \tau^\Pi(\coefvec) 
    = \sum_{\type \in \types} \beta_\type  \frac{1}{J_{0}} \sum_{j=1}^{J_{0}} \tilde b_{\Pi(j)}(\type) 
    = \frac{1}{J_{0}} \sum_{j=1}^{J_{0}} \left\{\sum_{\type \in \types} \beta_\type  \tilde b_{\Pi(j)}(\type)\right\}.
\]
By H\"{o}lder's inequality and the fact that $\|v\|_1 ~\le~ \sqrt{4}\|v\|_2$ for $v \in \mathbb{R}^4$, we can further bound the bracketed terms in the equation above as
\[ 
    \left| \sum_{\type \in \types} \beta_\type  \tilde b_{j}(\type)\right| 
    \le \|\coefvec\|_1\max_\type|\tilde{b}_{j}(\type)|
    \le \|\coefvec\|_1 2C_1C_2 
    \le 4C_1C_2\|\coefvec\|_2.
\]
Then by considering the bounded vector $\bm{b}=[b_1,\dots,b_J]$ in which each $b_j = \sum_{\type \in \types} \beta_\type  \tilde b_{j}(\type) \le 4C_1C_2\|\coefvec\|_2$, we can apply \cref{lemma:talagrand-sum} to $\tau^\Pi(\coefvec)$ (in turn, \cref{eq:tail_mean} follows):
\[
    \tau^\Pi(\coefvec) = \frac{1}{J_0}\sum_{j=1}^{J_0} b_j \le 4C_1C_2\|\coefvec\|_2.
\]
\end{proof}
\fi

\subsubsection{A CLT for the conditional mean}\label{sec:conditional-mean-clt}

Finally, we note an unconditional normal approximation for $\bm{\tau}^\Pi = \mathbb{E}[\bm{\tau} \mid \Pi]$ which mirrors \Cref{lem:clt-bounded2} above. 
\revision{
It follows from the observation that $\bm{\tau}^\Pi$ is the standard mean estimator corresponding to a completely randomized experiment in which $J_1$ out of $J$ units are treated, with the following vector-valued potential outcomes: 
\begin{align*}
    \overline{\bm{y}}^{\rs}_{j}(0) &= 
        \begin{pmatrix}
              \axismeanpopulation{j}{\rs}(\ccc) &
            0  &
            \axismeanpopulation{j}{\rs}(\ics)   &
            0
        \end{pmatrix}^\top;
        \quad 
        \overline{\bm{y}}^{\rs}_{j}(1) =
         \begin{pmatrix}
            0 &
             \axismeanpopulation{j}{\rs}(\icb) &
            0 &
           \axismeanpopulation{j}{\rs}(\ctt)   
        \end{pmatrix}^\top.
\end{align*}}

\begin{lemma}\label{lem:clt-conditional-mean}
Assuming bounded potential outcomes as per \cref{eq:assumption_b}(b):
\begin{equation}
		\sup_{t \in \mathbb{R}} 
		\left|\mathbb{P}\left\{\frac{{{\tau}}^{\Pi}(\coefvec) -  {\tau}(\coefvec)}{\sqrt{\mmv\{{\tau}^{\Pi}(\coefvec)\}}} \le t  \right\} - \Phi(t)\right| 
		\le 
		\frac{C C_2}{\min\{J_0, J_1\}} 
	\frac{\|\coefvec\|_2}{\sqrt{\mmv\{{\tau}^{\Pi}(\coefvec)\}} }. 
\end{equation}
\begin{proof} Identical to \Cref{lem:clt-bounded2}.
\end{proof}
\end{lemma}

In general, the fluctuations of the conditional mean $\tau^\Pi(\coefvec)$ may not be negligible; this may occur, e.g.,  if the number of sellers is small. \Cref{lem:clt-conditional-mean} shows that $\tau^\Pi(\coefvec)$ is itself approximately Gaussian, so that we can derive the CLT for $\hattau(\coefvec)$ despite this possibility. Our use of \cref{lem:clt-conditional-mean} in this capacity was initially suggested by  \citet{sudijono2025private}.

\ifarxiv
\subsubsection{Computing the expectation of the covariance} \label{sec:exp_cov}

Mirroring \Cref{sec:exp_mean} we now compute $\mme_{\Pi}\left[ \mmc\left\{  \widehat{\bm{\tau}}^{\Pi} \right\} \right]$ --- the \emph{expectation} of $\mmc\left\{  \widehat{\bm{\tau}}^{\Pi} \right\}$ over the uniform measure on the space of permutations $\PP^J$ of $[J]$.
\begin{lemma} \label{lemma:exp_cov}
    For $\type,\type' \in \types$ and $i, i' \in [I]$ define
    \[
        \rho_{i,i'}^{\type,\type'} := \mme_{\Pi}\left[ \axismeanfixedindex{i}{\JJ_{\type}^{\Pi}}{\rb}(\type) \axismeanfixedindex{i'}{\JJ_{\type'}^{\Pi}}{\rb}(\type') \right].
    \]
    Now, we characterize the entry $\mmc\left\{  \widehat{\bm{\tau}}^{\Pi} \right\}_{\type,\type'}$ associated with types $\type, \type'$:
    \[
        \mme_{\Pi}\left[ \mmc\left\{  \widehat{\bm{\tau}}^{\Pi} \right\}  \right]_{\type,\type'} = \kappa_{\type,\type'} \left\{\sum_{i=1}^I  \rho_{i, i}^{\type,\type'}   - \frac{1}{I} \sum_{i=1}^I \sum_{i'=1}^I  \rho_{i, i'}^{\type,\type'} \right\},
    \]
where 
\[
    \kappa_{\type,\type'} = 
    \begin{cases} 
        1 &\mbox{ if } \type = \type' \\
        \frac{I - I_{\type}}{I_\type} \frac{1}{I(I-1)} &\mbox{ if } \type \neq \type' \land \JJ_{\type}^{\Pi} = \JJ_{\type'}^{\Pi} \text{(e.g., $\type =\ccc, \type' =\ics$)} \\
        - \frac{I - I_{\type}}{I_\type} \frac{1}{I(I-1)} &\mbox{ if } \type \neq \type' \land \JJ_{\type}^{\Pi} \neq \JJ_{\type'}^{\Pi} \text{(e.g., $\type = \ccc, \type'=\icb$)}.
    \end{cases}
\]
\begin{proof}
The main contribution of this proof is just to make the coeficients $\rho_{i,i'}^{\type,\type'}$ explicit; to do so, we specialize them into three different cases: (i) $\type = \type'$, (ii) $\type \neq \type'$ and $\JJ_\type^{\Pi} = \JJ_{\type'}^{\Pi}$, and (iii) $\type \neq \type'$ and $\JJ_\type^{\Pi} \neq \JJ_{\type'}^{\Pi}$. 

\paragraph{(i) $\type = \type'$.} We have
\begin{align}
    \rho_{i,i'}^{\type,\type}  &= \mme\left[ \axismeanfixedindex{i}{\JJ_{\type}^{\Pi}}{\rb}(\type) \axismeanfixedindex{i'}{\JJ_{\type}^{\Pi}}{\rb}(\type) \right] \nonumber 
    =\left(\frac{1}{J_{\type}}\right)^2 \mme_{\Pi}\left[\sum_{j, j' \in \JJ_{\type}^{\Pi}}  y_{i,j}(\type) y_{i', j'}(\type) \right]. \nonumber
    \intertext{Observing that among the total $\binom{J}{J_\type}$ selection of indices $\JJ_\type^{\Pi}$, exactly $\binom{J-1}{J_\type-1}$ of these index sets contain index $j$ and exactly $\binom{J-2}{J_\type-2}$ of these index sets contain the pair $(j,j')$ for $j\neq j$,}
    \rho_{i,i'}^{\type,\type} &=\left(\frac{1}{J_{\type}}\right)^2 \left[ \frac{J_\type}{J} \sum_{j=1}^J y_{i,j}(\type)^2 + \frac{J_\type(J_\type-1)}{J(J-1)} \sum_{j=1}^J \sum_{j'\neq j} y_{i,j}(\type)y_{i,j'}(\type) \right] \nonumber \\
    &= \frac{1}{J_{\type}J} \left[ \sum_{j=1}^J y_{i,j}(\type) y_{i',j}(\type) + \frac{J_\type-1}{J-1} \sum_{j=1}^J \sum_{j'\neq j} y_{i,j}(\type)y_{i',j'}(\type) \right].
\end{align}

\paragraph{(ii) $\type \neq \type'$ and $\JJ_\type^{\Pi} = \JJ_{\type'}^{\Pi}$.}

The derivation is analogous to (i), and just requires swapping the argument of the second column-wise mean $ \axismeanfixedindex{i}{\JJ_{\type}^{\Pi}}{\rb}(\type)$ with $\type'$, i.e.\ consider $ \axismeanfixedindex{i}{\JJ_{\type}^{\Pi}}{\rb}(\type')$; because $\JJ_\type^{\Pi} = \JJ_{\type'}^{\Pi}$, this change is only in the argument and not in the set indexing this mean:

\begin{align}
    \rho_{i,i'}^{\type,\type'}  &= \mme\left[ \axismeanfixedindex{i}{\JJ_{\type}^{\Pi}}{\rb}(\type) \axismeanfixedindex{i'}{\JJ_{\type}^{\Pi}}{\rb}(\type') \right] 
    =\left(\frac{1}{J_{\type}}\right)^2 \mme_{\Pi}\left[\sum_{j, j' \in \JJ_{\type}^{\Pi}}  y_{i,j}(\type) y_{i', j'} (\type') \right], \nonumber
    \intertext{and now observing that among the total $\binom{J}{J_\type}$ selection of indices $\JJ_\type^{\Pi}$, exactly $\binom{J-1}{J_\type-1}$ of these index sets contain index $j$ and exactly $\binom{J-2}{J_\type-2}$ of these index sets contain the pair $(j,j')$ for $j\neq j$,}
    \rho_{i,i'}^{\type,\type'} &=\left(\frac{1}{J_{\type}}\right)^2 \left[ \frac{J_\type}{J} \sum_{j=1}^J y_{i,j}(\type) y_{i',j}(\type') + \frac{J_\type(J_\type-1)}{J(J-1)} \sum_{j=1}^J \sum_{j'\neq j} y_{i,j}(\type)y_{i',j'}(\type') \right] \nonumber \\
    &= \frac{1}{J_{\type}J} \left[ \sum_{j=1}^J y_{i,j}(\type) y_{i',j}(\type') + \frac{J_\type-1}{J-1}     \sum_{j=1}^J \sum_{j'\neq j} y_{i,j}(\type)y_{i',j'}(\type')
     \right].
\end{align}

\paragraph{(iii) $\type \neq \type'$ and $\JJ_\type \neq \JJ_{\type'}$.}

The derivation is analogous to (i), and just requires swapping the argument and index set of the second column-wise mean $ \axismeanfixedindex{i}{\JJ_{\type}^{\Pi}}{\rb}(\type)$ with $\type'$, i.e.\ consider $ \axismeanfixedindex{i}{\JJ_{\type'}^{\Pi}}{\rb}(\type')$; because $\JJ_\type^{\Pi} \neq \JJ_{\type'}^{\Pi}$, this change is affecting both the argument and the index set defining this mean:

\begin{align}
    \rho_{i,i'}^{\type,\type'}  &= \mme\left[ \axismeanfixedindex{i}{\JJ_{\type}^{\Pi}}{\rb}(\type) \axismeanfixedindex{i'}{\JJ_{\type'}^{\Pi}}{\rb}(\type') \right] 
    =\left(\frac{1}{J_{\type} J_{\type'}}\right) \mme_{\Pi}\left[\sum_{j \in \JJ_{\type}} \sum_{j' \in \JJ_{\type'}}  y_{i,j}(\type) y_{i', j'} (\type') \right] \nonumber
    \intertext{and now observing that among the total $\binom{J}{J_\type}$ selection of indices, there are exactly $\binom{J-2}{J_\type-2}$ selections such that $j, j' \in \JJ_{\type}^{\Pi}$, and exactly $\binom{J-2}{J-J_\type-2}$ selection of indices such that $j, j' \notin \JJ_{\type}^{\Pi}$, then there are exactly $\binom{J}{J_0} -\binom{J-2}{J_0-2} - \binom{J-2}{J_1-2}$ of the total $\binom{J}{J_0}$ such that $j$ and $j'$ do not both belong to $\JJ_{\type}^{\Pi}$} 
    \rho_{i,i'}^{\type,\type'}   &=\left(\frac{1}{J_{\type} J_{\type'}}\right) \left( 1 - \frac{J_0(J_0-1)}{J(J-1)} - \frac{J_1(J_1-1)}{J(J-1)} \right) \sum_{j=1}^J \sum_{j'\neq j} y_{i,j}(\type) y_{i',j'}(\type').
\end{align}
\end{proof}
\end{lemma}
\fi 

\subsubsection{Concentration of the covariance} \label{sec:conc_cov}

Recall the definition of the covariance matrix $S_{\hat{\bm{\tau}}^{\pi}}^2$ given in \Cref{eq:covariance_conditional}:
\[
    S_{\widehat{\bm{\tau}}^{\pi} }^2 := \frac{1}{I-1} \sum_{i=1}^I \left\{ \bm{\tau}_i^{\pi} - \bm{\tau}^{\pi} \right\} \left\{ \bm{\tau}_i^{\pi} - \bm{\tau}^{\pi} \right\}^\top,
\]
and let $S_{\pi}(\type, \type'):=S_{\widehat{\bm{\tau}}^{\pi} }^2(\type, \type')$ be its entry associated with types $\type,\type'$, where we drop the dependence on $\hat{\bm{\tau}}$ for the ease of notation.
Moreover, we have by \cref{eq:cov-pi-exact},
\begin{equation}
    \mmv\{\hattau(\coefvec)|\Pi=\pi\} = \mmv\{\hattau^\pi(\coefvec)\} = \sum_{\type, \type' \in \types}  \beta_\type \beta_{\type'}  \frac{A(\type,\type')}{I(I-1)} \sum_{i=1}^I\t(\type)\t(\type'),\label{eq:linear-estimator-conditional-variance}
\end{equation}
with $\t(\type)$ defined as an entry in the vector $\dot{\bm{\tau}}_i^{\pi}$ of \cref{eq:dot_tau_centered}, and
\begin{equation}
    A = 
    \begin{psmallmatrix}
        A(\ccc,\ccc) & A(\ccc,\icb) & A(\ccc,\ics) & A(\ccc,\ctt) \\
        A(\icb,\ccc) & A(\icb,\icb) & A(\icb,\ics) & A(\icb,\ctt) \\
        A(\ics,\ccc) & A(\ics,\icb) & A(\ics,\ics) & A(\ics,\ctt) \\
        A(\ctt,\ccc) & A(\ctt,\icb) & A(\ctt,\ics) & A(\ctt,\ctt)
    \end{psmallmatrix}
    =
    \{A(\type,\type')\}_{\type,\type' \in \types} \in \mathbb{R}^{4 \times 4} \label{eq:matrix_A}
\end{equation}
a matrix with entries $A(\type,\type')$ indexed by types $\type, \type'$, similar to $S_{\widehat{\bm{\tau}}^{\pi} }^2$. Explicitly, they are
\[
   \t(\type) = \frac{1}{J_{\type}}\sum_{j\in \J_\type^\pi} [y_{i,j}(\type)-\meanonefixedindex{\bullet}{\pi}(\type)] 
   \qquad\text{and}\qquad
    A = \begin{psmallmatrix}
    I_1/I_0 & -1 & I_1/I_0 & -1 \\
     -1 & I_1/I_0 & -1 & I_1/I_0  \\
     I_1/I_0 & -1 & I_1/I_0 & -1 \\
     -1 & I_1/I_0 & -1 & I_1/I_0 
\end{psmallmatrix}.
\]
Our arguments will also make use of certain facts about Orlicz norms, which are collected in the following definition. We refer to \citet[Chapter 2]{vershynin2018high} for proofs.
\begin{definition}\label{defn:orlicz}
For a real random variable $X$, and $k \ge 1$, we define  $\|X\|_{\psi_p}$ as the smallest $t > 0$ such that $\mme[\exp\{(X/t)^p\}] \le 2$, provided that such a $t$ exists (otherwise, it is $\infty$). 
For $X$, $Y$ real, Borel random variables:
\ifarxiv
\begin{enumerate}
    \item $\|aX + bY\|_{\psi_k} \le a\|X\|_{\psi_k} + b\|Y\|_{\psi_k}$ for $k=1,2$;
    \item $\|X - \mme[X]\|_{\psi_k} \le 2\|X\|_{\psi_k}$;
    \item if $\mathbb{P}(|X|>t) \le 2e^{-t^2/s_0^2}$ then $\|X\|_{\psi_2} \le Cs_0$
    \item \begin{enumerate}
        \item if $\|X\|_{\psi_2} \le s_1$ then  $\mathbb{P}(|X|>t) \le 2e^{-t^2/Cs_1^2}$, so $\bb{P}\{|X| \le s_1 \sqrt{C\log(2/\eta)}\} \ge 1-\eta$;
        \item if $\|X\|_{\psi_1} \le s_2$ then  $\mathbb{P}(|X|>t) \le 2e^{-t/Cs_2}$, so $\mathbb{P}\{|X| \le C s_2 \log(2/\eta)\} \ge 1-\eta$,
        \end{enumerate}
    \item $\|X^2\|_{\psi_1} \le \|X\|_{\psi_2}^2$.

\end{enumerate}
\else
\begin{inumerate}
    \item $\|aX + bY\|_{\psi_k} \le a\|X\|_{\psi_k} + b\|Y\|_{\psi_k}$ for $k=1,2$;
    \item $\|X - \mme[X]\|_{\psi_k} \le 2\|X\|_{\psi_k}$;
    \item if $\mathbb{P}(|X|>t) \le 2e^{-t^2/s_0^2}$ then $\|X\|_{\psi_2} \le Cs_0$
    \item if $\|X\|_{\psi_2} \le s_1$ then  $\mathbb{P}(|X|>t) \le 2e^{-t^2/Cs_1^2}$, so $\bb{P}\{|X| \le s_1 \sqrt{C\log(2/\eta)}\} \ge 1-\eta$;
    \item if $\|X\|_{\psi_1} \le s_2$ then  $\mathbb{P}(|X|>t) \le 2e^{-t/Cs_2}$, so $\mathbb{P}\{|X| \le C s_2 \log(2/\eta)\} \ge 1-\eta$,
    \item $\|X^2\|_{\psi_1} \le \|X\|_{\psi_2}^2$.
\end{inumerate}
\fi
\end{definition}
\begin{lemma}\label{lemma-concentration-cov} Under \cref{eq:assumption_b}, for a sufficiently large universal constant $C$, we have with probability at least $1-\eta$
    \begin{align*}
    \bigg|\mmv\{\hattau(\coefvec)|\Pi\} - \mme \left[ \mmv\{\hattau(\coefvec)|\Pi\} \right] \bigg| &\le \frac{CC_1^4C_2^2\|\coefvec\|_2^2}{J(I-1)}\log(4/\eta) \\
    & \qquad + \frac{CC_1^2C_2\|\coefvec\|_2}{\sqrt{J(I-1)}} 
    \sqrt{\mme[\mmv\{\hattau(\coefvec)|\Pi\}]\log(4/\eta)}. \\
\end{align*}
\end{lemma}
\begin{proof}
Throughout the proof, let $C$ be a sufficiently large universal constant.
Using \Cref{lem:helpful-rewrite} and the fact that potential outcomes are bounded as per \cref{eq:assumption_b} (b), we can rewrite
\[
    \T(\type) = \frac{1}{J_{\type}}\sum_{j\in \J_\type^\Pi} [y_{i,j}(\type)-\meanonefixedindex{\bullet}{\Pi}(\type)] = \frac{1}{J_{0}}\sum_{j=1}^{J_0} \tilde b_{i,\Pi(j)}(\type),
\]
for $|\tilde{b}_{i,j}(\type)| \le 4C_1C_2$, since $|y_{i,j}(\type)-\meanonefixedindex{\bullet}{\Pi}(\type)| \le 2C_2$. 
By using \cref{eq:linear-estimator-conditional-variance}, which allows us to express $\mmv\{\hattau(\coefvec)|\Pi=\pi\}$ as a sum over $\T(\type)$, the decomposition above leads us to
\[
   \mmv\{\hattau^\Pi(\coefvec)\} = \sum_{\type, \type' \in \types}  \beta_\type \beta_{\type' } \frac{A(\type,\type')}{I(I-1)} \sum_{i=1}^I\left(\frac{1}{J_{0}}\sum_{j=1}^{J_0} \tilde{b}_{i,\Pi(j)}(\type)\right)\left(\frac{1}{J_{0}}\sum_{j=1}^{J_0} \tilde{b}_{i,\Pi(j)}(\type') \right).
\]
To re-write this as a sum of squares, we use the eigenvector decomposition for the matrix $A$ in \cref{eq:matrix_A}:
\(
    A = 2(I_1/I_0 - 1) u u^\top + 2(I_1/I_0 + 1)v v^\top
\)
for $u = (1,1,1,1)^\top$ and $v = (-1,1,-1,1)^\top$. Therefore we can write the $(\type, \type')$-th entry of $A$ as $A(\type,\type') = A_1(\type)A_1(\type') + A_2(\type)A_2(\type')$ for $|A_1(\type)|, |A_2(\type)| \le \sqrt{2(C_1 + 1)}$. 
Thus,
$\mmv\{\hattau^\Pi(\coefvec)\} = V_1 + V_2$
where for $k \in \{1,2\}$
\begin{align*}
    V_k &= \sum_{\type, \type' \in \types}  \beta_\type \beta_{\type' } \frac{A_k(\type)A_k(\type')}{I(I-1)} \sum_{i=1}^I\left(\frac{1}{J_{0}}\sum_{j=1}^{J_0} \tilde{b}_{i,\Pi(j)}(\type)\right)\left(\frac{1}{J_{0}}\sum_{j=1}^{J_0} \tilde{b}_{i,\Pi(j)}(\type') \right) \\
    &=   \frac{1}{I(I-1)} \sum_{i=1}^I\left(\frac{1}{J_{0}}\sum_{j=1}^{J_0} \sum_{\type} \beta_\type A_k(\type) \tilde{b}_{i,\Pi(j)}(\type)\right)\left(\frac{1}{J_{0}}\sum_{j=1}^{J_0} \sum_{\type'}\beta_{\type'}A_k(\type') \tilde{b}_{i,\Pi(j)}(\type') \right).
\end{align*}
Hence, defining $X_{i,k} \coloneqq  \frac{1}{J_{0}}\sum_{j=1}^{J_0} \left(\sum_{\type} \beta_\type A_k(\type) \tilde{b}_{i,\Pi(j)}(\type)\right)$,
\(
    V_k = \frac{1}{I(I-1)} \sum_{i=1}^I X_{i,k}^2.
\)
Next, we can use Cauchy-Schwarz and our previous bounds on $A_k$ and $\tilde{b}_{i,j}$ to bound these terms constituting $X_{i,k}$s in parentheses as  $|\sum_\type  \beta_\type A_k(\type) \tilde{b}_{i,j}(\type)| \le C\|\coefvec\|_2C_1^{3/2}C_2$ for $k = 1,2$. 
This gives us the trivial bound $|X_{i,k}| \le \sqrt{32}\|\coefvec\|_2(C_1+1)^{3/2}C_2$, so
\begin{equation}\label{eq:trivial-conditional-var-estimate}
    \mmv\{\hattau(\coefvec)|\Pi\} = \sum_{k\in\{1,2\}} \frac{1}{I(I-1)} \sum_{i=1}^I X_{i,k} ^2 \le \frac{64\|\coefvec\|_2^2(C_1+1)^{3}C_2^2}{I-1};
\end{equation}
clearly, the right-hand side of \cref{eq:trivial-conditional-var-estimate}  also bounds $\mathbb{E}[\mmv\{\hattau(\coefvec)|\Pi\}]$.

Thus, applying \Cref{lemma:talagrand-sum} with $b_{i,j} = \sum_\type  \beta_\type A_k(\type) \tilde{b}_{i,j}(\type)$, we find that for a large enough universal constant $C$, we have for all $i \in [I]$ and $k=1,2$:
\begin{equation}
    \|
        X_{i,k}-\mme[X_{i,k}]
    \|_{\psi_2} 
    \le 
        CC_1^{3/2}C_2\|\coefvec\|_2 J_0^{-1/2}.
    \label{eq:variance-concentration-basic}
\end{equation}
Finally, by linearity of expectation and the identity
\[X^2 - \mme[X^2] = \{(X-\mme[X])^2 - \mme[(X-\mme[X])^2]\} + 2(X-\mme[X])\mme[X] \]
which is seen by expanding $X^2 = \{\mme[X] + (X - \mme[X])\}^2$ and its expectation, we have
\begin{align*}
    &\quad |\mmv\{\hattau^\Pi(\coefvec)\} - \mme[\mmv\{\hattau^\Pi(\coefvec)\}]|  \\
    &= |V_1 + V_2 - \mathbb{E}[V_1 + V_2]| \\ \displaybreak
    &= \left|\frac{1}{I(I-1)}\sum_{i=1}^I \sum_{k\in\{1,2\}} (X_{i,k} - \mme[X_{i,k}])^2 - \mme[(X_{i,k} - \mme[X_{i,k}])^2] + 2(X_{i,k}-\mme[X_{i,k}])\mme[X_{i,k}] \right| \\ 
    &\le \underbrace{\left|\frac{1}{I(I-1)}\sum_{i=1}^I \sum_{k\in\{1,2\}} (X_{i,k} - \mme[X_{i,k}])^2 - \mme[(X_{i,k} - \mme[X_{i,k}])^2] \right|}_{s_1} \\ 
    & \qquad + \underbrace{\left|\frac{1}{I(I-1)}\sum_{i=1}^I \sum_{k\in\{1,2\}} 2(X_{i,k}-\mme[X_{i,k}])\mme[X_{i,k}]\right|}_{s_2}
\end{align*}
For the first summand $s_1$, we have by the triangle inequality
\begin{align*}
    \|s_1\|_{\psi_1} &= \left|\frac{1}{I(I-1)}\sum_{i=1}^I \sum_{k\in\{1,2\}} (X_{i,k} - \mme[X_{i,k}])^2 - \mme[(X_{i,k} - \mme[X_{i,k}])^2] \right| \\ 
    &\le \frac{1}{I(I-1)}\sum_{i=1}^I \sum_{k\in\{1,2\}} \|(X_{i,k} - \mme[X_{i,k}])^2 - \mme[(X_{i,k} - \mme[X_{i,k}])^2]\|_{\psi_1}.
\end{align*}
By \cref{defn:orlicz} (ii) and (vi), $\|U^2 - \mme[U^2]\|_{\psi_1} \le 2\|U^2\|_{\psi_1} \le 2\|U\|^2_{\psi_2}$, we can last apply \cref{eq:variance-concentration-basic} to obtain
\begin{equation*}
    \|s_1\|_{\psi_1} \le \frac{2}{I(I-1)}\sum_{i=1}^I \sum_{k\in\{1,2\}} \|X_{i,k} - \mme[X_{i,k}]\|_{\psi_2}^2 \le \frac{C C_1^3 C_2^2\|\coefvec\|_2^2}{J_0(I-1)}.
\end{equation*}
For the second summand $s_2$, we similarly have by the triangle inequality and the same bound used above  (\cref{eq:variance-concentration-basic}):
\begin{align*}
    \|s_2\|_{\psi_2} &= \left|\frac{1}{I(I-1)}\sum_{i=1}^I \sum_{k\in\{1,2\}} 2(X_{i,k}-\mme[X_{i,k}])\mme[X_{i,k}]\right| \\ 
    &\le \frac{1}{I(I-1)}\sum_{i=1}^I \sum_{k\in\{1,2\}} 2\|X_{i,k}-\mme[X_{i,k}]\|_{\psi_2}|\mme[X_{i,k}]| \\
    &\le \frac{2CC_1^{3/2}C_2\|\coefvec\|_2}{\sqrt{J_0}(I-1)}\left(\frac{1}{2I}\sum_{i=1}^I \sum_{k\in\{1,2\}} |\mathbb{E}[X_{i,k}]|\right).
    \intertext{By two applications of Jensen's inequality to the term in parentheses, this is}
    &\le \frac{2CC_1^{3/2}C_2\|\coefvec\|_2}{\sqrt{J_0}(I-1)}\left(\frac{1}{2I}\sum_{i=1}^I \sum_{k\in\{1,2\}} \sqrt{\mathbb{E}[X_{i,k}^2]}\right) \\ \displaybreak
    &\le \frac{2CC_1^{3/2}C_2\|\coefvec\|_2}{\sqrt{J_0}(I-1)}\left(\sqrt{\frac{1}{2I}\sum_{i=1}^I \sum_{k\in\{1,2\}} \mathbb{E}[X_{i,k}^2]}\right) \\ 
    &=  \frac{\sqrt{2}CC_1^{3/2}C_2\|\coefvec\|_2}{\sqrt{J_0(I-1)}}\left(\sqrt{\frac{1}{I(I-1)}\sum_{i=1}^I \sum_{k\in\{1,2\}} \mathbb{E}[X_{i,k}^2]}\right) \\
    &= \frac{\sqrt{2}CC_1^{3/2}C_2\|\coefvec\|_2}{\sqrt{J_0(I-1)}}\sqrt{\mme[\mmv\{\hattau^\Pi(\coefvec)\}]},
\end{align*}
It follows from \cref{defn:orlicz} (iv) and (v) 
that, for a universal constant $C'$ possibly larger than $C$, the following events each have probability at least $1-\eta$
\begin{align*}
    s_1 &\le \frac{C'C_1^{3}C_2^2\|\coefvec\|_2^2}{J_0(I-1)}\log(2/\eta) \\
    s_2 &\le \frac{C'C_1^{3/2}C_2\|\coefvec\|_2}{\sqrt{J_0(I-1)}} 
    \sqrt{\mme[\mmv\{\hattau(\coefvec)|\Pi\}]}\log(2/\eta)^{1/2} .
\end{align*}
Thus, replacing $\eta$ by $\eta/2$ and using a union bound, it holds with probability $1-\eta$ that
\begin{align*}
    |\mmv\{\hattau^\Pi(\coefvec)\} - \mme[ \mmv\{\hattau^\Pi(\coefvec)\}]| &\le s_1+s_2 \\
    & \le \frac{C'C_1^{3}C_2^2\|\coefvec\|_2^2}{J_0(I-1)}\log(4/\eta) \\
    & \qquad + \frac{C'C_1^{3/2}C_2\|\coefvec\|_2}{\sqrt{J_0(I-1)}} 
    \sqrt{\mme[\mmv\{\hattau^\Pi(\coefvec)\}]\log(4/\eta)},
\end{align*}
which, after noting $J_0 \ge J/C_1$ and simplifying, gives us the claimed inequality. 
\end{proof}

\subsubsection{Simplifying the conditional CLT}\label{sec:simplifying-conditional-clt}

We now simplify \Cref{lem:clt-bounded2}, which, as stated, involves normalization by the random conditional variance $\mmv\{\hattau(\coefvec)|\Pi\}$. In \Cref{lemma:simplified-conditional-clt}, we show that $\mmv\{\hattau(\coefvec)|\Pi\}$ can be replaced by the deterministic quantity $\mme[\mmv\{\hattau(\coefvec)|\Pi\}]$, which simplifies the analysis in \Cref{sec:tying} to follow. 
We first state three helper lemmas, namely  \cref{lemma:coupling-to-kolmogorov,lemma:combine-coupling-with-kolmogorov,gaussian-continuity}.

\begin{lemma}[Lemma 2.1 of \cite{chernozhukov2016empirical}]\label{lemma:coupling-to-kolmogorov} 
Let $X$, $Y$ be real, Borel random variables.
Suppose that $\mathbb{P}(|X-Y| > \nu) \le \eta$. Then 
\[
    \sup_{t \in \mathbb{R}}\left|\mathbb{P}(X \le t) - \mathbb{P}(Y \le t) \right| \le \eta + \sup_{t \in \mathbb{R}} \mathbb{P}(|Y - t| \le \nu). 
\]
If $Y \sim N(0,1)$ is standard normal then the RHS of the bound above simplifies to $\eta + 2\nu$.
\end{lemma}
\begin{proof} The first statement is exactly Lemma 2.1 in \cite{chernozhukov2016empirical}; the second claim follows as the density of a standard Gaussian random variable is bounded by $1$.
\end{proof}

\begin{lemma}\label{lemma:combine-coupling-with-kolmogorov} Let $X$, $Y$ be real, Borel random variables. Let $\Phi_\sigma(t) = \Phi(t/\sigma)$ be the CDF of a zero-mean Gaussian with variance $\sigma^2$. 

If $\bb{P}(|X-Y| > \nu ) \le \eta$ and $\sup_{t \in \bb{R}}|\bb{P}(Y \le t) - \Phi_\sigma(t)| \le \epsilon$, then
\[
    \sup_{t \in \bb{R}}|\bb{P}(X \le t) - \Phi_\sigma(t)| \le \eta + 3\epsilon + 2\nu/\sigma.
\]
\end{lemma}
\begin{proof}
Given any $t \in \mathbb{R}$ our assumptions imply $\bb{P}(Y \le t) \le \Phi_\sigma(t) + \epsilon$. Using continuity of $\Phi_\sigma$, they also imply 
 \(\bb{P}(Y < t) = \lim_{u \uparrow t} \bb{P}(Y \le u) \ge \lim_{u \uparrow t} \Phi_\sigma(u) - \epsilon = \Phi_\sigma(t) - \epsilon.\)
Thus,
\begin{align*}
    \sup_{t \in \mathbb{R}} \mathbb{P}(|Y - t| \le \nu) 
    &= \sup_{t \in \mathbb{R}} \left\{ \mathbb{P}(Y  \le t + \nu) -  \mathbb{P}(Y  < t - \nu) \right\} \\
    &\le \sup_{t \in \mathbb{R}} \left\{\Phi_\sigma(t + \nu) + \epsilon - [\Phi_\sigma(t - \nu) - \epsilon]\right\}  \\
    &= 2\epsilon + \sup_{t \in \bb{R} }\left\{ \int_{t-\nu}^{t+\nu} \Phi_\sigma'(u)\mathrm{d}u  \right\}\le  2\epsilon + 2\nu/\sigma,
\end{align*}
where we have used $|\Phi'_\sigma| \le 1/\sigma$ in the last step. 
We then use Lemma \ref{lemma:coupling-to-kolmogorov} and the triangle inequality for the norm $\|F\|_\infty = \sup_{t \in \bb{R}}|F(t)|$ to conclude:
\begin{align*}
    \sup_{t \in \bb{R}}|\bb{P}(X \le t) - \Phi(t)|
    &\le \sup_{t \in \bb{R}}|\bb{P}(X \le t) - \bb{P}(Y \le t)| + \sup_{t \in \bb{R}}|\bb{P}(Y \le t) - \Phi(t)| \\
    &\le \eta + \sup_{t \in \mathbb{R}} \mathbb{P}(|Y - t| \le \nu) + \epsilon \le 3\epsilon + 2\nu/\sigma + \eta.
\end{align*} 
\end{proof}

\begin{lemma}\label{gaussian-continuity}
Let $\Phi$ be the standard Gaussian CDF. For any $\eta \in (0,1)$
\begin{align}
\sup_{s \in \mathbb{R}} \Big|\Phi\left(\frac{s-\mu_1}{\sigma_1}\right) - \Phi\left(\frac{s-\mu_2}{\sigma_2}\right) \Big| \le \eta + \frac{|\mu_1 - \mu_2|}{\sigma_2} + \frac{|\sigma_1-\sigma_2|}{\sigma_2}\sqrt{2\log(e/\eta)}. 
\label{eq:thesis-gaus-cont} 
\end{align}
\end{lemma}
\begin{proof}
    Note that by making the substitution $t = \sigma_2 s + \mu_2$ we have 
    \[\sup_{s \in \mathbb{R}} \Big|\Phi\left(\frac{s-\mu_1}{\sigma_1}\right) - \Phi\left(\frac{s-\mu_2}{\sigma_2}\right) \Big| = \sup_{t \in \mathbb{R}} \Big|\Phi\left(\frac{t - (\mu_1 - \mu_2)/\sigma_2}{\sigma_1/\sigma_2}\right) - \Phi\left(t\right) \Big|\]
    Let $Y$ be standard normal. Define $X = (\sigma_1/\sigma_2)Y + (\mu_1 - \mu_2)/\sigma_2$.
    By construction, then, 
\begin{equation*}
    |Y - X| \le \sigma_2^{-1}|\mu_1 - \mu_2| + |Y||\sigma_1/\sigma_2 - 1|.
\end{equation*}
By the Gaussian concentration inequality $\bb{P}(|Y| > t) = 2\Phi(-t) \le 2e^{-t^2/2}$ where the last inequality holds for $t \ge 1$, we can choose $t = \sqrt{2\log(e/\eta)} \ge 1$ to deduce that with probability at least $1-\eta$ we have $ |Y| \le \sqrt{2\log(e/\eta)}$. On this event, hence with probability $1-\eta$,
\begin{equation}
    |Y - X| \le \frac{|\mu_1 - \mu_2|}{\sigma_2} + \frac{|\sigma_1 - \sigma_2|}{\sigma_2}\sqrt{2\log(e/\eta)}. 
\end{equation} 
The proof then follows immediately by applying \Cref{lemma:coupling-to-kolmogorov}.
\end{proof}

\begin{lemma}\label{lem:pi-measurable-substitution}
Let $V$ be a discrete, real random variable. For a real, Borel random variable $X$, a cumulative distribution function $F$, and $\varepsilon: \bb{R} \to \bb{R}$, suppose that
\[
    \sup_{t \in \bb{R}}|\bb{P}(X \le t \mid V) - F(t)| \le \varepsilon(V).
\]
Then, if $U = u(V)$ is a $\sigma(V)$-measurable random variable, it also holds that
\[
    |\bb{P}(X \le U \mid V) - F(U)| \le \varepsilon(V).
\]
\end{lemma}
\begin{proof}
For any fixed $v$ in the support of $V$ and $u = u(v)$ we have 
\[
    |\bb{P}(X \le u \mid V = v) - F(u)| 
    \le \sup_{t \in \bb{R}}|\bb{P}(X \le u \mid V = v) - F(t)| \le \varepsilon(v).
\]
\end{proof}

\begin{lemma}\label{lemma:simplified-conditional-clt} Under assumptions (a) and (b), it holds with probability $1 - \eta$  that
    \begin{equation}
    \sup_{t \in \mathbb{R}} \left|\mathbb{P}\left\{\frac{\widehat{{\tau}}^{\Pi}(\coefvec) -  {\tau}^{\Pi}(\coefvec)}{\sqrt{\mme[\mmv\{\hattau(\coefvec)|\Pi\}]}} \le t \middle|\Pi \right\} - \Phi(t)\right| 
		\le \eta + 
		\frac{CC_1^2C_2\|\coefvec\|_2(I^{-1} + J^{-1})}{\sqrt{\mme[\mmv\{\hattau(\coefvec)|\Pi\}]}}\log(C/\eta)
    \end{equation}
\end{lemma}
\begin{proof}
Put $\sigma_\Pi^2 = \mmv\{\hattau(\coefvec)|\Pi\}$ and $\sigma_2^2 = \mme[\mmv\{\hattau(\coefvec)|\Pi\}]$. \Cref{lem:clt-bounded2} states that 
\[
	\sup_{t \in \mathbb{R}} \left|\mathbb{P}\left\{\sigma_\Pi^{-1}[\widehat{{\tau}}^{\Pi}(\coefvec) -  {\tau}^{\Pi}(\coefvec)] \le t \middle|\Pi \right\} - \Phi(t)\right| 
		\le 
		\frac{C C_2\|\coefvec\|_2}{\min\{I_0, I_1\}} 
	\frac{1}{\sigma_{\Pi} } \eqqcolon \frac{B_1}{\sigma_\Pi}
 \]
$\Pi$ is discrete and $U_t = \sigma_{\Pi}^{-1}t$ is $\sigma(\Pi)$-measurable, and 
 \(\widehat{{\tau}}^{\Pi}(\coefvec) -  {\tau}^{\Pi}(\coefvec) \le t \iff \sigma_\Pi^{-1}[\widehat{{\tau}}^{\Pi}(\coefvec) -  {\tau}^{\Pi}(\coefvec)] \le U_t
 \).
 Thus, combining the previous display with \Cref{lem:pi-measurable-substitution} gives 
\[  \left|\mathbb{P}\left\{\sigma_\Pi^{-1}[\widehat{{\tau}}^{\Pi}(\coefvec) -  {\tau}^{\Pi}(\coefvec)] \le U_t \middle|\Pi \right\} - \Phi(U_t)\right| = \left|\mathbb{P}\left\{\widehat{{\tau}}^{\Pi}(\coefvec) -  {\tau}^{\Pi}(\coefvec) \le t \middle|\Pi \right\} - \Phi(U_t)\right| 
\le 
		 \sigma_{\Pi}^{-1}B_1.
\]
for any $t \in \mathbb{R}$.
 By the triangle inequality for $\|f\|_\infty = \sup_{t \in \mathbb{R}}|f(t)|$, the above, and using \cref{gaussian-continuity} to bound 
 \(\sup_{t \in \mathbb{R}}\left|\Phi(\sigma_{\Pi}^{-1}t) - \Phi (\sigma_2^{-1}t)\right|\),
 it holds for $\eta \in (0,1)$ that
 \begin{align*}
 & \sup_{t\in\mathbb{R}} \left|\mathbb{P}\left\{[\widehat{{\tau}}^{\Pi}(\coefvec) -  {\tau}^{\Pi}(\coefvec)] \le t \middle|\Pi \right\} - \Phi(\sigma_2^{-1}t)\right| \\
 & \qquad \qquad \le \sup_{t\in\mathbb{R}} \left|\mathbb{P}\left\{[\widehat{{\tau}}^{\Pi}(\coefvec) -  {\tau}^{\Pi}(\coefvec)] \le t \middle|\Pi \right\} - \Phi(\sigma_\Pi^{-1}t)\right| + \sup_{t \in \mathbb{R}}\left|\Phi(\sigma_{\Pi}^{-1}t) - \Phi (\sigma_2^{-1}t)\right| \\
 & \qquad \qquad \le  \sigma_{\Pi}^{-1}B_1 + \sup_{t \in \mathbb{R}}\left|\Phi(\sigma_{\Pi}^{-1}t) - \Phi (\sigma_2^{-1}t)\right|  \\
 & \qquad \qquad \le \sigma_{\Pi}^{-1}B_1 + \eta + \left|\frac{\sigma_\Pi - \sigma_2}{\sigma_2}\right|\sqrt{2\log(e/\eta)}.
 \end{align*}
 We now manipulate the bound above to remove its dependence on the random quantity $\sigma_{\Pi}$.
  We start by considering the first term, $\sigma_{\Pi}^{-1}B_1$.
 
 \paragraph{First term, case $\sigma_{\Pi} \ge \frac{1}{2}\sigma_2$:}
 If $\sigma_{\Pi} \ge \frac{1}{2}\sigma_2$ then $\sigma_{\Pi}^{-1}B_1 \le 2\sigma_2^{-1}B_1$, implying the bound
 \begin{equation}
 \sup_{t\in\mathbb{R}} \left|\mathbb{P}\left\{\widehat{{\tau}}^{\Pi}(\coefvec) -  {\tau}^{\Pi}(\coefvec) \le t \middle|\Pi \right\} - \Phi(\sigma_2^{-1}t)\right|
 \le \frac{2B_1}{\sigma_2} + \eta + 2\left|\frac{\sigma_\Pi - \sigma_2}{\sigma_2}\right|\sqrt{2\log(e/\eta)}.\label{eq:intermediate-simplify-cov}
 \end{equation}
 
\paragraph{First term, case $\sigma_{\Pi} < \frac{1}{2}\sigma_2$:} 
On the other hand, if $\sigma_{\Pi} < \frac{1}{2}\sigma_2$, then $|\sigma_2 - \sigma_\Pi| \ge \frac{1}{2}\sigma_2$, so  the third term satisfies $2\sigma_2^{-1}|\sigma_{\Pi} - \sigma_2|\sqrt{2\log(e/\eta)} \ge 1$ and the above becomes trivial. We conclude that \cref{eq:intermediate-simplify-cov} holds.

\paragraph{Third term:}
Next, we handle the third term $\sigma_2^{-1}|\sigma_{\Pi} - \sigma_2|$. Combining the inequality $|x-1| \le |x-1||x+1| = |x^2 - 1|$ for $x = \sigma_\Pi/\sigma_2 > 0$ with \Cref{lemma-concentration-cov} gives us that with with probability ~$1-\eta'$,
 \[\left|\frac{\sigma_\Pi - \sigma_2}{\sigma_2}\right|  \le \left|\frac{\sigma_\Pi^2 - \sigma_2^2}{\sigma_2^2} \right| \le \frac{B_2^2\log(4/\eta')}{\sigma_2^2} + \frac{B_2\sigma_2\sqrt{\log(4/\eta')}}{\sigma_2^2}; \quad B_2 \coloneqq \frac{CC_1^2C_2\|\coefvec\|_2}{\sqrt{J(I-1)}}\]
The first term is the square of the second term, and that the bound becomes trivial if either exceeds $1$, so we may assume that the second term is larger and deduce that with probability $1-\eta'$
 \[\sup_{t\in\mathbb{R}} \left|\mathbb{P}\left\{\widehat{{\tau}}^{\Pi}(\coefvec) -  {\tau}^{\Pi}(\coefvec) \le t \middle|\Pi \right\} - \Phi(\sigma_2^{-1}t)\right|
 \le \frac{2B_1}{\sigma_2} + \eta + \frac{B_2\sqrt{8\log(e/\eta)\log(4/\eta'})}{\sigma_2}.\]
 Finally, we make the substitution $t' = \sigma_2 t$ above, rearrange, simplify $\min\{I_0,I_1\} \ge I/C_1 \ge I/C_1^3$ in the definition of $B_1$ and $(IJ)^{-1/2} \le (I^{-1} + J^{-1})/2$ in $B_2$, and finally take $\eta' = \eta$
 to deduce the claimed inequality.
\end{proof}

\subsection{Final result} \label{sec:tying}
In this subsection we combine the results of \cref{sec:proof_clt_conditional} and \cref{sec:concentration} and  finally state and prove the CLT presented in \Cref{thm:clt}.

\subsubsection{Combining CLTs}
For any permutation $\Pi$, we have the decomposition
\begin{equation}
    \hattau(\coefvec) - \tau(\coefvec) = \{\hattau(\coefvec) - \tau^\Pi(\coefvec)\} + \{\tau^\Pi(\coefvec)- \tau(\coefvec)\}. \label{eq:decomposition}
\end{equation}
\Cref{lem:clt-conditional-mean,lemma:simplified-conditional-clt} yield the two following Gaussian approximations
\begin{align*}
    \frac{\{\hattau(\coefvec) - \tau^\Pi(\coefvec)\}}{\mme[\mmv\{\hattau(\coefvec)|\Pi\}]^{-1/2}} \overset{\mathrm{d}} \approx N(0,1), 
    \quad\text{and}\quad
    \frac{\{\tau^\Pi(\coefvec)- \tau(\coefvec) \}}{\mmv\{\tau^\Pi(\coefvec)\}^{-1/2}} \overset{\mathrm{d}} \approx N(0,1).
\end{align*}
The rate of convergence in both cases is a function of sample sizes $I$ and $J$ and was characterized in \cref{lem:clt-conditional-mean,lemma:simplified-conditional-clt}. 
We use the decomposition in \cref{eq:decomposition} to combine these approximations to recover a Gaussian approximation of $\hattau(\coefvec) - \tau(\coefvec)$ (the left-hand side of \cref{eq:decomposition}). 
This is accomplished with the following technical lemma.

\begin{lemma}\label{lem:tensorize-kolmogorov}
Let $V$ be a discrete random variable, and $X$ be different $\sigma(V)$-measurable random variable. 
Let $Y$ be a real valued random variable.
For $F, G$ conditional distribution functions with bounded densities $F'$ and $G'$, 
if for $\eta \in (0,1)$ and some numbers $\Delta(\eta), \Delta' > 0$, 
\begin{equation}
    \mathbb{P}\left\{\sup_{t \in \bb{R}} \left|\bb{P}(Y \le t \mid V) - F(t)\right| \le \Delta(\eta) \right\} \ge 1-\eta,
    \quad\text{and}\;
    \sup_{t \in \bb{R}} \left|\bb{P}(X \le t) - G(t)\right| \le \Delta'       \label{eq:condition_tensorize_kolmogorov}
\end{equation}
then, for $Z \sim F$ and $W \sim G$ independent of each other and of $(X, Y, V)$, we have
\begin{align}
     \sup_{t \in \bb{R}}\left| \bb{P}\{X + Y \le t\} - \bb{P}\{X + Z \le t\}\right| &\le \eta + \Delta(\eta) \label{eq:tensorize-conditional}\\
     \sup_{t \in \bb{R}}\left| \bb{P}\{X + Z \le t\} - \bb{P}\{W + Z \le t\}\right| &\le \Delta' \label{eq:tensorize-unconditional}\\
    \sup_{t \in \bb{R}}\left| \bb{P}\{X + Y \le t\} - \bb{P}\{W + Z \le t\}\right| &\le \eta + \Delta(\eta) + \Delta'. \label{eq:tensorize-combined}
\end{align}
\end{lemma}
\begin{proof}
Let $t \in \bb{R}$ be given. We have 
\begin{align}
    |\bb{P}\{X + Y \le t\} - \bb{E}[F(t-X)]|
    &= \bigg|\bb{E}\left[\bb{P}(Y \le t - X\mid V) - F(t-X)\right]\bigg| \nonumber\\
    &\le \bb{E}\left[ \big|\bb{P}(Y \le t - X\mid V) - F(t-X) \big|\right] \nonumber\\
    &\le \bb{E}\left[\sup_{u \in \bb{R}}\big|\bb{P}(Y \le u\mid V) - F(u)\big|\right],\label{eq:kolmogorov_proof_exp}
    \end{align}
where we have used Jensen's inequality and then \Cref{lem:pi-measurable-substitution} (as $X$ is $\sigma(V)$-measurable). 
Define now the event $\mathcal{E}_\eta:= \left\{ \left|\bb{P}(Y \le u\mid V) - F(u)\right| \le \Delta(\eta)\right\}$.  
We use $\mathcal{E}_\eta$ to bound the argument of the expectation in \cref{eq:kolmogorov_proof_exp} as follows: 
(i) on $\mathcal{E}_\eta$, by its definition, \[\sup_{u \in \bb{R}}\big|\bb{P}(Y \le u\mid V) - F(u)\big| \le \Delta(\eta);\]
(ii) on $\mathcal{E}_\eta^c$, $\sup_{u \in \bb{R}}\big|\bb{P}(Y \le u\mid V) - F(u)\big| \le 1$, since it is an absolute difference of probabilities.
Hence, we conclude that
\begin{equation}
|\bb{P}\{X + Y \le t\} - \bb{E}[F(t-X)]|\le \bb{E}\left[\mathbbm{1}\{\mathcal{E}_\eta^c\} +\mathbbm{1}\{\mathcal{E}_\eta\}\Delta(\eta) \right] \le \eta + \Delta(\eta), \label{eq:conditional-tensorization-bound}
\end{equation}
where the very last inequality follows from  noting that by \cref{eq:condition_tensorize_kolmogorov}, $\bb{P}(\mathcal{E}_\eta^c) \le \eta$.

Next, we can simplify $\mme[F(t-X)]$ using the fundamental theorem of calculus, Fubini's theorem, and the convolution formula for sums of independent random variables as follows:
\begin{align}
    \mme[F(t-X)] = \mme\left[\int_{-\infty}^{t-X} F'(u)\mathrm{d}u\right] \nonumber 
    &= \mme\left[\int_{-\infty}^\infty F'(u)\mathbbm{1}\{u \le t-X\}\mathrm{d}u \right] \nonumber \\
    &=  \int_{-\infty}^\infty F'(u) \bb{P}\{X \le t - u\} \mathrm{d}u  \label{eq:rewrite_proba} \\
    &= \bb{P}(X + Z \le t) \label{eq:proba_equiv}.
\end{align} 
Since $t \in \bb{R}$ was arbitrary, substituting \cref{eq:proba_equiv} in \cref{eq:conditional-tensorization-bound} proves \cref{eq:tensorize-conditional}.

Further, expanding from \cref{eq:rewrite_proba} allows us to write:
\begin{align*}
    \mme[F(t-X)] &= \int_{-\infty}^\infty F'(u) \bb{P}\{X \le t - u\} \mathrm{d}u 
    \\
    &= \int_{-\infty}^\infty F'(u)G(t-u)\mathrm{d}u + \int_{-\infty}^\infty F'(u)[\bb{P}\{X \le t - u\} - G(t-u)]\mathrm{d}u \\
    &= \bb{P}(Z + W \le t) + \int_{-\infty}^\infty F'(u)[\bb{P}\{X \le t - u\} - G(t-u)]\mathrm{d}u. 
\end{align*}
It follows that we may bound
\begin{align}
    \bigg|\mme[F(t-X)]  - \bb{P}(Z + W \le t) \bigg| &= \left| \int_{-\infty}^\infty F'(u)[\bb{P}\{X \le t - u\} - G(t-u)]\mathrm{d}u \right| \nonumber \\
    &\le \int_{-\infty}^\infty 
  F'(u)\left|\bb{P}\{X \le t - u\} - G(t-u)\mathrm{d}u \right| \nonumber \\
  \bigg|\bb{P}(X + Z \le t)  - \bb{P}(Z + W \le t) \bigg|&\le\int_{-\infty}^\infty 
  F'(u) \Delta' \mathrm{d}u = \Delta',\label{eq:unconditional-tensorization-bound}
\end{align}
where we have plugged in the equality $\mme[F(t-X)] = \bb{P}(X + Z \le t)$ from \cref{eq:proba_equiv}. Because $t \in \bb{R}$ was arbitrary, this proves \cref{eq:tensorize-unconditional}.

Finally, we use the triangle inequality and \cref{eq:conditional-tensorization-bound,eq:unconditional-tensorization-bound} to deduce:
\begin{align*}
    \left| \bb{P}\{X + Y \le t\} - \bb{P}\{Z + W  \le t\}\right| 
    &\le \left| \bb{P}\{X + Y \le t\} - \mme[F(t-X)]\right|  \\
    & \;\;+ \left|\mme[F(t-X)]- \bb{P}\{Z + W \le t\}\right| \\
    &\le \eta + \Delta(\eta) + \Delta'
\end{align*}
After noting that $t \in \bb{R}$ was arbitrary, this proves \cref{eq:tensorize-combined}.  
\end{proof}

\subsubsection{Final bound}

Finally we prove \Cref{thm:clt}.
Next, we give the main result, which shows how we combine the normal approximations in \Cref{lem:clt-conditional-mean,lemma:simplified-conditional-clt} with \Cref{lem:tensorize-kolmogorov}.
\begin{lemma}\label{lemma:tying-together}
    For some $\Delta_1>0$, suppose that
    \begin{align}
        \sup_{t \in \mathbb{R}} 
		\left|\mathbb{P}\left\{\frac{{{\tau}}^{\Pi}(\coefvec) -  {\tau}(\coefvec)}{\sqrt{\mmv\{{\tau}^{\Pi}(\coefvec)\}}} \le  t  \right\} - \Phi(t)\right| &\le \frac{\Delta_1}{\sqrt{\mmv\{{\tau}^{\Pi}(\coefvec)\}}} \label{eq:ks-unocnditional}
  \intertext{and that with probability at least $1-\eta$,}
       \ \sup_{t \in \mathbb{R}} \left|\mathbb{P}\left\{\frac{\widehat{{\tau}}(\coefvec) -  {\tau}^{\Pi}(\coefvec)}{\sqrt{\mme[\mmv\{\hat{\tau}(\coefvec)|\Pi\}]}} \le  t  \middle| \Pi \right\} - \Phi(t)\right| &\le \eta + \frac{\Delta_2\log(C/\eta)}{\sqrt{\mme[\mmv\{\hat{\tau}(\coefvec)|\Pi\}]}} . \label{eq:ks-conditional}
 \intertext{Then, with $\xi(C,t) = C t \log(C / t)$ and some $\Delta_2>0$, we have}
  \sup_{u \in \mathbb{R}} \left|\mathbb{P}\left\{\frac{\widehat{{\tau}}(\coefvec) -  \tau(\coefvec)}{\sqrt{\mmv\{\hat{\tau}(\coefvec)\}}} \le  u \right\} - \Phi(u)\right| &\le \xi\left(C, \frac{(\Delta_1 + \Delta_2)^{1/3}}{\mmv\{\hat{\tau}(\coefvec)\}^{1/6}}\right).  \label{eq:generic-clt}
    \end{align} 
    \end{lemma}
Before we prove \Cref{lemma:tying-together}, we show how it implies \Cref{thm:clt}.

\begin{corollary}[\Cref{thm:clt} in the main paper]\label{cor:clt-appendix}
Under Assumptions (a) and (b), it holds for universal constants $C, C' > 0$ that
\[\sup_{t \in \mathbb{R}} \left|\mathbb{P}\left\{\frac{\widehat{{\tau}}(\coefvec) -  \tau(\coefvec)}{\sqrt{\mmv\{\hat{\tau}(\coefvec)\}}} \le  t \right\} - \Phi(t)\right| \le C'\Delta^{1/3}\log(C'/\Delta); \quad \Delta = \left(\frac{C C_1^2C_2\|\coefvec\|_2(I^{-1}+J^{-1})}{\sqrt{\mmv\{\hat{\tau}(\coefvec)\}}}\right).\]
\end{corollary}
\begin{proof}[Proof of \Cref{cor:clt-appendix}]
By \Cref{lem:clt-conditional-mean} and \Cref{lemma:simplified-conditional-clt}, \cref{eq:ks-unocnditional,eq:ks-conditional} hold with
\[\Delta_1 = CC_1C_2J^{-1}; \quad \Delta_2 = CC_1^2C_2\|\coefvec\|_2(I^{-1} + J^{-1})\]
Thus, using $C_1^2 \ge C_1$ since $C_1 \ge 1$ by definition, \cref{eq:generic-clt} holds with $(\Delta_1 + \Delta_2) = CC_1^2C_2\|\coefvec\|_2(I^{-1} + J^{-1})$. Thus, by \Cref{lemma:tying-together}
\[
    \sup_{t \in \mathbb{R}} \left|\mathbb{P}\left\{\frac{\widehat{{\tau}}(\coefvec) -  \tau(\coefvec)}{\sqrt{\mmv\{\hat{\tau}(\coefvec)\}}} \le  t \right\} - \Phi(t)\right| \le C\Delta^{1/3}\log(C/\Delta).
\]
\end{proof}

\begin{proof}[Proof of \Cref{lemma:tying-together}]
We introduce the shorthand $\sigma_1^2 = \mmv\{{\tau}^{\Pi}(\coefvec)\}$, $\sigma_2^2 = \mathbb{E}[\mmv\{\hattau(\coefvec)|\Pi\}]$, and $\sigma^2 = \mmv{\hattau(\coefvec)}$, so in particular $\sigma_1^2 + \sigma_2^2 = \sigma^2$, and we write $\Phi_s(t) = \Phi(t/s)$ for the Gaussian CDF with scale $s$. Finally, note that we may assume $(\Delta_1 + \Delta_2)/\sigma \le 1$, or else the final bound becomes trivially true.

After substituting $u = (\sigma/\sigma_2)t$ and rearranging, \cref{eq:ks-conditional} gives that
\begin{equation}\label{eq:ks-conditional-scaled}
\sup_{t \in \mathbb{R}} \left|\mathbb{P}\left\{\hattau(\coefvec) - \tau^{\Pi}(\coefvec) \le  t  |\Pi \right\} - \Phi_{\sigma_2}(t)\right| \le \eta + (\Delta_2/\sigma_2)\log(1/\eta)
\end{equation}
with probability $1-\eta$.
Substituting $u = (\sigma/\sigma_1)t$ in \cref{eq:ks-unocnditional} similarly gives 
\begin{equation}\label{eq:ks-unconditional-scaled}
    \sup_{t \in \mathbb{R}} \left|\mathbb{P}\left\{\tau^{\Pi}(\coefvec) - {\tau}(\coefvec) \le  t \right\} - \Phi_{\sigma_1}(t)\right| \le \Delta_1/\sigma_1.
\end{equation}
Now, let $g_1 \sim N(0, \sigma_1^2)$ and $g_2 \sim N(0,\sigma_2^2)$ be independent Gaussian random variables, which are also independent of the random assignment. Applying \Cref{lem:tensorize-kolmogorov} with $X = \tau^{\Pi}(\coefvec) - \tau(\coefvec)$, $Y = \hattau(\coefvec) - \tau^{\Pi}(\coefvec)$, $V = \Pi$, $W = g_1$ and $Z = g_2$ gives the bounds
\begin{align}
    \sup_{t \in \bb{R}}\left|\bb{P}\left\{\hattau(\coefvec) - {\tau}^\Pi(\coefvec)  \le t\right\} - \bb{P}\{\tau^\Pi(\coefvec) - \tau(\coefvec) + g_2 \le t\}\right| \le 2\eta +  \frac{\Delta_2\log(C/\eta)}{\sigma_2}, \label{eq:conditional-bound}\\
    \sup_{t \in \bb{R}}\left|\bb{P}\left\{\tau^\Pi(\coefvec) - \tau(\coefvec) + g_2 \le t\right\} -\Phi_\sigma(t) \right| \le \Delta_1/\sigma_1 \label{eq:unconditional-bound},\\
    \sup_{t \in \bb{R}}\left|\bb{P}\left\{\hattau(\coefvec) - {\tau}(\coefvec) \le t\right\} - \Phi_\sigma(t)\right| 
    \le 2\eta + (\Delta_1/\sigma_1) + \frac{\Delta_2\log(C/\eta)}{\sigma_2}, \label{eq:both-bound}
\end{align}
where we used $W + Z = g_1 + g_2 \sim N(0,\sigma^2)$ since $g_1, g_2$ are independent and $\sigma_1^2 + \sigma_2^2 = \sigma^2$.  We then consider cases, first assuming that $\Delta_1/\sigma_1 \le \sigma_1/\sigma$ and $\Delta_2/\sigma_2 \le \sigma_2/\sigma$; otherwise we will show that the proof simplifies. 

\paragraph{Case 1, $\Delta_1/\sigma_1 \le \sigma_1/\sigma$ and $\Delta_2/\sigma_2 \le \sigma_2/\sigma$.} 
In this case, we start from \cref{eq:both-bound}. Our assumption that $\Delta_1/\sigma_1 \le \sigma_1/\sigma$ and $\Delta_2/\sigma_2 \le \sigma_2/\sigma$ implies
\(\frac{\Delta_1}{\sigma} = \frac{\Delta_1}{\sigma_1} \frac{\sigma_1}{\sigma} \ge \frac{\Delta_1^2}{\sigma_1^2}\) and \(\frac{\Delta_1}{\sigma} = \frac{\Delta_1}{\sigma_1} \frac{\sigma_1}{\sigma} \ge \frac{\Delta_1^2}{\sigma_1^2}\)
Plugging this into the above, putting $\Delta_{1+2} \coloneqq \Delta_1 + \Delta_2$, and using $\sqrt a + \sqrt b \le 2\sqrt{a+b}$, we get
\[ \sup_{t \in \bb{R}}\left|\bb{P}\left\{\hattau(\coefvec) - {\tau}(\coefvec) \le t\right\} - \Phi_\sigma(t) \right| \le  2\eta + \frac{\sqrt{\Delta_1} + \sqrt{\Delta_2}\log(C/\eta)}{\sqrt{\sigma}} \le 2 \eta + 2\sqrt{\frac{\Delta_{1+2}}{\sigma}} \log(C/\eta).\]
Since we may assume $\Delta_{1+2}/\sigma \le 1$ or else the final bound is trivial, we may plug in $\eta = \Delta_{1+2}/\sigma$ to obtain the simplified bound
\begin{align*}
    \sup_{t \in \bb{R}}\left|\bb{P}\left\{\hattau(\coefvec) - {\tau}(\coefvec) \le t\right\} - \Phi_\sigma(t) \right| 
    & \le  C'(\Delta_{1+2}/\sigma)^{1/2}\log\{C'/(\Delta_{1+2}/\sigma)\} \\
    & \le C'(\Delta_{1+2}/\sigma)^{1/3}\log\{C'/(\Delta_{1+2}/\sigma)\}.
\end{align*} 
This is precisely our claim, after taking $u = \sigma t$.
\paragraph{Case 2: $\Delta_2/\sigma_2 > \sigma_2/\sigma$.} 
 Multiplying both sides by $\sigma_2/\sigma$ gives $\Delta_2/\sigma > \sigma_2^2/\sigma^2$. 
Moreover, we may assume that $\Delta_1 / \sigma_1 \le \sigma_1/\sigma$, since otherwise the same reasoning gives $\Delta_1/\sigma > \sigma_1^2 / \sigma^2$, implying
\(\Delta_{1+2}/\sigma = (\Delta_1 + \Delta_2)/\sigma > (\sigma_1^2 + \sigma_2^2)/\sigma^2 = 1,\)
in which case the bound is trivial. 
Multiplying both sides of the inequality $\Delta_1 / \sigma_1 \le \sigma_1/\sigma$ by $\Delta_1 / \sigma_1$ gives $(\Delta_1/\sigma_1)^2 \le \Delta_1/\sigma$. To summarize, we may assume
\begin{equation}\label{eq:case-analysis-sigma2}
    \sigma_2/\sigma < \sqrt{\Delta_2/\sigma}; \quad \Delta_1/\sigma_1 \le  \sqrt{\Delta_1/\sigma}.
\end{equation}

By definition, 
\(\mme\{[\hattau(\coefvec) - \tau^\Pi(\coefvec)]^2\} = \mme[\mme\{[\hattau(\coefvec) - \tau^\Pi(\coefvec)]^2| \Pi\}] = \mme[\mmv\{\hattau(\coefvec)|\Pi\}] = \sigma_2^2.\) By Chebyshev's inequality, using $\mmv\{\hattau(\coefvec) - \tau^\Pi(\coefvec) - g_2\} = \mmv\{\hattau(\coefvec) - \tau^\Pi(\coefvec)\} + \mmv\{g_2\} = 2\sigma_2^2$ due to independence of $g_2$, it holds with probability at least $1-\eta$ that
\begin{align*}
    |\{\tau^\Pi(\coefvec) - \tau(\coefvec) + g_2\} - \{\hattau(\coefvec) - \tau(\coefvec)\}| = |\hattau(\coefvec) - \tau^\Pi(\coefvec) - g_2| \le \sigma_2\sqrt{\frac{2}{\eta}}.
\end{align*}
Applying \Cref{lemma:combine-coupling-with-kolmogorov} with the above bound and \cref{eq:unconditional-bound}, we have
\begin{align*}
    \sup_{t \in \bb{R}}\left|\bb{P}\left\{\hattau(\coefvec) - \tau(\coefvec)\le t\right\} -\Phi_\sigma(t) \right| 
    &\le C\left\{\eta + (\Delta_1/\sigma_1) + \frac{\sigma_2}{\sigma}\sqrt{\frac{2}{\eta}}\right\}
\intertext{By \cref{eq:case-analysis-sigma2} and $\Delta_1, \Delta_2 \le \Delta_{1+2}$, this simplifies to}
  &\le C\left\{\eta + \sqrt{\Delta_{1+2}/\sigma} + \sqrt{\Delta_{1+2}/\sigma}\sqrt{\frac{2}{\eta}}\right\}.
    \intertext{Plugging in $\eta = (\Delta_{1+2}/\sigma)^{1/3}$, which we may assume is at most $1$, this is}
    &\le C'(\Delta_{1+2}/\sigma)^{1/3} \\
    &\le C''(\Delta_{1+2}/\sigma)^{1/3}\log\{C''/(\Delta_{1+2}/\sigma)\}.
\end{align*}

\paragraph{Case 3: $\Delta_1/\sigma_1 > \sigma_1/\sigma$.} 
This is completely analogous to Case 2, with $(\Delta_1, \sigma_1)$ swapped with $(\Delta_2, \sigma_2)$, and \cref{eq:ks-unocnditional} replaced by \cref{eq:ks-conditional}. By a symmetric argument, we may assume
\begin{equation}\label{eq:case-analysis-sigma1}
    \sigma_1/\sigma < \sqrt{\Delta_1/\sigma}; \quad \Delta_2/\sigma_2 \le  \sqrt{\Delta_2/\sigma}.
\end{equation}
By definition, \(\mmv\{\tau^{\Pi}(\coefvec) - \tau(\coefvec)\} = \mmv\{\tau^{\Pi}(\coefvec)\} = \sigma_1^2\), and $\mmv\{\tau^{\Pi}(\coefvec) - \tau(\coefvec) - g_1\} = 2\sigma_1^2$ by independence of $g_1$, so by Chebyshev's inequality,
\begin{align*}
    |\{\tau^\Pi(\coefvec) - \tau(\coefvec) + g_2\} - \{g_1 + g_2\}| = |\tau^\Pi(\coefvec) - \tau(\coefvec) - g_1| \le \sigma_1\sqrt{\frac{2}{\eta}}.
\end{align*}
Applying \Cref{lemma:combine-coupling-with-kolmogorov} with the above bound and \cref{eq:conditional-bound}, we have
\begin{align*}
    \sup_{t \in \bb{R}}\left|\bb{P}\left\{\hattau(\coefvec) - \tau(\coefvec) + g_2 \le t\right\} -\Phi_\sigma(t) \right| 
    &\le C\left\{\eta + (\Delta_2/\sigma_2)\log(C/\eta) + \frac{\sigma_1}{\sigma}\sqrt{\frac{2}{\eta}}\right\}.
    \intertext{By \cref{eq:case-analysis-sigma1} and $\Delta_1, \Delta_2 \le \Delta_{1+2}$, this simplifies to}
    &\le C\left\{\eta + \sqrt{\Delta_{1+2}/\sigma}\log(C/\eta) + \sqrt{\Delta_{1+2}/\sigma}\sqrt{\frac{2}{\eta}}\right\}.
    \intertext{Plugging in $\eta = (\Delta_{1+2}/\sigma)^{1/3}$, which we may assume is at most $1$, this is}
    &\le C'(\Delta_{1+2}/\sigma)^{1/3}\log\{C'/(\Delta_{1+2}/\sigma)\}.
\end{align*}
\end{proof}

\section{Additional simulations} \label{sec:additional_experiments}

We show simulations for results of \cref{sec:estimation} for SMRDs under  local interference. 
Fix $1\leq I_T \le I-1, 1\leq J_T \le J-1$, and let $P_{\bm{\Type}}$ be the distribution over the matrix of types $\bm{\Type}$ induced by sampling $\bw$ from a SMRD  as per \Cref{eq:types}. 
We draw data via: 
\begin{align}\label{eq:dgp}
\begin{split}
    \bm{\Type}  &\sim P_{\bm{\Type}}(\cdot),
    \quad\text{and}\quad
    Y_{ij} \mid  \bm{\Type}  {\sim} F_{\type_{ij}}(\cdot).
\end{split}
\end{align}
Here, potential outcomes are distributed as follows:
\begin{align} \label{eq:parametric_ALI}
    Y_{ij}\mid \Type_{ij} &\overset{ind}{\sim} 
    \begin{cases}
        F_0(\cdot) &\mbox{ if } \type = \ccc, \\
        F_0(\cdot) + F_{\rb}(\cdot) &\mbox{ if } \type = \icb, \\
        F_0(\cdot) + F_{\rs}(\cdot) &\mbox{ if } \type = \ics, \\
        F_1(\cdot) + F_{\rb}(\cdot) + F_{\rs}(\cdot) &\mbox{ if } \type = \ctt.
    \end{cases}
\end{align}
$F_{\ell}$ are distributions, $\ell \in \{0 , \rb, \rs, 1\}$. By construction, data drawn from \Cref{eq:parametric_ALI} satisfies the local interference assumption \eqref{sutva_local}.
In our illustration the $F_\ell$ are Gaussian, although this is not required --- indeed, we do not need to impose any parametric assumption on the specification \Cref{eq:parametric_ALI} for our simulations to be consistent with the theory proved in \cref{sec:estimators}. 
We set $p_0 := 1, p_1 := 1$, and the proportions of treated {\buyer}s and {\seller}s in the MRD be $p_\rb:=I_{\ct}/I$ and $p_\rs:=J_{\ct}/J$, and $F_\ell(\cdot) = \NN(p_{\ell} \mu_\ell,\sigma_\ell^2)$, $\ell \in \{0 , \rb, \rs, 1\}$. 
We set $I = 200, J = 150$, $p_{\rb} = 0.45$, $p_{\rs} = 0.55$ and $\mu_{0} = 3, \mu_{B} = -1, \mu_{S} = -1, \mu_{1}= 6$ and $\sigma_{x} = 1 \;\forall\;x \in \{0,B,S,1\}$.

To assess validity of the results presented in \Cref{sec:estimation}, we draw matrices $\bm{Y}(\type) = [Y_{ij}(\type)]$ of $I\times J$ fixed potential outcomes  $\forall \type \in \types$ via \Cref{eq:parametric_ALI}. We  sample 10,000 assignment matrices $\bw$ i.i.d.\ at random from the SMRD $\mmw$ (equivalently, we sample types $\bm{\Gamma}$ from $P_{\bm{\Gamma}}$ in \Cref{eq:dgp}). Each assignment corresponds to a matrix of types and hence which potential outcomes are observed. To each assignment corresponds an observed matrix of $I \times J$ realized potential outcomes. We use the collection of outcomes from the 10,000 re-randomizations to empirically verify the properties of the proposed estimators. 

For the type estimator defined in \Cref{eq:meantypeestimate} we check that  $\meanestimate{\type}$ is an unbiased estimate of $\meanpopulation{\type}$ (\Cref{lemma:unbiasedness}) and that $
\widehat{\Sigma}_{\type}$ is an unbiased estimator of the variance of the type estimator (\Cref{thm:sample_variance_avg_effs}). 
\Cref{fig:type_effs_ALI} reports the histogram of the values attained by $\meanestimate{\ccc}$ across the 10,000 Monte Carlo replicates. From \Cref{eq:parametric_ALI} (and, under mild assumptions, from the CLT), the type estimator is normally distributed, and from \Cref{lemma:unbiasedness}, it is centered at the true population value $\meanpopulation{\type}$. 
Moreover, the distance between the 2.5\% and 97.5\% quantiles of the distribution of the type estimator is close to the length of our 95\% confidence interval.
In the right panel, we show that $\widehat{\Sigma}_{\ccc}$ is an unbiased estimator for the variance of the type estimator, as proved in \Cref{thm:sample_variance_avg_effs}. Analogous results hold for $\icb, \ics, \ctt$.

\begin{figure}[ht]
    \centering   
    \includegraphics[width=.75\textwidth]{figures/main_cc_new.pdf}
    \caption{
   Distribution of $\meanestimate{\ccc}$ (left) and of the variance estimator $\widehat{\Sigma}_{\ccc}$ (right). Black lines are plotted in correspondence of the population quantities $\meanpopulation{\ccc}$, $\mmv\left(\meanestimate{\ccc}\right)$.
    }
    \label{fig:type_effs_ALI}
\end{figure}

\Cref{fig:spill_ALI} focuses on the spillover effect $\tauspillb$: the left panel shows the distribution of the unbiased estimator $\hattauspillb$ (\Cref{thm:spillover_unbiasedness}). $\hattauspillb$ is a linear combination of Gaussians, and usual confidence intervals can be derived. The right panel contains the distribution of the upper bound $\widehat{\mmv}^{\rm hi}(\hattauspillb)$ for the variance ${\mmv}(\hattauspillb)$ (\Cref{thm:sample_variance_spillover}). 

\begin{figure}[ht]
    \centering   
    \includegraphics[width=.75\textwidth]{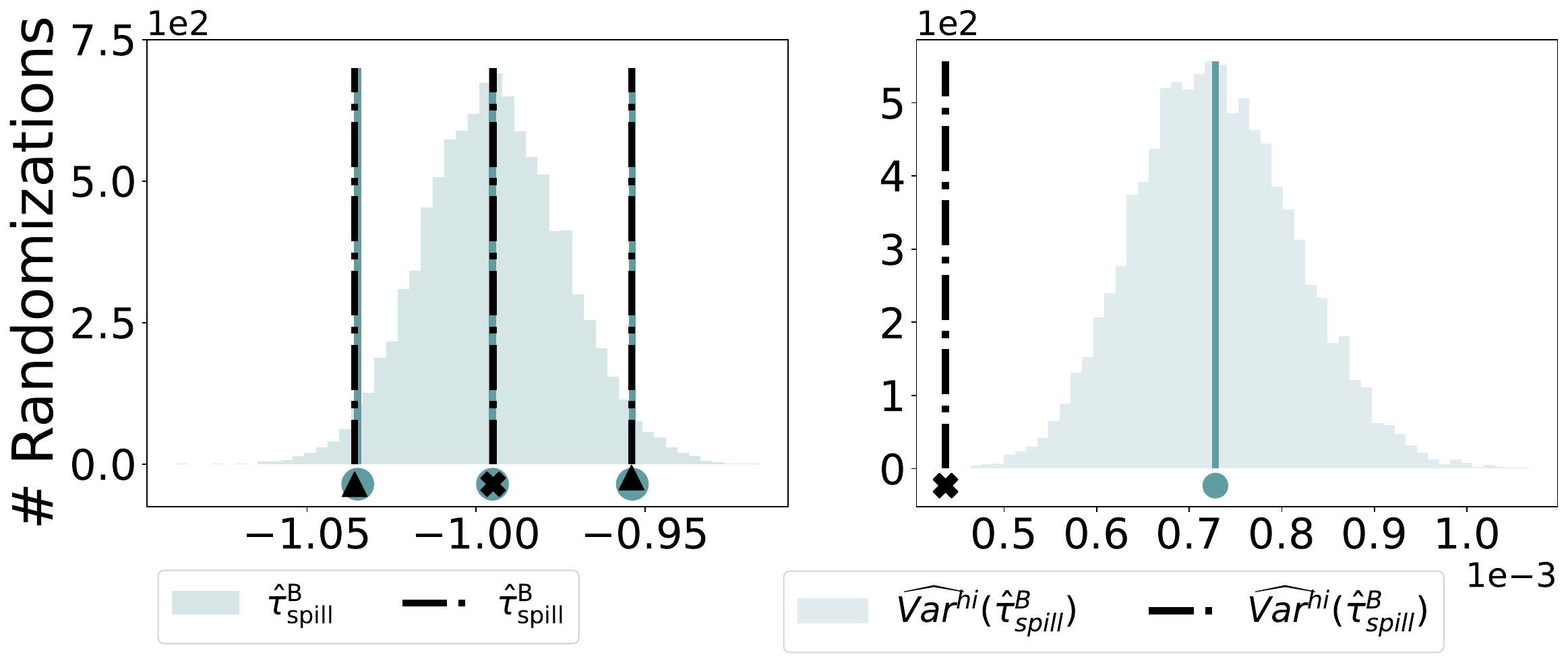}
    \caption{
   Distribution of the estimator for the spillover effect $\hat{\tau}_{\rm{spill}}^B$ (left) and corresponding variance estimator $\widehat{\mmv}^{\rm{hi}}(\hat{\tau}_{\rm{spill}}^B)$ (right). Black lines correspond to the population quantities.
    }
    \label{fig:spill_ALI}
\end{figure}

\subsection{Figures for the average-type and spillover effects}

For each $\type \in \types$, we report properties of $\meanestimate{\type}$ similar to \cref{fig:type_effs_ALI} in \cref{fig:type_effs_icb,fig:type_effs_ics,fig:type_effs_tr}. In \cref{fig:spill_direct_ALI,fig:spill_seller_ALI,fig:spill_ate_ALI} we provide plots for estimators of spillover effects.

\begin{figure}[ht]
    \centering   
    \includegraphics[width=.85\textwidth]{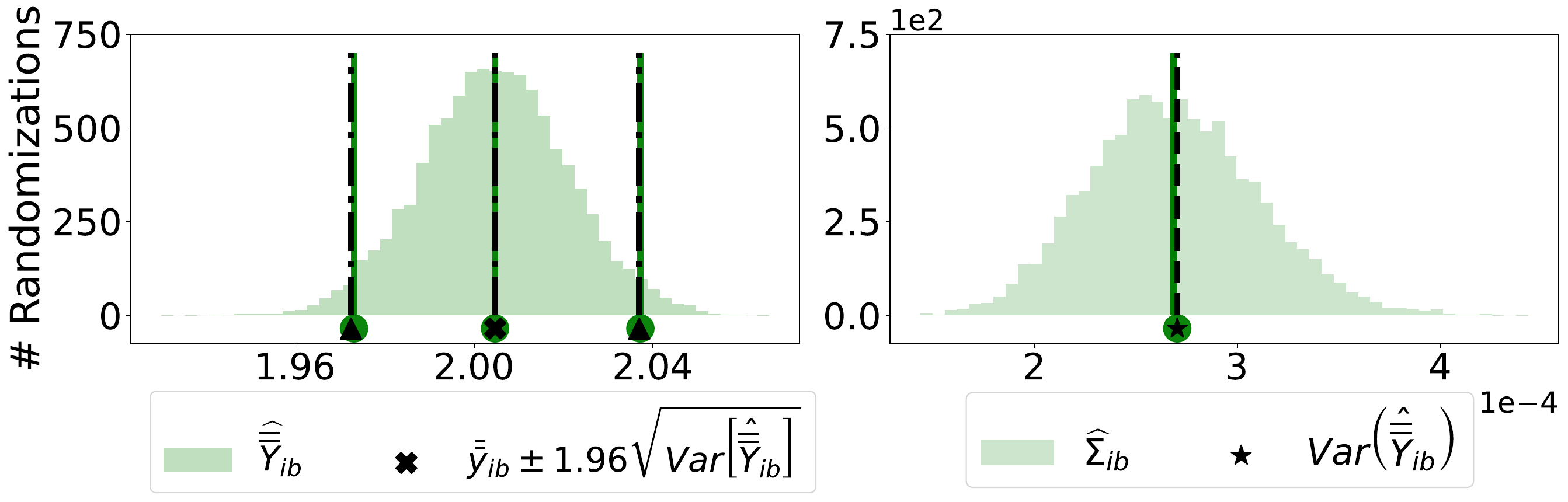}
    \caption{
    Same as \Cref{fig:type_effs}, now for $\icb$.
    }
    \label{fig:type_effs_icb}
\end{figure}

\begin{figure}[ht]
    \centering   
    \includegraphics[width=.85\textwidth]{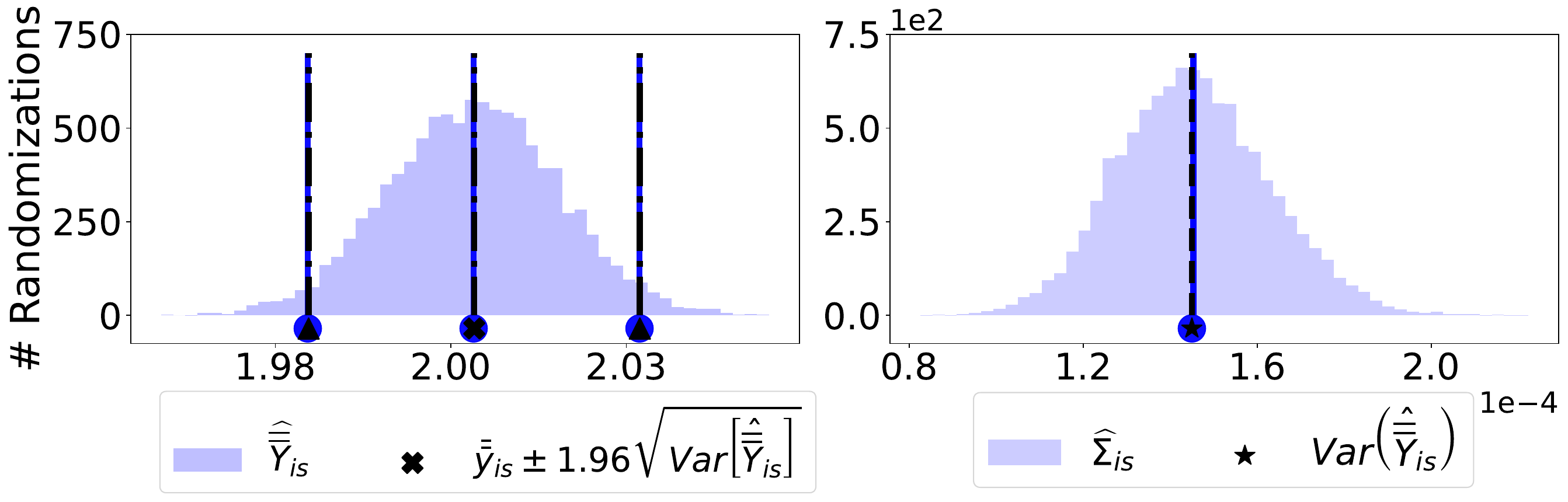}
    \caption{
    Same as \Cref{fig:type_effs}, now for $\ics$
    }
    \label{fig:type_effs_ics}
\end{figure}

\begin{figure}[ht]
    \centering   
    \includegraphics[width=.85\textwidth]{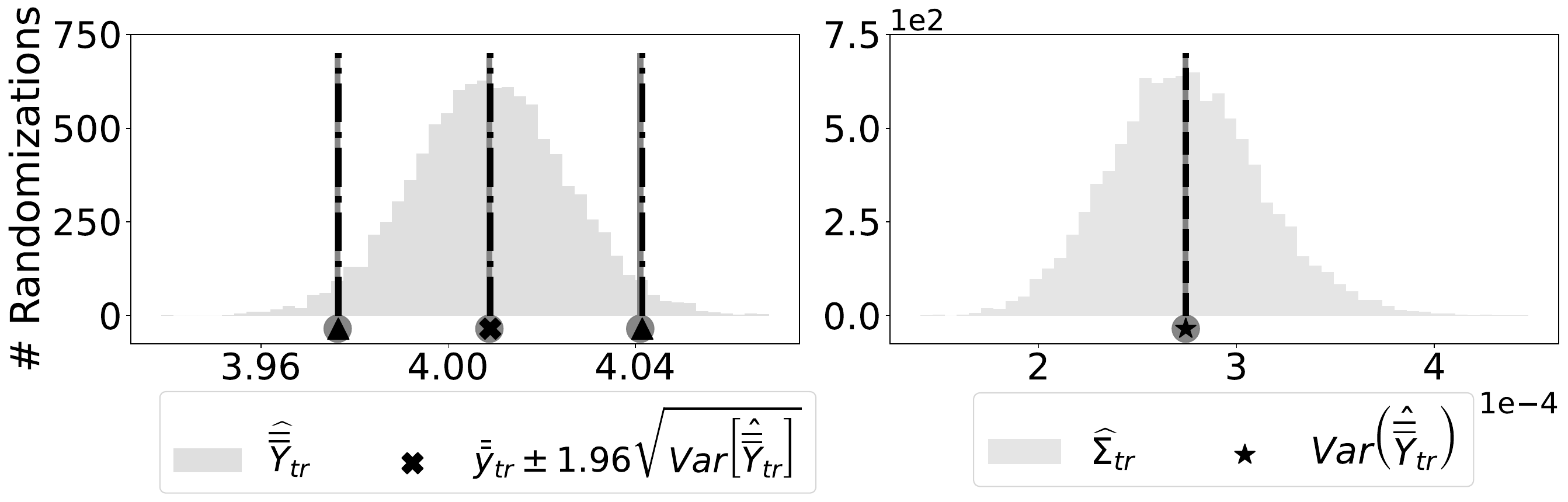}
    \caption{
    Same as \Cref{fig:type_effs}, now for $\ctt$
    }
    \label{fig:type_effs_tr}
\end{figure}


\begin{figure}[ht]
    \centering   \includegraphics[width=.85\textwidth]{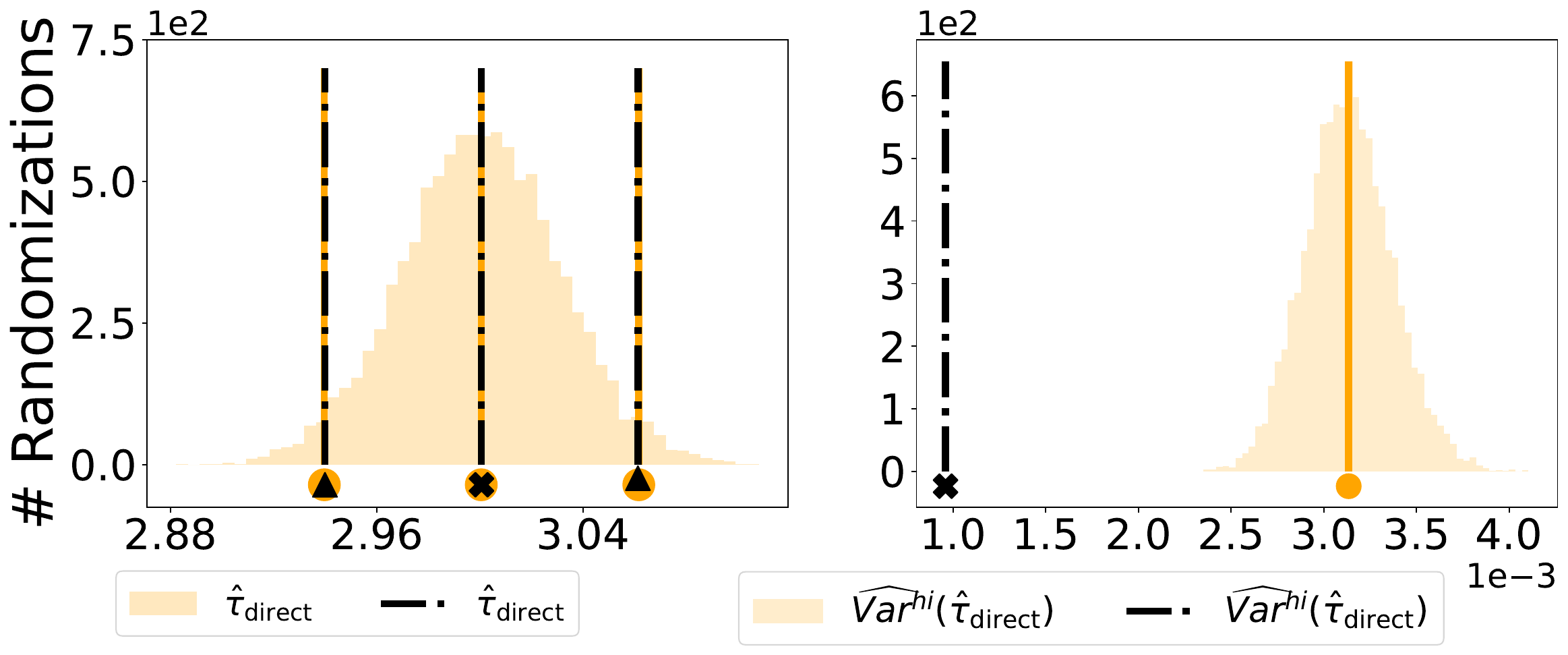}
    \caption{
    Same as \Cref{fig:spill_ALI}, now for $\hat{\tau}_{\rm{direct}}$.
    }
    \label{fig:spill_direct_ALI}
\end{figure}

\begin{figure}
    \centering   \includegraphics[width=.85\textwidth]{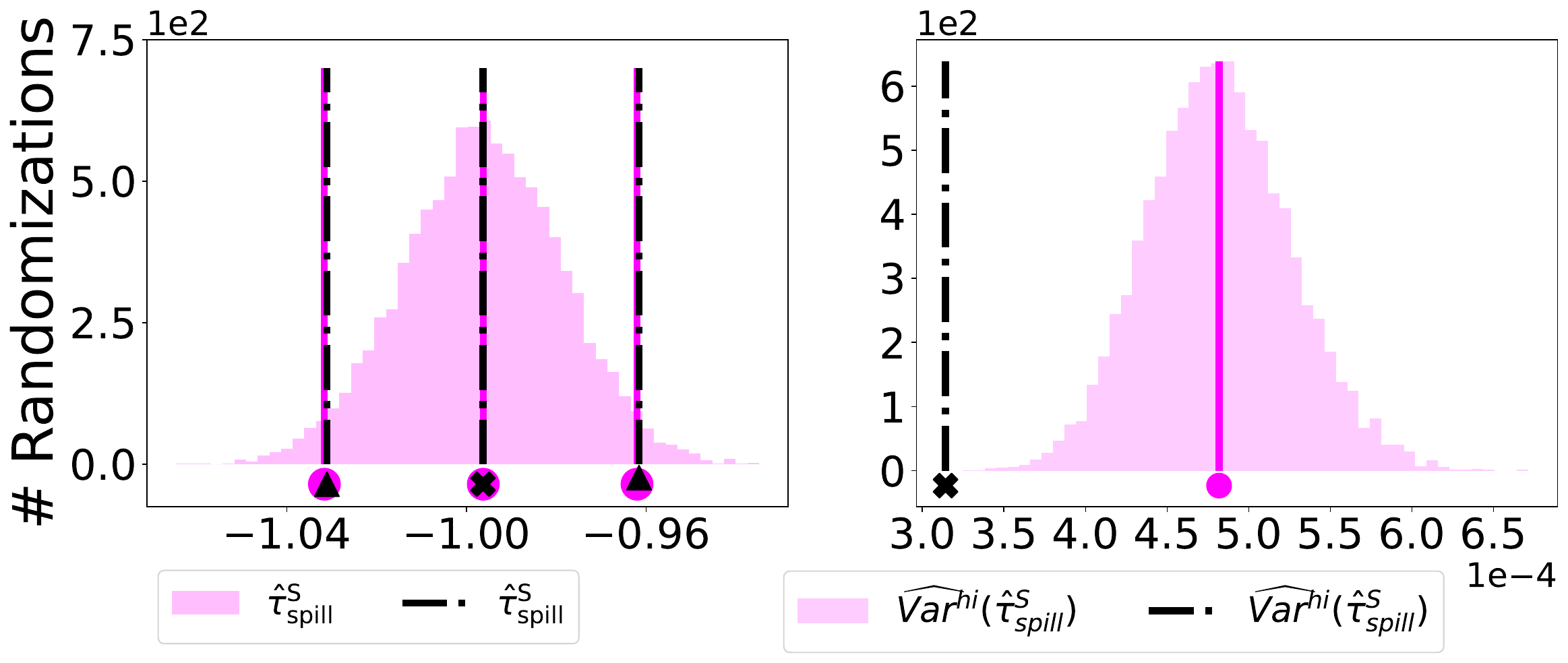}
    \caption{
   Same as \Cref{fig:spill_ALI}, now for $\hat{\tau}_{\rm{spill}}^S$.
    }
    \label{fig:spill_seller_ALI}
\end{figure}

\begin{figure}[ht]
    \centering   \includegraphics[width=.85\textwidth]{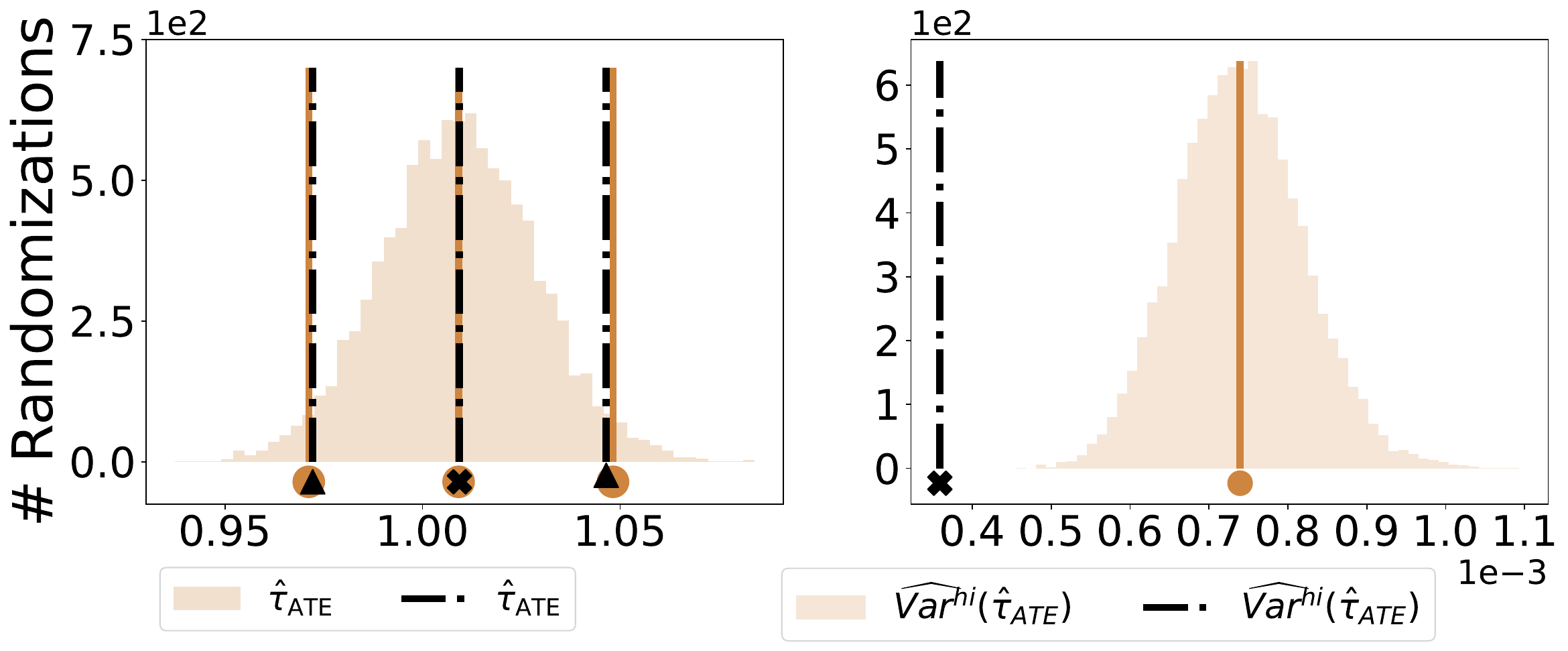}
    \caption{
    Same as \Cref{fig:spill_ALI}, now for $\hat{\tau}_{\rm{ATE}}$.
    }
    \label{fig:spill_ate_ALI}
\end{figure}

\subsection{Testing under the null hypothesis}

Similar to \cref{sec:experiment}, we here provide additional results where we show that, under the null hypothesis of no effect, we can use our derived variance formulae to construct valid test statistics. We consider again the case of $I=200$ and $J=150$, and let $\mu_0 = \mu_1 = 3$, $\sigma_0 = \sigma_1 = 1$. 
We let $\mu_{\rs} = \mu_{\rb} = 0$ and $\sigma_{\rs} = \sigma_{\rb} = 0$, leading to potential outcomes $Y_{i,j}(\gamma) = Y_{i,j}(\gamma')$ for $\gamma, \gamma' \in \{\ccc, \icb, \ics\}$ and $Y_{ij}(\ccc) \overset{d}{=} Y_{ij}(\ctt)$. We run 10,000 Monte Carlo draws, keeping the underlying potential outcomes fixed and randomizing over the assignments $\bw$ using $I_1 = 90$ and $J_1 = 85$. We report results from this simulation in \cref{fig:ate_null}.

\begin{figure}[ht]
    \centering   
    \includegraphics[width=.85\textwidth]{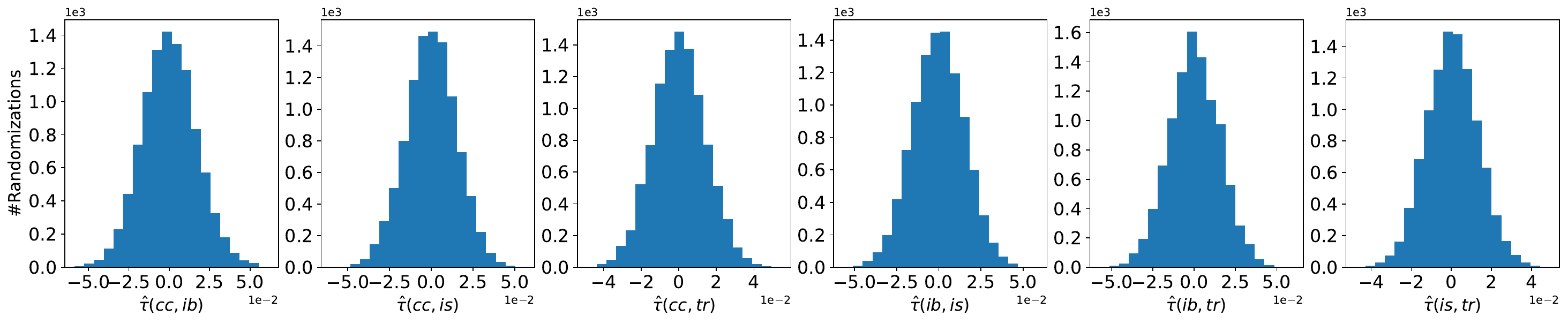}
    \caption{Empirical distribution of $\hat{\tau}$ over 10{,}000 Monte-Carlo  draws }
    \label{fig:ate_null}
\end{figure}
We also compute the test statistics $\hat{t} = \hat{\tau} / \sqrt{\mmv({\hat{\tau}}})$, where in the denominator we use the true (unknown) variance of the estimator, given in \cref{eq:general_variance}. Under this enforced null, we test the null hypothesis that a pair of types $\type, \type'$ is associated with no (average) effects on the outcome $Y$, 
$
    H_0^{(\type, \type')} = \left\{ \tau(\type,\type') = 0 \right\}
$
using the test statistic $\hat{t}$ defined above, and leveraging the normality of the CLT derived in \cref{thm:clt}. The corresponding p-values obtained by a standard two sided t-test are uniformly distributed, as expected (\cref{fig:pvals}).

\begin{figure}[ht]
    \centering   
    \includegraphics[width=\textwidth]{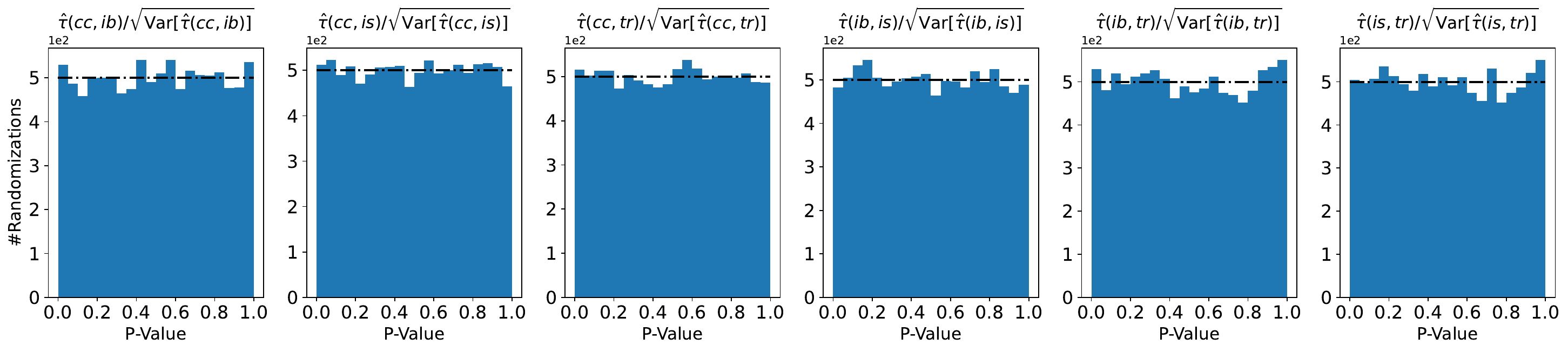}
    \caption{Empirical distribution of the p-values for the test of the null hypothesis $H_0^{(\type, \type')}$ that the means are identical in two different types $\type, \type'$.}
    \label{fig:pvals}
\end{figure}

\bibliographystyle{abbrvnat}
\bibliography{references}

\end{document}